\newcommand{\ENT}{\mathsf{ENT}}
\newcommand{\NEL}{\mathsf{NeL}}
\newcommand{\SampBQP}{\mathsf{SampBQP}}
\newcommand{\ALL}{\mathsf{ALL}}
\newcommand{\N}{\mathbb{N}}
\newcommand{\Z}{\mathbb{Z}}
\newcommand{\Real}{\mathbbm{R}}
\newcommand{\Com}{\mathbbm{C}}
\newcommand{\cF}{\mathcal{F}}
\newcommand{\cG}{\mathcal{G}}
\newcommand{\cH}{\mathcal{H}}
\newcommand{\CHK}{\mathsf{CHK}}
\newtheorem{theorem}{Theorem}
\newtheorem{lemma}{Lemma}
\newtheorem{conjecture}{Conjecture}
\newtheorem{fact}{Fact}
\newtheorem{remark}{Remark}
\newtheorem{note}{Note}
\newtheorem{definition}{Definition}
\newcommand{\e}{\epsilon}
\newcommand{\nbits}{\{0,1\}^n}
\newcommand{\ra}[1]{\renewcommand{\arraystretch}{#1}}
\newcommand{\dist}{\mathcal{D}}
\newcommand{\id}{\mathbb{I}}
\newcommand{\adv}{\mathbf{A}}
\newcommand{\Prob}{\mathbb{P}}
\newcommand{\ver}{\mathbf{V}}
\newcommand{\prov}{\mathbf{P}}
\newcommand{\trace}{\textnormal{Tr}}
\newcommand{\negl}{\textnormal{negl}}
\newcommand{\Cdot}{C^{<\cdot,\cdot>}}
\newcommand{\re}{\text{Re}}
\newcommand{\im}{\text{Im}}
\newcommand{\hilb}{\mathcal{H}}
\newcommand{\qpt}{\text{QPT}}
\newcommand{\postqpt}{\PostBQP}
\newcommand{\XYm}{O_-}
\newcommand{\XYp}{O_+}
\newcommand{\lan}{\hat{\lambda}_n}
\newcommand{\alice}{\mathbf{A}}
\newcommand{\bob}{\mathbf{B}}
\newcommand{\zero}{\ket{0}}
\newcommand{\one}{\ket{1}}
\newcommand{\plus}{\ket{+}}
\newcommand{\minus}{\ket{-}}
\newcommand{\ii}{\ket{i}}
\newcommand{\minusi}{\ket{-i}}
\newcommand{\marco}[1]{\textcolor{magenta}{marco: #1}}
\newlength\myindent
\newcounter{mycomment}
\newcommand{\comm}[2]{
\refstepcounter{mycomment}
{%
    \todo[author = \textbf{#1~\#~\themycomment}, color={red!100!green!35}, fancyline, size = \footnotesize]{%
        #2}%
    }
}
\newcommand{\rom}[1]
    {\MakeUppercase{\romannumeral #1}}
\def\hrulefilltwo{\leavevmode\leaders\hrule height 0.7pt\hfill\kern}
\title{Nonlocality under Computational Assumptions}
\author[1]{Khashayar Barooti}
\affil[1]{AZTEC LABS}
\author[2]{Alexandru Gheorghiu}
\affil[2]{Chalmers University of Technology}
\author[3]{Grzegorz Głuch\footnote{Correspondence to \texttt{grzegorz.gluch@epfl.ch}}} 
\affil[3]{EPFL}
\author[4,5]{Marc-Olivier Renou}
\affil[4]{Inria Paris-Saclay, Bâtiment Alan Turing, 1 rue Honoré d’Estienne d’Orves – 91120 Palaiseau}
\affil[5]{CPHT, Ecole polytechnique, Institut Polytechnique de Paris, Route de Saclay – 91128 Palaiseau}
\date{}
\begin{document}

\maketitle

\begin{abstract}
Nonlocality and its connections to entanglement are fundamental features of quantum mechanics that have found numerous applications in quantum information science. 
A set of correlations is said to be nonlocal if it cannot be reproduced by spacelike-separated parties sharing randomness and performing local operations. 
An important practical consideration is that the runtime of the parties has to be shorter than the time it takes light to travel between them.
One way to model this restriction is to assume that the parties are computationally bounded.
We therefore initiate the study of nonlocality under computational assumptions and derive the following results:

\begin{itemize}
    \item[(a)] We define the set $\NEL$ (not-efficiently-local) as consisting of all bipartite states whose correlations arising from local measurements cannot be reproduced with shared randomness and \emph{polynomial-time} local operations.
    \item[(b)] Under the assumption that the Quantum Learning With Errors problem cannot be solved in \emph{quantum} polynomial-time, we show that $\NEL=\ENT$, where $\ENT$ is the set of \emph{all} bipartite entangled states (both pure and mixed). 
    This is in contrast to the standard notion of nonlocality where it is known that some entangled states, e.g. Werner states, are local.
    In essence, we show that there exist (efficient) local measurements of these states producing correlations that cannot be reproduced through shared randomness and quantum polynomial-time computation.
    \item[(c)] We prove that if $\NEL=\ENT$ unconditionally, then $\BQP\neq\PP$. In other words, the ability to certify all bipartite entangled states against computationally bounded adversaries leads to a non-trivial separation of complexity classes.
    \item[(d)] With the result from (c), we show that a certain natural class of 1-round delegated quantum computation protocols that are sound against $\PP$ provers cannot exist. 
\end{itemize}
\end{abstract}

\newpage

\tableofcontents

\sloppy

\newpage

\section{Introduction}

Quantum advantage refers to a situation in which a quantum machine outperforms classical counterparts. In the context of computation, it is when a quantum computer outperforms classical computers at solving certain problems, such as factoring integers or simulating quantum mechanical systems.
\emph{Nonlocality} is a different notion of quantum advantage that dates back to the inception of quantum mechanics itself.
This advantage arises in Bell games, where e.g. two non-communicating parties, Alice and Bob, are given inputs $x$ and $y$ and produce outcomes $a$ and $b$ distributed according to a probability distribution $p(a, b | x, y)$. Such distribution is said to be nonlocal if it cannot be expressed as a mixture of deterministic distributions, that is cannot be obtained by local operations on shared randomness.
As proven by John Bell~\cite{bell}, to answer questions raised in~\cite{einsteinepr}, there exist entangled quantum states that can produce nonlocal correlations through local quantum measurements.
The so-called CHSH game~\cite{CHSH69} fully formalizes this result. 
There, Alice and Bob are given inputs $x, y \in \{0, 1\}$ and asked to produce outputs $a, b \in \{0, 1\}$ such that $a \oplus b = x \cdot y.$ If Alice and Bob's correlations are local, it can be shown that, under uniformly random inputs $x, y$, their outputs will satisfy $a \oplus b = x \cdot y$ at most $75\%$ of the time. In contrast, if the two share an EPR pair $\ket{\phi^+} = \frac{1}{\sqrt{2}}(\ket{00} + \ket{11})$ on which they perform local measurements, as a function of their respective inputs, they can succeed with probability $\cos^2(\pi/8) \approx 85\%$.
This demonstrates the nonlocal nature of quantum mechanics.  

A natural question to ask is whether all entangled states can produce nonlocal correlations. In other words, for a given entangled state, does there exist a \emph{nonlocal game}, like the CHSH game, in which Alice and Bob can achieve a higher success rate by sharing the entangled state and performing local measurements than they could through \emph{any} strategy involving shared randomness and local operations?
If we restrict to pure entangled states, then this is indeed the case \cite{gisinpurestatesarenonlocal}. 
However, in 1989, Werner demonstrated the existence of mixed entangled states, for which a local model can always be constructed~\cite{wernerprojective}. 
That result holds only for projective measurements but it was later generalized to all POVMs in \cite{barrettmodel}.
Hence, there exists entangled states which correlations (obtained through any local quantum measurements) can be reproduced in a setting where the two parties hold only a separable state, which up to technical details is equivalent to sharing public randomness. 
It therefore seems that, at least in the standard nonlocal games framework, entanglement and nonlocality are fundamentally distinct notions.

In the hopes of finding a purely operational task that can exactly distinguish between entangled and separable states, other frameworks were considered. Among them is the so-called \emph{semi-quantum games} framework~\cite{Buscemi_SemiQuantumGame}. 
This differs from a standard nonlocal game by allowing Alice and Bob to receive trusted \emph{quantum inputs}, while still outputting classical bits. 
All entangled state (both pure and mixed) can produce nonclassical correlations in a semi-quantum game. More precisely, for every entangled state, there exists a semi-quantum game in which Alice and Bob can succeed with higher probability by sharing that state than what they could obtain from any strategy based on any separable state.

While semi-quantum games can distinguish between entangled and separable states, they have the rather unsatisfying feature of requiring quantum inputs. 
Effectively, the referee of the game must have a trusted device for preparing these quantum input states and then send them to Alice and Bob via quantum channels.
But to properly test quantum mechanics, one would aim for a more \emph{device-independent} characterization in which both inputs and outputs are classical and no quantum device is trusted. 
Subsequent work showed that this is possible at the expense of introducing two additional parties~\cite{fourparties}, while the bipartite setting remains open. 

A common feature of both nonlocal and semi-quantum games is that the ``no communication'' restriction on participating parties is often assumed to be enforced through \emph{spacelike separation}. 
In other words, Alice and Bob should send their responses to the referee in less time than it would take light to travel between them. 
This effectively puts a time limit on how long each party has to compute and send their response. 
However, do they always have the \emph{time} to compute this response? As we will observe later, \cite{barrettmodel} simulation algorithm seems computationally costly.
Combining this observation with the extended quantum Church-Turing thesis begs the question of whether accounting for the computational efficiency of the parties would change the set of states that can be certified with a nonlocal game.

\subsection{Main results} 

We introduce a model of two-party nonlocal games with \emph{computationally bounded} parties, where Alice and Bob's inputs and outputs are classical bit-strings and the only distinction from standard nonlocal games is the assumption of computational efficiency. 
In other words, we consider games in which Alice and Bob can achieve a higher success rate by sharing an entangled state and performing \emph{quantum polynomial-time} local operations than they could through \emph{any} quantum polynomial-time strategy involving shared randomness and local operations. 

Surprisingly, we show that this brings us closer to the goal of operationally characterizing all entangled states in the bipartite setting.

Our approach follows a recent trend of combining ideas from quantum information theory and computational complexity, which has yielded many new insights such as using cryptographic machinery to solve information-theoretic tasks like generating true randomness (\cite{brakerskicertifiedrandomness}), finding new results in quantum cryptography (\cite{cryptoeprint:2018/544}, \cite{cryptofromPRS}, \cite{morimaecommitmentsnoOWF}, \cite{Kretschmer2021QuantumPA}), using computational considerations to address paradoxes in quantum gravity (\cite{Aaronson2016TheCO}, \cite{bouland_wormholes}) and others. 
More recent works have also explored entanglement under computational assumptions, showing that any nonlocal game can lead to a test of quantum computational advantage~\cite{advantagefromNL} and that determining how entangled a state is can be computationally intractable (a notion referred to as \emph{pseudoentanglement})~\cite{pseudoentanglement, gheorghiu2020estimating}, later generalized in \cite{arnon-friedmancompent}.

Our contributions are the following:

\paragraph{New model of nonlocality.} 
We define a new notion of nonlocality that incorporates computational efficiency. 
We say that a state $\rho_{AB}$ is \emph{not-efficiently-local}, and denote the set of all such states as $\NEL$, if there exists a probability distribution arising from local measurements of $\rho_{AB}$ such that no \textit{efficient}, non-communicating parties (sharing a separable state) can reproduce this distribution. 
``Efficient'' in this context is defined as implementable in quantum polynomial time (QPT).

\paragraph{Cryptography implies entanglement certification.} 
We show that in our newly defined model, one can design a distinguishing experiment for \emph{all} entangled states, including mixed states. 

More concretely, under \emph{the Quantum Learning With Errors (QLWE) assumption}, we show that for every entangled state, $\rho_{AB}$, there exists a $3$-round (6-message) nonlocal game between a referee (which we will refer to as the \emph{verifier}) and non-communicating parties Alice and Bob such that,
\begin{itemize}
\item[(i)] if Alice and Bob share the state $\rho_{AB}$, there exist efficient quantum operations that they can perform locally so that they win the game with high probability and,
\item[(ii)] if Alice and Bob share any separable state, there do not exist any efficient quantum operations that they can perform locally in order to win the game with high probability.
\end{itemize}

This demonstrates that our notion of not-efficiently-local states, exactly characterizes the set of all bipartite entangled states. 
Summarizing, under the QLWE assumption, we have that $\ENT = \NEL$, where $\ENT$ denotes the set of all bipartite entangled states.

As mentioned the protocol relies on the QLWE assumption. 
The Learning With Errors (LWE) problem introduced in \cite{Regev05} is widely used in cryptography to create secure encryption algorithms. 
It is based on the idea of representing secret information as a set of equations with errors (\cite{LWEassumption}). 
The QLWE assumption is standard in cryptography (\cite{regevLWE}) and assumes that LWE is intractable for polynomial-time quantum algorithms.
The security of a scheme under the QLWE assumption is shown via a reduction to this problem. 
In our case, we show that if the parties win the game when they share some separable state, then this implies that the parties solved the LWE problem.
This contradicts the assumptions of our model as the parties were required to be quantum polynomial time.

\paragraph{Entanglement certification implies complexity class separation.}
Having shown that $\NEL=\ENT$ under the QLWE assumption, we then focus on what it would take to prove the equality unconditionally. In other words, how hard is it for Alice and Bob to fake entanglement with separable states? To address this question, we consider a particular way of \say{faking entanglement}, namely the Hirsch local model (\cite{hirshgenuinenonloc}) for $1$-round protocols. We show that this model can be implemented in $\PP$. 
Concretely, for some entangled state $\rho_{AB}$, we show how to simulate any QPT strategy of $\text{Alice}_1$ and $\text{Bob}_1$ having access to $\rho_{AB}$, by a $\PP$ strategy of $\text{Alice}_2$ and $\text{Bob}_2$ with access to a \emph{separable state} $\sigma_{AB}$.
This means that local simulation in the Bell scenario is at most as hard as $\PP$. 
Summarizing, $\ENT = \NEL \Rightarrow \BQP \neq \PP$.

This shows that separating $\BQP$ and $\PP$ is a necessary condition for being able to certify entanglement against computationally bounded parties, while the previous result shows that the QLWE assumption is a sufficient condition.\footnote{Also note that the QLWE assumption directly implies $\BQP \neq \PP$, as LWE $\in \mathsf{NP} \subseteq \PP$.}


\paragraph{Delegation of quantum computation.} 
The result $\ENT = \NEL \Rightarrow \BQP \neq \PP$ has interesting implications for protocols for delegating quantum computations (DQC). 
These are protocols in which a classical polynomial-time \emph{verifier} delegates a $\BQP$ computation to an untrusted quantum \emph{prover} and is able to certify the correctness of the obtained results (a property known as \emph{soundness}). 
A breakthrough result by Mahadev gave the first such DQC protocol with soundness against $\BQP$ provers~\cite{mahadev}. This was achieved under the QLWE assumption.
A major open problem in the field is whether DQC protocols that are sound against computationally unbounded provers exist.

We give evidence for the difficulty of resolving this question by showing that a version\footnote{Our version of DQC is related to the quantum fully-homomorphic encryption scheme considered in \cite{advantagefromNL}. See Definition~\ref{def:EDQC}.} of $1$-round DQC protocols sound against $\PP$ provers do not exist. 
We can also rephrase this by saying that if there exists a certain $1$-round DQC protocol sound against $\BQP$ provers, then $\BQP \neq \PP$. 
This is in the same spirit as the results of~\cite{aaronsonimposs}, showing that blind DQC protocols sound against all provers are unlikely to exist.
Our result can be viewed as an improvement over that work, as we show the non-existence of protocols sound against $\PP$ provers, which is a weaker requirement.


\section{Technical Overview}\label{sec:technicaloverview}

We start by introducing our new model of nonlocality, which we call the not-efficiently-local model. 
Our definition, in the spirit of the extended quantum Church-Turing thesis, assumes that any computation in the physical world can be modeled by a polynomial time quantum machine. 
This essentially translates to limiting the power of dishonest parties in a nonlocal game to quantum polynomial time (QPT). 
An informal version of the definition (see Definition~\ref{def:nonlocality} for the formal version) states

\begin{definition}[Not-efficiently-local - Informal]\label{def:nonlocalityinformal}
For a quantum state $\rho_{AB}$ we say that $\rho_{AB}$ is not-efficiently-local if there exists a game (or protocol) $\mathcal{G}(\rho_{AB})$ between a probabilistic polynomial-time (PPT) verifier, $\ver$, and two non-communicating QPT parties $\alice$, $\bob$. Specifically, for every $\ell \in \N$, all the parties run in time $\poly(\ell)$ and the verifier exchanges $\poly(\ell)$ bits of communication with $\alice$ and $\bob$. The game satisfies the following properties:
\begin{enumerate}
    \item{(\textbf{Completeness})} If $\mathcal{G}(\rho_{AB})$ is run with $\alice, \bob$ sharing $\rho_{AB}$, $\ver$ accepts the interaction with probability at least $c(\ell)$,
    \item{(\textbf{Soundness})} For \textbf{every} QPT $\alice',\bob'$, if $\mathcal{G}(\rho_{AB})$ is run with $\alice',\bob'$ sharing a separable state, $\ver$ accepts the interaction with probability at most $s(\ell)$,
\end{enumerate}
with $c(\ell) - s(\ell) > \frac{1}{\poly(\ell)}.$
We denote the set of all such states, $\rho_{AB}$, as $\NEL$.
\end{definition}

\begin{remark}
We emphasize that the polynomial runtime of $\alice$ and $\bob$ is always with respect to the parameter $\ell$ which sets the desired gap between completeness and soundness. In particular, the dimension of the shared entangled state that is being tested need not depend on $\ell$ and can be constant. In a practical run of such a protocol, the value of $\ell$ would be determined based on the spatial separation between $\alice$ and $\bob$, which establishes the maximum amount of time for $\alice$ and $\bob$ to respond to $\ver$, and cryptographic considerations like whether $2^{\ell}$ operations can be performed in that time.
\end{remark}

\begin{remark}
With this definition there is a slight ambiguity in whether we're assuming not just that $\alice$ and $\bob$ can solve only problems in $\BQP$, but that their operations can be modelled using quantum mechanics. That is to say, whether we can represent $\alice$ and $\bob$ using quantum circuits of polynomial size acting on some input state, as opposed to making no assumptions about their inner workings. This becomes important when one proves computational reductions using the operations of $\alice$ and $\bob$. In cryptography, it is the distinction between whitebox and blackbox reductions. We will assume that the former is the case (we model $\alice$ and $\bob$ as quantum circuits and perform whitebox reductions). As will become clear, this fact is relevant for our second result. The full details can be found in Section~\ref{sec:notefflocal} and we make additional comments about this distinction in Section~\ref{sec:openproblems}.    
\end{remark}
Essentially by definition, all states that are not-efficiently-local must be entangled. Denoting the set of all entangled bipartite states as $\ENT$, this means that $\NEL \subseteq \ENT$. A natural question is whether it is also the case that $\ENT \subseteq \NEL$, or in other words, whether $\ENT = \NEL$.
We phrase this question as a conjecture to which we will refer throughout the paper.


\begin{conjecture}[$\ENT = \NEL$]\label{conj}
For every finite-dimensional $\mathcal{H}_A,\mathcal{H}_B$ and every entangled state $\rho_{AB}$ on $\mathcal{H}_A \otimes \mathcal{H}_B$, $\rho_{AB}$ is not-efficiently-local.
\end{conjecture}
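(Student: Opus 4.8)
The inclusion $\NEL \subseteq \ENT$ is immediate from Definition~\ref{def:nonlocalityinformal}, since a separable state shared by $\alice,\bob$ is (up to technical details) public randomness, which any QPT parties can simulate; hence only entangled states can be in $\NEL$. The substance is therefore the reverse inclusion $\ENT \subseteq \NEL$: given an \emph{arbitrary} entangled $\rho_{AB}$, I must exhibit a game $\mathcal{G}(\rho_{AB})$ meeting the completeness and soundness requirements. Since result (c) shows that the \emph{unconditional} statement would separate $\BQP$ from $\PP$, the realistic target is the conditional version, and my plan is to establish $\ENT \subseteq \NEL$ under the QLWE assumption, as announced in the abstract.

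The information-theoretic backbone is the \emph{semi-quantum game} framework of Buscemi: for every entangled $\rho_{AB}$ there exists a game with \emph{trusted quantum inputs} in which sharing $\rho_{AB}$ beats every separable-state strategy by a constant margin. The only obstacle to using this directly is the quantum channel that delivers the trusted inputs, so the entire idea is to \emph{dequantize} these inputs cryptographically. Concretely, I would have the PPT verifier $\ver$ run, in parallel, one instance of an LWE-based \emph{remote state preparation} (RSP) subprotocol with $\alice$ and one with $\bob$ — the same classical-communication primitive underlying Mahadev-style verification. Each RSP instance lets $\ver$, using only classical messages and a trapdoor, steer an honest party into holding a single-qubit input drawn from the ensemble required by Buscemi's game, while the \emph{basis} of that qubit remains hidden behind the QLWE trapdoor. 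The announced $3$-round ($6$-message) structure matches precisely the commit/challenge/reveal flow of such an RSP subprotocol.

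Completeness is then routine: honest $\alice,\bob$ holding $\rho_{AB}$ use the RSP outputs as their effective quantum inputs, run Buscemi's local measurements, and $\ver$ recomputes the game predicate from the (trapdoored) RSP transcripts, accepting with probability $c(\ell)$ equal to the semi-quantum value up to negligible RSP error. The heart of the argument is soundness. Suppose some QPT $\alice',\bob'$ sharing a \emph{separable} $\sigma_{AB}$ made $\ver$ accept with probability $s(\ell) \ge c(\ell) - 1/\poly(\ell)$. Modelling them as polynomial-size quantum circuits (the whitebox modelling assumed in the paper), I would invoke a \emph{rigidity}/self-testing guarantee for the RSP: any strategy winning above the separable threshold must, up to a local isometry and negligible error, act as an honest measurement of the intended hidden-basis input. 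With rigidity in hand, $\alice',\bob'$ collapse to a bona fide semi-quantum strategy that uses only $\sigma_{AB}$ yet exceeds the separable bound, contradicting Buscemi. The only escape is to violate the basis-hiding property, which yields a QPT distinguisher for the RSP and hence a QPT solver for LWE, contradicting QLWE.

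I expect this soundness reduction to be the main obstacle, for two reasons. First, a general QPT separable strategy need not treat the RSP transcript as an ``effective quantum input'' at all; it may process the classical messages adversarially, so the rigidity statement must simultaneously constrain two spatially separated, entanglement-free provers and must degrade \emph{gracefully} into an LWE distinguisher when it fails. Second, the construction must be uniform over \emph{all} entangled $\rho_{AB}$, including weakly-entangled mixed states whose semi-quantum margin is tiny; preserving a $1/\poly(\ell)$ completeness--soundness gap after composing gap amplification with the basis-hiding but error-prone RSP layer is the delicate quantitative point. Handling states of local dimension larger than a qubit by preparing the Buscemi inputs tensor-factor by tensor-factor, while keeping all hidden bases jointly secure under QLWE, is a further bookkeeping hurdle.
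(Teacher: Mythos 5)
Your proposal follows essentially the same route as the paper: dequantize the quantum inputs of Buscemi's semi-quantum game via an LWE-based $3$-round RSP protocol run independently with each party, prove completeness by honest Bell measurements on the RSP outputs, and prove soundness by showing that any accepted QPT separable strategy is (under the RSP soundness theorem, whose guarantee is exactly the ``computationally indistinguishable from the ideal hidden input'' statement you call rigidity) equivalent to a genuine semi-quantum strategy violating the entanglement-witness bound, so that success would yield a QPT solver for LWE. You also correctly anticipate the paper's actual technical work --- the whitebox reduction, the graceful degradation into an LWE distinguisher, and the Chernoff-based repetition count $N = O\left(\frac{\log(1/\delta)}{\eta \min_{s,t}|\beta_{s,t}|}\right)$ needed to preserve the witness gap $\eta$ --- with the one simplification that the paper analyzes each RSP instance as an independent single-prover soundness statement and uses separability of $\sigma_{AB}$ only in the score-function decomposition, rather than needing a joint two-prover rigidity statement.
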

\noindent

If Conjecture~\ref{conj} is true then there is an equivalence between the notion of entanglement and our notion of not-efficient-locality.
We will also say that if Conjecture~\ref{conj} is true then we can certify all entangled states. 

\subsection{Cryptography implies Entanglement Certification}\label{sec:techcryptoimpliesentcert}

Our second contribution shows that Conjecture~\ref{conj} is true (and $\ENT = \NEL$), under the assumption that LWE is intractable for QPT algorithms.\footnote{Strictly speaking, we are assuming that LWE is intractable for \emph{non-uniform} QPT algorithms, i.e. $\text{LWE} \not\in \mathsf{BQP/qpoly}$. This is a standard assumption in post-quantum cryptography (see for instance Definition 2.5 in~\cite{BCM}).}
The result can be informally summarized as follows:


\begin{theorem}[Informal]\label{thm:mainCert}
Assuming QLWE, for every finite dimensional Hilbert spaces $\hilb_A,\hilb_B$, every entangled state $\rho_{AB}$ on $\hilb_A \otimes \hilb_B$ and every $\ell \in \N$ there exists a $3$-round ($6$-message\footnote{We assume $1$ round is always equal to $2$ messages.}) interactive protocol where the PPT($\ell$) verifier, $\ver$, exchanges $\poly(\ell)$ bits of communication with QPT($\ell$) $\alice$ and $\bob$ such that:
\begin{itemize}
    \item If $\alice$ and $\bob$ share $\rho_{AB}$ and follow $\ver$'s instructions, then their interaction is accepted by $\ver$ with probability $1 - \negl(\ell)$.
    \item For every $\alice', \bob' \in \text{QPT}(\ell)$, if they share a separable state then the interaction is accepted by $\ver$ with probability $1/\poly(\ell)$.
\end{itemize}
\end{theorem}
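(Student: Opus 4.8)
The plan is to reduce the computational certification task to the known fact, due to Buscemi~\cite{Buscemi_SemiQuantumGame}, that every entangled state is nonclassical in the \emph{semi-quantum} game framework, and then to replace the trusted quantum inputs of that framework by a purely classical interaction whose soundness rests on QLWE. Concretely, I would first invoke Buscemi's characterization: for the fixed entangled state $\rho_{AB}$ on $\hilb_A \otimes \hilb_B$ there is a semi-quantum nonlocal game $\cG_\rho$ in which the referee sends $\alice$ and $\bob$ quantum inputs drawn from a \emph{tomographically complete} ensemble, each party performs a measurement on the received input together with its half of the shared state, and the winning probability satisfies $\omega_{\rho_{AB}}(\cG_\rho) \ge \omega_{\mathrm{sep}}(\cG_\rho) + \delta$ for a constant $\delta>0$, where $\omega_{\mathrm{sep}}$ is the supremum over all separable shared states. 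I would choose the input ensemble to consist of tensor products of single-qubit states from a fixed tomographically complete single-qubit family (e.g. the six Pauli eigenstates); products of these are tomographically complete on the full input space, so the separation is preserved while the inputs become preparable by the cryptographic tool below.

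The second ingredient is a \emph{verifiable remote state preparation} (RSP) subprotocol built from QLWE-based trapdoor claw-free functions, in the spirit of the measurement protocol of \cite{BCM, mahadev}. This is a constant-round, purely classical interaction by which the verifier, holding a trapdoor, delivers to a QPT party a single-qubit input of its choice from the family above, with two guarantees: (completeness) an honest QPT prover following the instructions ends up holding the intended state, and (soundness) the behaviour of \emph{any} QPT prover is, conditioned on passing the protocol's test checks and up to negligible error and a local isometry, indistinguishable from genuinely receiving the verifier-sampled input, so that any deviation detectable in the statistics yields a QLWE distinguisher. I would then assemble $\cG(\rho_{AB})$ as follows: the verifier samples the semi-quantum inputs $(x,y)$, runs the RSP subprotocol with $\alice$ to deliver the qubits encoding $x$ and, using an \emph{independent} trapdoor, with $\bob$ to deliver those encoding $y$, instructs each party to apply the prescribed measurement to its RSP register tensored with its half of the shared state and return a bit, and accepts iff all RSP test checks pass and the game predicate of $\cG_\rho$ is met. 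Because the verifier's messages to $\alice$ are generated from a trapdoor independent of $\bob$'s, and the parties do not communicate, the two RSP instances can be run in parallel within three rounds.

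Completeness is then immediate: if $\alice,\bob$ share $\rho_{AB}$ and follow instructions, RSP completeness supplies the correct inputs up to negligible error and they execute the honest semi-quantum strategy, so $\ver$ accepts with probability $1-\negl(\ell)$. The core of the argument is soundness. Given QPT $\alice',\bob'$ sharing a separable $\sigma_{AB}$, I would apply RSP soundness to each party separately --- this is where the no-communication assumption and the independent trapdoors are used, allowing the two single-prover guarantees to be combined by a hybrid argument --- to conclude that, conditioned on passing the tests, the pair behaves (up to negligible error and local isometries) as if it genuinely received the sampled quantum inputs and applied \emph{some} measurements to those inputs tensored with $\sigma_{AB}$. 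Since $\sigma_{AB}$ is separable, Buscemi's bound caps the resulting winning probability by $\omega_{\mathrm{sep}}(\cG_\rho)$; a pair winning noticeably above this value would, in the whitebox model, be converted into an explicit distinguisher against QLWE, a contradiction. Hence the acceptance probability is at most $\omega_{\mathrm{sep}}(\cG_\rho)+\negl(\ell)$, giving a gap $\delta-\negl(\ell)>1/\poly(\ell)$.

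The main obstacle I anticipate is twofold. First, making the soundness reduction rigorous: RSP soundness is a computational, up-to-isometry statement about a single prover, and I must show that it composes across the two non-communicating parties and interfaces cleanly with the separable-state bound of the semi-quantum game, so that any advantage over $\omega_{\mathrm{sep}}$ really does break QLWE; the whitebox modelling of $\alice',\bob'$ as polynomial-size circuits is essential here, since the distinguisher is built from their internal operations. Second, amplifying the constant gap to the sharper parameters one may want --- acceptance $1-\negl(\ell)$ honestly and $1/\poly(\ell)$ in the cheating case --- requires (sequential or parallel) repetition of the base game together with a concentration argument, and I must verify that the QLWE reduction still goes through under composition, e.g. via a hybrid over the repetitions, while keeping the round complexity constant.
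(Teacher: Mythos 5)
Your proposal is correct and follows essentially the same route as the paper: a simplified QLWE-based RSP protocol (restricted to the six Pauli eigenstates) is composed with Buscemi's semi-quantum game via an entanglement-witness score function, run as two independent instances with $\alice$ and $\bob$, with soundness reduced (whitebox) to QLWE and the constant witness gap amplified by repetition with a Chernoff-bound estimate of the score. The only cosmetic difference is that in the paper's RSP the verifier does not \emph{choose} the delivered state but learns, via the trapdoor, which random state from the ensemble was prepared --- which suffices exactly as in your argument since the semi-quantum inputs are sampled uniformly anyway.
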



\noindent \textbf{Proof techniques.} We first develop a \emph{remote state preparation (RSP) protocol} which allows the classical verifier to certify the preparation of certain states by Alice and Bob. Assuming QLWE, the protocol ensures that Alice and Bob will not know which states they have prepared (while the verifier will). The protocol is based on the one from~\cite{vidick}.
Next, the verifier will perform the Buscemi \emph{semi-quantum game (SQG)} with Alice and Bob~\cite{Buscemi_SemiQuantumGame}. As mentioned, this game, which requires quantum inputs for Alice and Bob, allows the verifier to certify any bipartite entangled state. Our protocol will then achieve this under computational assumptions.
The ideas are illustrated schematically in Figure~\ref{fig:ThreeProtocols}.
To give more details, let us briefly review both SQG and RSP.


\begin{figure}
    \centering
    \includegraphics[width=0.525\columnwidth]{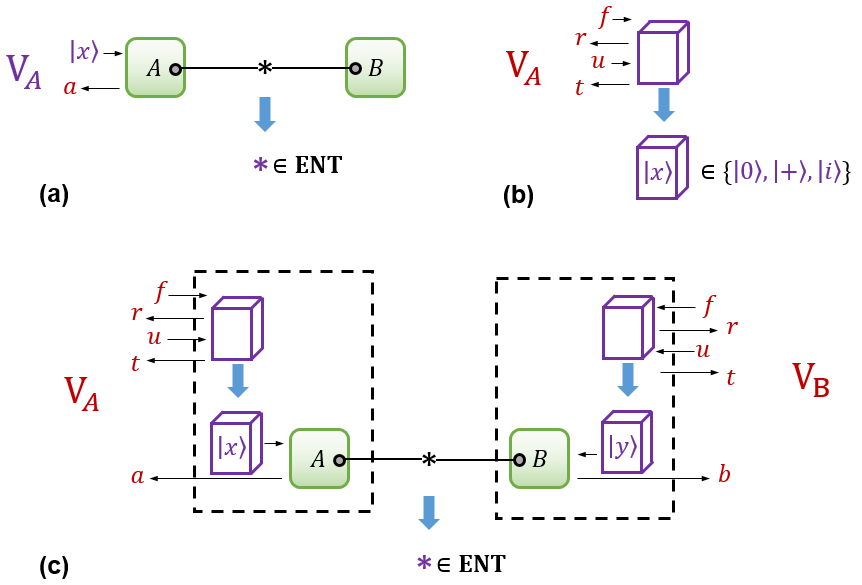}
    \caption{(a) The Semi-Quantum Game (SQG) protocol of \cite{Buscemi_SemiQuantumGame}. 
    In this picture, we split $\ver$ into two entities $\ver_A$ and $\ver_B$ for convenience.
    $\alice, \bob$ share an entangled state $\rho_{AB}$. 
    $\ver_A$ sends an input state $\ket{x}\in\{\zero,\one,\plus,\minus,\ii,\minusi \}$ to $\alice$, who performs a Bell state measurement on $\ket{x}$ and its share of $\rho_{AB}$. 
    $\ver_B, \bob$ do the same with input $\ket{y}$.
    The statistics $p(a,b \ | \ \ket{x},\ket{y})$ certify that $\rho_{AB}$ had to be entangled.\\
    (b) Our Remote State Preparation (RSP) protocol, see Fig.~\ref{fig:RSP_protocol} for details. $\ver_A$ and $\alice$ exchange three rounds of classical messages. 
    Under the assumption that LWE is quantumly hard, the protocol certifies that $\alice$ had prepared a quantum state $\ket{x}\in\{\zero,\one,\plus,\minus,\ii,\minusi \}$ which is not known, but can be operated on, by $\alice$, but is known by $\ver_A$.\\
    (c) Our Entanglement Certification protocol is obtained by combining two independent RSP protocols with $\alice$ and $\bob$, followed by the SQG protocol. 
    }
    \label{fig:ThreeProtocols}
\end{figure}

\emph{The SQG bipartite protocol.} 
We review the protocol from~\cite{Buscemi_SemiQuantumGame}, adopting the formulation of~\cite{MDIEW}. 
The protocol is based on \emph{entanglement witnesses}, which are operators $W$ such that $\trace[W\sigma_{AB}] \leq 0$ for all separable states $\sigma_{AB}$ but $\trace[W\rho_{AB}] >0$ for some entangled state $\rho_{AB}$. 
As the set of separable states is convex, any entangled state $\rho_{AB}$ can be associated with an entanglement witness $W$ with $\trace[W\rho_{AB}]>0$.
Any such $W$ admits a local tomographic decomposition $W=\sum_{s,t}\beta_{s,t} \  \tau_s\otimes\omega_t$ where $\tau_s, \omega_t$ are the density matrices corresponding to the states in $\mathcal{S} = \{\zero,\one,\plus,\minus,\ii,\minusi \}$.

Based on this decomposition,~\cite{MDIEW} proposes a two player game, which certifies that $\rho_{AB}$ is entangled (see Fig.~\ref{fig:ThreeProtocols}a).
Two players, sharing $\rho_{AB}$, independently receive random quantum input states $\tau_s$ and $\omega_t$, respectively, from two verifiers.\footnote{This can also be just one verifier, as before. We mention two verifiers to be consistent with the presentation from~\cite{MDIEW} and to be closer to how such an experiment would be realized in practice. Indeed, since our resulting protocol is 3-round, the only way to ensure spacelike separation would be to have a local verifier for Alice and one for Bob that are giving them their questions and recording their responses. The two verifiers would then communicate with each other to decide whether Alice and Bob win the game or not.} 
Each player is asked to perform a joint Bell state measurement on the received quantum input state and their share of $\rho_{AB}$ and output the result of the measurement.
One can introduce a Bell-like score based on the decomposition of $W$ such that if the players act honestly then the score is exactly $\trace[W\rho_{AB}]$, and if the players cheat by employing any strategy based on some separable shared state $\sigma_{AB}$, then the score is upper bounded by $\trace[W\sigma_{AB}] \leq 0$.
Hence, if the score is strictly positive, the protocol certifies that the players shared some entangled state.
The possible quantum inputs for Alice and Bob are the states in $\mathcal{S}$ and so, for our protocol, these are the states the verifier(s) would have to instruct them to prepare.

\emph{The RSP protocol.} 
Remote state preparation is a protocol between a classical verifier and a quantum prover, allowing the verifier to certify the preparation of a quantum state in the prover's system (see Fig.~\ref{fig:ThreeProtocols}b and~\ref{fig:RSP_protocol}). 
It will allow the verifier to check (under the QLWE assumption) that the prover has prepared a random state from the set $\mathcal{S}$.
We show that our protocol is (i) complete, meaning that an honest QPT prover succeeds with high probability in preparing a state from $\mathcal{S}$, known only to the verifier, and (ii) sound, meaning that any dishonest QPT prover attempting to deceive the verifier will fail with high probability.
The formal statement of soundness is delicate but informally speaking it guarantees that if the QPT prover is accepted, then its behavior is equivalent to it having received a random quantum state from $\mathcal{S}$.

Our protocol is simpler than the one from~\cite{vidick}. 
In that protocol the prover prepares eigenstates of all of the five bases $\left\{X,\frac{X-Y}{\sqrt{2}},Y,\frac{X+Y}{\sqrt{2}},Z \right\}$.
We require the prover to prepare eigenstates only of $\left\{X,Y,Z\right\}$.
It is because these eigenstates are exactly the members of $\mathcal{S}$ which we require to perform the SQG.
Our protocol performs fewer checks and therefore the soundness does not follow directly from that of~\cite{vidick}. 
As such, we require a careful analysis to derive the soundness of our protocol. 

\emph{The Entanglement Certification protocol.}
Our resulting entanglement certification protocol is a combination of RSP and SQG.
The verifier performs two independent runs of RSP, one with Alice and one with Bob. If either one fails, the verifier will count this as a loss. Otherwise, the verifier will then perform the Buscemi SQG with Alice and Bob.
Combining the two protocols involves several nontrivial steps stemming from the fact that we operate in the computationally-bounded model.
For instance, we prove that the number of repetitions of the protocol that are required to distinguish the two cases is small enough so that the verifier is efficient.

\subsection{Entanglement Certification implies Separation of Complexity Classes}\label{sec:techhentanglementiimpliescrypto}


The previous result gives a \emph{sufficient} condition for proving Conjecture~\ref{conj}, namely the QLWE assumption. Here, we derive a \emph{necessary} condition by showing that if the conjecture is true we obtain a non-trivial separation of complexity classes. Specifically:

\begin{restatable}{theorem}{thmmain}\label{thm:maininformal} 
$\ENT = \NEL \Rightarrow \BQP \neq \PP$.
\end{restatable}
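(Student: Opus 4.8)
The plan is to prove the contrapositive: assuming $\BQP = \PP$, I will exhibit an entangled state that fails to be not-efficiently-local, so that $\ENT \neq \NEL$ (recall $\NEL \subseteq \ENT$ always holds, so it suffices to find one entangled state outside $\NEL$). The witness is a fixed, constant-dimensional entangled state $\rho_{AB}$ admitting a \emph{local hidden-variable} (LHV) model. By the results of Werner~\cite{wernerprojective}, extended to all POVMs by Barrett~\cite{barrettmodel} and refined by Hirsch et al.~\cite{hirshgenuinenonloc}, there is such a $\rho_{AB}$ (e.g. a suitable Werner state) together with a distribution $q(\lambda)$ over a hidden variable and local response functions $p_A, p_B$ such that, for every pair of local measurements $M = \{M_a\}$, $N = \{N_b\}$ on the two shares,
\begin{equation}
\trace[(M_a \otimes N_b)\rho_{AB}] = \int q(\lambda)\, p_A(a \mid M, \lambda)\, p_B(b \mid N, \lambda)\, d\lambda .
\end{equation}
A shared value of $\lambda$ is just a separable state $\sigma_{AB} = \int q(\lambda)\, |\lambda\rangle\!\langle\lambda| \otimes |\lambda\rangle\!\langle\lambda|\, d\lambda$, so two non-communicating parties holding $\sigma_{AB}$ can sample $\lambda$ and then act locally on their own input.

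Fix any $1$-round game $\mathcal{G}(\rho_{AB})$ and let $(\alice,\bob)$ be the honest QPT strategy winning it with $\rho_{AB}$. On input $x$, honest $\alice$ applies a polynomial-size unitary $U_x$ to her constant-dimensional share together with ancillas and measures, inducing an effective POVM $\{M_a^x\}$ on the share. I define a separable-state strategy $(\alice',\bob')$ that shares $\lambda \sim q$ and, on input $x$, outputs $a$ with probability $p_A(a \mid M^x, \lambda)$. In the Barrett/Hirsch model this response is a simple closed-form function of the single number $\trace[M_a^x \Lambda(\lambda)]$, where $\Lambda(\lambda)$ is a constant-dimensional operator determined by $\lambda$. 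Crucially, $\trace[M_a^x \Lambda(\lambda)]$ equals the probability that $U_x$, run on the \emph{classically known} state $\Lambda(\lambda)$ in place of the real share, produces outcome $a$; this is a Born probability of a polynomial-size quantum circuit and can therefore be estimated in $\PP$ via Aaronson's characterization $\PP = \PostBQP$. Hence $\alice'$, and symmetrically $\bob'$, runs in $\PP$ and, sharing $\sigma_{AB}$, reproduces $p(a,b \mid x,y) = \trace[(M_a^x \otimes N_b^y)\rho_{AB}]$. Since $\ver$ sees only the classical transcript $(x,y,a,b)$, it accepts $(\alice',\bob')$ with the same probability as $(\alice,\bob)$.

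Under the hypothesis $\BQP = \PP$, every $\PP$ computation is realizable in quantum polynomial time, so $(\alice',\bob')$ is in fact a QPT strategy sharing a separable state that reproduces the honest acceptance probability of $\mathcal{G}(\rho_{AB})$. As this holds for every such game, no $1$-round game can exhibit a $1/\poly$ completeness–soundness gap for $\rho_{AB}$ against QPT adversaries, so $\rho_{AB} \notin \NEL$ while $\rho_{AB} \in \ENT$. This gives $\ENT \neq \NEL$ and establishes the contrapositive.

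The step I expect to be the \emph{main obstacle} is the claim that the local simulation lies in $\PP$. Two points require care. First, one must show that the full Barrett/Hirsch response — which for general POVMs involves a flagging/steering sub-procedure and conditional (postselection-type) quantities — can be recast as a postselected Born probability of a polynomial-size circuit, so that $\PP = \PostBQP$ genuinely applies; this is precisely where whitebox access to the honest circuit $U_x$ (as emphasized in the model) is used, since $\alice'$ must evaluate $p_A(a \mid M^x, \lambda)$ for the honest effective measurement. Second, one must control approximation error: $\lambda$ is continuous and the estimates are only to inverse-exponential precision, so I would discretize $\lambda$ and propagate error bounds to argue the simulated distribution is within negligible total-variation distance of the exact LHV distribution, which suffices against the $1/\poly(\ell)$ gap in the definition of $\NEL$. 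A secondary subtlety is that the argument, as presented, covers the Bell/$1$-round scenario that the Hirsch model addresses; reproducing the statistics of adaptively chosen, sequential measurements in a multi-round protocol is the natural point at which one restricts to (or reduces to) $1$-round games.
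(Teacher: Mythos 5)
Your overall route is the paper's route: prove the contrapositive, pick a Hirsch-type entangled state admitting a local model for all POVMs, discretize the shared Haar-random $\lambda$ to a polynomial-length bit string, and use whitebox access to the honest circuits together with postselection ($\PP = \PostBQP$) to estimate the inner products $\bra{\lambda} P_a \ket{\lambda}$ to inverse-exponential precision (your ``main obstacle'' paragraph corresponds to the paper's Lemma~\ref{lem:dotproductNEW} and Lemma~\ref{lem:efficientsim}, and your rank-one/flagging concern is handled there by fine-graining the POVM and running the honest circuit backwards with postselection, as in Lemma~\ref{lem:circuittoPOVM}). The paper instantiates the witness with the specific state $\rho^*$ of Lemma~\ref{lem:hirshmodel} rather than a Werner state, precisely because the Hirsch model's responses reduce to the dot products above; this is a detail, not a divergence.

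The genuine gap is the sentence ``Under the hypothesis $\BQP = \PP$, every $\PP$ computation is realizable in quantum polynomial time, so $(\alice',\bob')$ is in fact a QPT strategy.'' The hypothesis $\BQP = \PP$ is an equality of \emph{decision} classes, while your simulators are \emph{samplers}; there is no a priori sense in which a sampling procedure ``runs in $\PP$'', and the hypothesis does not by itself let you trade a postselected sampler for a plain QPT one. The paper flags exactly this (``the assumption $\PostBQP = \BQP$ allows us to replace QPT machines with postselection with QPT machines without postselection \emph{only for decision problems}'') and closes it with a dedicated sampling-to-decision reduction, Lemma~\ref{lem:howtouseBQP=PostBQP}: the postselected estimator of $f_\ell(a)$ is converted into a family of threshold languages $L(u)$ asking whether $f_\ell(a) \leq u$, which must be made \emph{total} languages rather than promise problems --- near the threshold the comparison is unreliable, so the construction repeats the estimator $N = \poly(\ell)$ times and outputs a default answer unless a clear $N/3$ majority margin appears, yielding property \eqref{eq:propofL}; only then does $\BQP = \PostBQP$ give BQP circuits for each $L(u)$, and amplification plus a binary search recovers the estimate to within $10 \cdot 2^{-c\ell}$ in genuine QPT, after which the classical sampling steps can be executed by the cheating parties. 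Without some version of this reduction your argument stalls at the hand-off from $\PP$-estimation to a QPT strategy. A secondary remark: your worry about multi-round protocols is moot in the paper's model, since in Definition~\ref{def:game} Alice and Bob receive a fresh share of the state in every round and act memorylessly, so the one-round simulation extends to $k$ rounds by a total-variation union bound (the $200k \cdot 2^{-8q(\ell)}$ term in the paper's proof); restricting to $1$-round games would in any case be insufficient, because membership in $\NEL$ quantifies over games with arbitrary $k$.
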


We note that $\PP$ is a relatively large class, e.g. $\PH \subseteq \P^\PP$ by Toda's theorem (\cite{todastheorem}), but nevertheless $\PP \subseteq \PSPACE$.
A consequence of our result is therefore that proving Conjecture~\ref{conj} is at least as hard as proving that $\BQP \neq \PSPACE$. 


\paragraph{Proof techniques.}
Our starting point is a result of \cite{hirshgenuinenonloc} showing that for a certain entangled state, $\rho_{AB}$, there is a local model for \emph{all} POVM measurements. In other words, by sharing only a separable state (or random bits), Alice and Bob would be able to reproduce the statistics produced through local measurements of $\rho_{AB}$. In effect, this implies that the state $\rho_{AB}$ cannot be certified through a nonlocal game. The question then becomes: how hard is it for Alice and Bob to implement the strategy from~\cite{hirshgenuinenonloc}?
We show that their strategies, given as Algorithms~\ref{alg:alicesim} and~\ref{alg:bobsim}, can be implemented in $\PP$.
This then proves the contrapositive of Theorem~\ref{thm:maininformal}, i.e. that if $\BQP = \PP$ then $\ENT \neq \NEL$. 
For ease of explanation, we will describe Alice and Bob's strategies as implementable by QPT machines with postselection, making use of the seminal result of Aaronson that $\PP = \PostBQP$~\cite{PostBQPPP}. $\PostBQP$ is the set of problems solvable in quantum polynomial-time with \emph{postselection} (see Section~\ref{sec:prelimcomplexity} for the formal definition).

The input in Algorithm~\ref{alg:alicesim} (a similar situation happens for Algorithm~\ref{alg:bobsim}) is a POVM $ \{ \mathcal{A}_a\}_a$.
Up to \say{fine-graining} we can, without a loss of generality, assume that
$\mathcal{A}_a = \eta_a P_a$ with $\eta_a \in (0,1]$ and $P_a$ being a rank one projector (see Section~\ref{sec:prelimquantum}). 
In our computational setting, this POVM is implicitly defined by a question $x$, which $\alice$ received from $\ver$, and a circuit $C^\alice_{\poly(\ell)}$ that an honest $\alice$ would have applied to her share of $\rho_{AB}$ (the formal correspondence is given in \eqref{eq:howAacts}). 
The main challenge in implementing the simulation is to realize the operations in Algorithm~\ref{alg:alicesim} (and~\ref{alg:bobsim}), which are expressed in terms of $\{ \mathcal{A}_a\}_a$, efficiently while having access to the description of $C^\alice_{\poly(\ell)}$ only. 
Intuitively, the local model works as follows.
$\alice$ and $\bob$ will share many copies of Haar random single-qubit states, denoted $\ket{\lambda}$.\footnote{We note that the proof works even if the two share classical descriptions of these states that can represent them to a sufficiently high precision. See Section~\ref{sec:entanglementcertimpliescrypto} for details.}
Their answers to $\ver$ are then computed based on dot products of the form $\bra{\lambda} P_a \ket{\lambda}$.
The key step in the proof is to estimate these dot products up to enough precision in $\PostBQP$.
For instance, to implement $\mathbbm{1}_{\{\bra{\lambda} P_a \ket{\lambda} < \bra{\lambda} (\id - P_a) \ket{\lambda}\}}$ from Algorithm~\ref{alg:alicesim}, we show (Lemma~\ref{lem:circuittoPOVM}) that there exists a QPT algorithm with postselection that, given $a$, creates a 1-qubit state which is equal to the eigenvector of $P_a$. 
Next, we prove that given a description of $\ket{\lambda}$ it is possible to compute an approximation to $\bra{\lambda} P_a \ket{\lambda}$ while having access to polynomially many copies of the eigenvector state of $P_a$.
To do that we build on the ideas from \cite{PostBQPPP} to show that using the power of post-selection one can compute exponentially accurate dot products between 1-qubit states (see Lemma~\ref{lem:dotprodapxcorrect}) in polynomial time. 

The final step is to argue that if $\PostBQP =\BQP$ then we can implement our simulation in QPT. 
This is not immediate, as the assumption $\PostBQP =\BQP$ allows us to replace QPT machines with postselection with QPT machines without postselection \emph{only for decision problems}.
However, the QPT algorithm with postselection designed for our local simulation, samples an answer according to a distribution---it is thus an instance of a sampling problem.
As such, we need to also prove a sampling-to-decision reduction for this problem.
We do this through a careful binary-search-like procedure as shown in Lemma~\ref{lem:howtouseBQP=PostBQP}.

\begin{remark}
    It's important to clarify how postselection is used in our setting. Indeed, one thing that \textbf{we do not do} is to allow Alice and Bob to postselect on their shared randomness, as that would permit them to signal instantaneously. As mentioned, we use $\PostBQP$ merely for ease of explanation. We can equivalently view Alice and Bob as being efficient agents with access to a $\PP$ oracle which they use in order to obtain very precise estimates of inner products. From this perspective, postselection is used to provide a ``boost'' in computational power, not to communicate instantaneously.
\end{remark}



\subsection{Implications for Delegation of Quantum Computation (DQC)}

We consider a version of a $1$-round DQC protocol, which we refer to as Extended Delegation of Quantum Computation (EDQC), formally presented in Definition~\ref{def:EDQC}. We then show that EDQC protocols sound against $\PP$ provers do not exist. Equivalently, we prove that the existence of a $\BQP$-sound, EDQC implies $\BQP \neq \PP$.
This can be summarized in the following informal theorem:

\begin{restatable}[Informal]{theorem}{thmedqcinformal}\label{thm:edqcinformal} 
$\PP$-sound EDQC protocols do not exist. 
\end{restatable}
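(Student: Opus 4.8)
The plan is to establish the headline statement—that $\PP$-sound EDQC protocols do not exist—by a reduction that compiles any such protocol into a $1$-round entanglement-certification game for the Hirsch state, and then contradicts soundness using the fact, established in the proof of Theorem~\ref{thm:maininformal}, that the Hirsch local model is implementable in $\PP$. The equivalent conditional form, $\BQP$-sound EDQC $\Rightarrow \BQP \neq \PP$, then follows immediately: were a $\BQP$-sound EDQC to exist while $\BQP = \PP$, the prover classes would coincide, the protocol would in fact be $\PP$-sound, and we would contradict the headline statement; hence $\BQP \neq \PP$.

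First I would show that a $1$-round EDQC protocol (Definition~\ref{def:EDQC}), being a QFHE-like primitive in the sense of~\cite{advantagefromNL}, can play the role that the QLWE-based RSP protocol plays in Theorem~\ref{thm:mainCert}. Concretely, I would combine the EDQC mechanism with the Buscemi semi-quantum game~\cite{Buscemi_SemiQuantumGame} into a single game $\mathcal{G}(\rho_{AB})$ for the specific entangled state $\rho_{AB}$ of~\cite{hirshgenuinenonloc}: the verifier uses EDQC to remotely prepare the SQG input states in $\mathcal{S}$ inside $\alice$'s and $\bob$'s registers, and the resulting SQG score certifies entanglement. The compilation must transfer the soundness class verbatim, so that if the EDQC is sound against provers in a class $\mathcal{C}$, then no $\mathcal{C}$-bounded $\alice',\bob'$ sharing a separable state can make $\ver$ accept, while honest QPT $\alice,\bob$ sharing $\rho_{AB}$ are accepted with the completeness probability. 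It is precisely the $1$-round structure of EDQC that keeps the compiled game $1$-round, which is essential for the next step.

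Next I would invoke the core technical content behind Theorem~\ref{thm:maininformal}: for this $\rho_{AB}$, any QPT local strategy on $\rho_{AB}$ in a $1$-round game can be reproduced, with exactly the same output distribution, by a $\PP$ local strategy (Algorithms~\ref{alg:alicesim} and~\ref{alg:bobsim}, using $\PP = \PostBQP$ from~\cite{PostBQPPP}) acting only on a separable state $\sigma_{AB}$. Applying this to the honest strategy in $\mathcal{G}(\rho_{AB})$ yields $\PP$-bounded $\alice',\bob'$ who, sharing $\sigma_{AB}$, reproduce the honest answer distribution and are therefore accepted with the completeness probability---strictly above any admissible soundness threshold from Definition~\ref{def:nonlocalityinformal}. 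Taking $\mathcal{C} = \PP$, this contradicts $\PP$-soundness of the compiled game, and hence of the EDQC from which it was built, so no $\PP$-sound EDQC protocol exists.

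The main obstacle I anticipate is the first step: arguing that the EDQC primitive genuinely substitutes for interactive RSP while (i) keeping the entire certification game to a single round and (ii) carrying the delegation soundness guarantee over to the two-party Bell setting, so that the $\PP$-faker built from the Hirsch model defeats not merely one isolated POVM but the verifier's complete acceptance predicate, including all consistency checks. Care is also needed to confirm that the completeness--soundness gap survives the compilation (for instance, that only $\poly$ sequential repetitions are needed to amplify it), and that the Hirsch simulation---stated for local measurements of $\rho_{AB}$---indeed covers the joint measurements $\alice$ and $\bob$ perform in the SQG phase on their prepared inputs together with their shares of $\rho_{AB}$, which are local POVMs implicitly specified by the question and the honest circuit $C^{\alice}_{\poly(\ell)}$.
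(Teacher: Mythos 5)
Your proposal is correct and follows essentially the same route as the paper: Theorem~\ref{thm:edqcdoesntexist} likewise compiles two independent EDQC instances with the Buscemi semi-quantum game into a one-round nonlocal game (concretely, the delegated circuit $\mathcal{C}$ prepares $\tau_s$ from the classical input and Bell-measures it against the prover's auxiliary input $\rho_Q$ --- so EDQC is not used as remote state preparation but swallows the whole SQG measurement, with completeness-w.r.t.-auxiliary-input powering the sequential-measurement argument and the $s$-independence of $\rho_Q$ in soundness powering the witness bound) and then derives the contradiction from the $\PP$-implementable Hirsch local model underlying Theorem~\ref{thm:maininformal}. Your only deviations are organizational --- you contradict $\PP$-soundness directly and unconditionally for the Hirsch state, while the paper establishes $\ENT=\NEL$ for every entangled state and invokes Theorem~\ref{thm:maininformal} as a black box --- and these are exactly the rephrasings the paper itself treats as equivalent, with the compilation subtleties you flag (soundness transfer, gap survival, effective POVMs per question) being precisely the points the paper's proof handles.
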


\noindent \textbf{Proof techniques.} 
The high-level idea of the proof is to construct a $\PP$ prover that is always able to cheat successfully in an EDQC protocol. Let us first informally describe what an EDQC protocol is.
For a circuit $\mathcal{C}$ that accepts as input a classical $s$ and an auxiliary 1-qubit\footnote{
We could have considered a more general definition, where the auxiliary register holds a state on many qubits. 
Our definition generalizes naturally. 
We chose this version for simplicity and because our result implies that the more general protocol is not possible either.}
state $\rho_Q$, EDQC guarantees the following. 
\begin{itemize}
    \item{\textbf{(Completeness)}} There exists a $\text{PPT}$ verifier and a $\text{QPT}$ prover such that for every state $\rho_{QE}$ on $\Com^2 \otimes \hilb_E$ chosen by the prover the interaction will be accepted and will satisfy the following. 
    For every $s$ the following two states are equal: (i) the output bit collected by the verifier and the contents of the $E$ register, (ii) the result of measuring the output qubit of $U_\mathcal{C} \otimes \id_E \ket{s} \rho_{QE}$ and the contents of the $E$ register (where $U_{\mathcal{C}}$ is the unitary corresponding to $\mathcal{C}$).
    In words, the protocol preserves the entanglement between the output of $\mathcal{C}$ and the $E$ register.
    \item{\textbf{(Soundness)}} For every $\text{QPT}$ prover accepted with high probability there exists a state $\rho_Q$ on $\Com^2$ such that the output bit collected by the verifier is, \textit{for every classical} $x$, close to the distribution of measuring the output qubit of $U_\mathcal{C} \ket{s} \rho_Q$.
\end{itemize}
Note that if the output qubit of $\mathcal{C}$ doesn't depend on $\rho_Q$ then our definition reduces to the standard verification definition for $1$-round DQC.
This means that the only additional requirement of EDQC is consistency with the auxiliary input.
A recent work (\cite{advantagefromNL}, Definition 2.3) considered a very similar requirement in the context of quantum fully-homomorphic-encryption (qFHE).
The authors demonstrate that the auxiliary input requirement is satisfied by preexisting qFHE protocols of \cite{mahadevFHE} and \cite{brakerskiQFHE}. Recently, this was used by~\cite{nonlocalitytoDQC} to show that the existence of qFHE with auxillary input implies the existence of DQC.
The main difference between our notion of EDQC and qFHE with auxillary input is that the soundness of qFHE is expressed as indistinguishability of the prover's views, whereas our definition requires a type of verifiability.
We note that although the EDQC definition is nonstandard, protocols satisfying its requirements most likely exist under the QLWE assumption in the quantum random-oracle model (QROM).
We refer the reader to Section~\ref{sec:whyEDQC} for a more in-depth discussion about the details of EDQC and a comparison to other variants of DQC.

\begin{table*}[h!]\centering
\ra{1.3}
\begin{tabular}{@{}rrrr@{}}\toprule
& $\BQP$-sound & $\PP$-sound & $\ALL$-sound \\
\midrule
$1$-round & $\textbf{YES} - \substack{\text{\cite{alagic}} \\ \text{QLWE, in QROM} } $ & $\textbf{NO} - \text{Theorem~\ref{thm:edqcinformal}}$ & $\textbf{NO}- \substack{\text{By def.} \\ \BQP \not\subseteq \AM }$ 
\\ 
\vspace{-4mm}
\\
$2$-round & $\textbf{YES}  - \substack{\text{\cite{mahadev}} \\ \text{QLWE}}$ &  ? & $\textbf{NO}- \substack{\text{\cite{babaiAMgames}} \\ \BQP \not\subseteq \AM }$ 
\\
\vspace{-4mm}
\\
$\poly$-round &  $\textbf{YES}  - \substack{\text{\cite{mahadev}} \\ \text{QLWE}}$  &  ? & $\textbf{NO}- \substack{\text{\cite{aaronsonimposs}} \\  \BQP \not\subseteq \sf{NP/poly} }$ \\
\bottomrule
\end{tabular}
\caption{Summary of known results about the possibility or impossibility of DQC protocols. The rows denote the number of rounds in a protocol and the columns denote the required soundness. We implicitly assume $\BQP$ completeness. Additionally, next to each entry we list the corresponding work and the assumptions needed. We note that our result doesn't require any assumptions.}\label{tab:DQC}
\end{table*}

To prove Theorem~\ref{thm:edqcinformal} we make use of Theorem~\ref{thm:maininformal}, i.e. $\ENT = \NEL \Rightarrow \BQP \neq \PP$. 
Specifically, we assume towards contradiction that a $\BQP$-sound, EDQC protocol exists and that $\BQP = \PP$. 
Next, we build a nonlocal game certifying $\ENT = \NEL$ by using two independent instances of EDQC to control the behavior of the two players. This is very much in the same spirit as how we used RSP and SQG to prove $\ENT=\NEL$ under the QLWE assumption in Theorem~\ref{thm:mainCert}. We then obtain a contradiction with Theorem~\ref{thm:maininformal} which concludes the proof.

Our approach yields a stronger result about the limits of DQC compared to previous works. We summarize those results, together with 
our contribution, in Table~\ref{tab:DQC}. Note that showing the impossibility of a $\PP$-sound protocol is a harder requirement than showing the impossibility of an $\ALL$-sound protocol. Finding a local hidden variable model implementable in a lower complexity class (i.e. a subset of $\PP$), or with more rounds of interaction, would imply even stronger lower bounds. We also expect our technique to be useful for showing impossibility results about other cryptographic primitives, such as (quantum) FHE.



\section{Related works}

\paragraph{Computational entanglement.} There are a number of recent works that study entanglement through the computational lens.
In \cite{pseudoentanglement} the authors give a construction of states computationally indistinguishable from maximally entangled states with entanglement entropy arbitrarily close to $\log n$ across every cut.
This gives an exponential separation between computational and information-theoretic quantum pseudorandomness (in the form of $t$-designs).
Extending upon \cite{pseudoentanglement} in \cite{arnon-friedmancompent} a rigorous study of computational entanglement is initiated. 
More concretely, they define the computational versions of one-shot entanglement cost and distillable entanglement.
Informally speaking, the operational measure of entanglement in both of these works relates to the number of maximally entangled states to which a state in question is equivalent.
In contrast, our operational measure is whether a state displays any nonlocality.

Recently, \cite{advantagefromNL} demonstrated how to compile any nonlocal game to a single-prover interactive game maintaining the same completeness and soundness guarantees, where the soundness holds against $\BPP$ adversaries.
The compilation uses a version of quantum fully-homomorphic-encryption (qFHE).
This means that under computational assumptions any non-local game leads to a single-prover quantum computational advantage protocol.
Building upon \cite{advantagefromNL} in \cite{nonlocalitytoDQC} a DQC protocol is constructed by compiling the CHSH game using a qFHE scheme (sound against $\BQP$ adversaries). 
These works are in some sense dual to Theorem~\ref{thm:mainCert}.
They show how to use nonlocal games to construct more efficient protocols for quantum advantage and DQC while Theorem~\ref{thm:mainCert} shows how to use RSP to construct nonlocal games.

\paragraph{Generalizations of nonlocality and local simulations.}
Compared to the Bell scenario, a larger set of entangled states can be certified in a scenario where many copies of the state are given~\cite{Superactivation_MultipleCopies}, or where more rounds of communication are allowed~\cite{hirshgenuinenonloc}, or in the broadcasting scenario~\cite{broadcasting}.
However, it is still an open question whether one of these scenarios (or a combination of them) can reconcile the concepts of entanglement and nonlocality.

The local simulation of Barrett (\cite{barrettmodel}) was generalized in \cite{hirshsigmalocal}, where it was shown that there exist entangled states with local models for all protocols in a special case of a sequential scenario called local filtering.
These are 3-message protocols, where the parties first send one bit each to the verifier, then receive a challenge, and finally reply to the challenge. 

\paragraph{Assumptions needed for $\BQP$-sound DQC.} 
It is believed that $\BQP$ is not in $\AM$, where $\AM$ is a class of languages recognizable by $1$-round protocols sound against $\ALL$ provers.
A classical result shows that for every constant $k >2$ we have $\AM[k] = \AM$, where $\AM[k]$ is the same as $\AM$ but with $k$-messages (each round consists of $2$ messages)~\cite{babaiAMgames}.
These two results imply that $\ALL$-sound, constant-round DQC protocols don't exist under the likely assumption that $\BQP \not\subseteq \AM$. 
In \cite{aaronsonimposs} a generalization to more rounds was considered. 
It was observed that $\ALL$-sound blind DQC with polynomially-many rounds of interaction does not exist provided $\BQP \not\subseteq \sf{NP/poly}$. It was also shown that blind DQC protocols exchanging $O(n^d)$ bits of communication, which is $\ALL$-sound imply that $\BQP \subseteq \MA/O(n^d)$---containment which does not hold relative to an oracle.

Note, that \cite{alagic} showed that a sufficient assumption for $\BQP$-sound $1$-round DQC protocol in the QROM is the QLWE assumption.
Our result shows that $\BQP \neq \PP$ is a necessary condition.
Similarly, the assumption of $\BQP \neq \PP$ was recently shown necessary for the existence of pseudorandom states, which shows that it is a nontrivial requirement~\cite{Kretschmer}.
Pseudorandom states, introduced in \cite{prsfirst}, are efficiently computable quantum states that are computationally indistinguishable from Haar-random states. 

\section{Discussion and open problems}\label{sec:openproblems}

We have introduced the concept of nonlocality under computational assumptions by considering nonlocal games in which participating parties are computationally bounded. 
In this model, we showed that all entangled states can be distinguished from separable states, under the QLWE assumption. 
This is in contrast to the standard notion of nonlocality in which states like certain Werner states cannot be distinguished from separable states. 
This can be seen in analogy to the notions of \emph{proof systems} and \emph{argument systems}. 
The former are interactive protocols that should be sound against unbounded adversaries, whereas the latter have to be sound merely against computationally bounded adversaries. 
It is known that in many situations argument systems are more expressive than proof systems. 
For instance, in the case of zero-knowledge protocols, $\mathsf{SZK}$, the set of problems that admit statistical zero-knowledge proofs is believed to be strictly smaller than $\mathsf{CZK}$, the set of problems admitting computational zero-knowledge arguments. 
Similarly, we showed that the set of entangled states that can be distinguished from separable states is strictly larger when the parties are computationally bounded.
While the QLWE assumption is sufficient for our results, we also showed that $\BQP \neq \PP$ is a necessary requirement. 
From this, we derived the non-existence of certain $\PP$-sound protocols for delegating quantum computations.
Our work opens up several interesting directions for further exploration.

\paragraph{Operational characterization of $\ENT$ and blackbox reductions.}
As mentioned in the introduction, one of the main goals of research into nonlocality is to give an operational characterization of the set of all entangled states. 
Importantly, this characterization should not assume the correctness of quantum mechanics a priori. 
For this reason, while we showed that all entangled states can be distinguished from separable ones under the QLWE assumption, we also assumed the correctness of quantum mechanics because we described the inner workings of Alice and Bob using quantum circuits. 
In the cryptographic terminology, we proved a \emph{whitebox} reduction from LWE to the soundness of the $\NEL$ protocol. 
This is inherited from the soundness proof of the RSP protocol, which uses the prover's quantum operations explicitly in order to construct an efficient adversary that solves LWE. 
As such, an interesting open problem is whether there exists a \emph{blackbox} reduction. 
If so, this would indeed provide the desired operational characterization of $\ENT$ under the QLWE assumption. 

It would also be desirable to have a $\NEL$ protocol involving only one round of communication, as opposed to our protocol which uses 3 rounds.

\paragraph{Further applications of computational entanglement.}
As we've noted, several recent works have investigated computational notions of entanglement, primarily pseudoentanglement. 
Much like how pseudoentanglement provides a separation between computational and information-theoretic pseudorandomness, we expect computational nonlocality to provide further interesting separations with respect to standard, information-theoretic nonlocality. 
For instance, one could consider nonlocal games in which both parties receive the same question from the referee. 
In standard nonlocality, this would not allow for the certification of any state, as both Alice and Bob would know each other's input. 
However, in the semi-quantum games framework, while both parties receive the same quantum state, due to the uncertainty principle they would still not know which state they received. 
This could then be extended to the $\NEL$ setting, showing that nonlocal games with both parties having the same question can still certify entangled states, provided the parties are computationally bounded.
In essence, in this computational setting, one is trading the uncertainty principle for non-rewindability. A similar connection between these two distinct notions of non-classicality was also observed by~\cite{advantagefromNL}.

Another avenue would be to explore scaled-down versions of standard multi-prover complexity classes. 
For instance, one could consider instances of $\MIP$ or $\QMA(2)$ with $\BQP$ honest provers. 
It's known that $\MIP^*[\BQP] = \BQP$, where $\MIP^*[\BQP]$ denotes the set of problems that can be verified efficiently by interacting with two $\BQP$ provers sharing entanglement (and in which soundness is against unbounded provers). 
Less is known in the setting where the provers are only allowed to share a separable state (such as $\MIP$ or $\QMA(2)$) or an entangled state that is local.
Connections to local simulation, along the lines of the proof of Theorem~\ref{thm:edqcinformal}, might yield a separation between $\MIP[\BQP]$ and $\MIP^*[\BQP]$.

\paragraph{Cryptography and $\ENT = \NEL$.}
It is natural to conjecture that Theorems~\ref{thm:mainCert} and~\ref{thm:maininformal} could be strengthened so that an equivalence between the existence of a cryptographic primitive and $\ENT = \NEL$ is achieved.
One avenue for strengthening Theorem~\ref{thm:mainCert} could be to base the result on qFHE with auxiliary input~\cite{advantagefromNL}. Indeed, this was shown in~\cite{nonlocalitytoDQC} to allow for DQC. 
Likely, their results would also lead to an RSP protocol, in which case one would indeed derive $\ENT=\NEL$ from $\BQP$-sound qFHE, following our approach. This would also have the intriguing feature of compiling a standard nonlocal game (in the case of~\cite{nonlocalitytoDQC}, the CHSH game) into a nonlocal game with computationally bounded parties.

To improve Theorem~\ref{thm:maininformal} one could try to use the fact that $\ENT = \NEL$ gives us an \emph{average-case hardness} and not just the worst-case hardness that we used. Additionally, since the main element in our proof was the estimation of certain inner products to within inverse exponential precision, any approach for doing this in a class that's lower than $\PP$ (i.e. a subset of $\PP$), would directly improve our result. It would be worth seeing, for instance, whether Stockmeyer's approximate counting could help, in which case one would obtain that $\ENT=\NEL$ implies $\BPP \neq \NP$.

It is worth mentioning that connections of a similar flavor have recently been found in other contexts, e.g. in \cite{zvikaequiv} an equivalence between a phenomenon in high-energy physics (the hardness of black-hole radiation decoding) and the existence of standard cryptographic primitives (EFI pairs from \cite{brakerskiquantumcrypto}) was shown.


\paragraph{Local simulation and DQC.}
Further exploration of connections with local simulations can be fruitful.
As we mentioned, \cite{hirshsigmalocal} gives a local simulation for 3-message protocols.
If one can implement it in $\PP$ also then that would likely imply that $\PP$-sound 3-message protocols for DQC don't exist.
Moreover, there is a strong connection between many-round local simulations and the major open problem of whether information-theoretic sound DQC protocols exist.
For instance, if one could show that there exists a local simulation for some entangled state, for any number of rounds, then that should rule out information-theoretic sound EDQC protocols.

\section*{Acknowledgements}
We thank Thomas Vidick for helpful discussions. AG is supported by the Knut and Alice Wallenberg Foundation through the Wallenberg Centre for Quantum Technology (WACQT). At the time this research was conducted, KB was affiliated with EPFL.

\section{Preliminaries}

Throughout, for $n \in \N$,  $[n]$ denotes $\{0,1\dots,n-1\}$. 
A function $\mu : \N \rightarrow \Real$ is called negligible if for every polynomial $p : \N \rightarrow \Real $ we have $\lim_{n \rightarrow \infty} |\mu(n)| \cdot  p(n) = 0$. 
We use $\log$ to denote the logarithm with base $2$.

\subsection{Quantum mechanics}\label{sec:prelimquantum}

$\mathcal{H}$ always denotes a finite-dimensional Hilbert space, $L(\mathcal{H})$ denotes the set of linear operators in $\mathcal{H}$, $\id \in L(\mathcal{H})$ is the identity operator, and $\mathcal{P}(\mathcal{H})$ denotes the set of positive semi-definite linear operators from $\mathcal{H}$ to itself. 
We define the set of normalized quantum states 
\begin{align*}
\mathcal{S}(\mathcal{H}) := \Big\{\rho \in \mathcal{P}(\mathcal{H}) : \trace[\rho] = 1 \Big\}.
\end{align*}
We also define the set of pure states $\Omega(\hilb) := \{\ket{\lambda} \in \hilb \ | \braket{\lambda}{\lambda}  = 1\}$. 
We denote by $\omega(\hilb)$ the unique distribution over $\Omega(\hilb)$ invariant under unitaries.
For pure states, $\ket{\lambda} \in \Omega(\hilb)$, we often interchangeably use $\ket{\lambda}$ and $\ket{\lambda}\bra{\lambda}$.
We write $\hilb_{AB} = \hilb_A \otimes \hilb_B$ for a bipartite system and $\rho_{AB} \in \mathcal{S}(\hilb_{AB})$ for a bipartite state. $\rho_A = \trace_B[\rho_{AB}], \rho_B = \trace_A[\rho_{AB}]$ denote the corresponding reduced densities.  

A collection of $k$ operators $\mathcal{A} = \{\mathcal{A}_i\}_{i=0}^{k-1}$, where $\mathcal{A}_i \in \mathcal{P}(\hilb)$ is called a \emph{$k$-outcome} POVM if
$\sum_{i=0}^{k-1} \mathcal{A}_i = \id$.
The probability of obtaining outcome $i \in [k]$ when $\mathcal{A}$ is measured on a state $\rho \in \mathcal{S}(\hilb)$ is equal to $p(i |\mathcal{A}) = \trace[\mathcal{A}_i \rho]$.
For our applications we can, without loss of generality, assume that all $\mathcal{A}_i$'s are rank one. This is because any measurement with operators of rank larger than one can always be realized as a measurement with rank-one operators. 
More concretely for every $i \in [k]$ we can write $\mathcal{A}_i = \sum_j \eta_i^{(j)} P_i^{(j)}$ with $\eta_i^{(j)} \in (0,1]$ and $P_i^{(j)}$ being the rank-one projectors of $\mathcal{A}_i$. 
Now $\mathcal{A}_{i,j} = \eta_i^{(j)} P_i^{(j)}$ is a POVM, which is a \say{finer grained} version of $\{\mathcal{A}_i\}$. 
To reproduce the statistics of the original POVM one simply applies the finer POVM and forgets the result $j$. 
Thus for every $i \in [k]$ we can write $\mathcal{A}_i = \eta_i P_i$, where $\eta_i \in (0,1]$ and $P_i$ is a rank-one projection. 

For $X,Z \in L(\hilb)$ we write $\{X,Z\} = XZ + ZX$ to denote the anticommutator, $\sigma_X,\sigma_Y,\sigma_Z \in L(\Com^2)$ are the single-qubit Pauli matrices. 
For $\theta \in \left\{0,\frac{\pi}{2}, \pi, \frac{3\pi}{2} \right\}$ we write $\ket{+_\theta} = \frac{1}{\sqrt{2}} \left(\ket{0} + e^{i\theta}\ket{1} \right)$.


\subsection{Entanglement and Nonlocality}\label{sec:history}

In this section, we introduce the definitions of entanglement and nonlocality. 
We recommend \cite{entnonsurvey} for a detailed survey about the relationship between the two notions. 

We imagine that there are two spatially separated parties $\alice$ and $\bob$ that share a state $\rho_{AB} \in \mathcal{S}(\hilb_A \otimes \hilb_B)$ or access to public randomness.
They interact with a verifier $\ver$ who sends and collects classical messages.
Next, we define a notion of entanglement.

\begin{definition}[Entanglement] For $\rho_{AB} \in \mathcal{S}(\mathcal{H}_A \otimes \mathcal{H}_B)$ we say it is \textbf{separable} if it can be expressed as $\rho_{AB} =\sum_i p_i \ \sigma_A^{(i)} \otimes \sigma_B^{(i)}$ for some $\sigma_A^{(i)} \in \mathcal{S}(\hilb_A), \sigma_B^{(i)} \in \mathcal{S}(\hilb_B), p_i > 0, \sum_i p_i = 1$. Otherwise, we call it \textbf{entangled}. 
\end{definition}

The most famous example of an entangled state is the Bell state, also known as an EPR pair, $\ket{\phi^+} = \frac{1}{\sqrt{2}}(\ket{00} + \ket{11})$. The notion of entanglement is a purely mathematical notion and a priori doesn't carry any operational meaning.

Next, we introduce the notion of nonlocality in a way that is often presented in the physics literature. This definition changes once we consider computational aspects (see Definition~\ref{def:nonlocalityinformal} and~\ref{def:nonlocality}).

\begin{definition} [Nonlocal] For a probability distribution $\Prob(a,b \ | \ x,y)$ we say that $\Prob$ is \textbf{local} if there exist probability distributions $\Prob_1(a \ | \ x,\lambda)$ and $\Prob_2(b \ | \ y,\lambda)$ and $\lambda$ such that 
\begin{equation}\label{eq:nonlocalphysicsdef}
    \Prob(a,b \ | \ x,y) = \int_{\lambda}  d\lambda \ \Prob_1(a \ | \ x,\lambda) \ \Prob_2(b \ | \ y,\lambda),
\end{equation}
where $\lambda$ is understood as a local hidden variable. Operationally this means that there exist $\alice$ and $\bob$, sharing randomness $\lambda$, such that their joint distribution replicates $\Prob$. 
If a probability distribution is not local, we call it \textbf{nonlocal}. 
\end{definition}

The Bell experiment (\cite{bell}), also known as the CHSH game, is one of the first examples that show the nonlocality of quantum correlations. 
In this game, $\alice$ and $\bob$ are given uniformly random single-bit inputs, $x$ and $y$ respectively, and are expected to reply with single-bit answers, $a$ and $b$, such that $x \land y \stackrel{(*)}{=} a\oplus b$. 
It can be shown that for any local strategy. i.e. satisfying \eqref{eq:nonlocalphysicsdef}, the probability of $(*)$ is upper-bounded by $75\%$. 
It can also be shown that if the parties share a $\ket{\phi^+}$, they can satisfy $(*)$ with $\approx 85\%$.
Hence, this is an example of a non-local probability distribution, i.e. $\ket{\phi^+}$ is non-local. 
This game certifies non-locality of $\ket{\phi^+}$.
The term \say{certified} usually means that we assume quantum theory but the Bell experiment proves more, i.e. that a theory governing the behavior of $\alice,\bob$ must contain some notion of entanglement. 
Later, the result from \cite{bell} was improved (\cite{gisinpurestatesarenonlocal}) to show that all pure entangled states are non-local.

As we mentioned in the introduction, for some time it was believed that entanglement equals nonlocality in the Bell scenario.
A surprising result was presented in \cite{wernerprojective}, where it was shown that for a class of entangled states, every distribution obtained by projective measurements can be simulated locally. 
Let us give more details. 

For $p \in [0,1]$ let $\rho(p)$ be a state in $\Com^2 \otimes \Com^2$ be defined as 
\begin{equation}\label{eq:wernerstate}
\rho(p) := p \ket{\psi_-}\bra{\psi_-} + \frac{1-p}{4} \id,
\end{equation}
where $\ket{\psi_-} = \frac{1}{\sqrt{2}}(\ket{01} - \ket{10})$.
These states are known as the Werner states. It was shown in \cite{wernerprojective} that $\rho_{AB}(p)$ is entangled if and only if $p > \frac13$.

\begin{figure}
    \centering
    \includegraphics{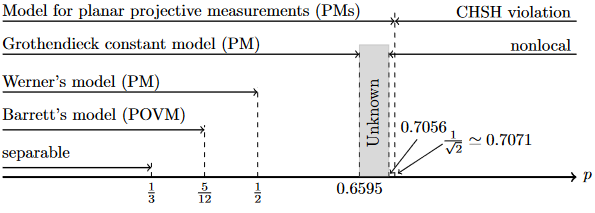}
    \caption{Properties of the two qubit Werner state $\rho(p) = p\ket{\psi_-}\bra{\psi_-} + (1-p)\frac{\id}{4}$, for $p \in [0,1]$. State $\rho(p)$ is entangled exactly for $p>\frac13$ however all POVMs can be simulated locally as long as $p<\frac{5}{12}$.}
    \label{fig:ranges}
\end{figure}

However, Werner showed that for any $p \leq \frac12$, all projective measurements on $\rho(p)$ can be simulated locally, i.e. expressed as in \eqref{eq:nonlocalphysicsdef}. This result was further generalized to all POVMs when $p<5/12$ (\cite{barrettmodel}). 
Figure~\ref{fig:ranges} summarizes the known results for $\rho(p)$ depending on the value of $p$. 

\subsection{Complexity theory}\label{sec:prelimcomplexity}

A language is a function $L : \{0,1\}^* \rightarrow \{0,1\}$.
The complexity classes are collections of languages recognizable by a particular model of computation. 
The classes of interest are $\BPP$, i.e. classical randomized polynomial time, $\BQP$, i.e. quantum polynomial time, and $\PP,\PostBQP$ that we define below.
We also consider $\SampBQP$, which is a sampling class that we also define below.

The class $\PP$ consists of problems solvable by an $\NP$ machine such that (i) if the answer is \say{yes} then at least $\frac12$ of computation paths accept, (ii) if the answer is \say{no} then less than $\frac12$ of computation paths accept. 

In the seminal result \cite{PostBQPPP} it was shown that $\PostBQP = \PP$. 
The class $\PostBQP$ is the class of languages recognizable by a uniform family of polynomially sized $\BQP$ circuits with the ability to post-select. 
This is the ability to post-select on a particular qubit being $\ket{1}$. 
More concretely, for a state $\sum_{x \in \{0,1\}^n} \alpha_x \ket{x}$, if the post-selection is applied to the first qubit then the resulting state is
$$\left(\sum_{x' \in \{0,1\}^{n-1}} |\alpha_{1x'}|^2 \right)^{-1} \sum_{x' \in \{0,1\}^{n-1}} \alpha_{1x'} \ket{1x'}.
$$
We can think of post-selection as a new 1-qubit gate that can be applied to a chosen qubit at any place in the circuit.

Sampling problems are problems, where given an input $x \in \nbits$, the goal is to sample (exactly or, more often, approximately) from some probability distribution $\dist_x$ over $\poly (n)$-bit strings.
$\SampBQP$ is the class of sampling problems that are solvable by polynomial-time quantum computers, to within $\e$ error in total variation distance, in time polynomial in $n$ and $1/\e$.

\subsection{Delegation of Quantum Computation (DQC)}

We give an overview of the history of DQC, but for a more in-depth review of DQC protocols, we refer the reader to a survey \cite{andrusurvey}. 

Arguably, the first time the question of delegating quantum computation was formalized was in \cite{scott25challenge}. 
It asks if a quantum computer can convince a classical observer that the computation it performed was correct. 
More formally we ask if there exists an interactive proof for $\BQP$, such that the honest prover is in $\BQP$, the verifier is in $\BPP$, and the protocol is sound against all adversaries. 
The importance of this question was later emphasized in \cite{vaziranifalsifiablequantum}, where it was argued that it has important philosophical implications for the testability of quantum theory. 

Before we delve deeper into details we bring the reader's attention to how $\BQP$ fits wrt to classical complexity classes. 
Firstly, it is believed that $\BQP \not\subseteq \NP$ (it is known that $\BQP$ is not in $\PH$ relative to an oracle \cite{talPH}), which means that, most likely, there does not exist an $\NP$-like witness for all problems in $\BQP$. 
This shows that the DQC problem is nontrivial.
Secondly, a simple argument (\cite{vaziraniBQP}) shows that $\BQP \subseteq \PSPACE$, which combined with the classical result $\IP = \PSPACE$ shows that if we didn't require the prover to be efficient then a DQC protocol had already existed. 

It turns out that one can verify $\BQP$ computations albeit in slightly modified settings. 
For example, if one considers the so-called multiprover systems, i.e. where the verifier interacts with two or more $\BQP$ provers, then protocols for $\BQP$ delegation are known to exist (\cite{MIPforBQP1,MIPforBQP2,MIPforBQP3}). 
On the other hand, protocols for delegation are also known in a setting with only one prover but where the verifier has access to a constant-size quantum computer (\cite{preparesend}, \cite{receivemeasure}).
This model falls under the category of $\QIP$. 
In a recent breakthrough, it was shown (\cite{mahadev}) that under the assumption that the Learning with Errors problem is quantumly hard, i.e., $\text{LWE} \not\in \BQP$, one can delegate $\BQP$ to a single prover.
In this result, the soundness guarantee holds against $\BQP$ adversaries.
The breakthrough of this protocol is in the fact that communication is purely classical.
However, the result was achieved at the cost of limiting the power of dishonest provers to $\BQP$.
Thus \cite{mahadev} didn't fully answer the initial question as we hoped for soundness against all dishonest provers.

Not long after, several new variants of the Mahadev protocol appeared in the literature. 
In \cite{vidick} it was shown how to build a blind version of this protocol by forcing the prover to prepare quantum states blindly. 
A \textit{blind} delegation is such that the prover can not distinguish the computation it was asked to perform from any other of the same size.
In \cite{alagic}, using a Fiat-Shamir-like argument and parallel repetition of the Mahadev protocol, the authors showed how to implement a protocol in $1$-round in the quantum-random-oracle-model (QROM). 
This shows that the quantum hardness of LWE + QROM gives sufficient assumptions required to achieve a $\BQP$-sound, $1$-round DQC.

\section{A new model - Not-efficiently-local}\label{sec:notefflocal}

We introduce a new definition of nonlocality that takes into account the computational power of $\alice$ and $\bob$. As our model is different from the standard setup we define the setup very carefully.

\begin{definition}[Distinguishing Game]\label{def:game}
For $k,\ell \in \N, \rho_{AB} \in \mathcal{S}(\mathcal{H}_A \otimes \mathcal{H}_B)$ we define $\mathcal{G}(\rho_{AB},k,\ell)$ to be a game between $\alice, \bob$ and $ \ver$. $\mathcal{G}$ is played in one of two modes. $\alice,\bob$ will have access to either (i) $\rho_{AB}$ or (ii) a separable (\say{classical}) state $\sigma_{AB}$. First, the hyperparameters $k,\ell$ are distributed to all parties and one of the modes is chosen. The mode is not known to $\ver$. $\mathcal{G}$ proceeds in $k$ rounds. 

In each round $\alice$ and $\bob$ are given their respective share of $\rho_{AB}$ in mode (i) or of $\sigma_{AB}$ in mode (ii) and are forbidden to communicate. Then
\begin{enumerate}
    \item $\ver$ sends a message $x$ to $\alice$ and $y$ to $\bob$, where $x,y \in \{0,1\}^{p(\ell)}$,\footnote{$p$ is a fixed polynomial.}
    \item $\alice, \bob$ compute their answers $a,b \in \{0,1\}^{q(\ell)}$.\footnote{$q$ is also a fixed polynomial.} 
    They can operate on their respective shares of $\rho_{AB}$ in mode (i) or access their share of $\sigma_{AB}$ in mode (ii). 
    The computational modeling of the behavior of $\alice,\bob$ is discussed below.
    \item Answers $a,b$ are sent to $\ver$, which stores them.
\end{enumerate}
Then, the next round starts. 
After the $k$-th round, $\ver$ outputs, based on $a,b$'s, either YES or NO.
\end{definition}


Next, we define a new notion of nonlocality that we call $\NEL$, standing for not-efficiently-local. We assume the extended quantum Church-Turing hypothesis that any computation in the physical world can be modeled by a polynomial time quantum machine. This essentially translates to limiting the power of dishonest parties to $\BQP$. More concretely we define  

\begin{definition}[Not-efficiently-local]\label{def:nonlocality}
For a state $\rho_{AB} \in \mathcal{S}(\mathcal{H}_A \otimes \mathcal{H}_B)$ we say that $\rho_{AB}$ is not-efficiently-local if for every sufficiently small $\delta$ there exists $k \in \N$, a game $\mathcal{G}(\rho_{AB},k,\cdot)$, $\alice(\cdot),\bob(\cdot) \in \text{QPT}(\cdot), \ver \in \text{PPT}(\cdot)$ and a polynomial $p$
such that for every $\ell \in \N$
\begin{enumerate}
    \item{(\textbf{Completeness})} If $\mathcal{G}(\rho_{AB},k,\ell)$ was run in mode (i) with $\alice(\ell), \bob(\ell)$ then $$\Prob[\ver \text{ accepts}] \geq 1 - \delta,$$
    \item{(\textbf{Soundness})} For \textbf{every} $\alice'(\cdot),\bob'(\cdot) \in \text{QPT}(\cdot)$ if $\mathcal{G}(\rho_{AB},k,\ell)$ was run in mode (ii) with $\alice',\bob'$ then $$\Prob[\ver \text{ accepts}] \leq 1 - \delta - \frac{1}{p(\ell)}.$$
\end{enumerate}
\end{definition}

\paragraph{Modeling.} 
Whenever we say that $\alice \in \text{QPT}$ we mean that for every $\ell$, $\alice$'s actions can be modeled by a polynomial-sized quantum circuit that is generated by a polynomial-time Turing machine run on $1^\ell$.
This essentially translates to $\alice,\bob \in \BQP$, as in both cases it implies that $\alice,\bob$ are solving problems in $\BQP$.
However, there is a nuance that in our setup $\alice$ and $\bob$ are \textit{sampling} answers from a distribution, thus writing $\alice,\bob \in \BQP$ could be misleading. 
Moreover, as we already discussed (Section~\ref{sec:techhentanglementiimpliescrypto}) once we start considering the separation of complexity classes (Section~\ref{sec:entanglementcertimpliescrypto}) the distinction between $\BQP$ and $\SampBQP$ starts becoming important.

More concretely, for every $\ell \in \N$, $\alice(\ell)$'s answer is the result of applying some polynomially sized circuit with constant sized qubit gates on $\rho_A \otimes \ket{x} \otimes \ket{0^{q(\ell) - p(\ell) - 1}}$ for some polynomial $q(\ell)$ and measuring all\footnote{Wlog we can assume that $\alice$ and $\bob$ measure all the qubits. It is because honest acting parties sending a superset of information would also yield a valid protocol as $\ver$ can just ignore the extra bits.} qubits in the computational basis. 
We may assume that the gates come from a universal gate set.  
It is because the Solovay-Kitaev theorem guarantees that we can approximate any circuit consisting of constant-qubit gates to within $\e$ error by incurring a multiplicative blow-up of $\text{polylog}(1/\e)$ in the number of gates. 
Finally, we use a result from \cite{ToffoliH} to argue that we can assume that the gate set used is equal to $\{\text{Toffoli},\text{Hadamard}\}$.
This choice incurs another polylogarithmic in $(\ell,1/\e)$ blowup in the number of gates. 
Note that the gate set $\{\text{Toffoli},H\}$ is not universal in the standard sense as both matrices contain only real entries. 
However, it is enough for our purposes as we are interested in computational universality only (see \cite{ToffoliH}).

To summarize, there exists a uniform family of circuits $\left\{C^{\alice}_\ell \right\}_{\ell \geq 1}$  acting on $q(\ell)$ qubits with $t(\ell)$ gates (for some polynomials $q,t$) 
such that for every $\ell \in \N$ we have that $\alice(\ell)$'s answer is the result of applying $C_\ell^\alice$ on 
\begin{equation}\label{eq:howAacts}
\rho_A \otimes \ket{x} \otimes \ket{0}^{q(\ell) - p(\ell) -1}
\end{equation}
and measuring all qubits in the computational basis. 
For $\bob$ the situation is analogous, i.e. there is a uniform family of circuits $\left\{C^\bob_\ell \right\}_{\ell \geq 1}$ such that the answer of $\bob(\ell)$ is the result of applying $C_\ell^\bob$ on $\rho_B \otimes \ket{y} \otimes \ket{0}^{q(\ell) - p(\ell) -1}$ and measuring all qubits in the computational basis.\footnote{Note that we can assume wlog that both circuits operate on $q(\ell)$ qubits.} 

We require $\ver$ to be efficient, i.e. $\ver \in \text{PPT}$.
It is also possible to consider a definition where $\ver$ is unbounded.
This choice follows the usual setup of interactive proof systems, where $\ver$ needs to be efficient.
We note that our second result (Section~\ref{sec:entanglementcertimpliescrypto}) holds also for the more challenging definition where $\ver$ can be unbounded. 

\paragraph{Access to $\sigma_{AB}$.} In mode (ii) of the game (Definition~\ref{def:game}) $\alice,\bob$ have access to a separable state $\sigma_{AB} \in \mathcal{S}(\hilb_A \otimes \hilb_B), \ \sigma_{AB} = \sum_i p_i \ \sigma_A^{(i)} \otimes \sigma_B^{(i)}$. We assume that $\alice$ and $\bob$ have special registers, where, at the beginning of each round, $i$ is sampled according to $p_i$ and a state $\sigma_A^{(i)}$ is placed in $\alice$'s register and $\sigma_B^{(i)}$ is placed in $\bob$'s register. Note that the assumption that $\alice,\bob \in \text{QPT}(\ell)$ implies that $\sigma_{AB}$ is a state on at most $\poly(\ell)$ qubits. 

Shared public randomness can be simulated by access to a separable state.
It is because any random string $r \in \nbits$ can be represented as $\ket{r}_A \otimes \ket{r}_B$.
Moreover, note that the number of qubits needed to represent it is equal to the number of bits of randomness.
However, it is not clear that any separable state can be replaced by public randomness.
When there are no restrictions on computational capabilities of $\alice,\bob$ then it is possible. 
This is the case because any distribution $\{p_i\}$ can be approximated to high precision with access to an infinite string of common randomness and every separable state on finite-dimensional space can be approximated by local operations.
Thus one can achieve an inverse polynomial in $\ell$ separation as the definition requires.
But once we limit the computational power of $\alice,\bob$ the equivalence is not clear. 

We prove $\text{LWE} \not\in \mathsf{BQP/qpoly} \Rightarrow \ENT = \NEL$ in a model where cheating parties share a separable state.
This, according to the discussion above, is a stronger result than considering public randomness only.
In the second result, i.e. $\ENT = \NEL \Rightarrow \BQP \neq \PostBQP$, assume the model where the honest parties share public randomness.
As discussed this is a stronger result.

\section{Cryptography implies Entanglement Certification}\label{sec:cryptoimpliesentcert}

In this section, we formally show how to design a protocol certifying the entanglement of all entangled states assuming the existence of post-quantum cryptography, specifically $\text{LWE} \not\in \mathsf{BQP/qpoly}$, known as the (non-uniform) QLWE assumption. 
As we discussed our protocol is an amalgamation of a Remote State Preparation (RSP) protocol and a Semi-Quantum Game (SQG) for certifying all entangled states.
We present these two ingredients in Sections~\ref{sec:RSP} and~\ref{sec:quantuminputsprotocol} and then combine them to arrive at the final protocol in Section~\ref{sec:final}.

\subsection{Remote State Preparation (RSP)}\label{sec:RSP}

\begin{figure}
    \rule{\textwidth}{0.6pt}
    Let $\ell \in \N$ be a security parameter.
    \begin{enumerate}[itemsep=5pt]

        \vspace{5pt}
        
        \item[R1.] The verifier selects $G \gets_{U} \{0,1\}$. If $G = 0$ they sample a key $(k,td) \gets \text{GEN}_\mathcal{F} (1^\ell)$. If $G = 1$ they sample $(k, td) \gets \text{GEN}_\mathcal{G}(1^\ell)$. The verifier sends the key $k$ to the prover and keeps the trapdoor information $td$ private.
        \item[R1.] The prover returns a $y$ to the verifier. If $G = 0$, for $b \in \{0, 1\}$, the verifier uses $td$ to compute $\hat{x}_b \gets f^{-1}_{pk} (y,b,td)$. If $G = 1$, the verifier computes $(\hat{b},\hat{x}_{\hat{b}}) \gets g^{-1}_{pk}(y,td)$.
        \item[R2.] The verifier samples a type of round uniformly from $\{2a,2b\}$ and performs the corresponding of the following
        \begin{enumerate}[itemsep=5pt]
            
            \vspace{5pt}
            
            \item[R2a.] (\textit{preimage test}) The verifier expects a preimage. The prover returns $(b, x)$. If $G = 0$ and $\hat{x}_b \neq x$, or if $G = 1$ and $(b, x) \neq (\hat{b}, \hat{x}_{\hat{b}})$, the verifier \textsc{Aborts}.
            \item[R2b.] (\textit{measurement test}) The verifier expects an equation $d \in \Z^{n/2}_4$ from the prover. If $G = 0$, the verifier computes $\widehat{W} \gets \widehat{W}(d) , \hat{v} \gets \hat{v}(d)$, i.e. $\widehat{W} + 2 \hat{v} = d  (x_1 - x_0) \text{ mod } 4$. 
            
            \vspace{2pt}
            
            The verifier samples a type of round uniformly from $\{3a,3b\}$ and performs the corresponding of the following
            \begin{enumerate}[itemsep=5pt]

                \vspace{5pt}
            
                \item[R3a.] (\textit{consistency check}) The verifier samples $c \gets_U \left\{X,\frac{X-Y}{\sqrt{2}},Y,\frac{X+Y}{\sqrt{2}},Z \right\}$, sends $c$ to the prover and expects $v \in \{0,1\}$ back. 

                \vspace{2pt}
                
                If $c = Z$ and $G = 1, v \neq \hat{b}$ the verifier \textsc{Aborts}. If $c \in \left\{X,\frac{X-Y}{\sqrt{2}},Y,\frac{X+Y}{\sqrt{2}}\right\}, G = 0$ then

                \begin{enumerate}
                    \item[A.] If $c = \widehat{W}, v \neq \hat{v}$ the verifier \textsc{Aborts}.
                    \item[B.] If $\widehat{W} \in \{X,Y\}$ and $c \in \left\{\frac{X-Y}{\sqrt{2}},\frac{X+Y}{\sqrt{2}}\right\}$, performs a QRAC test on $v$. 
                \end{enumerate} 
                \item[R3b.] (\textit{succesful state preparation}) The verifier knows that, if $G = 0$, the prover holds $|\hat{b} \rangle$, and if $G = 0$, the prover holds $\ket{+_{\frac{\pi}2(\widehat{W} + 2 \hat{v})}}$.
            \end{enumerate}
        \end{enumerate}
    \end{enumerate}
    \rule{\textwidth}{0.6pt}
    \caption{RSP protocol.} 
    \label{fig:qubit_preparation_protocol}
\end{figure}

This section describes the RSP protocol. 
More concretely we show how to delegate the preparation of a random state out of the set $\{\zero,\one,\plus,\minus,\ii,\minusi \}$. 
We formally define the protocol in Figure~\ref{fig:qubit_preparation_protocol}.
The RSP protocol takes place between two parties that we will call $\ver$ and $\prov$, as in prover.
Later on, two instances of this protocol will be used for communication between $\ver$ and $\alice$ and between $\bob$ and $\ver$. 
The RSP will be then naturally extended, via an amalgamation with the result from \cite{Buscemi_SemiQuantumGame}, to give rise to the final protocol certifying all entangled states. 

The protocol is based on a set of trapdoor claw-free (TCF) functions $\cF=\{f\}$ which are assumed to have three important properties.
First, they are 2-to-1: any image $y$ has exactly two preimages $(0,x_0), (1,x_1)$ such that $y=f(0,x_0)=f(1,x_1)$\footnote{Note that formally $f$ is 1-to-1 not 2-to-1 as it accepts an additional input bit apart from $x_0/x_1$. 
It is done to split the domain into two sets, i.e. the 0-preimages and 1-preimages. It is still best to think of $f$ as a 2-to-1 function.}.
Second, there is a trapdoor: anyone who \emph{selected} some $f\in\cF$ can easily invert it, i.e. find the two preimages corresponding to any image $y$.
Third, they are \textit{claw free}: anyone who is \emph{given} some $f\in\cF$ can provide an image $y$, a preimage $(0,x_0)$ (or $(1,x_1)$), but cannot find even a single bit of information about $x_1$ (or $x_0$).
Our protocol also uses a second set of functions $\cG=\{g\}$ which are  1-to-1 (any image $y$ has exactly one preimage $(b,x)$ such that $y=g(b,x)$) and invertible by anyone who \emph{selected} it.
At last, these two sets of functions are assumed to be indistinguishable: anyone who is \emph{given} a function $h\in\cH:=\cF\cup\cG$ in one set or the other cannot guess from which of the two sets $\cF$ or $\cG$ it comes from.
All these properties are not satisfied in absolute, but only when the prover is limited to be in $\BQP$.
Families $\cF,\cG$ can be constructed based on a cryptographic assumption that LWE is quantumly hard. 
Formal properties of such functions are given in Appendix~\ref{apx:clawfree}.

\begin{note}
When discussing the RSP protocol we will use $x,y$ to denote elements of the the domain and range of $f$ respectively and \textbf{not} questions to $\alice$ and $\bob$ as in Definition~\ref{def:game}.
We do that to be consistent with the notation of most of the protocols based on claw-free functions and also to be consistent with the standard notation for Bell games.
We think this conflict of notation is not problematic as it is only present when discussing the RSP protocol, which is used a modular way.
\end{note}

Let us now present the RSP protocol in more detail, which is a three-round protocol (see Fig.~\ref{fig:RSP_protocol}). This proves its \emph{completeness}: an honest prover with access to a $\BQP$ machine can succeed by following the procedure below.  


\begin{figure}
    \centering
    
    \includegraphics[scale=0.35,trim={0 2cm 0 0},clip]{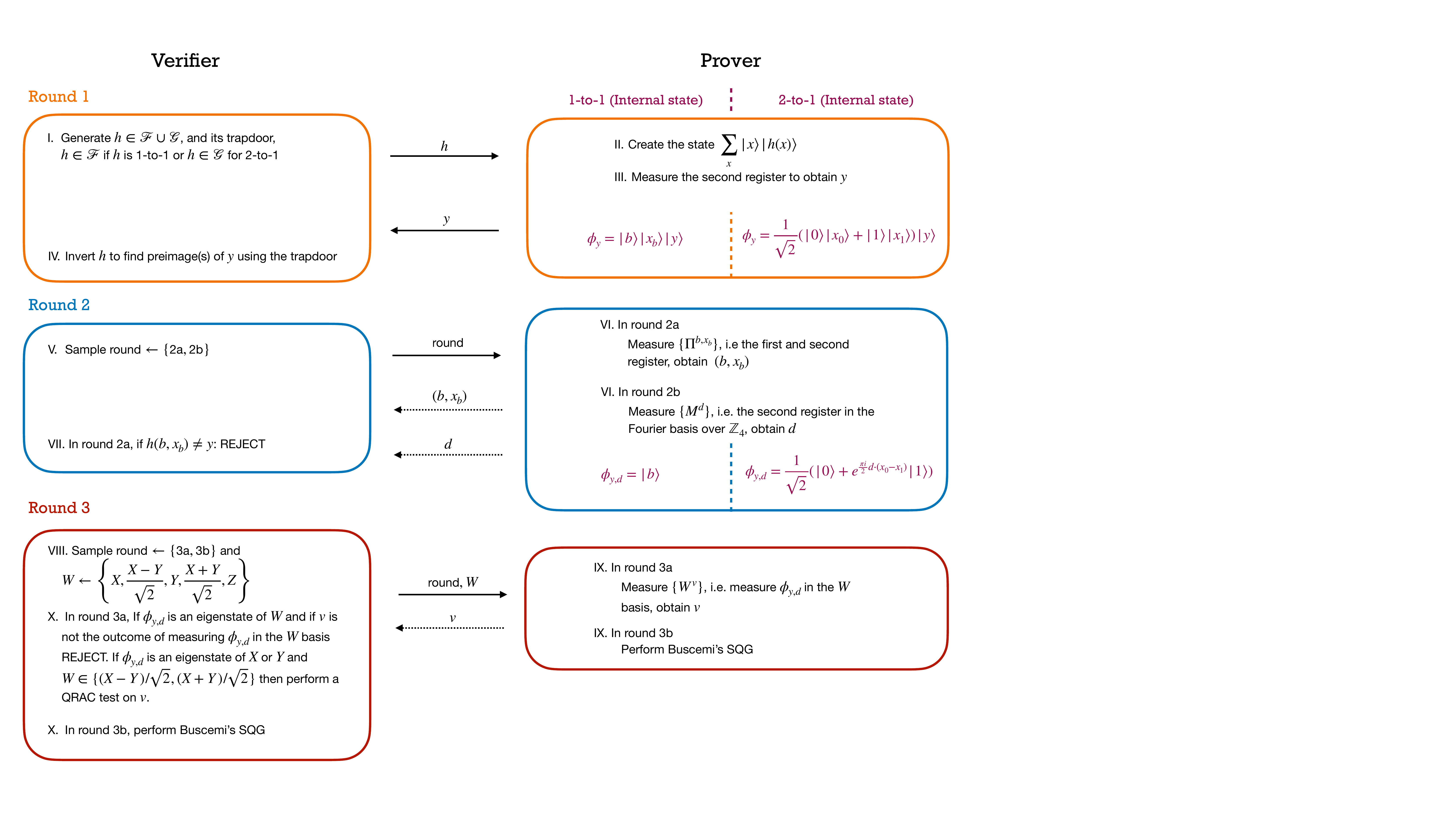}
    
    \caption{Description of the RSP protocol.}
    \label{fig:RSP_protocol}
\end{figure}

\textit{RSP, Round 1:} The verifier generates (\rom{1}) either a function $f\in\cF$ or $g\in\cG$ which from now on we call $h$.
He sends a description of this function to the prover.
The prover creates (\rom{2}) the state $\ket{\phi}=\sum_{b,x} \ket{b}\ket{x}\ket{h(b,x)}$. Next, she measures (\rom{3}) the last register, sending the output $y$ to the verifier.
Note that when $h=f$, the prover's first register is projected on the state $(\ket{0}\ket{x_0}+\ket{1}\ket{x_1})\ket{y}$ with $f(0,x_0)=f(1,x_1)=y$. 
When $h=g$, it is projected on the state $\ket{b}\ket{x} \ket{y}$ with $g(b,x)=y$.
Importantly, the verifier upon receiving $y$ can invert (\rom{4}) $h$ hence can deduce the prover's state, which the prover cannot do according to our cryptographic assumption.
In the following, the verifier exploits this additional knowledge to check that the prover is not cheating. 
For the remaining rounds, he randomly samples (\rom{5}) one of two types of rounds, 2a or 2b, followed by 3a or 3b. The "a" rounds are designed to check that the prover is not cheating and abort the protocol, while the "b" rounds lead to the preparation of a state.

\textit{RSP, Round 2a:} 
Half of the time, as a consistency check the verifier asks for a preimage of $y$. 
He requests the prover to measure (\rom{6}) the first and second registers in the computational basis, and checks that the provers replied $(b,x)$ are indeed satisfying $y=h(b,x)$. 

\textit{RSP, Round 2b:} 
The rest of the time, the verifier sends a measurement challenge.
He requests the prover to measure (\rom{6}) the second register in the $\Z_4-$Fourrier basis and to send back the outcome $d$.
At this stage, the verifier can compute what state is held by the prover (up to an irrelevant phase). 
When $h=f$, it is $\ket{0}+e^{i\theta}\ket{1} \in \{\plus,\minus,\ii,\minusi \}$ where $\theta=\frac{\pi}{2} \ d\cdot(x_1-x_0)$ (modulo $\Z_4$).
When $h=g$, it is $\ket{b} \in \{\zero,\one\}$.

\textit{RSP, Round 3a:} 
Half of the time, as a consistency check the verifier asks for the following. 
He selects (\rom{8}) at random a basis $X, \frac{X-Y}{\sqrt{2}}, Y, \frac{X+Y}{\sqrt{2}}$ or $Z$ and asks the prover to measure (\rom{9}) the first register in that basis and send back the outcome $v$.
The verifier checks (\rom{10}) that $v$ is \say{consistent}. This check has two modes. If the prover should hold an eigenstate of the chosen basis the verifier checks if $v$ has the right value. If the prover should hold an eigenstate of $X$ or $Y$ and the chosen basis is $\frac{X-Y}{\sqrt{2}}$ or $\frac{X+Y}{\sqrt{2}}$ then the verifier performs~\footnote{By collecting statistics over many runs of the protocol} a 2-to-1 quantum random access code test (QRAC), which can be thought of as a version of the maximal violation of the CHSH inequality.\footnote{The choice of basis $X,\frac{X-Y}{\sqrt{2}}, Y,\frac{X+Y}{\sqrt{2}}$ or $Z$ should be done at random: if not, the choice of basis itself would provide additional information to the prover, who could, for instance, guess that $h\in\cF$ when she is asked for measurement in basis $Z$}


\textit{RSP, Round 3b:} The rest of the time, the verifier knows what state is held by the prover, and can ask the prover to use it in any subsequent protocol, e.g. perform (\rom{9}) the Buscemi's SQG. At the end the verifier performs (\rom{10}) any required followup actions.

\paragraph{Comparison to \cite{vidick}}
The biggest difference between our protocol and that of \cite{vidick} is that in their protocol the prover prepares eigenstates of all of the five bases $\left\{X,\frac{X-Y}{\sqrt{2}},Y,\frac{X+Y}{\sqrt{2}},Z \right\}$.
In our protocol, on the other hand, she is only required to prepare eigenstates of $\left\{X,Y,Z\right\}$.
This stems from the fact that we only need the following, tomographically complete set of states to be prepared $\{\zero,\one,\plus,\minus,\ii,\minusi \}$.
Or to phrase it differently we don't need the eigenstates of $\left\{\frac{X-Y}{\sqrt{2}},\frac{X+Y}{\sqrt{2}} \right\}$.
This difference is reflected, for instance, in the measurement test (R2b. in Figure~\ref{fig:RSP_protocol}), where we expect an equation in $\Z_4^{n/2}$ instead of $\Z_8^{n/3}$.
This change makes our protocol simpler but the overall structure of the scheme and the proof strategy remains similar.
It is important to note that the checks in the protocol still require performing measurements in the following bases $\left\{\frac{X-Y}{\sqrt{2}},\frac{X+Y}{\sqrt{2}}\right\}$ (see R3aB. in Figure~\ref{fig:RSP_protocol}).

\subsubsection{QRAC test}

One of the building blocks of our protocol is a QRAC test.
Intuitively, this procedure self-tests that $4$ quantum states and two observables are such that, up to isometry, there are the $4$ eigenstates of $X,Y$ and the two observables are $\frac{X-Y}{\sqrt{2}},\frac{X+Y}{\sqrt{2}}$.
For more details about QRACs, we recommend a survey \cite{qracs}. 
We define the test formally as follows.

\begin{definition}
A quantum random access code (QRAC) is specified by four single-qubit density matrices $\{\phi_u\}_{u \in \{1,3,5,7\}}$ and two single-qubit observables $X,Y$. 
For $u \in  \{1, 3, 5, 7\}$ let $u_0, u_2 \in \{0, 1\}$ be such that $u_0 = 0$ if and only if $u \in \{1, 7\}$ and $u_2 = 0$ if and only if $u \in \{1, 3\}$.
The success probability of the QRAC is defined as    
$$
\frac14 \sum_{u \in \{1,3,5,7\} } \trace(X^u \phi_u) + \trace(Y^u \phi_u).
$$
\end{definition}

It's been shown (\cite{vidick}) that the optimal winning probability of a QRAC test is $OPT_Q := \frac12 + \frac1{2\sqrt{2}}$.
Moreover, if the QRAC test is passed with the maximal probability then $X,Y$ anticommute.
This is a crucial property we will use later in the proof.
More precisely, we will need a robust version of this statement that is stated in the Appendix (Lemma~\ref{lem:qrac}).

\subsubsection{RSP Completeness} 

Now we are ready to explain the behavior of an honest $\prov$ interacting in the protocol from Figure~\ref{fig:qubit_preparation_protocol}. 
We summarize the result with the following theorem.

\begin{theorem}[Completeness]\label{thm:completeness_for_qubit_prep}
For every $\ell \in \N$ there exists $\prov \in \text{QPT}(\ell)$ that wins in the protocol in Figure~\ref{fig:qubit_preparation_protocol} with probability 1.\footnote{We assume here that the R3a B check is not performed here. 
It is analyzed in detail in Theorem~\ref{thm:main_formal}.} 
Moreover
\begin{itemize}
    \item  When $G = 1$ the prover's post-measurement state after returning $d$, is
    $\ket{b} \in \{\zero, \one\}$,
    where $b$ is negligibly in $\ell$ close to uniform.
    \item When $G = 0$ the prover's post-measurement state after returning $d$, is
    $$
    \frac{1}{\sqrt{2}}(\ket{0} + e^{i \theta}\ket{1}) \propto \ket{+_\theta} \in \{\plus, \minus, \ii, \minusi\},
    $$
    where $\theta = \frac{\pi}{2} d (x_1 - x_0)$ and $\theta$ is negligibly in $\ell$ close to uniform.
\end{itemize}

\end{theorem}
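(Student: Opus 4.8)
The plan is to exhibit the honest QPT prover explicitly and track its quantum state through each branch of the protocol, verifying that every check other than R3aB is passed deterministically and computing the post-measurement state after $d$ is returned. First I would fix the honest strategy: on receiving the function description $h$, the prover prepares $\ket{\phi}=\sum_{b,x}\ket{b}\ket{x}\ket{h(b,x)}$ (normalization suppressed) and measures the image register to obtain $y$; this is a polynomial-size circuit, so $\prov\in\text{QPT}(\ell)$. The post-measurement state of the first two registers is determined by the arity of $h$: when $G=1$ (so $h=g\in\cG$ is injective) it collapses to the single branch $\ket{b}\ket{x}$ with $g(b,x)=y$, and when $G=0$ (so $h=f\in\cF$ is $2$-to-$1$) it collapses to the claw superposition $\tfrac{1}{\sqrt2}(\ket{0}\ket{x_0}+\ket{1}\ket{x_1})$ with $f(0,x_0)=f(1,x_1)=y$.

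Next I would dispatch on the round type. For the preimage test R2a the prover measures the first two registers in the computational basis and returns the outcome; in both cases this is a genuine preimage of $y$, so the test never aborts. For the measurement test R2b the prover applies the $\Z_4$-Fourier transform to the second register and measures, returning $d\in\Z_4^{n/2}$. In the $G=1$ case the second register is the product state $\ket{x}$, so the Fourier measurement leaves the first register untouched at $\ket{b}$; in the $G=0$ case a short computation with the $\Z_4$-QFT sends $\tfrac{1}{\sqrt2}(\ket{0}\ket{x_0}+\ket{1}\ket{x_1})$ to (up to global phase) $\tfrac{1}{\sqrt2}(\ket{0}+e^{i\theta}\ket{1})$ with $\theta=\tfrac{\pi}{2}\,d\cdot(x_1-x_0)\bmod 2\pi$, which is exactly one of $\{\plus,\minus,\ii,\minusi\}$ as claimed.

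For the consistency round R3a I would check each relevant sub-case. When $G=1$ and $c=Z$ the prover measures $\ket{b}$ and returns $v=b=\hat b$, passing; when $G=0$ and $c=\widehat W$ (Case A) the defining relation $\widehat W+2\hat v=d\cdot(x_1-x_0)\bmod 4$ says precisely that the held state $\ket{+_\theta}$ is the $\hat v$-eigenstate of the basis $\widehat W\in\{X,Y\}$, so measuring in $c=\widehat W$ returns $v=\hat v$ and passes. In the remaining sub-cases (e.g. $G=0$ with $c=Z$, or $c\in\{X,Y\}$ with $c\neq\widehat W$) no check is triggered, and R3aB is excluded by hypothesis; hence the prover wins with probability $1$, and all operations are standard QPT, establishing efficiency.

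The step I expect to be the main obstacle is establishing near-uniformity of the prepared state, i.e. that $b$ (when $G=1$) and $\theta=\tfrac{\pi}{2}d\cdot(x_1-x_0)$ (when $G=0$) are negligibly close to uniform over their ranges. For $\theta$ this reduces to showing that $d\cdot(x_1-x_0)\bmod 4$ is near-uniform on $\Z_4$ when $d\gets_U\Z_4^{n/2}$, which holds provided $x_1-x_0$ has a coordinate that is a unit modulo $4$; this is a structural guarantee of the LWE-based claw-free family rather than of the generic definition, so I would invoke the formal properties of $\cF,\cG$ from Appendix~\ref{apx:clawfree} (and, for the injective family $\cG$, the near-uniformity of $y$ and hence of $b$). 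I would then bound the total deviation from uniform by the sum of these negligible terms, completing the moreover-part of the statement.
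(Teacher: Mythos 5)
Your proposal is correct and follows essentially the same route as the paper's proof: exhibit the honest prover explicitly, track the post-measurement state through rounds R1, R2a/R2b, and R3a via the claw superposition (for $h\in\cF$) or the single branch (for $h\in\cG$), and verify that every check other than R3aB passes with certainty. The one point to tighten is that $d$ is not sampled by the verifier but is the prover's $\Z_4$-Fourier measurement outcome --- in the honest run its distribution is exactly uniform over $\Z_4^{n/2}$ because the two branch amplitudes have equal magnitude --- after which your appeal to the structural properties of the claw-free family for near-uniformity of $\theta$ (and of $b$ in the injective case) is a level of care that the paper's own proof elides.
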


\begin{proof}
We proceed by analyzing the protocol round by round.

\textit{Round 1:} The verifier generates (\rom{1}) either a function $h \in \cF$ or $h \in \cG$, i.e. if $G = 1$ then $h \in \cF$ and if $G = 0$ then $h \in \cG$. He sends a description, in the form of a public key $pk$, of $h_{pk}$ to $\prov$.
Upon receiving the public key $pk$, $\prov$ prepares a uniform superposition over the codomain and evaluates the function $h_{pk}$ on this superposition (\rom{2})
\begin{align*}
    &\sum_{b,x_b }\ket{b}_B\ket{x_b}_{\text{dom}}\ket{0}_{\text{range}} \rightarrow \\ 
    &\rightarrow \phi = \sum_{b,x_b}\ket{b}_B\ket{x_b}_{\text{dom}}\ket{h_{pk}(b,x_b)}_{\text{range}},
\end{align*}
where we split the input to $h_{pk}$ into two parts $b \in \{0,1\}$ and $x_b \in \{0,1\}^{n}$.
$\prov$ then measures (\rom{3}) the ${\text{range}}$ register in the computational basis, and sends the obtained image $y$ to $\ver$. 
The state $\phi_y$ after this action is, depending on whether the function was 2-to-1 ($h_{pk} \in \cF$) or injective (1-to-1, $h_{pk} \in \cG$), one of the following. 

\begin{equation}\label{eq:statefor1to1}
  \phi_{y} =
    \begin{cases}
      \frac{1}{\sqrt{2}} \Big(\ket{0}_B\ket{x_0}_{\text{dom}}+ \ket{1}_B\ket{x_1}_{\text{dom}} \Big)\ket{y}_{\text{range}}, & \text{if $h_{pk} \in \cF$, } \text{where~} h_{pk}(x_0)=h_{pk}(x_1)=y,\\
      \\
      \ket{b}_B\ket{x_b}_{\text{dom}}\ket{y}_{\text{range}}, & \text{if $h_{pk} \in \cG$, } \text{where~} h_{pk}(x_b)=y.
    \end{cases}       
\end{equation}

Importantly, $\ver$, upon receiving $y$, can invert (\rom{4}) $h_{pk}$ hence can deduce $\prov$'s state.

In the following, $\ver$ exploits this additional knowledge to check that $\prov$ is not cheating. 
For the remaining rounds, $\ver$ randomly samples (\rom{5}) one of two types of rounds, 2a or 2b, followed by 3a or 3b. The "a" rounds are designed to check that $\prov$ is not cheating and abort the protocol, while the "b" rounds lead to the preparation of a state.

\textit{Round 2a (round = 2a):}
In the preimage test $\ver$ requests $\prov$ to measure (\rom{6}) the $B$ and the ${\text{dom}}$ registers in computational basis, and then check that $\prov$ replied $(b,x)$ is indeed satisfying $y=h_{pk}(b,x)$. Because of \eqref{eq:statefor1to1} $\prov$ always succeeds. 

\textit{Round 2b (round = 2b):} 
In the measurement test $\ver$ requests $\prov$ to measure (\rom{6}) the ${\text{dom}}$ register in the Fourier basis in $\Z_4^{n/2}$, and returns the output $d \in \Z_4^{n/2}$. At this stage, $\ver$ can compute what state ($\phi_{y,d}$) is held by $\prov$ (up to an irrelevant phase), 
\begin{equation}
  \phi_{y,d} =
    \begin{cases}
      \frac{1}{\sqrt{2}} \Big(e^{\frac{\pi i}{2} d \cdot x_0}\ket{0}_B+e^{\frac{\pi i}{2} d \cdot x_1}\ket{1}_B \Big)\ket{d}_{\text{dom}}, & \text{if $h_{pk} \in \cF$, } \text{where } \theta = \frac{\pi}{2} d \cdot (x_1 - x_0), \\
      \\
      e^{\frac{\pi i}{2} d \cdot x_b}\ket{b}_B\ket{d}_{\text{dom}} \propto \ket{b}_B\ket{d}_{\text{dom}}, & \text{if $h_{pk} \in \cG$,}
    \end{cases}       
\end{equation}
where the inner products are taken modulo $4$. Simplifying
\begin{equation}\label{eq:state_after_fourier}
  \phi_{y,d} =
    \begin{cases}
      \frac{1}{\sqrt{2}} \Big(\ket{0}+e^{i\theta}\ket{1} \Big), & \text{if $h_{pk} \in \cF$,}\\
      \\
      \ket{b}, & \text{if $h_{pk} \in \cG$.}
    \end{cases}       
\end{equation}

\textit{Round 3a (round = 3a):}
Next, half of the time, $\ver$ asks for a consistency check. 
He selects (\rom{8}) at random a basis $c \in \left\{X,\frac{X-Y}{\sqrt{2}}, Y, \frac{X+Y}{\sqrt{2}}, Z \right\}$ and asks $\prov$ to measure (\rom{9}) the B register in that basis and send back the outcome $v$.
The verifier performs two types of checks on $v$ (\rom{10}). 
In the first type, if $\phi_{y,d}$ is an eigenstate of $c \in \{X,Y,Z\}$ and the chosen basis is the corresponding one then $\ver$ checks that $v$ corresponds to measuring $\phi_{y,d}$ in the $c$ basis. 
In the second type, if $c \in \left\{\frac{X-Y}{\sqrt{2}},\frac{X+Y}{\sqrt{2}}\right\}$ and $\phi_{y,d}$ is an eigenstate of $\{X,Y\}$, then $\ver$ perform a QRAC test on $v$ \footnote{As explained before this is done by collecting statistics over many runs of the protocol. 
The probability of the estimate being far from the actual expectation can be bounded by Chernoff bounds and it is analyzed in Theorem~\ref{thm:main_formal}.}. Because of \eqref{eq:state_after_fourier} $\prov$ passes the tests. 
Note that for the R3a B check the $\prov$'s reply is consistent with the maximizing configuration of the QRAC test, i.e. the states are eigenstates of $X,Y$ and the observables are $\frac{X-Y}{\sqrt{2}},\frac{X+Y}{\sqrt{2}}$. 

\textit{Round 3b (round = 3b):} The rest of the time, $\ver$ knows what $\phi_{y,d}$ is and can ask to use it in any subsequent protocol, e.g. perform (\rom{9}) the Buscemi's SQG. At the end, $\ver$ performs (\rom{10}) any required follow-up actions.
\end{proof}

\paragraph{Note.} We often think of $d \cdot (x_1 - x_0) \in \Z_4$ as an angle of a state that should have been prepared, i.e. $\ket{+_\theta} \in \{\plus, \minus, \ii, \minusi\}$, after a natural identification of $\Z_4 \cong \frac{\pi}{2} \cdot \Z_4$. When $\ver$ performs his checks, as in Figure~\ref{fig:qubit_preparation_protocol}, it first represents $d \cdot(x_1 - x_0) = \widehat{W} + 2 \hat{v}$ for $\widehat{W},\hat{v} \in \{0,1\}$. We think of $\widehat{W}$ as one of the angles $\{0,\frac{\pi}{2}\}$ and also identify it with one of $\{X,Y\}$ in a natural way. Furthermore $\hat{v}$ can be then identified with one of $\ket{+},\ket{-}$ in the case of $\widehat{W} = 0$ and one of $\ket{i},\ket{-i}$ in the case of $\widehat{W} = 1$. 
Such a treatment might be confusing at first as $\theta \in \{0,1,2,3\}$ but $\widehat{W} \in \{0,1\} \equiv \{X,Y\}$. We do that in order to be compatible with later definitions. Look for instance at Definition~\ref{def:adaptivebit}, where the representation in the form $\widehat{W} + 2\hat{v}$ becomes very handy.

\subsubsection{RSP Soundness}

The following is a formal statement of soundness.

\begin{theorem}\label{thm:mainRSP}
    Let $\epsilon \in \Real_+$ be a constant and $\ell \in \N$ large enough such that $\epsilon = \omega(\text{negl}(\ell))$. Let 
    $D$
    be a quantum polynomial time (in $\ell$) device that succeeds in the protocol in Figure~\ref{fig:qubit_preparation_protocol} with probability $1-\epsilon$. There exists a universal constant $c > 0$ and $\delta = O(\epsilon^c)$ and an efficiently computable isometry $V: \mathcal{H}_B \to \Com^2 \otimes \mathcal{H}_{B'}$ and a state $\ket{\text{AUX}} \in \mathcal{H}_{B'} \otimes \mathcal{H}_{B''}$ such that under the isometry $V$ the following holds.
\begin{itemize}
    \item When $G = 1$ the joint state of the challenge bit $b$ and the prover's post measurement state after returning $d$, is $\delta$ computationally indistinguishable from a state $$\sum_{b\in\{0,1\}} \ket{b}\bra{b}\otimes \ket{b}\bra{b} \otimes\ket{\text{AUX}}\bra{\text{AUX}}$$ 
    \item When $G = 0$ the joint state of $\widehat{W},\hat{v}$. and the prover's post measurement state after returning $d$ is $\delta$ computationally indistinguishable from a state
    \begin{align*}
    &\sum_{v \in \{0,1\}} \ket{X}\bra{X} \otimes \ket{v}\bra{v} \otimes \ket{+_{0 \cdot \frac{\pi}{2} + v \pi}}\bra{+_{0 \cdot \frac{\pi}{2} + v \pi}} \otimes \ket{\text{AUX}}\bra{\text{AUX}} + \\
    &\sum_{v \in \{0,1\}} \ket{Y}\bra{Y} \otimes \ket{v}\bra{v} \otimes \ket{+_{1 \cdot \frac{\pi}{2} + v \pi}}\bra{+_{1 \cdot \frac{\pi}{2} + v \pi}} \otimes \ket{\text{AUX}}\bra{\text{AUX}}.
    \end{align*}
\end{itemize}
\end{theorem}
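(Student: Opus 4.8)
The plan is to follow the device-independent self-testing strategy of \cite{vidick}, adapting it to the reduced set of prepared states. First I would fix the behaviour of the device $D$ after it has committed to an image $y$ in round R1 and returned an equation $d$ in round R2b. At this point the verifier knows the target state (either $\ket{\hat b}$ or $\ket{+_\theta}$), while $D$ holds some possibly-entangled state $\ket{\psi}$ on register $\hilb_B$ together with auxiliary registers, and its answers to the R3a challenges $c \in \{X, \frac{X-Y}{\sqrt 2}, Y, \frac{X+Y}{\sqrt 2}, Z\}$ are generated by binary observables $\mathcal{O}_c$ on $\hilb_B$. The whole argument then reduces to showing that $\mathcal{O}_X, \mathcal{O}_Y, \mathcal{O}_Z$ behave, up to an isometry and an error $\delta = O(\e^c)$, like the single-qubit Pauli observables acting on the claimed target state.

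The computational core has two ingredients. The first is the indistinguishability of $\cF$ and $\cG$: since $D$ is in $\BQP$ it cannot tell whether $G = 0$ or $G = 1$, so its observables $\mathcal{O}_c$ and the statistics they produce must coincide across the two modes up to negligible error, which lets me transport constraints proved in the injective mode ($G=1$) into the claw-free mode ($G=0$) and back. The second is the adaptive hardcore bit property of the TCF family (Appendix~\ref{apx:clawfree}): a QPT device passing the preimage test R2a possesses a valid preimage $(b, x_b)$ of $y$ and hence cannot simultaneously predict the $\Z_4$ value $d \cdot (x_1 - x_0)$ that fixes the phase $\theta$. Rendered in operator language as in \cite{mahadev, vidick}, this uncertainty relation forces the committed-bit observable $\mathcal{O}_Z$ (pinned down by the R3a $Z$-check in the $G = 1$ mode and exported to $G = 0$ through indistinguishability) to approximately anticommute with the in-plane observables $\mathcal{O}_X, \mathcal{O}_Y$. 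Independently, whenever the target is an $X$- or $Y$-eigenstate the R3aB branch runs the QRAC test, whose robust rigidity (Lemma~\ref{lem:qrac}) forces $\mathcal{O}_X$ and $\mathcal{O}_Y$ to approximately anticommute.

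With three pairwise-anticommuting binary observables in hand I would invoke the standard swap isometry: build $V : \hilb_B \to \Com^2 \otimes \hilb_{B'}$ out of $\mathcal{O}_X, \mathcal{O}_Y, \mathcal{O}_Z$ so that they map to $\sigma_X, \sigma_Y, \sigma_Z$ on the extracted qubit, absorbing the remainder into $\hilb_{B'}$ and the shared $\ket{\text{AUX}}$, with the approximation errors accumulating to $\delta = O(\e^c)$ by the usual robustness bookkeeping. Applying $V$ and reading off the verifier's labels then gives both claimed states: in the $G = 1$ mode the R2a and R3a $Z$-checks certify that the extracted qubit is $\ket{\hat b}\bra{\hat b}$ correlated with the challenge bit $b$; in the $G = 0$ mode the $\widehat W$-check fixes whether the qubit is an $X$- or $Y$-eigenstate and $\hat v$ fixes the eigenvalue, yielding the $\ket{+_{\cdot}}$ decomposition. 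The distinguishability is \emph{computational} rather than statistical precisely because the anticommutation relations, and thus the correctness of $V$, are only guaranteed against $\BQP$ distinguishers, matching the whitebox reduction to LWE.

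The main obstacle is the point flagged in the overview: because we prepare eigenstates of only $\{X, Y, Z\}$ instead of all five bases of \cite{vidick}, their rigidity statement cannot be imported as a black box. In particular, the anticommutation of $\mathcal{O}_Z$ with \emph{both} $\mathcal{O}_X$ and $\mathcal{O}_Y$ is never directly tested; I would have to derive it by carefully combining the adaptive-hardcore-bit uncertainty relation (which a priori only constrains $\mathcal{O}_Z$ against the $\widehat W$-selected observable) with the $\cF/\cG$ indistinguishability and the QRAC-certified $X$--$Y$ anticommutation, and then propagate the resulting inverse-polynomial errors through the isometry extraction so that the final bound is a genuine polynomial $\delta = O(\e^c)$. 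Managing this error propagation while keeping the reduction to LWE intact is where the bulk of the technical work lies.
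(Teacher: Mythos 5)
Your plan follows the paper's own proof essentially step for step: the adaptive hardcore bit property yields the approximate anticommutation of $Z$ with $X$ and $Y$ (including the extension from the $\widehat{W}$-selected state $\phi_W$ to the other state $\phi_U$, which is exactly the subtlety the paper handles in Lemma~\ref{lem:Z*anticommute} via a hardcore-bit adversary), the QRAC rigidity of Lemma~\ref{lem:qrac} yields the $X$--$Y$ anticommutation, the $\cF/\cG$ indistinguishability transports constraints between the two modes, and the swap-type isometry of Lemma~\ref{lemma:isometry} then extracts the qubit with $\delta = O(\epsilon^c)$. The only ingredient you leave implicit is the collapsing property, which the paper uses explicitly both to relate the measured and unmeasured states (Lemma~\ref{lem:weirdsquare}) and to merge the various auxiliary states into the single $\ket{\text{AUX}}$ at the end, but this is subsumed in the \cite{mahadev,vidick} operator-language machinery you invoke.
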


We defer the proof to Appendix~\ref{apx:nonlocality}.
The proof strategy is similar to the one in \cite{vidick} but, as always, one needs to carefully verify that the simplified version of the protocol is still sound.

\subsection{Semi-quantum Game (SQG)}\label{sec:quantuminputsprotocol}

In this section, we describe a protocol from \cite{Buscemi_SemiQuantumGame} for certifying entanglement in a model, where trusted quantum inputs for $\alice$ and $\bob$ are allowed.
Such a setup is often called a semi-quantum game. 
This is the protocol that we later (Section~\ref{sec:final}) combine with the RSP protocol to obtain the final Entanglement Certification.

\begin{figure}
    \rule{\textwidth}{0.6pt}
    Let $\ell \in \N$ be a security parameter.
    \begin{enumerate}[itemsep=5pt]

        \vspace{5pt}
        
        \item[R1.] The verifier selects $G \gets_{U} \{0,1\}$. If $G = 0$ they sample a key $(k,td) \gets \text{GEN}_\mathcal{F} (1^\ell)$. If $G = 1$ they sample $(k, td) \gets \text{GEN}_\mathcal{G}(1^\ell)$. The verifier sends the key $k$ to the prover and keeps the trapdoor information $td$ private.
        \item[R1.] The prover returns a $y$ to the verifier. If $G = 0$, for $b \in \{0, 1\}$, the verifier uses $td$ to compute $\hat{x}_b \gets f^{-1}_{pk} (y,b,td)$. If $G = 1$, the verifier computes $(\hat{b},\hat{x}_{\hat{b}}) \gets g^{-1}_{pk}(y,td)$.
        \item[R2.] The verifier samples a type of round uniformly from $\{2a,2b\}$ and performs the corresponding of the following
        \begin{enumerate}[itemsep=5pt]
            
            \vspace{5pt}
            
            \item[R2a.] (\textit{preimage test}) The verifier expects a preimage. The prover returns $(b, x)$. If $G = 0$ and $\hat{x}_b \neq x$, or if $G = 1$ and $(b, x) \neq (\hat{b}, \hat{x}_{\hat{b}})$, the verifier \textsc{Aborts}.
            \item[R2b.] (\textit{measurement test}) The verifier expects an equation $d \in \Z^{n/2}_4$ from the prover. If $G = 0$, the verifier computes $\widehat{W} \gets \widehat{W}(d) , \hat{v} \gets \hat{v}(d)$, i.e. $\widehat{W} + 2 \hat{v} = d  (x_1 - x_0) \text{ mod } 4$. 
            
            \vspace{2pt}
            
            The verifier samples a type of round uniformly from $\{3a,3b\}$ and performs the corresponding of the following
            \begin{enumerate}[itemsep=5pt]

                \vspace{5pt}
            
                \item[R3a.] (\textit{consistency check}) The verifier samples $c \gets_U \{X,Y,Z\}$, sends $c$ to the prover and expects $v \in \{0,1\}$ back. 

                \vspace{2pt}
                
                If $c = Z$ and $G = 1, v \neq \hat{b}$ the verifier \textsc{Aborts}. If $c \in \{X,Y\}, G = 0$ then

                \begin{enumerate}
                    \item[A.] If $c = \widehat{W}, v \neq \hat{v}$ the verifier \textsc{Aborts}.
                    \item[B.] If  $\widehat{W} \in \{X,Y\}$ and $c \in \left\{\frac{X-Y}{\sqrt{2}},\frac{X+Y}{\sqrt{2}}\right\}$, the verifier checks that $v$ follows the right distribution.
                \end{enumerate} 
                \item[\textbf{R3b.}] (\textit{answer collection}) The verifier expects $a \in \{0,1\}$ from the prover. If $G = 0$, the verfier returns $(|\hat{b} \rangle, a)$, and if $G = 0$, the verfier returns $\left(\ket{+_{\frac{\pi}2(\widehat{W} + 2 \hat{v})}},a \right)$.
            \end{enumerate}
        \end{enumerate}
    \end{enumerate}
    \rule{\textwidth}{0.6pt}
    \caption{Protocol between $\ver$ and $\alice$ ($\bob$).} 
    \label{fig:final_protocol}
\end{figure}

We take some time to more formally define semi-quantum games. 
The following is a slight modification of Definition~\ref{def:game}. 
The main difference is that the communication from $\ver$ to $\alice,\bob$ is quantum. 
The difference is small but we include the definition for completeness nonetheless.

\begin{definition}[Semi-quantum game]\label{def:semiquantumgame}
For $k \in \N, \rho_{AB} \in \mathcal{S}(\mathcal{H}_A \otimes \mathcal{H}_B)$ we define $\mathcal{G}(\rho_{AB},k)$ to be a game between $\alice, \bob$ and $ \ver$. $\mathcal{G}$ is played in one of two modes. $\alice,\bob$ will have access to either (i) $\rho_{AB}$ or (ii) a separable (\say{classical}) state $\sigma_{AB}$. First, the hyperparameter $k$ is distributed to all parties and one of the modes is chosen. The mode is not known to $\ver$. $\mathcal{G}$ proceeds in $k$ rounds. 

In each round, $\alice$ and $\bob$ are given their respective share of $\rho_{AB}$ in mode (i) or of $\sigma_{AB}$ in mode (ii) and are forbidden to communicate. Then
\begin{enumerate}
    \item $\ver$ sends a quantum state $\tau \in \{\zero,\one,\plus,\minus,\ii,\minusi \}$ to $\alice$ and $\ver$ sends a quantum state $\omega \in \{\zero,\one,\plus,\minus,\ii,\minusi \}$ to $\bob$,
    \item $\alice, \bob$ compute their answers $a,b \in \{0,1\}$.
    \item Answers $a,b$ are sent to $\ver$, which stores them.
\end{enumerate}
Then, the next round starts.
After the $k$-th round, $\ver$ outputs, based on $a,b$'s, either YES or NO.
\end{definition}

The goal is, of course, to design a protocol where $\ver$ can distinguish mode (i) and mode (ii). An ideal functionality for certifying entanglement of $\rho_{AB}$ would satisfy (Completeness) if the game is played in mode (i) then the interaction is accepted, (Soundness) if the game is played in mode (ii) then the interaction is rejected.

\paragraph{Score functions.} However, one usually cannot design protocols that satisfy these strong requirements. It is because there often is an inherent randomness in the game that needs to be considered. In the literature, it is often addressed by utilizing the so-called \textit{score functions}.

We first introduce some notation, for every $\alpha,\beta \in \{0,1\}$, every $\tau,\omega \in \{\zero,\one,\plus,\minus,\ii,\minusi \}$
we define $\Prob^\rho_{\alice,\bob} \left[a = \alpha, b = \beta \ | \ \tau, \omega \right]$ as the probability of $\alice$ returning $\alpha$ when given $\tau$, $\bob$ returning $\beta$ when given $\omega$ while they play in mode (i), i.e. they share $\rho_{AB}$. 
Similarly we define $\Prob^\lambda_{\alice,\bob} \left[a = \alpha, b = \beta \ | \ \tau, \omega \right]$ for when they play in mode (ii). 
Then, we define a score function $I$ as
$$
I := \sum_{s,t \in \{0,\dots,5\}} \beta_{s,t} \ \Prob \left[a = 1, b = 1 \ | \ \tau_s, \omega_t \right],
$$
where $s,t$ index states in $\{\zero,\one,\plus,\minus,\ii,\minusi \}$ and $\beta_{s,t} \in \Real$. 
What we mean formally is that we compute two quantities: $I_\rho:= \sum_{s,t \in \{0,\dots,5\}} \beta_{s,t} \ \Prob^\rho_{\alice,\bob} \left[a = 1, b = 1 \ | \ \tau_s, \omega_t \right]$ and $I_\lambda:= \sum_{s,t \in \{0,\dots,5\}} \beta_{s,t} \ \Prob^\lambda_{\alice,\bob} \left[a = 1, b = 1 \ | \ \tau_s, \omega_t \right]$.\footnote{We denote it by $\lambda$ as in shared randomness (local hidden variable).} 
That is we compute the score function for the two modes (i) and (ii). 
Then we say that the score function distinguishes the two modes, i.e. certifies entanglement of $\rho_{AB}$, if $|I_\rho - I_\lambda| \gg 0$.

As we mentioned, even though $|I_\rho - I_\lambda| \gg 0$ inherent randomness usually implies that we can't distinguish the two modes in one run of the protocol. 
The expression for $I$ includes probabilities, which means that formally we would need to repeat the game many times to obtain a reasonably good approximation to $\hat{I}$ so that there is still a separation between $\hat{I}_\rho$ and $\hat{I}_\lambda$. 
This important detail is often omitted in the literature. 
Our result deals with the notions of complexity so the number of repetitions might play a role.
We will try to be more explicit with the details without being overwhelming. 
See for instance Theorem~\ref{thm:main_formal} where the number of runs in the protocol takes into account the repetitions needed to estimate the score functions accurately. 

\subsubsection{SQG protocol for entanglement certification}\label{sec:buscemiSQG}

In this section, we give a definition and analysis of the semi-quantum game certifying entanglement of all entangled states due to \cite{Buscemi_SemiQuantumGame}. 

To describe the protocol we first define what an entanglement witness is. Entanglement witnesses were considered in \cite{horodeckiEW} as one of the possible criteria to distinguish entangled and separable states. 

\begin{definition}[\textbf{Entanglement witness}]
For an entangled state $\rho_{AB} \in \mathcal{S}(\hilb_A\otimes \hilb_B)$, an entanglement witness $W_{\rho_{AB}}$ with a parameter\footnote{As the set of separable states is convex we can require a $\eta > 0$ separation instead of the usual non-negative/positive separation.} $\eta > 0$ is a Hermitian operator acting on $\hilb_A\otimes \hilb_B$ such that,
\begin{align}
&\trace[W_{\rho_{AB}} \rho_{AB}] < -\eta, \nonumber \\
&\trace[W_{\rho_{AB}} \sigma_{AB}] > \eta \text{ for all separable states } \sigma_{AB} \in \mathcal{S}(\hilb_A \otimes \hilb_B). \label{eq:EQprop}
\end{align}
\end{definition}

Let $\rho_{AB} \in \mathcal{S}(\Com^2 \otimes \Com^2)$ be an entangled state. The set of separable states is convex, so all entangled states have entanglement witnesses. Any entanglement witness $W_{\rho_{AB}}$ of $\rho_{AB}$ can be rewritten as
\begin{equation}
    W_{\rho_{AB}} = \sum_{s,t} \beta_{s,t} \ \tau_s^\top \otimes \omega_t^\top,
\end{equation}
where $\tau_s,\omega_t$ are the projectors onto the corresponding state in $\{\ket{0},\ket{1},\ket{+},\ket{-},\ket{i},\ket{-i}\}$ and $\beta_{s,t} \in \Real$.


The following lemma defines a score function for distinguishing between $\rho_{AB}$ and all separable states. 
The proof is directly adapted from \cite{Buscemi_SemiQuantumGame}.
We include it here for completeness. 

\begin{lemma}\label{lem:entanglementwitness}
For every $\rho_{AB} \in \mathcal{S}(\Com^2 \otimes \Com^2)$ and it's entanglement witness $W_{\rho_{AB}} = \sum_{s,t} \beta_{s,t} \ \tau_s^\top  \otimes \omega_t^\top$ satisfying \eqref{eq:EQprop} the score function $I = \sum_{s,t} \beta_{s,t} \ \Prob \left[a = 1, b= 1 \ | \ \tau_s , \omega_t \right]$ has the following properties
\begin{itemize}
    \item{(\textbf{Completeness})} If $\alice,\bob$ shared $\rho_{AB}$, and their replies are defined as performing the joint projection onto the maximally entangled state and returning $1$ if the projection was successful, then $I_\rho < \eta/4$.
    \item{(\textbf{Soundness})} If $\alice,\bob$ shared a separable state only then $I_\lambda > \eta/4$.
\end{itemize}
\end{lemma}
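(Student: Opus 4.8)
The plan is to evaluate, in both directions, the joint probability that $\alice$ and $\bob$ each succeed in projecting their received input together with their local share onto $\ket{\phi^+}$, and to observe that after tracing out the input registers this probability equals the witness $W_{\rho_{AB}}$ applied to an \emph{effective} bipartite operator. For the honest strategy this operator is $\rho_{AB}$ itself (up to a factor $\tfrac14$), while for any strategy based on a separable state it is a separable operator; the witness property \eqref{eq:EQprop} then separates the two cases.

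For completeness, the single computation I need is the teleportation (``ricochet'') identity $\trace_{A'}\big[\ket{\phi^+}\bra{\phi^+}_{A'A}\,(\tau_s)_{A'}\big]=\tfrac12\,\tau_s^\top$, obtained by expanding $\ket{\phi^+}=\tfrac1{\sqrt2}\sum_i\ket{ii}$ and matching indices, and analogously for $\bob$. Applying it on both sides to $\tau_s\otimes\rho_{AB}\otimes\omega_t$ gives
\begin{equation*}
\Prob^\rho_{\alice,\bob}\!\left[a=1,b=1 \mid \tau_s,\omega_t\right]=\tfrac14\,\trace\!\big[(\tau_s^\top\otimes\omega_t^\top)\,\rho_{AB}\big].
\end{equation*}
Multiplying by $\beta_{s,t}$, summing, and using $W_{\rho_{AB}}=\sum_{s,t}\beta_{s,t}\,\tau_s^\top\otimes\omega_t^\top$ yields $I_\rho=\tfrac14\,\trace[W_{\rho_{AB}}\rho_{AB}]$, so the witness bound $\trace[W_{\rho_{AB}}\rho_{AB}]<-\eta$ gives $I_\rho<-\eta/4<\eta/4$.

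For soundness I would decompose the shared separable state as $\sigma_{AB}=\sum_k p_k\,\sigma_A^{(k)}\otimes\sigma_B^{(k)}$ and treat $k$ as shared randomness. Conditioned on $k$, $\alice$ applies some POVM $\{M_a^{(k)}\}$ on her input register $A'$ together with $\sigma_A^{(k)}$; tracing out her share turns the outcome-$1$ element into $\widehat G_1^{(k)}:=\big(\trace_A[M_1^{(k)}(\id_{A'}\otimes\sigma_A^{(k)})]\big)^\top\ge 0$, so that $\Prob[a=1\mid\tau_s,k]=\trace[\tau_s^\top\,\widehat G_1^{(k)}]$, and similarly for $\bob$ with some $\widehat H_1^{(k)}\ge 0$. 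Independence given $k$ then gives
\begin{equation*}
\Prob^\lambda_{\alice,\bob}\!\left[a=1,b=1 \mid \tau_s,\omega_t\right]=\trace\!\big[(\tau_s^\top\otimes\omega_t^\top)\,\chi_{AB}\big],\qquad \chi_{AB}:=\sum_k p_k\,\widehat G_1^{(k)}\otimes\widehat H_1^{(k)},
\end{equation*}
with $\chi_{AB}$ positive and separable. Hence $I_\lambda=\trace[W_{\rho_{AB}}\chi_{AB}]$; since $W_{\rho_{AB}}$ is positive on every separable state, writing $\chi_{AB}$ as a nonnegative combination of (subnormalized) product states shows $I_\lambda\ge 0$, and for the honest Bell strategy $\chi_{AB}=\tfrac14\sigma_{AB}$ gives the quantitative form $I_\lambda=\tfrac14\,\trace[W_{\rho_{AB}}\sigma_{AB}]>\eta/4$. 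In all cases $I_\lambda$ sits strictly above the completeness value, which is the separation the score function needs.

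The routine part is completeness, which is just the teleportation identity. The main obstacle is soundness: I must show that an \emph{arbitrary} local strategy on a separable state --- not only the honest Bell measurement --- yields a genuinely separable effective operator $\chi_{AB}$. The care goes into (i) using the separable decomposition as shared randomness, (ii) the reduction that traces out each party's local share to leave a valid POVM on the input register alone, and (iii) tracking the transpose so that the result pairs correctly against $W_{\rho_{AB}}=\sum_{s,t}\beta_{s,t}\,\tau_s^\top\otimes\omega_t^\top$.
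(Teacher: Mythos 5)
Your route is, in both halves, the paper's own. Completeness is the identical ricochet computation: the double projection onto $\ket{\phi^+}$ collapses to $\frac14\trace[(\tau_s^\top\otimes\omega_t^\top)\rho_{AB}]$, giving $I_\rho=\frac14\trace[W_{\rho_{AB}}\rho_{AB}]<-\eta/4$. Your soundness setup also matches the paper step for step: decompose $\sigma_{AB}=\sum_k p_k\,\sigma_A^{(k)}\otimes\sigma_B^{(k)}$, treat $k$ as shared randomness, trace out each party's share to get effective outcome-$1$ elements (the paper's $A_1^k=\trace_A(A_1(\id\otimes\sigma_A^k))$ and $B_1^k$ are your $\widehat{G}_1^{(k)}$ and $\widehat{H}_1^{(k)}$, with the same transpose bookkeeping), and pair the resulting positive product operator against $W_{\rho_{AB}}$.

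The one genuine divergence is the final inequality, and there you are more careful than the paper. The paper concludes $I>\eta$ directly from \eqref{eq:EQprop} and positivity of $(A_1^k)^\top\otimes(B_1^k)^\top$, implicitly treating these as normalized separable \emph{states}; but they are effective POVM elements, hence subnormalized, and after rescaling \eqref{eq:EQprop} only yields $\trace\bigl[W_{\rho_{AB}}\,(A_1^k)^\top\otimes(B_1^k)^\top\bigr]\geq 0$. You notice exactly this and claim only $I_\lambda\geq 0$ in general. Your caution is warranted: the lemma's stated soundness $I_\lambda>\eta/4$ is in fact unattainable over arbitrary strategies (if both parties always answer $0$, then $A_1=B_1=0$ and $I_\lambda=0$), so your weaker bound is the correct general statement, and $I_\lambda\geq 0$ against $I_\rho\leq-\eta/4$ is precisely the separation the final protocol (accept iff $\hat{I}<0$) needs, up to adjusting the acceptance threshold where Theorem~\ref{thm:main_formal} quotes the $\eta/4$ figure. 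One caveat on your write-up: invoking the honest Bell strategy to ``recover'' $I_\lambda>\eta/4$ does not repair the gap, since soundness must quantify over \emph{all} strategies on separable states; that sentence should be dropped or flagged as illustrating only that the bound is tight-from-above for honest play, not that it holds in general.
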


\paragraph{Note.} Crucially $\alice,\bob$ in the soundness are not limited computationally in this lemma.

\begin{proof}
We start with completeness and then move on to soundness.
\paragraph{Completeness.} 
Recall that the strategy of honest $\alice$ is to project $\tau_s \otimes \rho_A$ onto an a maximally entangled state $\ket{\phi_+} = \frac1{\sqrt{2}}(\ket{00} + \ket{11})$. 
For $\bob$ it is to project $\rho_b \otimes \omega_t$ onto $\ket{\phi_+}$. 
This strategy yields the following score for $I$:
\begin{align}
I
&= \sum_{s,t} \beta_{s,t} \ \Prob_{\alice,\bob}^\rho[a = 1, b = 1 \ | \ \tau_s, \omega_t] \nonumber \\
&= \sum_{s,t} \beta_{s,t} \ \trace\left( \left(\ket{\phi^+}_A\bra{\phi^+}_A \otimes \ket{\phi^+}_B\bra{\phi^+}_B \right)(\tau_s \otimes \rho_{AB} \otimes \omega_t) \right) \nonumber \\
&= \sum_{s,t} \beta_{s,t} \ \trace((\tau_s^\top \otimes \omega_t^\top) \rho_{AB})/4 \nonumber \\
&= \trace(W_{\rho_{AB}} \rho_{AB})/4 \nonumber \\
&\leq - \eta/4 && \text{By \eqref{eq:EQprop}}. \nonumber 
\end{align}
This establishes completeness.

\paragraph{Soundness.} 
Assume $\alice, \bob$ share some separable state $\sigma_{AB} = \sum_k p_k \ \sigma_A^k \otimes \sigma_B^k$. 
Then we can model their actions by some general POVM $A,B$. 
Let $A_1,B_1$ be the elements corresponding to outcome $1$. 
We can then write
\begin{align}
I 
&= \sum_{s,t} \beta_{s,t} \ \Prob_{\alice,\bob}^\lambda \left[a = 1, b= 1 \ | \ \tau_s , \omega_t \right] \nonumber \\
&= \sum_{s,t} \beta_{s,t} \ \trace \left( \left(A_1 \otimes B_1 \right) \left(\tau_s \otimes \sigma_{AB} \otimes \omega_t \right) \right) \nonumber \\
&=\sum_{s,t} \beta_{s,t} \ \sum_k p_k \trace \left( \left(A_1^k \otimes B_1^k \right)(\tau_s \otimes \omega_t) \right) \nonumber \\
&= \sum_k p_k \ \trace \left( \left(A_1^k \otimes B_1^k \right) W_{\rho_{AB}}^\top \right) \nonumber \\
&= \sum_k p_k \ \trace \left(W_{\rho_{AB}} \left[(A_1^k)^\top \otimes (B_1^k)^\top\right]\right), \nonumber
\end{align}
where we denoted $A_1^k := \trace_A(A_1(\id \otimes \sigma_A^k)), B_a^k := \trace_B(B_1(\sigma_B^k \otimes \id))$ to be the efffective POVM elements acting on $\tau_s, \omega_t$ (subscript under $\trace$ denotes the partial trace). 
Because of \eqref{eq:EQprop} and the fact that $(A_1^k)^\top, (B_1^k)^\top$ are positive, Hermitian, we have that $I > \eta$. 
This concludes the proof.
\end{proof}

\subsection{Entanglement Certification - RSP + SQG}\label{sec:final}

We are ready to define the final protocol that proves Theorem~\ref{thm:mainCert}. 

In Figure~\ref{fig:final_protocol} we give a formal description of a one-round interaction between $\ver$ and $\alice$ that is a building block of the final protocol described later and defined formally in Figure~\ref{fig:really_final}. 
It is a slight modification of the RSP (Figure~\ref{fig:qubit_preparation_protocol}). 
The only difference is that in round 3b the verifier expects an additional output from the prover, i.e in round 3b $\ver$ collects an answer $a \in \{0,1\}$ from $\alice$ and \say{returns} a pair $(\tau, a)$, where $\tau \in \{\zero,\one,\plus,\minus,\ii,\minusi \}$. 
By \say{returns} we mean the following. 
The final protocol consists of running the protocol from Figure~\ref{fig:final_protocol} between $\ver$ and $\alice$ and an identical protocol between $\ver$ and $\bob$ (which collects $(\omega,b)$) in parallel, over many repetitions. 
If, in a repetition,  both $\alice$ and $\bob$ reach round 3b then a tuple $((\tau,a),(\omega,b) )$ is collected. After some number of repetitions, the following score function is computed
$$
\hat{I} = \sum_{s,t} \beta_{s,t} \ \widehat{\Prob} \left[a = 1, b= 1 \ | \ \tau_s , \omega_t \right],
$$
where $\widehat{\Prob}$ denotes the empirical probability over the collected samples. In the end, an interaction is accepted if it was not aborted in any of the rounds by neither of $\alice$ nor $\bob$ and if $I < 0$. A formal description is given in Figure~\ref{fig:really_final}.

\begin{figure}
    \vspace{3pt}
    Let $\rho_{AB} \in \mathcal{S}(\Com^2 \otimes \Com^2)$ be an entagled state, $W = \sum_{s,t} \beta_{s,t} \ \tau_s^\top  \otimes \omega_t^\top$ it's entanglement witness satisfying \eqref{eq:EQprop}, $\lambda \in \N$ a security parameter and $\delta > 0$ a confidence parameter.
    
    Repeat the following $N := O\left( \frac{\log(1/\delta) }{\eta \min_{s,t}\{ |\beta_{s,t}| \}}\right)$:
    \begin{enumerate}
        \item Run the protocol from Figure~\ref{fig:final_protocol}, with parameter $\lambda$, between $\ver_A$ and $\alice$ and between $\ver_B$ and $\bob$.
        \item If any of the runs returned $\textsc{abort}$ return $\textsc{not-entangled}$.
        \item If $\ver \leftrightarrow \alice$ returned $(\tau,a)$ and $\ver \leftrightarrow \bob$ returned $(\omega,b)$ collect the answers.
    \end{enumerate}
    If $\hat{I} = \sum_{s,t} \beta_{s,t} \ \widehat{\Prob}\left[a = 1, b= 1 \ | \ \tau_s , \omega_t \right] < 0$ return $\textsc{entangled}$, otherwise return $\textsc{not-entangled}$.
    \\
    \vspace{1pt}
    \hspace{-7pt}
   
    \caption{Final protocol.} 
    \label{fig:really_final}
\end{figure}

Now we are ready to state the final theorem of this subsection, which is a formal version of Theorem~\ref{thm:mainCert}.

\begin{theorem}\label{thm:main_formal}
There exists a universal constant $c > 0$ such that for every entangled state $\rho_{AB} \in \mathcal{S}(\Com^2 \otimes \Com^2)$, its entanglement witness $W = \sum_{s,t} \beta_{s,t} \ \tau_s^\top  \otimes \omega_t^\top$ satisfying \eqref{eq:EQprop} with $\eta$, $0 < \delta < \omega \left( \left(\frac{\eta}{\max_{s,t}|\beta_{s,t}|}\right)^{1/c} \right)$, every $n$ large enough so that $\delta = \omega(\text{negl}(n))$ the interactive protocol from Figure~\ref{fig:final_protocol} between $\ver,\alice,\bob$ is a $O\left(\frac{\log(1/\delta) \max^2_{s,t}|\beta_{s,t}|}{\eta^2 } \right)$-message protocol that exchanges \newline $O \left(\frac{\poly(n)\log(1/\delta) \max^2_{s,t}|\beta_{s,t}|}{\eta^2 } \right)$ bits of communication and satisfies the following properties:
\begin{itemize}
    \item{(\textbf{Completeness})} There exist $\alice, \bob \in \text{QPT}(n)$ such that if $\alice$ and $\bob$ shared $\rho_{AB}$ in every round of the protocol then their interaction is accepted by $\ver$ with probability $1 - \delta$.
    \item{(\textbf{Soundness})} If $\alice$ and $\bob$ share only a separable state, $\text{LWE} \not\in \mathsf{BQP/qpoly}$ and $\alice, \bob \in \text{QPT}(n)$ then the the interaction is accepted by $\ver$ with probability at most $\delta$.
\end{itemize}
\end{theorem}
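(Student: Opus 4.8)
The plan is to prove Theorem~\ref{thm:main_formal} by assembling the three ingredients already in hand — RSP completeness (Theorem~\ref{thm:completeness_for_qubit_prep}), RSP soundness (Theorem~\ref{thm:mainRSP}), and the semi-quantum score-function analysis of Lemma~\ref{lem:entanglementwitness}) — together with one additional tool: a concentration argument over the $N$ repetitions that bridges the \emph{exact} scores $I_\rho, I_\lambda$ of Lemma~\ref{lem:entanglementwitness} with the \emph{empirical} quantity $\hat I$ that $\ver$ actually computes in Figure~\ref{fig:really_final}. Two structural facts drive the whole argument. First, each repetition uses fresh keys, fresh randomness, and a fresh share of the shared state, so the repetitions are mutually independent. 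Second, $\ver$ fixes the prover's first message $y$ \emph{before} sampling the round type, so the check rounds and the sampling rounds 3b are independent events over disjoint repetitions that nevertheless interrogate the \emph{same} committed state.

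\emph{Completeness.} I would fix honest $\alice,\bob$ sharing $\rho_{AB}$ who run the honest RSP strategy of Theorem~\ref{thm:completeness_for_qubit_prep} in every repetition and, whenever they reach round 3b, perform the Buscemi Bell-state measurement, projecting their share of $\rho_{AB}$ together with the prepared input onto $\ket{\phi^+}$ and outputting $1$ on success. By Theorem~\ref{thm:completeness_for_qubit_prep} they never abort (the only statistical check, the QRAC test of round 3aB, is passed because honest play realises the optimal configuration $OPT_Q$ and its estimate concentrates), and each 3b state is negligibly close to a uniformly random element of $\{\zero,\one,\plus,\minus,\ii,\minusi\}$ known to $\ver$. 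Hence the collected samples are, up to negligible terms, drawn from the ideal SQG distribution whose true score is $I_\rho = \trace[W\rho_{AB}]/4 \le -\eta/4 < 0$ by the completeness half of Lemma~\ref{lem:entanglementwitness}. A constant fraction of the $N$ repetitions reach 3b on both sides, giving $\Theta(N)$ i.i.d.\ samples, and a Hoeffding bound on each term $\beta_{s,t}\,\widehat{\Prob}[a=1,b=1\mid\tau_s,\omega_t]$ yields $|\hat I - I_\rho| < \eta/8$ except with probability at most $\delta$ once $N = \Theta(\log(1/\delta)\max^2_{s,t}|\beta_{s,t}|/\eta^2)$. Thus $\hat I < -\eta/8 < 0$ and $\ver$ accepts with probability $1-\delta$.

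\emph{Soundness.} I would argue by contradiction: suppose $\text{QPT}(n)$ parties $\alice',\bob'$ sharing a separable $\sigma_{AB}$ are accepted with probability $> \delta$. Acceptance is contained in the event that neither party aborts in any repetition \emph{and} $\hat I < 0$. Since acceptance must survive $\Theta(N)$ independent check rounds, a probability exceeding $\delta$ forces the per-instance check-passing probability to be at least $1-\epsilon$ with $\epsilon = O(\log(1/\delta)/N) = O(\eta^2/\max^2_{s,t}|\beta_{s,t}|)$, which is exactly the regime of Theorem~\ref{thm:mainRSP}: on each side there is an efficient isometry under which the committed state and transcript are $\delta'$-computationally-indistinguishable ($\delta' = O(\epsilon^c)$) from one in which that party has genuinely received a uniformly random input from $\{\zero,\one,\plus,\minus,\ii,\minusi\}$. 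Because $\sigma_{AB}$ is separable and the two RSP instances use independent keys, a two-step hybrid — replace $\alice'$'s effective input by the ideal one while treating $\bob'$'s system as auxiliary, then do the same for $\bob'$ — shows the real joint distribution of $(\tau,\omega,a,b)$ is $O(\delta')$-indistinguishable from the \emph{ideal} semi-quantum experiment of Lemma~\ref{lem:entanglementwitness}, where the soundness half forces $I_\lambda > \eta$ and hence $\Prob_{\mathrm{ideal}}[\hat I < 0] \le \delta/2$ by Hoeffding. Since $\mathbbm{1}[\hat I < 0]$ is efficiently computable from the transcript, it is a legitimate $\text{QPT}$ distinguisher, so transferring it across the hybrid gives $\Prob[\text{accept}] \le \Prob_{\mathrm{real}}[\hat I < 0] \le \delta/2 + O(\delta')$.

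The contradiction closes by parameter bookkeeping: with $N = \Theta(\log(1/\delta)\max^2_{s,t}|\beta_{s,t}|/\eta^2)$ one has $\delta' = O(\epsilon^c) = O((\eta/\max_{s,t}|\beta_{s,t}|)^{2c})$, and the hypothesis $\delta = \omega((\eta/\max_{s,t}|\beta_{s,t}|)^{1/c})$ guarantees $O(\delta') < \delta/2$, whence $\Prob[\text{accept}] < \delta$; the requirement $\delta = \omega(\negl(n))$ is what makes the associated distinguishing advantage non-negligible, so the reduction genuinely breaks QLWE. The message and communication bounds are immediate, since the protocol is just $N$ repetitions of the constant-round, $\poly(n)$-bit interaction of Figure~\ref{fig:final_protocol}. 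I expect the main obstacle to be precisely this soundness bridge: correctly converting ``accepted with non-negligible probability'' into a per-instance $1-\epsilon$ guarantee so that Theorem~\ref{thm:mainRSP} applies, composing the per-party, up-to-isometry indistinguishability into a statement about the physical score $\hat I$ through the hybrid, and simultaneously tracking $\epsilon$, $\delta'$, $\delta$ and $N$ so that the final inequality closes while keeping the advantage non-negligible.
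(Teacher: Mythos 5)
Your architecture matches the paper's proof — honest RSP (Theorem~\ref{thm:completeness_for_qubit_prep}) plus Bell measurements and a Chernoff bound for completeness; extraction of a per-round RSP success guarantee, Theorem~\ref{thm:mainRSP}, and the soundness half of Lemma~\ref{lem:entanglementwitness} for soundness — but your soundness bridge has a genuine quantitative gap, precisely at the step you flag as the main obstacle. Theorem~\ref{thm:mainRSP} is a statement about a \emph{single} RSP execution, while the acceptance indicator $\mathbbm{1}[\hat I < 0]$ depends on all $N$ repetitions. Transferring it ``across the hybrid'' therefore requires replacing the committed state in each of the $2N$ RSP instances one at a time, so the loss is $O(N\delta')$, not $O(\delta')$. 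With your parameters this is $O\bigl(\log(1/\delta)\,(\eta/\max_{s,t}|\beta_{s,t}|)^{2c-2}\bigr)$, which (since the RSP robustness constant $c$ is well below $1$) grows as $\delta \to 0$ rather than dropping below $\delta/2$: the closing inequality does not hold. The paper avoids this by never hybridizing over the whole transcript: it invokes Theorem~\ref{thm:mainRSP} only to bound the shift of each \emph{one-round} conditional probability, as in \eqref{eq:crucialtransition}, by $O(\delta^c)$ — legitimate because the one-round events are efficiently decidable — and then applies Chernoff to the i.i.d.\ rounds, so the computational error enters exactly once, into the mean of $\hat I$, and is never multiplied by $N$.

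A second, related problem is your choice of $\epsilon$. By deriving $\epsilon = O(\log(1/\delta)/N) = O(\eta^2/\max^2_{s,t}|\beta_{s,t}|)$ from the repetition count, you make $\delta' = O(\epsilon^c)$ a fixed constant of the witness, independent of $\delta$; then ``$O(\delta') < \delta/2$'' can only hold if $\delta$ is bounded \emph{below}, which is how you read the hypothesis, and even after repairing the hybrid via the per-round route the needed score-shift condition $\max_{s,t}|\beta_{s,t}|\cdot\delta' \le \eta/20$ becomes a condition on the witness alone, namely $(\eta/\max_{s,t}|\beta_{s,t}|)^{2c-1} \lesssim 1$, which fails for witnesses with small $\eta/\max_{s,t}|\beta_{s,t}|$ whenever $c \le 1/2$. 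The paper instead feeds $\epsilon = \delta$ into Theorem~\ref{thm:mainRSP} (extracted, as you do, by a Chernoff argument from the check rounds), so the deviation $O(\delta^c)$ shrinks with $\delta$ and the condition on $\delta$ in the theorem statement — intended as an upper bound, $\max_{s,t}|\beta_{s,t}|\,O(\delta^c) \le \eta/20$ — closes the argument: the ideal score $\ge \eta/4$ is degraded to at most $\eta/5$, and Chernoff yields $\Prob[\hat I_\lambda \ge \eta/20] \ge 1 - \delta/2$. Your completeness argument, by contrast, is correct and coincides with the paper's, including the choice $N = \Theta(\log(1/\delta)\max^2_{s,t}|\beta_{s,t}|/\eta^2)$ and the $|\hat I - I_\rho| \le \eta/8$ concentration step.
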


The proof is deferred to Appendix~\ref{apx:nonlocality}.

\section{Entanglement Certification implies Separation of Complexity Classes}\label{sec:entanglementcertimpliescrypto}

In this section, we show that if $\ENT = \NEL$ (Conjecture~\ref{conj} is true), which means that one can certify entanglement of all entangled states, then $\BQP \neq \PP$. 
This is summarized in the following theorem

\thmmain*

\paragraph{Note.} As we discussed before our proof works also if we don't require $\ver \in \BPP$ in Definition~\ref{def:nonlocality}.
It is because the simulation we implement gives rise to a distribution that is close to the target distribution in the TV distance.
This means that it is impossible to distinguish distributions even statistically.
 

\paragraph{Proof of Theorem~\ref{thm:maininformal}.}

Now we are ready to start proving Theorem~\ref{thm:maininformal}. First, we provide some helpful definitions.

\begin{definition}\label{def:correspondingPOVM}
Let $C$ be a quantum circuit acting on $\ell$ qubits, $U$ its corresponding unitary, and $a \in \{0,1\}^\ell$. If we define
$
\ket{\phi_a}\ket{0}^{\ell-1} :=  \left(\id \otimes \ket{0}^{\ell - 1} \bra{0}^{\ell - 1} \right) U^\dagger \ket{a},
$
where $\ket{\phi_a}$ is understood as an element of $\mathcal{S}(\Com^2)$ ($\ket{\phi_a}$ might be unnormalized) then we call a pair 
$$
\left(\braket{\phi_a}{\phi_a}, \frac{\ket{\phi_a}\ket{0}^{\ell-1}\bra{0}^{\ell-1}\bra{\phi_a}}{\braket{\phi_a}{\phi_a}} \right) \in [0,1] \times \mathcal{P}(\Com^2)
$$
the POVM element of $C$ corresponding to $a$.
\end{definition}

Next lemma justifies Definition~\ref{def:correspondingPOVM}.

\begin{lemma}\label{lem:circuittoPOVM}
Let $C$ be a quantum circuit on $\ell$ qubits and $a \in \{0,1\}^\ell$. 
If we define a POVM $\left\{\mathcal{A}_a \right\}_a$ on $\Com^2$ that acts on $\ket{\lambda} \in \Omega(\Com^2)$ by running $C$ on $\ket{\lambda} \otimes \ket{0}^{\ell-1}$ and measuring all the qubits in the $Z$ basis then for every $a \in \{0,1\}^\ell$ we have that
$$
\mathcal{A}_a = \eta_a P_a,
$$
where $\left(\eta_a, P_a\right)$ is the POVM element of $C$ corresponding to $a$ (as defined according to Definition~\ref{def:correspondingPOVM}). 
\end{lemma}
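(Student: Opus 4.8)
The plan is to compute the action of the described measurement directly and then read off $\mathcal{A}_a$ by matching against Definition~\ref{def:correspondingPOVM}. First I would write down the probability of obtaining outcome $a$ when the procedure is applied to an input $\ket{\lambda} \in \Omega(\Com^2)$. By definition of the procedure this is the probability of observing $\ket{a}$ upon measuring $U(\ket{\lambda}\otimes\ket{0}^{\ell-1})$ in the computational basis, namely
$$
p(a\mid\lambda) = \left| \bra{a} U \left( \ket{\lambda}\otimes\ket{0}^{\ell-1}\right)\right|^2 .
$$
Since $\{\mathcal{A}_a\}_a$ is by definition the POVM reproducing these statistics, i.e.\ $p(a\mid\lambda)=\bra{\lambda}\mathcal{A}_a\ket{\lambda}$ for all $\ket{\lambda}$, it suffices to extract the single-qubit operator $\mathcal{A}_a$ from this expression.

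The key step is to push the ancilla projection through the amplitude. Because the ancilla of the input is pinned to $\ket{0}^{\ell-1}$, inserting the (Hermitian) projector $\id\otimes\ket{0}^{\ell-1}\bra{0}^{\ell-1}$ changes nothing, and taking the adjoint of the vector appearing in Definition~\ref{def:correspondingPOVM} gives
$$
\bra{a}U\left(\ket{\lambda}\otimes\ket{0}^{\ell-1}\right) = \left(\bra{a}U\left(\id\otimes\ket{0}^{\ell-1}\bra{0}^{\ell-1}\right)\right)\left(\ket{\lambda}\otimes\ket{0}^{\ell-1}\right) = \braket{\phi_a}{\lambda},
$$
where $\ket{\phi_a}\otimes\ket{0}^{\ell-1} := \left(\id\otimes\ket{0}^{\ell-1}\bra{0}^{\ell-1}\right)U^\dagger\ket{a}$ is exactly the (possibly unnormalized) single-qubit vector of Definition~\ref{def:correspondingPOVM}, and I used $\braket{0^{\ell-1}}{0^{\ell-1}}=1$. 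Squaring then yields
$$
p(a\mid\lambda) = \left|\braket{\phi_a}{\lambda}\right|^2 = \bra{\lambda}\left(\ket{\phi_a}\bra{\phi_a}\right)\ket{\lambda},
$$
so that $\mathcal{A}_a = \ket{\phi_a}\bra{\phi_a}$ as an operator on $\Com^2$.

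To finish, I would factor this rank-one operator into its normalization and its normalized projector: setting $\eta_a=\braket{\phi_a}{\phi_a}$ and $P_a = \ket{\phi_a}\bra{\phi_a}/\braket{\phi_a}{\phi_a}$ for the outcomes with $\eta_a\neq 0$ (the outcomes with $\eta_a=0$ contribute the zero operator and may be discarded), one obtains $\mathcal{A}_a=\eta_a P_a$, which is precisely the claim. As a sanity check I would verify $P_a^2=P_a$ and $\trace[P_a]=1$, so that $P_a$ is a genuine rank-one projector, and that $\sum_a\mathcal{A}_a = \left(\id\otimes\bra{0}^{\ell-1}\right)U^\dagger U\left(\id\otimes\ket{0}^{\ell-1}\right) = \id$, confirming $\{\mathcal{A}_a\}_a$ is a valid POVM; both follow immediately from $\sum_a\ket{a}\bra{a}=\id$ and unitarity of $U$.

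The argument is essentially a one-line computation, so there is no deep obstacle here. The only point requiring care is the mild abuse of notation in Definition~\ref{def:correspondingPOVM}, where $P_a$ is written with the ancilla factor $\ket{0}^{\ell-1}\bra{0}^{\ell-1}$ attached while being declared an operator on $\Com^2$; the thing to get right is that this ancilla projector is exactly what the pinned ancilla of the input supplies, so it may be silently absorbed and $P_a$ read off as the stated single-qubit operator.
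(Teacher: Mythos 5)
Your proposal is correct and follows essentially the same route as the paper: both compute the outcome probability $\bra{\lambda}\mathcal{A}_a\ket{\lambda}$, sandwich out the operator $(\id\otimes\bra{0}^{\ell-1})U^\dagger\ket{a}\bra{a}U(\id\otimes\ket{0}^{\ell-1})$, and identify it with the pair $(\eta_a,P_a)$ of Definition~\ref{def:correspondingPOVM}. The only cosmetic difference is that you read off the rank-one factorization $\mathcal{A}_a=\ket{\phi_a}\bra{\phi_a}=\eta_a P_a$ directly from the amplitude $\braket{\phi_a}{\lambda}$, whereas the paper extracts $\eta_a=\braket{\phi_a}{\phi_a}$ via the identity $\left(\mathcal{A}_a\right)^2=\eta_a\mathcal{A}_a$; your added checks (the $\eta_a=0$ case and $\sum_a\mathcal{A}_a=\id$) are sound but not needed for the claim.
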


\begin{proof}
We have (with a slight abuse of notation) that for every $\ket{\lambda} \in \Omega(\Com^2), a \in \{0,1\}^\ell$
\begin{align}
    \Prob \left[C \text{ outputs } a \text{ on }\ket{\lambda} \otimes \ket{0}^{\ell-1} \right] &= \bra{\lambda} \mathcal{A}_a \ket{\lambda} \nonumber \\
    &= \bra{\lambda}(\id \otimes \ket{0}^{\ell-1})^{\dagger}(U)^\dagger \ket{a}\bra{a} U (\id \otimes \ket{0}^{\ell-1})\ket{\lambda}. \nonumber
\end{align}
The POVM element $\mathcal{A}_a$ can be thus seen as,
\begin{equation}
    \mathcal{A}_a = (\id \otimes \ket{0}^{\ell-1})^{\dagger}U^\dagger \ket{a}\bra{a} U (\id \otimes \ket{0}^{\ell-1}). \label{eq:povmform}
\end{equation}
Let $\ket{\phi_a}$ be an un-normalized vector
\begin{equation}\label{eq:phix}
\ket{\phi_a}\ket{0}^{\ell-1} =  \left(\id \otimes \ket{0}^{\ell - 1} \bra{0}^{\ell - 1} \right) U^\dagger \ket{a},
\end{equation}
where we treat $\ket{\phi_a}$ as an element of $\mathcal{S}(\Com^2)$. Recall that by definition $\mathcal{A}_a = \eta_a P_a$. 
We can express $\eta_a$ and $P_a$ as
\begin{equation}\label{eq:relations}
\eta_a = \braket{\phi_a}{\phi_a}, \ \ P_a = \frac{\ket{\phi_a}\ket{0}^{\ell-1}\bra{0}^{\ell-1}\bra{\phi_a}}{\braket{\phi_a}{\phi_a}}.
\end{equation}
This is a direct consequence of the observation that: 
$$ \Big(\mathcal{A}_a \Big)^2 = \eta_a^2 P_a = \eta_a \mathcal{A}_a .$$
plugging in $\mathcal{A}_a$ from equation \eqref{eq:povmform} we arrive at \eqref{eq:relations}.    
\end{proof}

The next lemma shows that there exists a $\postqpt$ algorithm that when given access to $\poly$ copies of a pair of states can compute the square of the dot product of them up to exponential precision. The core of the proof is a combination of a binary-search-like procedure with a technique from \cite{PostBQPPP}. This technique allows, for a state $\gamma_1\ket{0} + \gamma_2\ket{1}$, where $\gamma_1,\gamma_2 > 0$, to design a $\postqpt$ algorithm that, when given access to polynomially many copies of that state, computes $\gamma_1,\gamma_2$ up to exponential precision.

\begin{lemma}{(\textbf{Dot product estimation})}\label{lem:dotproductNEW}
There exists a uniform family of $\PostBQP(\ell)$ circuits $\left\{\Cdot_\ell \right\}_\ell$ such that for all pairs of states $\ket{\psi} = \gamma_1 \ket{0} + \gamma_2 \ket{1}$, where $\gamma_1,\gamma_2 \in \Real$ and $\ket{\psi'} \in \mathcal{S}(\Com^2)$, where $\ket{\psi'}$ is not necessarily normalized the following holds. 
For every $\ell \in \N$, if $\Cdot_\ell$ is given access to $\poly(\ell)$ copies of $\ket{\psi}$, $\ket{\psi'}$ is given as a $\poly(\ell)$ bit string, and $|\gamma_1|,|\gamma_2| \in \{0\} \cup (2^{-\ell}, 2^\ell)$.
then $\Cdot_\ell$ returns $v \in [0,1]$, which with probability $1 - 2^{-\Omega(\ell)}$, satisfies
$$
\left| v - \left|\braket{\psi}{\psi'}\right|^2 \right|\leq 2^{-\Omega(\ell)}.
$$
\end{lemma}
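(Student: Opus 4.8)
The plan is to reduce the estimation of $|\braket{\psi}{\psi'}|^2$ to estimating the two real amplitudes of $\ket{\psi}$, after which the overlap is a purely classical computation. Writing $\ket{\psi'} = \psi'_0\ket{0} + \psi'_1\ket{1}$ with $\psi'_0,\psi'_1 \in \Com$ handed to us as a $\poly(\ell)$-bit string, and using that $\gamma_1,\gamma_2$ are real, we have $\braket{\psi}{\psi'} = \gamma_1\psi'_0 + \gamma_2\psi'_1$, so $|\braket{\psi}{\psi'}|^2 = |\gamma_1\psi'_0 + \gamma_2\psi'_1|^2$. Thus, once we obtain estimates $\tilde{\gamma}_1,\tilde{\gamma}_2$ of $\gamma_1,\gamma_2$ to additive error $2^{-\Omega(\ell)}$, we output $v := \min\{1,\,|\tilde{\gamma}_1\psi'_0 + \tilde{\gamma}_2\psi'_1|^2\}$, computed classically. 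The map $(\gamma_1,\gamma_2)\mapsto |\gamma_1\psi'_0+\gamma_2\psi'_1|^2$ is Lipschitz on $|\gamma_i|\le 1$ with constant controlled by $|\psi'_0|+|\psi'_1|$, so by running the amplitude estimation at an internal precision $2^{-\ell'}$ with $\ell' = \poly(\ell)$ large enough to dominate the bit-length of $\psi'$, the propagated error in $v$ stays $2^{-\Omega(\ell)}$. Note $\gamma_1,\gamma_2$ are only defined up to a common global phase, but this does not affect $|\braket{\psi}{\psi'}|^2$, so we may fix the phase by taking $\gamma_1 \ge 0$.

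The quantum core is to estimate the real amplitudes of $\ket{\psi}$ to exponential precision, which I would do by adapting the central gadget of Aaronson's proof that $\PostBQP = \PP$~\cite{PostBQPPP}. Parametrize $\ket{\psi} = \cos\theta\,\ket{0} + \sin\theta\,\ket{1}$ with $\theta \in [-\pi/2,\pi/2]$ (so $\gamma_1 = \cos\theta \ge 0$, $\gamma_2 = \sin\theta$). The key point is that postselection implements, on a single copy, a non-unitary amplitude filter $\mathrm{diag}(1,s)$ for any classically specified $s>0$: one couples the qubit to an ancilla by a controlled rotation and postselects the ancilla, sending $\cos\theta\,\ket{0}+\sin\theta\,\ket{1}$ to a state proportional to $\cos\theta\,\ket{0} + s\sin\theta\,\ket{1}$. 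Using such filters, for any threshold angle $\phi$ we first rotate by $-\phi$ and then amplify, driving the post-measurement distribution onto the outcome $0$ when $\theta < \phi$ and onto $1$ when $\theta > \phi$, with the minority outcome suppressed to amplitude $2^{-\Omega(\ell)}$. This realizes a single comparison ``is $\theta < \phi$?'' that a postselecting machine decides with error $2^{-\Omega(\ell)}$ from $\poly(\ell)$ copies of $\ket{\psi}$.

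Wrapping a $\Theta(\ell)$-step binary search around this comparison extracts $\Theta(\ell)$ bits of $\theta$, yielding $\tilde{\theta}$ with $|\tilde{\theta}-\theta| \le 2^{-\Omega(\ell)}$ and hence $\tilde{\gamma}_1 = \cos\tilde{\theta}$, $\tilde{\gamma}_2 = \sin\tilde{\theta}$ within $2^{-\Omega(\ell)}$ of $\gamma_1,\gamma_2$, including the sign of $\gamma_2$. The hypothesis $|\gamma_i| \in \{0\}\cup(2^{-\ell},2^\ell)$ is what makes this work with $\poly(\ell)$-bit parameters: either some amplitude is exactly $0$, a case detected by the search (then $\ket{\psi}\in\{\ket{0},\ket{1}\}$ up to phase and the overlap is simply $|\psi'_0|^2$ or $|\psi'_1|^2$, handled directly), or both amplitudes exceed $2^{-\ell}$, so that all rotation angles and filter strengths $s$ needed can be described with polynomially many bits and the binary search converges in $\Theta(\ell)$ steps.

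The main obstacle is the last-mile precision: measurement statistics from $\poly(\ell)$ copies resolve probabilities only up to inverse-polynomial error, whereas the target precision is $2^{-\Omega(\ell)}$. This is exactly where postselection is indispensable, since it turns ``distinguish two exponentially close amplitudes'' into a decision whose minority amplitude is exponentially suppressed and thus read off reliably by a $\PostBQP$ circuit. The delicate parts of the analysis are therefore (i) bounding each of the $\Theta(\ell)$ comparison errors by $2^{-\Omega(\ell)}$ and union-bounding over the polynomially many search steps so the total failure probability stays $2^{-\Omega(\ell)}$, and (ii) checking that the filter strengths and angles are all $\poly(\ell)$-bit, which the gap hypothesis guarantees. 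Assembling the reduction of the first paragraph with this amplitude-estimation subroutine produces the desired uniform family $\{\Cdot_\ell\}_\ell$ of $\PostBQP(\ell)$ circuits returning $v$ with the stated guarantee.
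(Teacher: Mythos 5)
Your proposal is correct and follows essentially the same route as the paper's proof: both reduce the task to exponentially precise estimation of the real amplitudes of $\ket{\psi}$ using postselected amplitude filters and Aaronson's $\PostBQP$ comparison gadget, wrapped in a binary search with a union bound over the $\Theta(\ell)$ steps, followed by a purely classical computation of the overlap from the bit-string description of $\ket{\psi'}$, with the gap hypothesis $|\gamma_i|\in\{0\}\cup(2^{-\ell},2^{\ell})$ ensuring all filter parameters have polynomial bit-length. The only substantive difference is the search variable --- you binary-search the angle $\theta$ directly (absorbing the sign of $\gamma_2$ into the search), whereas the paper first squares the amplitudes, searches for a ratio $\beta_0$ with $|\gamma_1^2-\beta_0\gamma_2^2|\le 2^{-2\ell}$, and recovers $\mathrm{sgn}(\gamma_1\gamma_2)$ in a separate Hadamard-basis measurement; do note that your comparison step must be realized via the full sweep over filter strengths $2^i$ with a Hadamard-basis readout (as in the gadget you cite), since a diagonal filter read out in the computational basis is blind to the sign of $\sin(\theta-\phi)$.
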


\begin{proof}
The full algorithm will consist of a series of subroutines. 


\paragraph{Subroutine 1.} For a state $ \eta_1 \ket{0} + \eta_2 \ket{1}$ take its two copies, apply the CNOT gate and postselect on the second qubit being $\ket{0}$. The result is
$$
\frac{\eta_1^2 \ket{0} + \eta_2^2 \ket{1}}{\sqrt{\eta_1^4 + \eta_2^4}}.
$$

\paragraph{Subroutine 2.} For $\beta \in \Real$ and a state $\eta_1 \ket{0} + \eta_2 \ket{1}$ prepare $\frac{1}{\sqrt{1+\beta^2}}(\ket{0} + \beta \ket{1}) (\eta_1 \ket{0} + \eta_2 \ket{1})$, apply the CNOT gate on it and postselect on the second qubit being $\ket{0}$. The result is
$$
\frac{\eta_1 \ket{0} + \beta \eta_2 \ket{1}}{\sqrt{\eta_1^2 + \beta^2 \eta_2^2}}.
$$

\paragraph{Subroutine 3. (inspired by \cite{PostBQPPP}).} For a state $\zeta_1 \ket{0} + \zeta_2\ket{1}$, where $\zeta_1,\zeta_2 \in \{0\} \cup  \left(2^{-O(m)},2^{O(m)}\right)$ \footnote{Note that we used $m$ instead of $\ell$ here as the parameter. We do it because the subroutine will be used later with $m = \poly(\ell)$.}
, and real, positive numbers $\alpha,\beta$ prepare a state $\alpha \ket{0} (\zeta_1 \ket{0} + \zeta_2 \ket{1}) + \beta \ket{1} H (\zeta_1 \ket{0} + \zeta_2 \ket{1})$ and postselect on the second qubit being $\ket{1}$. 
The result is
\begin{equation}
\ket{\phi_{\beta/\alpha}} :=\frac{\alpha \zeta_2 \ket{0} + \frac{\beta}{\sqrt{2}}(\zeta_1 - \zeta_2)\ket{1}}{\sqrt{\alpha^2 \zeta_2^2 + \frac{\beta^2}{2}(\zeta_1 - \zeta_2)^2}}.
\end{equation}
Assume $\zeta_1 > \zeta_2 + 2^{-O(m)}$. 
We claim that there exists $i \in [-O(m),O(m)]$ such that if we set $\beta/\alpha = 2^i$ then $\ket{\phi_{2^i}}$ is close to the state $\ket{+}$, i.e.
\begin{equation}\label{eq:mincorr}
|\braket{+}{\phi_{2^i}}| \geq \frac{1+\sqrt{2}}{\sqrt{6}} > 0.985.
\end{equation}
From the assumption that $\zeta_1 > \zeta_2 + 2^{-O(m)}$ we have that there exists $i \in [-O(m),O(m)]$ such that $\ket{\phi_{2^i}}$ and $\ket{\phi_{2^{i+1}}}$ fall on opposite sides of $\ket{+}$ (in the first quadrant). 
The worst case is $\braket{+}{\phi_{2^i}} = \braket{+}{\phi_{2^{i+1}}}$, which happens for $\ket{\phi_{2^i}} = \sqrt{\frac23}\ket{0} + \sqrt{\frac13}\ket{1}, \ket{\phi_{2^{i+1}}} = \sqrt{\frac13}\ket{0} + \sqrt{\frac23}\ket{1}$. 
These give exactly the bound from \eqref{eq:mincorr}. 

Second case is when $\zeta_1 < \zeta_2$ then $\ket{\phi_{2^i}}$ never lies in the first or third quadrants and thus $|\braket{+}{\phi_{2^i}}| < \frac{1}{\sqrt{2}} < 0.985$. 
The two facts together imply that repeating the procedure $\poly(m)$ times we can distinguish  $\zeta_1 > \zeta_2 + 2^{-O(m)}$ from $\zeta_1 < \zeta_2$ with probability $1 - 2^{-m}$.

\paragraph{Subroutine 4 (Binary search).}
If for a state $\eta_1 \ket{0} + \eta_2 \ket{1}$ we run Subroutine 1. and 3. with $\beta > 0$ we obtain $\frac{\eta_1^2 \ket{0} + \beta \eta_2^2 \ket{1}}{\sqrt{\eta_1^4 + \beta^2 \eta_2^4}}$. 
If both of the amplitudes $ \in \{0\} \cup \left(2^{-O(m)},2^{O(m)}\right)$ we can run Subroutine 3. to distinguish between $\eta_1^2 > \beta \eta_2^2 + \sqrt{\eta_1^4 +\beta^2 \eta_2^4} \cdot 2^{-O(m)}$ and $\eta_1^2 < \beta \eta_2^2$. 
This means that we can run a binary-search-like algorithm with $\poly(\ell)$ repetitions of Subroutine 3. (with parameter $m = \poly(\ell)$) such that with probability $1 - 2^{-\ell}$ we find $\beta_0 > 0$ such that 
$$
|\eta_1^2 - \beta_0 \eta_2^2| \leq 2^{-2\ell}.    
$$ 

\paragraph{Algorithm.} 
Let $\gamma_1',\gamma_2' \in \Com$, be such that $\ket{\psi'} = \gamma_1' \ket{0} + \gamma_2' \ket{1}$.
By assumption of the lemma we can run Subroutine 4. on $\ket{\psi}$ and $\ket{\psi'}$ postselected on the second qubit being $\ket{0}$ and obtain $\beta_0 \in \{0\} \cup (2^{-O(\ell)}, 2^{O(\ell)})$
\begin{align}
&|\gamma_1^2 - \beta_0 \gamma_2^2| \leq 2^{-2\ell}. \label{eq:closeness}
\end{align}
If we run Subroutine 2. with parameter $\sqrt{\beta_0}$ the result is
\begin{equation}\label{eq:finalstate}
\frac{\gamma_1 \ket{0} + \sqrt{\beta_0} \gamma_2 \ket{1}}{\sqrt{\gamma_1^2 + \beta_0 \gamma_2^2}},
\end{equation}
which by \eqref{eq:closeness} is close (up to an irrelevant phase) to either $\ket{+}$ or $\ket{-}$.
Thus, measuring \eqref{eq:finalstate} in the Hadamard basis gives us, with probability $1 - 2^{-\Omega(\ell)}$, $\text{sgn}(\gamma_1 \gamma_2)$.

Note that the quantity of interest is equal to
\begin{equation}\label{eq:goaldot}
[\gamma_1 \re(\gamma_1') + \gamma_2 \re(\gamma_2')]^2 + [\gamma_1 \im(\gamma_1') + \gamma_2 \im(\gamma_2')]^2.
\end{equation}
The sign $\text{sgn}(\gamma_1 \gamma_2)$ together with \eqref{eq:closeness} are enough to compute an approximation to \eqref{eq:goaldot}. 
It is because the signs give us information about the relative sign ($+/-$) in each of the summand in \eqref{eq:goaldot}. 
Moreover, \eqref{eq:closeness} allows us to compute all quantities of interest in \eqref{eq:goaldot} (up to a sign), because of the normalization of $\ket{\psi}$ and $\ket{\psi'}$. 
For example, we can compute $\gamma_1,\gamma_2$ up to a relative sign using the fact that $\gamma_1^2 + \gamma_2^2 = 1$ and that $|\gamma_1^2 - \beta_0 \gamma_2^2| \leq 2^{-2\ell}$.

We conclude, by the union bound, that with probability $1 - 2^{-\Omega(\ell)}$ we compute $v$ satisfying the statement of the lemma.

\end{proof}

The next series of lemmas explains how to simulate some entangled states using separable states only. 
More concretely, we show how, for some entangled state $\rho_{AB}$ and all measurements applied to it by $\alice,\bob$, to reproduce the statistics of outcomes by $\alice',\bob'$ having access to a separable state only.

\begin{lemma}[\textbf{Werner's model}]\label{lem:wernermodel}
For every $q \leq \frac12$ the Werner state $\rho(q)$, defined in \eqref{eq:wernerstate}, is local for all projective measurements: $\left\{P,\id-P\right\}$ for $\alice$ and $\left\{Q, \id - Q\right\}$ for $\bob$. 
\end{lemma}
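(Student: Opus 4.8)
The plan is to reproduce Werner's original local hidden variable model and cast it in the notation of the paper, so that it feeds directly into Algorithms~\ref{alg:alicesim} and~\ref{alg:bobsim}. First I would reduce to the nontrivial case. A binary projective measurement $\{P,\id-P\}$ on $\Com^2$ either has $P\in\{0,\id\}$, where the outcome is deterministic and trivially reproduced, or $P$ is a rank-one projector; in the latter case write $P=\tfrac12(\id+\hat a\cdot\vec\sigma)$ for a unit vector $\hat a\in S^2$, and similarly $Q=\tfrac12(\id+\hat b\cdot\vec\sigma)$. A direct computation shows that $\rho(q)$ has maximally mixed marginals and correlation $\mathbb{E}[(-1)^a(-1)^b\mid \hat a,\hat b]=-q\,\hat a\cdot\hat b$. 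Since both outcomes are binary with uniform marginals, the joint distribution $\Prob(a,b\mid P,Q)$ is completely determined by this single correlation, so it suffices to match the marginals and the correlation.

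Next I would exhibit the model. The shared hidden variable is a Haar-random single-qubit state $\ket{\lambda}\sim\omega(\Com^2)$, equivalently a uniformly random Bloch vector $\hat\lambda\in S^2$; the key identity is $\bra{\lambda}P\ket{\lambda}=\tfrac12(1+\hat a\cdot\hat\lambda)$. Alice responds deterministically with the bit $a$ satisfying $(-1)^a=\mathrm{sgn}(\hat a\cdot\hat\lambda)$, i.e. by the indicator $\mathbbm{1}\{\bra{\lambda}P\ket{\lambda}<\bra{\lambda}(\id-P)\ket{\lambda}\}$ used in Algorithm~\ref{alg:alicesim}. Bob responds probabilistically so that $\mathbb{E}[(-1)^b\mid\hat\lambda]=-\hat b\cdot\hat\lambda$, i.e. he outputs the bit associated to $Q$ with probability $\bra{\lambda}Q\ket{\lambda}$. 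Because $\hat\lambda$ is uniform, both marginals are unbiased, as required.

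The core computation is the resulting correlation $\mathbb{E}_{\hat\lambda}[\mathrm{sgn}(\hat a\cdot\hat\lambda)\,(-\hat b\cdot\hat\lambda)]$. By rotational symmetry this integral is proportional to $-\hat a\cdot\hat b$, and evaluating it (taking $\hat a=\hat z$, so that only the $\hat a$-component of $\hat b$ survives, and computing $\tfrac12\int_0^\pi\sin\theta\,|\cos\theta|\,d\theta=\tfrac12$) gives exactly $-\tfrac12\,\hat a\cdot\hat b$. Thus the model realizes visibility exactly $\tfrac12$. To cover every $q\le\tfrac12$ I would dilute: with probability $2q$ run the model above, and with probability $1-2q$ let $\alice$ and $\bob$ output independent unbiased bits; this keeps the marginals uniform and scales the correlation to $-q\,\hat a\cdot\hat b$, matching $\rho(q)$ exactly.

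The step I expect to be the main obstacle is the spherical-integral evaluation together with the verification that Bob's prescription is a genuine probability, since $\mathbb{E}[(-1)^b\mid\hat\lambda]=-\hat b\cdot\hat\lambda\in[-1,1]$. The value $\tfrac12$ is precisely what pins the threshold at $q\le\tfrac12$, so getting the constant right (and the symmetry argument that annihilates the components of $\hat b$ orthogonal to $\hat a$) is the crux. Everything else, namely the reduction to Bloch vectors, the uniform-marginal-plus-correlation characterization of binary distributions, and the dilution argument, is routine. Since this is Werner's classical result, the only novelty is rephrasing it in terms of $\bra{\lambda}P\ket{\lambda}$ and $\bra{\lambda}Q\ket{\lambda}$, which is exactly what lets it plug into the subsequent $\PostBQP$ simulation.
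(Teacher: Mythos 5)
Your proposal is correct and follows essentially the same route as the paper: the identical local model (Alice's deterministic response via $\mathbbm{1}\{\bra{\lambda}P\ket{\lambda}<\bra{\lambda}(\id-P)\ket{\lambda}\}$, Bob's probabilistic response with probability $\bra{\lambda}Q\ket{\lambda}$) for $q=\tfrac12$, followed by the same dilution $\rho(q)=2q\,\rho(\tfrac12)+\tfrac{1-2q}{4}\id$ with a shared coin. The only difference is that you verify the $q=\tfrac12$ case directly via the spherical integral giving visibility $\tfrac12$, whereas the paper delegates that computation to \cite[Section 3.1]{entnonsurvey}.
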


\begin{proof}
The first step is to show that the statement holds for the maximal value of $q$, i.e. $q=\frac12$. The strategy is as follows: $\alice',\bob'$ have access to a public source of randomness represented as $\ket{\lambda} \sim \omega(\Com^2)$, $\alice'$ returns $1$ if $\bra{\lambda} P \ket{\lambda} < \bra{\lambda} (\id - P) \ket{\lambda}$ and $0$ otherwise. $\bob'$ returns $1$ with probability $\bra{\lambda} Q \ket{\lambda}$. The proof that this gives rise to a distribution equal to arising from measuring the projections on $\rho(\frac12)$ can be found in \cite[section 3.1]{entnonsurvey}. 

Now, for any $\rho(q)$ with $q < \frac12$ we do the following. We write $\rho(q)$ as a mixture of $\rho(\frac12)$ and white noise $\id$, i.e. $\rho(q) = 2q \rho(\frac12) + \frac{1-2q}{4} \id$. As $q \leq \frac12$, $2q<1$, so we can consider the following strategy for $\alice' ,\bob'$: With probability $2q$ (that is coordinated with shared randomness) they perform the strategy as described for $\rho(\frac12)$ and with probability $\frac{1-2q}{4}$ they return a uniformly random bit. 

Direct derivation gives us $\Prob(a,b|P,Q) = \trace[(P_a\otimes Q_b) \rho(q)] = 2q \cdot \trace[(P_a \otimes Q_b) \rho(\frac12)] + \frac{1-2q}{4}$. This matches the distribution of the outputs produced by $\alice'$ and $\bob'$, as with probability $2q$ they reproduce the distribution associated with measuring $\rho(\frac12)$ and return uniformly random bits with probability $1-2q$. 

\end{proof}

The following family of entangled states will play an important role.

\begin{definition}
For $q \in [0,1]$ we define $\rho_0(q) \in \mathcal{S}(\Com^2 \otimes \Com^2)$ as
\begin{equation}\label{eq:rho0}
\rho_0(q) := q \ket{\psi_-}\bra{\psi_-} + \frac{1-q}{2} \ket{0}\bra{0} \otimes \mathbbm{1}.
\end{equation}
We note that $\rho_0(q)$ is entangled for all $q \in (0,1]$.
\end{definition}

\begin{lemma}\label{lem:wernermodelextended}
For every $q \leq \frac13$, the state $\rho_0(q)$, defined in \eqref{eq:rho0}, is local for all projective measurements: $\{P,\id-P\}$ for $\alice$ and $\{Q, \id - Q\}$ for $\bob$. 
\end{lemma}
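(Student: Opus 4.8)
The plan is to reduce the general statement to a single extremal value of $q$ and then build an explicit local hidden-variable model there, mirroring the two-step structure of the proof of Lemma~\ref{lem:wernermodel} (establish the boundary case, then dilute with separable noise). The key observation is that $\rho_0(q)$ is \emph{not} locally-unitarily or even locally-channel equivalent to a Werner state—its support omits $\ket{1}\bra{1}\otimes\cdot$, so a full-rank Werner state can never appear in a convex decomposition of it—hence one genuinely needs a model tailored to $\rho_0$.

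First I would record the algebraic identity
\[
\rho_0(q) = 3q\,\rho_0\!\left(\tfrac13\right) + (1-3q)\,\ket{0}\bra{0}\otimes\tfrac{\mathbbm{1}}{2},\qquad q\le\tfrac13,
\]
which follows by expanding both sides in the singlet/product basis (the singlet weight is $3q\cdot\tfrac13=q$ and the $\ket{0}\bra{0}\otimes\tfrac{\mathbbm{1}}{2}$ weight is $3q\cdot\tfrac23+(1-3q)=1-q$). The state $\ket{0}\bra{0}\otimes\tfrac{\mathbbm{1}}{2}$ is a product state, hence trivially local for all measurements. Since a convex mixture of two states that each admit a local model again admits one—run the two models under a shared coin that selects the component with the appropriate probability, exactly as in the $q\le\tfrac12$ step of Lemma~\ref{lem:wernermodel}—it suffices to treat the boundary case $q=\tfrac13$.

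For the boundary case I would exhibit a local model directly. Parametrising a rank-one projective measurement by its Bloch vector ($\hat a$ for $\alice$, $\hat b$ for $\bob$, with the degenerate projectors $0,\mathbbm{1}$ handled trivially) and writing $a_z=\hat a\cdot\hat z$, the statistics to reproduce are
\[
\Prob(s,t\mid \hat a,\hat b)=\tfrac14\Bigl(1+\tfrac23\,s\,a_z-\tfrac13\,st\,\hat a\cdot\hat b\Bigr),\qquad s,t\in\{+1,-1\},
\]
so $\alice$ has the biased marginal $\langle s\rangle=\tfrac23 a_z$, $\bob$ has vanishing marginal, and the correlation is the isotropic $-\tfrac13\,\hat a\cdot\hat b$. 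The starting point is the deterministic-$\alice$/probabilistic-$\bob$ model used for the singlet in Lemma~\ref{lem:wernermodel}: $\alice,\bob$ share a Haar-random $\ket{\lambda}$, $\alice$ thresholds on $\bra{\lambda}P\ket{\lambda}$ and $\bob$ answers with probability $\bra{\lambda}Q\ket{\lambda}$, which alone reproduces an isotropic correlation with zero marginals. I would then modify $\alice$'s response so that it additionally carries the $\hat z$-bias $\tfrac23 a_z$, generating this bias from the shared $\ket{\lambda}$ (rather than from independent randomness) so that the bias-producing events stay correlated with $\bob$ and do not dilute the correlation in an $a_z$-dependent way.

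The crux—and the step I expect to be the main obstacle—is making these two requirements compatible with \emph{valid} response functions taking values in $[0,1]$. Injecting $\alice$'s marginal bias competes directly with preserving the correlation strength: a naive product/Werner mixture reaches the isotropic correlation only with weight $2q$, leaving too little weight to produce the marginal $\tfrac23 a_z$, while the simplest threshold modifications make the correlation depend on $a_z$ instead of staying fixed at $-\tfrac13$. I would resolve this by writing $\alice$'s and $\bob$'s response functions in terms of $\hat a\cdot\hat\lambda$, $\hat b\cdot\hat\lambda$ and the distinguished axis $\hat z$ (which both parties may use, since the fixed state is known), and matching the zeroth and first spherical moments over $\ket{\lambda}$. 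The extremal-moment bounds for such functions show that the assignment stays a legitimate probability assignment exactly up to $q=\tfrac13$, which is why the threshold drops from the $\tfrac12$ of the symmetric Werner state to $\tfrac13$ here. Once the boundary model is verified, the reduction above delivers the claim for every $q\le\tfrac13$.
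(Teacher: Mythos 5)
Your reduction step is sound, and arguably cleaner than the paper's: the identity $\rho_0(q)=3q\,\rho_0\!\left(\tfrac13\right)+(1-3q)\,\ket{0}\bra{0}\otimes\tfrac{\id}{2}$ checks out, mixing local models under a shared coin is legitimate, and your support observation ($\bra{11}\rho_0(q)\ket{11}=0$ while every non-pure Werner state has full support) is a correct and genuinely useful remark — it shows no exact convex decomposition of $\rho_0(q)$ can contain a Werner component, so the construction must be tailored to $\rho_0$, as you say. For comparison, the paper follows the dilution route of \cite{hirshgenuinenonloc}: it mixes Werner's $p=\tfrac12$ strategy with weight $2q$ against two product-state strategies for $\alice$ (measuring $P$ on $\ket{0}\bra{0}$ with weight $1-3q$ and on $\ket{-}\bra{-}$ with weight $q$) and a random bit for $\bob$; the $\tfrac13$ threshold there comes from nonnegativity of the weights $2q,1-3q,q$, not from any moment obstruction.

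The gap is that the entire content of the lemma is the explicit local model, and at the decisive point — the boundary case — you give a plan, not a construction: no response functions are exhibited and the ``extremal-moment bounds'' are never stated or proved. Worse, the simplest instantiation consistent with your description provably fails. Suppose $\bob$ keeps Werner's response, so that $\mathbb{E}[t\mid\hat\lambda]$ is linear in $\hat\lambda$; choosing output conventions so the correlation constraint reads $\mathbb{E}_{\hat\lambda}[\alpha\,\hat\lambda]=\tfrac13\hat a$, where $\alpha(\hat a,\hat\lambda)=\mathbb{E}[s\mid\hat\lambda]\in[-1,1]$ is $\alice$'s effective response, your marginal requirement adds $\mathbb{E}_{\hat\lambda}[\alpha]=\tfrac23 a_z$. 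Take $\hat a=\hat z$ and test against $1+2\lambda_z$, using that $\lambda_z$ is uniform on $[-1,1]$ for Haar-random $\hat\lambda$:
\[
\mathbb{E}_{\hat\lambda}\!\left[\alpha\,(1+2\lambda_z)\right]=\tfrac23+2\cdot\tfrac13=\tfrac43,
\qquad\text{yet}\qquad
\mathbb{E}_{\hat\lambda}\!\left[\alpha\,(1+2\lambda_z)\right]\le \mathbb{E}_{\hat\lambda}\,|1+2\lambda_z|=\tfrac54<\tfrac43,
\]
a contradiction. So ``modify $\alice$'s response to carry the $\hat z$-bias'' cannot work on its own, even at $q=\tfrac13$: any correct model must also alter $\bob$, e.g.\ have him output a fresh random bit on a $\hat z$-biased, $\hat\lambda$-dependent event (which leaves his marginal and correlation contributions intact while freeing $\alice$ to generate her bias on exactly those $\hat\lambda$'s) — the actual mechanism in \cite{hirshgenuinenonloc}. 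Your writeup does gesture at redesigning both response functions, but pins nothing down; and your one quantitative claim — that validity holds ``exactly up to $q=\tfrac13$'' — is false, since the model of \cite{hirshgenuinenonloc} covers all $q\le\tfrac12$. Until the boundary response functions are written down and the three constraints ($\alice$'s marginal $\tfrac23 a_z$, $\bob$'s vanishing marginal, correlation $-\tfrac13\,\hat a\cdot\hat b$) are verified against them, the proof is incomplete.
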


\begin{proof}
The proof follows a similar strategy to the one of Lemma~\ref{lem:wernermodel}. 
We write $\rho_0(q)$ as a mixture of the Werner state $\rho(\frac12)$ and a separable state. 
\begin{equation}
    \rho_0(q) = 2q \rho(q) + (1-3q)(\ket{0}\bra{0} \otimes \id/2) + q (\ket{-}\bra{-} \otimes \id/2) 
\end{equation}

Now $\bob'$ does the following: with probability $2q$, he acts as described in Lemma~\ref{lem:wernermodel} for the case of $q = \frac12$, and with probability $1-2q$ outputs a random bit. Alice acts as follows: with probability $2q$ she acts as in Lemma~\ref{lem:wernermodel}, with probability $1-3q$ measures $P$ on $\ket{0}\bra{0}$ and with probability $q$ measures $P$ on $\ket{-}\bra{-}$.

\end{proof}

\begin{lemma}\label{lem:isentangled}
For every $ q > 0$ if we set $\sigma_{A,B} = \ket{0}\bra{0}$ in the following definition
\begin{align*}
\rho^* 
&:= \frac14 \left[\rho_0(q) + \rho_A \otimes \sigma_B + \sigma_A \otimes \rho_B + \sigma_A \otimes \sigma_B \right] \\
&= \frac14 \left[q \ket{\psi_-}\bra{\psi_-} + (2-q)\ket{0}\bra{0} \otimes \frac{\mathbbm{1}}{2} + q \frac{\mathbbm{1}}{2} \otimes \ket{0}\bra{0} + (2-q)\ket{00}\bra{00} \right]
\end{align*}
where $\rho_{A,B} = \trace_{B,A}(\rho_0(q))$, then $\rho^*$ is entangled.
\end{lemma}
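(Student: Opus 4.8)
The plan is to invoke the Peres--Horodecki (positive partial transpose, PPT) criterion, which for states on $\Com^2\otimes\Com^2$ is \emph{both necessary and sufficient} for separability: $\rho^*$ is entangled if and only if its partial transpose fails to be positive semidefinite. So it suffices to exhibit a single negative eigenvalue of the partial transpose.

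First I would fix the explicit matrix of $\rho^*$ from the second displayed line. To justify that line, compute the reduced densities of $\rho_0(q)$: since $\trace_B(\ket{\psi_-}\bra{\psi_-}) = \id/2$ and $\trace_B(\ket{0}\bra{0}\otimes\id) = 2\ket{0}\bra{0}$, one gets $\rho_A = (1-\tfrac{q}{2})\ket{0}\bra{0} + \tfrac{q}{2}\ket{1}\bra{1}$ and $\rho_B = \id/2$; substituting $\sigma_A=\sigma_B=\ket{0}\bra{0}$ and collecting terms reproduces the second line. Writing $4\rho^*$ in the ordered computational basis $(\ket{00},\ket{01},\ket{10},\ket{11})$, the only nonzero entries are the diagonal $(3-q,\,1,\,q,\,0)$ together with the coherences equal to $-q/2$ between $\ket{01}$ and $\ket{10}$ that come from the singlet component $q\ket{\psi_-}\bra{\psi_-}$; the trace equals $4$ as required.

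Next I would take the partial transpose on the $B$ system, acting by $\ket{ij}\bra{kl}\mapsto\ket{il}\bra{kj}$. The diagonal entries are left untouched, while the two singlet coherences move out of the $\{\ket{01},\ket{10}\}$ sector into the $\{\ket{00},\ket{11}\}$ sector: the term $-\tfrac{q}{2}\ket{01}\bra{10}$ becomes $-\tfrac{q}{2}\ket{00}\bra{11}$, and similarly for its conjugate. Hence $4(\rho^*)^{T_B}$ restricted to $\mathrm{span}\{\ket{00},\ket{11}\}$ is the real symmetric block
\[
\begin{pmatrix} 3-q & -q/2 \\ -q/2 & 0 \end{pmatrix},
\]
whose determinant is $(3-q)\cdot 0 - (q/2)^2 = -q^2/4$.

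For every $q>0$ this determinant is strictly negative, so the $2\times 2$ block — and therefore $(\rho^*)^{T_B}$ itself — has a strictly negative eigenvalue. By the PPT criterion $\rho^*$ is entangled, which proves the lemma. There is no genuine obstacle beyond the bookkeeping of the partial transpose; the one point to get right is that the negativity is produced \emph{solely} by the off-diagonal coherence of the singlet $\ket{\psi_-}$, since all the remaining contributions (the $\ket{0}\bra{0}\otimes\id$ part of $\rho_0(q)$ together with $\rho_A\otimes\sigma_B$, $\sigma_A\otimes\rho_B$, and $\sigma_A\otimes\sigma_B$) add only nonnegative diagonal mass and hence cannot cancel the $-q/2$ cross term.
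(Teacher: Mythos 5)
Your proposal is correct, and I have checked the arithmetic: with $\sigma_A=\sigma_B=\ket{0}\bra{0}$ one indeed gets $\rho_A=(1-\tfrac q2)\ket{0}\bra{0}+\tfrac q2\ket{1}\bra{1}$ and $\rho_B=\mathbbm{1}/2$, the matrix of $4\rho^*$ has diagonal $(3-q,\,1,\,q,\,0)$ with the sole coherence $-\tfrac q2$ between $\ket{01}$ and $\ket{10}$, and after partial transposition this coherence lands in the $\mathrm{span}\{\ket{00},\ket{11}\}$ block
\[
\begin{pmatrix} 3-q & -q/2 \\ -q/2 & 0 \end{pmatrix},
\]
whose determinant $-q^2/4<0$ forces a negative eigenvalue of $(\rho^*)^{T_B}$ for every $q>0$ (the full matrix is block diagonal under this split, so the block's eigenvalues are eigenvalues of the whole operator). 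However, your route is genuinely different from the paper's. The paper does not compute anything: it cites \cite{hirshgenuinenonloc}, whose strategy is to exhibit a \emph{2-round} protocol certifying nonlocality of $\rho^*$, and entanglement then follows because 2-round nonlocality implies entanglement. Your PPT argument is more elementary, self-contained, and quantitative, and it only needs the easy Peres direction (separable $\Rightarrow$ PPT) rather than the full qubit-qubit Horodecki equivalence you invoke -- worth stating, since the necessity direction is what generalizes beyond $\Com^2\otimes\Com^2$. What the paper's indirect route buys is the operational content that matters for its narrative: $\rho^*$ is exactly the state shown (Hirsch's model, used in the proof of Theorem 2) to admit a 1-round local model for \emph{all} POVMs, so the citation simultaneously records that its entanglement is nevertheless detectable once a second round of communication is allowed -- information your spectral computation does not provide.
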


\begin{proof}
The proof is given in \cite{hirshgenuinenonloc}. Interestingly the proof strategy is to give a \textbf{2-round} protocol for certifying non-locality of $\rho^*$. The statement follows as 2-round non-locality implies entanglement.
\end{proof}

\begin{lemma}[\textbf{Hirsch's model}]\label{lem:hirshmodel}
For $q \leq \frac12$ define a state
\begin{equation}\label{eq:formofstate}
\rho^*(q,\sigma_A,\sigma_B) = \frac14 \left[\rho_0(q) + \rho_A \otimes \sigma_B + \sigma_A \otimes \rho_B + \sigma_A \otimes \sigma_B \right],
\end{equation}
where $\sigma_{A,B} \in \mathcal{S}(\Com^2)$ are arbitrary and $\rho_{A,B} = \trace_{B,A}(\rho_0(q))$. Then $\rho^*(q,\sigma_A,\sigma_B)$ is local for all POVMs. Moreover, it is local via Algorithm~\ref{alg:alicesim} and \ref{alg:bobsim}.
\end{lemma}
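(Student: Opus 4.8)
The plan is to exhibit an explicit shared-randomness strategy (the content of Algorithms~\ref{alg:alicesim} and~\ref{alg:bobsim}) that reproduces $\trace[(\mathcal{A}_a\otimes\mathcal{B}_b)\,\rho^*(q,\sigma_A,\sigma_B)]$ for \emph{arbitrary} local POVMs $\{\mathcal{A}_a\}$ and $\{\mathcal{B}_b\}$, and to reduce its analysis to the projective local models of Lemmas~\ref{lem:wernermodel} and~\ref{lem:wernermodelextended} through a Barrett-type POVM-to-projective reduction~\cite{barrettmodel,hirshgenuinenonloc}.

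First I would fine-grain, writing every POVM element as $\mathcal{A}_a=\eta_a P_a$ and $\mathcal{B}_b=\zeta_b Q_b$ with $\eta_a,\zeta_b\in(0,1]$ and $P_a,Q_b$ rank-one projectors (as in Section~\ref{sec:prelimquantum}); it suffices to reproduce the fine-grained statistics and forget the extra labels. The elementary observation driving the argument is that, by linearity of the trace, for such rank-one POVMs
\begin{equation}
\trace\big[(\eta_a P_a\otimes \zeta_b Q_b)\,\rho^*\big] = \eta_a\zeta_b\,\trace\big[(P_a\otimes Q_b)\,\rho^*\big],
\end{equation}
so the POVM correlations factor as $\eta_a\zeta_b$ times the \emph{projective} two-outcome correlations of $\rho^*$ evaluated with $\{P_a,\id-P_a\}$ and $\{Q_b,\id-Q_b\}$. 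This splits the task into (i) a local model for the projective measurements of $\rho^*$, and (ii) a local mechanism for producing the prefactors $\eta_a,\zeta_b$.

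For (i), I would use that $\rho^*$ is, by~\eqref{eq:formofstate}, a uniform convex mixture of $\rho_0(q)$ and the three product states $\rho_A\otimes\sigma_B$, $\sigma_A\otimes\rho_B$, $\sigma_A\otimes\sigma_B$. Using shared randomness to select the active component, the three product terms are reproduced by independent local Born-rule sampling, while the $\rho_0(q)$ term is handled at the projective level by the thresholded response functions of Lemma~\ref{lem:wernermodelextended} (themselves built on Werner's model, Lemma~\ref{lem:wernermodel}): $\alice$ and $\bob$ share a Haar-random qubit $\ket{\lambda}\sim\omega(\Com^2)$ and answer according to comparisons of the overlaps $\bra{\lambda}P_a\ket{\lambda}$ and $\bra{\lambda}(\id-P_a)\ket{\lambda}$. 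For (ii), I would fold the weights $\eta_a,\zeta_b$ into the sampling step in the Barrett manner: each party outputs a candidate label $a$ with probability proportional to $\eta_a$ times its overlap response, otherwise defaulting to the noise (product) branch. The product terms in $\rho^*$ are exactly what supply the leftover probability mass, and it is this bookkeeping that pushes the tolerable noise from the projective threshold $q\le\frac13$ up to the POVM threshold $q\le\frac12$. Casting these response functions as pseudocode yields Algorithms~\ref{alg:alicesim} and~\ref{alg:bobsim}.

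The main obstacle will be the exact verification that the integrated response functions reproduce the quantum statistics, i.e.\ that $\int d\lambda\,\pi_A(a\mid P_a,\lambda)\,\pi_B(b\mid Q_b,\lambda)$ equals $\eta_a\zeta_b\,\trace[(P_a\otimes Q_b)\rho^*]$ for all outcomes and all rank-one projectors, together with checking that these assignments are genuine (nonnegative, correctly normalized) probability distributions up to $q=\frac12$. This is the delicate Bloch-sphere integral underlying Barrett's construction; since both the state and the bound coincide with those of~\cite{hirshgenuinenonloc}, I would invoke their computation for the correctness of the $\rho_0(q)$ component and confine the new work to verifying that combining it modularly with the product terms in~\eqref{eq:formofstate} and with the fine-graining step preserves locality, and that the resulting strategy is \emph{precisely} Algorithms~\ref{alg:alicesim} and~\ref{alg:bobsim} — the concrete form needed later for the $\PP$ simulation.
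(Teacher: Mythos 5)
Your proposal matches the paper's proof essentially step for step: the paper likewise fine-grains to rank-one POVMs $\mathcal{A}_a=\eta_a P_a$, $\mathcal{B}_b=\xi_b Q_b$, has each party sample a candidate label with probability $\eta_a/2$ (resp.\ $\xi_b/2$), accept it via the overlap/threshold responses inherited from Lemmas~\ref{lem:wernermodel} and~\ref{lem:wernermodelextended}, and otherwise fall back to Born sampling from $\sigma_A$ (resp.\ $\sigma_B$), verifying correctness by the four-case computation (both accept, exactly one accepts, both reject) whose terms sum to $\frac{\eta_a \xi_b}{4}\left(\trace[(P_a\otimes Q_b)\rho_0]+\cdots\right)$ while deferring the underlying Bloch-sphere integral to the prior Werner/Hirsch computations, just as you propose. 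The only cosmetic difference is your opening framing of ``shared randomness selecting the active component'': in the paper (and in Algorithms~\ref{alg:alicesim} and~\ref{alg:bobsim}) the four components of $\rho^*$ are not chosen by a joint four-way coin but emerge from the parties' \emph{independent} accept/reject branches --- precisely the ``leftover probability mass'' bookkeeping you describe later --- so the two approaches coincide.
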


\begin{algorithm}[tb]
   \caption{$\textsc{AliceSim-Ideal}(q,\rho_{AB}^*,\mathcal{A},\ket{\lambda})$}
   \label{alg:alicesim}
\begin{algorithmic}[1]
    \STATE {\bfseries Input:} parameter $q$, description of $\rho_{AB}^* = \frac14 \left[\rho_0(q) + \rho_A \otimes \sigma_B + \sigma_A \otimes \rho_B + \sigma_A \otimes \sigma_B \right] \in \mathcal{S}(\Com^2)$, POVM $\mathcal{A} = \{ \mathcal{A}_a\}_a = \{ \eta_a P_a\}_a$, state $\ket{\lambda} \sim \omega(\Com^2)$.
    \STATE Sample $a$ according to $\eta_a/2$
    \STATE Set $c_1 = \mathbbm{1}_{\{\bra{\lambda} P_a \ket{\lambda} < \bra{\lambda} (\id - P_a) \ket{\lambda}\}}$
    \STATE Sample $c_2\sim \text{Ber}(\bra{0} P_a \ket{0})$
    \STATE Sample $c_3\sim\text{Ber}(\bra{-} P_a\ket{-})$
    \STATE Pick $c$ from $\{c_1,c_2,c_3\}$ with corresponding probabilities $2q,1-3q,q$
    \IF{$c = 1$}
    \STATE {\bfseries Return} $a$
    \ELSE 
    \STATE Sample $a'$ according to $\trace[\mathcal{A}_{a'} \sigma_A]$
        \STATE {\bfseries Return} $a'$
    \ENDIF
\end{algorithmic}
\end{algorithm}

\begin{algorithm}[tb]
   \caption{$\textsc{BobSim-Ideal}(q,\rho_{AB}^*,\mathcal{B},\ket{\lambda})$}
   \label{alg:bobsim}
\begin{algorithmic}[1]
    \STATE {\bfseries Input:} parameter $q$, description of $\rho_{AB}^* = \frac14 \left[\rho_0(q) + \rho_A \otimes \sigma_B + \sigma_A \otimes \rho_B + \sigma_A \otimes \sigma_B \right] \in \mathcal{S}(\Com^2)$, POVM $\mathcal{B} = \{ \mathcal{B}_b\}_b = \{ \xi_b Q_b\}_b$, state $\ket{\lambda} \sim \omega(\Com^2)$.
    \STATE Sample $c$ according to $\xi_b/2$
    \STATE Sample $c_1 \sim \text{Ber}(\bra{\lambda} Q_b \ket{\lambda})$
    \STATE Sample $c_2\sim \text{Ber}(\frac12)$
    \STATE Pick $c$ from $\{c_1,c_2\}$ with corresponding probabilities $2q$ and $1-2q$
    \IF{$c = 1$}
        \STATE {\bfseries Return} $b$
    \ELSE
     \STATE Sample $b'$ according to $\trace[\mathcal{B}_{b'} \sigma_B]$
    \STATE {\bfseries Return} $b'$
    \ENDIF
\end{algorithmic}
\end{algorithm}

\begin{proof}
Direct computation gives us that
\begin{align}
&\Prob[a,b \ | \ \mathcal{A}, \mathcal{B}] = \nonumber \\
&\frac{\eta_a \xi_b}{4} \left( \trace[(P_a \otimes Q_b) \rho_0] + \trace[P_a \sigma_A] \trace[Q_b \sigma_B] + \trace[P_a \rho_A] \trace[Q_b \sigma_B] 
+ \trace[P_a \sigma_A] \trace[Q_b \rho_b]\right). \label{eq:desiredstatistics}
\end{align}
We consider cases depending on whether $\alice'$ returned in (i) line 8 or (ii) line 11 of Algorithm~\ref{alg:alicesim} and whether $\bob'$ returned in (i) line 7 or (ii) line 10 of Algorithm~\ref{alg:bobsim}. 

If both return in (i) then, by Lemma~\ref{lem:wernermodelextended}, the simulation recovers the statistics of $\rho_0$, the probability is equal to $\frac{\eta_a \xi_b}{4} \trace[(P_a \otimes Q_b) \rho_0]$, which is equal to the first term of \eqref{eq:desiredstatistics}. 
If $\alice'$ return in (i) and $\bob'$ returns in (ii) then the probability is $\frac14 \trace[\mathcal{A}_a \rho_A] \trace[\mathcal{B}_b \sigma_B]$.
If $\alice'$ return in (ii) and $\bob'$ returns in (i) then the probability is $\frac14 \trace[\mathcal{A}_a \sigma_A] \trace[\mathcal{B}_b \rho_B]$.
The last case is when both return in (ii) and the probability is then 
$\frac14 \trace[\mathcal{A}_a \sigma_A] \trace[B_b \sigma_B]$. Summing all the terms we arrive at \eqref{eq:desiredstatistics}.
\end{proof}


Next, we give some useful definitions that describe a distribution of approximately Haar random 1-qubit states that are representable by polynomially many random bits.
The corresponding distribution on random bits will be the shared public randomness used by $\alice',\bob'$ in the simulation.

\begin{definition}[Approximate states and approximate Haar distirbution]\label{def:apxHaar}
For $\ell \in \N$ and $\ket{\lambda} \in \Omega(\Com^2)$; expressed as $\ket{\lambda} = (\alpha_1 + i \alpha_2) \ket{0} + (\alpha_3 + i \alpha_4) \ket{1}$, where $\alpha_1,\alpha_2,\alpha_3,\alpha_4 \in [-1,1]$; we define $\ket{\hat{\lambda}_\ell} \in \{0,1\}^{\poly(\ell)}$ as a 1-qubit (not necessarily normalized) state represented by $(\hat{\alpha_1},\hat{\alpha_2},\hat{\alpha_3},\hat{\alpha_4}) \in \{0,1\}^{\poly(\ell)} $, 
where $\hat{\alpha_1}$ is interpreted as an element of $[-1,1]$ and is any canonical $\hat{\alpha_1}$ 
that satisifes $|\hat{\alpha_1} - \alpha_1| \leq 2^{-100 \ell}$ (the same properties hold for $\hat{\alpha_2},\hat{\alpha_3},\hat{\alpha_4}$).

Moreover we define $\hat{\omega}_\ell(\Com^2)$ to be a distribution on $\{0,1\}^{\poly(\ell)}$ defined by the process: sample $\ket{\lambda} \sim \omega(\Com^2)$, and return $\ket{\hat{\lambda}_\ell}$.
\end{definition}

The following is a direct consequence of Definition~\ref{def:apxHaar}.

\begin{lemma}\label{lem:dotprodapxcorrect}
For every $\ell \in \N, \ket{\psi},\ket{\lambda} \in \Omega(\Com^2)$ we have $\left|\left|\braket{\psi}{\lambda}\right|^2 - \left|\braket{\psi}{\lan}\right|^2\right| \leq 2^{-20 \ell}$.
\end{lemma}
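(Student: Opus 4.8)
The plan is to treat $|\braket{\psi}{\lambda}|^2$ as an explicit real quadratic polynomial in the four real coordinates of $\ket{\lambda}$, and to bound how much its value can change when each coordinate is perturbed by at most $2^{-100\ell}$, which is exactly the guarantee of Definition~\ref{def:apxHaar}. Since this squared overlap is a polynomial of degree two whose coefficients are products of the (bounded) components of $\ket{\psi}$, it is Lipschitz on any bounded region with an absolute constant, so a per-coordinate perturbation of size $2^{-100\ell}$ moves its value by only $O(2^{-100\ell})$, which is far below the target accuracy $2^{-20\ell}$.

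Concretely, I would first write $\ket{\psi} = (a + bi)\ket{0} + (c + di)\ket{1}$ with $a,b,c,d \in [-1,1]$ (by normalization of $\ket{\psi}$), and $\ket{\lambda} = (\alpha_1 + i\alpha_2)\ket{0} + (\alpha_3 + i\alpha_4)\ket{1}$ as in Definition~\ref{def:apxHaar}, so that $\ket{\hat\lambda_\ell}$ has coordinates $\hat\alpha_1,\dots,\hat\alpha_4$ with $|\hat\alpha_j - \alpha_j| \le 2^{-100\ell}$. A direct expansion of the squared overlap in these coordinates gives
\[
g(\alpha_1,\alpha_2,\alpha_3,\alpha_4) := (a\alpha_1 + b\alpha_2 + c\alpha_3 + d\alpha_4)^2 + (a\alpha_2 - b\alpha_1 + c\alpha_4 - d\alpha_3)^2 = |\braket{\psi}{\lambda}|^2,
\]
a quadratic polynomial in $(\alpha_1,\alpha_2,\alpha_3,\alpha_4)$; evaluating the very same polynomial at $(\hat\alpha_1,\hat\alpha_2,\hat\alpha_3,\hat\alpha_4)$ yields $|\braket{\psi}{\hat\lambda_\ell}|^2$, since $\ket{\hat\lambda_\ell}$ is just the vector with those coordinates (it need not be normalized, which the polynomial viewpoint handles automatically).

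Next I would bound the gradient of $g$ along the segment joining the coordinate vectors of $\ket{\lambda}$ and $\ket{\hat\lambda_\ell}$. Every point on this segment has coordinates of absolute value at most $1 + 2^{-100\ell} \le 2$, so each of the two linear forms appearing in $g$ is bounded in magnitude by $2(|a| + |b| + |c| + |d|) \le 8$; differentiating and using $|a|,|b|,|c|,|d|\le 1$, each partial derivative $\partial g/\partial\alpha_j$ is bounded in absolute value by an absolute constant $C$ (one checks $C = 32$ suffices, e.g. $|\partial g/\partial\alpha_1| \le 16(|a|+|b|) \le 32$). By the mean value inequality applied along the segment,
\[
\left| |\braket{\psi}{\lambda}|^2 - |\braket{\psi}{\hat\lambda_\ell}|^2 \right| \le \sum_{j=1}^{4} \Big(\sup \big|\tfrac{\partial g}{\partial \alpha_j}\big|\Big)\, |\alpha_j - \hat\alpha_j| \le 4 C\, 2^{-100\ell} = 2^{\,7 - 100\ell} \le 2^{-20\ell}
\]
for every $\ell \ge 1$, since $7 - 100\ell \le -20\ell$ whenever $80\ell \ge 7$. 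This proves the claim.

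I expect no genuine obstacle here; the statement is essentially a routine continuity estimate, and the only points deserving minor care are that the Lipschitz constant must be taken uniformly over the bounded segment (rather than at a single point) and that $\ket{\hat\lambda_\ell}$ is not assumed normalized. I would also flag the harmless index clash between the notation $\lan = \hat\lambda_n$ used in the statement and the object $\hat\lambda_\ell$ from Definition~\ref{def:apxHaar}, which denote the same approximate state.
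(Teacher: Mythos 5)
Your proof is correct, and it takes essentially the approach the paper intends: the paper gives no proof at all, stating only that the lemma is ``a direct consequence of Definition~\ref{def:apxHaar},'' and your mean-value estimate on the degree-two polynomial in the four real coordinates (with the per-coordinate perturbation bound $2^{-100\ell}$, gradient bound $C=32$ on the segment, and final count $4C\cdot 2^{-100\ell}=2^{7-100\ell}\leq 2^{-20\ell}$ for all $\ell\geq 1$) is exactly the routine continuity argument being left implicit. Your two flagged points are also accurate: the polynomial viewpoint correctly handles the fact that $\ket{\hat{\lambda}_\ell}$ is not normalized, and $\lan$ in the statement and $\ket{\hat{\lambda}_\ell}$ from Definition~\ref{def:apxHaar} denote the same object despite the index clash.
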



\begin{lemma}\label{lem:efficientsim}
For every uniform family $\{C_\ell\}_\ell$ of polynomial size $\BQP$ circuits acting on $s(\ell)$ qubits there exists a uniform family of polynomial size $\PostBQP$ circuits (Algorithm~\ref{alg:dotproduct}), that for every $\ell \in \N, a \in \{0,1\}^{s(\ell)}$, every $\ket{\hat{\lambda}_{s(\ell)}} \in \text{supp} \left(\hat{\omega}_{s(\ell)}(\Com^2) \right)$ computes $v \in [0,1]$ that with probability $1 - 2^{-10s(\ell)}$ satisifies
$$
\left| v - \left|\braket{\psi_a}{\hat{\lambda}_{s(\ell)}}\right|^2 \right| \leq 2^{-10s(\ell)},
$$
where $\ket{\psi_a} \in \mathcal{S}(\Com^2)$ is defined as a state resulting from running $C_\ell$ backwards on $a$ and postselecting on all qubits but the first being $0$. 
\end{lemma}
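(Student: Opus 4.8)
The plan is to realize Algorithm~\ref{alg:dotproduct} as the composition of two pieces: a $\PostBQP$ subroutine that prepares copies of $\ket{\psi_a}$, and the dot-product estimator $\Cdot_m$ of Lemma~\ref{lem:dotproductNEW}. First I would observe that $\ket{\psi_a}$ is exactly the normalized eigenvector of $P_a$ associated with $C_\ell$ in the sense of Definition~\ref{def:correspondingPOVM} and Lemma~\ref{lem:circuittoPOVM}: if $U$ is the unitary of $C_\ell$, then $\ket{\phi_a}\ket{0}^{s(\ell)-1} = (\id\otimes\ket{0}^{s(\ell)-1}\bra{0}^{s(\ell)-1})U^\dagger\ket{a}$ and $\ket{\psi_a} = \ket{\phi_a}/\braket{\phi_a}{\phi_a}^{1/2}$. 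A single copy of $\ket{\psi_a}$ is prepared by running $C_\ell$ backwards (applying $U^\dagger$, a polynomial-size circuit obtained by reversing $C_\ell$ and inverting each gate) on $\ket{a}$ and postselecting the last $s(\ell)-1$ qubits on $\ket{0}$; since postselection is free in $\PostBQP$, repeating this yields the $\poly(\ell)$ copies required by Lemma~\ref{lem:dotproductNEW}. The postselection succeeds with probability $\braket{\phi_a}{\phi_a}=\eta_a$, which is nonzero exactly when $\ket{\psi_a}$ is well defined (the only relevant case, since outcomes with $\eta_a=0$ are never sampled in Algorithms~\ref{alg:alicesim}--\ref{alg:bobsim}).

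Given these copies together with the bit-string description of $\ket{\hat\lambda_{s(\ell)}}$ (Definition~\ref{def:apxHaar}), I would feed them into $\Cdot_m$ for a parameter $m=\poly(\ell)$ chosen below, taking $\ket{\psi}=\ket{\psi_a}$ and $\ket{\psi'}=\ket{\hat\lambda_{s(\ell)}}$. To invoke Lemma~\ref{lem:dotproductNEW} I must check its preconditions on $\ket{\psi_a}=\gamma_1\ket{0}+\gamma_2\ket{1}$. The amplitudes $\gamma_1,\gamma_2$ are \emph{real}: by the modeling conventions we may assume $C_\ell$ is built from the real gate set $\{\text{Toffoli},\text{Hadamard}\}$, so $U$, $U^\dagger$, and hence $\ket{\phi_a}$ have real entries. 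The role of $\ket{\psi'}$ is played by $\ket{\hat\lambda_{s(\ell)}}$, which may be complex and need only be a $\poly$-bit string, so it is unconstrained.

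The main obstacle is the amplitude-gap condition $|\gamma_1|,|\gamma_2|\in\{0\}\cup(2^{-m},2^m)$, since a priori an amplitude could be an arbitrarily tiny nonzero real and violate the lower bound. I would resolve this using the arithmetic of the $\{\text{Toffoli},\text{Hadamard}\}$ gate set: starting from $\ket{a}$ and applying $t(\ell)$ such gates, of which $h\le t(\ell)$ are Hadamards, every amplitude of $U^\dagger\ket{a}$ is an integer multiple of $2^{-h/2}$ (Toffoli permutes amplitudes, and each Hadamard refines the common denominator by one factor of $\sqrt2$). In particular $\langle 0|\phi_a\rangle,\langle 1|\phi_a\rangle\in 2^{-h/2}\Z$, while $\braket{\phi_a}{\phi_a}=\langle 0|\phi_a\rangle^2+\langle 1|\phi_a\rangle^2\le 1$ as the squared norm of a projection of a unit vector. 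Hence if $\gamma_1\neq0$ then $|\gamma_1|^2=\langle 0|\phi_a\rangle^2/\braket{\phi_a}{\phi_a}\ge 2^{-h}\ge 2^{-t(\ell)}$, so $|\gamma_1|\ge 2^{-t(\ell)/2}$, and symmetrically for $\gamma_2$; the upper bound $|\gamma_i|\le1<2^m$ is automatic from normalization. Choosing $m$ to be a polynomial with $m>t(\ell)$ therefore guarantees the gap condition.

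Finally I would fix $m=\poly(\ell)$ large enough that, in addition to $m>t(\ell)$, the error and failure bounds $2^{-\Omega(m)}$ of Lemma~\ref{lem:dotproductNEW} are both at most $2^{-10 s(\ell)}$ (possible since $s(\ell),t(\ell)$ are polynomials). Then $\Cdot_m$ returns $v\in[0,1]$ with $\big|\,v-|\braket{\psi_a}{\hat\lambda_{s(\ell)}}|^2\,\big|\le 2^{-10 s(\ell)}$; a union bound over the polynomially many postselected state preparations and the estimation step keeps the total failure probability below $2^{-10 s(\ell)}$, and the whole procedure is a uniform polynomial-size $\PostBQP$ circuit, namely Algorithm~\ref{alg:dotproduct}. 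This completes the argument.
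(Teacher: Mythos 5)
Your proposal is correct and takes essentially the same route as the paper's proof: prepare $\poly$ copies of $\ket{\psi_a}$ by running $C_\ell$ backwards on $\ket{a}$ and postselecting the ancilla qubits, verify the amplitude-gap precondition of Lemma~\ref{lem:dotproductNEW} via the $\{\text{Toffoli},\text{Hadamard}\}$ gate set, and finish with a union bound. Your dyadic-rational argument (every amplitude of $U^\dagger\ket{a}$ is an integer multiple of $2^{-h/2}$, so nonzero amplitudes are at least $2^{-t(\ell)/2}$) is in fact a cleaner justification of the bound \eqref{eq:alphabetagoodrange} than the paper's brief remark, and your explicit handling of the real-amplitude precondition and the degenerate $\eta_a=0$ case are harmless refinements of the same argument.
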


\begin{proof}
Let $\ell \in \N, a \in \{0,1\}^{s(\ell)}$. According to \eqref{eq:phix} and \eqref{eq:relations} the eigenvector of $P_a$ is defined by the amplitudes in front of $\ket{0}\ket{0}^{s(\ell)-1}$ and $\ket{1}\ket{0}^{s(\ell)-1}$ after $C_\ell$ is run backwards on $\ket{a}$, which means that 
\begin{equation}\label{eq:rightstatecomputed}
\ket{\psi_a}\bra{\psi_a} = P_a.
\end{equation}

Note that $\ket{\psi_a}$ can be computed in $\PostBQP$. To do that run $C_\ell$ backwards on $\ket{a}$. Now we want to postselect on all qubits but the first being $\ket{0}$. We apply a negation gate on all these qubits, then compute an AND and write the result to a fresh ancilla. Ultimately, we postselect on this ancilla qubit being $\ket{1}$. Repeating this procedure $\poly(s(\ell))$ times we can collect $\poly(s(\ell))$ copies of $\ket{\psi_a}$.

Let $\alpha_a \ket{0} + \beta_a \ket{1} = \ket{\psi_a}$. The entries of matrices $H$ and $\text{Toffoli}$ have values from the set $\left\{1,\frac{1}{\sqrt{2}},\frac{-1}{\sqrt{2}}\right\}$, which implies that
\begin{equation}\label{eq:alphabetagoodrange}
|\alpha_a|,|\beta_a| \in \{0\} \cup \left(2^{-\poly(\ell)},2^{\poly(\ell)} \right),
\end{equation} 
as the amplitudes of $(U^\alice)^\dagger \ket{a}$ ($U^\alice$ is the unitary corresponding to $C^\alice$) can be expressed as a sum of at most $\poly(\ell)$ products of at most $\poly(\ell)$ values from $\left\{1,\frac{1}{\sqrt{2}},\frac{-1}{\sqrt{2}}\right\}$.

Lemma~\ref{lem:dotproductNEW} and \ref{lem:dotprodapxcorrect} together with \eqref{eq:alphabetagoodrange}, 
and the application of the union-bound guarantee that
with probability $1 - 2^{-10s}$
\begin{align*}
\left| v - \left|\braket{\psi_a}{\hat{\lambda}_{s(\ell)}}\right|^2 \right| \leq 2^{-10s(\ell)}.
\end{align*}
\end{proof}

\begin{lemma}\label{lem:statisticsareclose}
Let $q \leq 1/2, \sigma_A,\sigma_B \in \mathcal{S}(\Com^2)$ and $\rho^*_{AB} = \rho^*(q,\sigma_A,\sigma_B)$  (as defined in \eqref{eq:formofstate}) and $C^\alice,C^\bob$ be $\BQP$ circuits acting on $s$ qubits, and $\{\mathcal{A}_a\}_a, \{\mathcal{B}_y\}_y$ be their corresponding POVMs. For every $a,b\in \{0,1\}^{s}$ define probabilities
\begin{align*}
p_1 = \Prob_{\ket{\hat{\lambda}_{l}} \sim \omega_{s}(\Com^2)}
&\Big[\textsc{AliceSim} \left(q,\rho_{AB}^*,C^\alice,\ket{\hat{\lambda}_{s}} \right) = a \text{ and } \\
&\textsc{BobSim} \left(q,\rho_{AB}^*,C^\bob,\ket{\hat{\lambda}_{s}} \right) = b \Big],
\end{align*}
and
\begin{align*}
p_2 = \Prob_{\ket{\lambda} \sim \omega(\Com^2)}
&\Big[\textsc{AliceSim-Ideal} \left(q,\rho_{AB}^*,\mathcal{A},\ket{\lambda} \right) = a \text{ and } \\
&\textsc{BobSim-Ideal} \left(q,\rho_{AB}^*,\mathcal{B},\ket{\lambda} \right) = b \Big].
\end{align*}
Then, if for every $a \in \{0,1\}^{s}$, every $\ket{\hat{\lambda}_{s}} \in \text{supp} \left(\hat{\omega}_{s}(\Com^2) \right)$, $\textsc{DotProductWithEigenvector}$ computes $v \in [0,1]$ that with probability $1 - 2^{-10s}$ satisifies
$
\left| v - \left|\braket{\psi_a}{\hat{\lambda}_{s}}\right|^2 \right| \leq 100 \cdot 2^{-10s},
$ then
\begin{equation}\label{eq:diffinstatistics}
|p_1 - p_2| \leq 200 \cdot 2^{-8 s}.
\end{equation}

\end{lemma}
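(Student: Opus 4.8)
The plan is to exhibit a \emph{coupling} between the real simulation (\textsc{AliceSim}/\textsc{BobSim}, run on the discretized sample $\ket{\hat\lambda_s}$ with \emph{estimated} inner products) and the ideal simulation (\textsc{AliceSim-Ideal}/\textsc{BobSim-Ideal} of Algorithms~\ref{alg:alicesim} and~\ref{alg:bobsim}, run on an exact Haar sample $\ket{\lambda}$ with \emph{exact} inner products), and then bound $|p_1-p_2|$ by the probability that the two coupled executions produce different outputs. Concretely I would draw a single $\ket{\lambda}\sim\omega(\Com^2)$, feed $\ket{\lambda}$ to the ideal side and its discretization $\ket{\hat\lambda_s}$ to the real side, and share all remaining internal random choices (the initial sample, the Bernoulli coins for $c_2,c_3$, the index selecting $c$ among the $c_i$, and the final resample $a'$/$b'$). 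The first observation is that every \emph{sampling} step can be made exact on the real side and so contributes nothing: sampling $a$ with probability $\eta_a/2$ is just running $C^{\alice}$ on the maximally mixed input $\id/2$ (indeed $\trace[\mathcal{A}_a\,\id/2]=\eta_a/2$ since $\sum_a\eta_a=2$), and sampling $a'$ with probability $\trace[\mathcal{A}_{a'}\sigma_A]$ is running $C^{\alice}$ on $\sigma_A$; likewise for $\bob$. Hence under this coupling the outputs can differ only if the selection variable $c$ differs, which forces the \emph{selected} $c_i$ to take a different value in the two executions.

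I would then bound, term by term, the disagreement probability of the selected $c_i$. The easy terms are the genuine Bernoulli draws — $c_2=\mathrm{Ber}(\bra{0}P_a\ket{0})$, $c_3=\mathrm{Ber}(\bra{-}P_a\ket{-})$ for $\alice$, and $c_1=\mathrm{Ber}(\bra{\lambda}Q_b\ket{\lambda})$ for $\bob$ — where, with a shared uniform coin and the inverse-CDF coupling, real and ideal disagree only if the coin lands between the true parameter and its estimate. Each parameter is an inner product of the form $|\braket{\psi_a}{\cdot}|^2$ with a fixed or discretized state (the states $\ket0,\ket-$ admit exact finite descriptions, so the same estimator applies), hence by the hypothesis on \textsc{DotProductWithEigenvector} combined with Lemma~\ref{lem:dotprodapxcorrect} it is computed to within $100\cdot 2^{-10s}+2^{-20s}\le 101\cdot 2^{-10s}$, outside an estimation-failure event of probability $\le 2^{-10s}$. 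So each of these terms contributes at most $O(2^{-10s})$.

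The delicate term, and the crux, is $\alice$'s \emph{threshold} indicator $c_1=\mathbbm{1}_{\{\bra{\lambda}P_a\ket{\lambda}<1/2\}}$ (using $\bra{\lambda}P_a\ket{\lambda}+\bra{\lambda}(\id-P_a)\ket{\lambda}=1$). Being a discontinuous function of the estimated value, it can flip under a tiny estimation error, but only when the \emph{true} value $\bra{\lambda}P_a\ket{\lambda}=|\braket{\psi_a}{\lambda}|^2$ lies within the error band around $1/2$. The key fact I would invoke is the exact anticoncentration of this quantity under the single-qubit Haar measure: for fixed $\ket{\psi_a}$ and $\ket{\lambda}\sim\omega(\Com^2)$, the variable $|\braket{\psi_a}{\lambda}|^2$ is \emph{uniform} on $[0,1]$ (parametrizing $\ket{\lambda}$ on the Bloch sphere, $|\braket{\psi_a}{\lambda}|^2=(1+\cos\theta)/2$ with $\cos\theta$ uniform on $[-1,1]$). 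Thus the probability that $|\braket{\psi_a}{\lambda}|^2$ lies within $\epsilon:=101\cdot 2^{-10s}$ of $1/2$ is at most $2\epsilon\le 202\cdot 2^{-10s}$, which bounds the flip probability of $c_1$.

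Finally I would assemble the budget. Union-bounding over the $O(1)$ inner-product estimations gives a failure contribution of $O(2^{-10s})$; conditioned on no failure, $\Prob[c_A\text{ differs}]\le 202\cdot 2^{-10s}$ and $\Prob[c_B\text{ differs}]\le 101\cdot 2^{-10s}$ by the above (the shared $\ket{\lambda}$ correlates these, but the union bound is insensitive to that). Since the joint output differs only if $\alice$'s or $\bob$'s output differs, $|p_1-p_2|\le\Prob[\text{outputs disagree}]=O(2^{-10s})$, comfortably below $200\cdot 2^{-8s}$ for all $s\ge 1$. I expect the main obstacle to be precisely the discontinuity of $c_1$: the rest is a routine coupling, but controlling the indicator requires the exact uniformity of $|\braket{\psi_a}{\lambda}|^2$ to rule out an adversarial concentration of mass at the decision boundary $1/2$.
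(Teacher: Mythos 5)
Your proposal is correct and takes essentially the same route as the paper's proof: couple the two executions (the samplings of $a$ and $a'$ are exact on both sides, so only the selected $c_i$ can disagree), bound the Bernoulli flips by the estimation accuracy of \textsc{DotProductWithEigenvector} together with Lemma~\ref{lem:dotprodapxcorrect}, control the threshold indicator $c_1$ via anticoncentration of $\bra{\lambda}P_a\ket{\lambda}$ around $1/2$ (the paper's \eqref{eq:distanceishigh}), and union-bound all failure events. If anything, your write-up is slightly more complete, since you actually justify the anticoncentration step — via the exact uniformity of $\left|\braket{\psi_a}{\lambda}\right|^2$ on $[0,1]$ for single-qubit Haar states — where the paper asserts \eqref{eq:distanceishigh} without proof.
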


\begin{proof}
Fix $q \leq 1/2, \sigma_A,\sigma_B \in \mathcal{S}(\Com^2)$ and circuits $C^\alice$ and $C^\bob$ and set $\rho^*_{AB} = \rho^*_{AB}(q,\sigma_A,\sigma_B)$.

Let's analyze $\textsc{AliceSim} \left(q,\rho_{AB}^*,C^\alice,\ket{\hat{\lambda}_{s}} \right)$ first. Notice that $a$ is distributed according to running $C^\alice$ on a qubit with a density matrix $\rho = \frac{\id}{2}$, which by definition is distributed as $\eta_a/2$. Secondly, by definition $a'$ is distributed according to $\trace \left[\mathcal{A}_{a'} \ket{\sigma_A}\bra{\sigma_A} \right]$. 

\begin{algorithm}[tb]
   \caption{$\textsc{DotProductWithEigenvector} \left(a,C,\ket{\hat{\lambda}_{s}}\right)$}
   \label{alg:dotproduct}
\begin{algorithmic}[1]
    \STATE {\bfseries Input:} $a \in \{0,1\}^s$, circuit $C$ acting on $s$ qubits, $\ket{\hat{\lambda}_{s}} \in \text{supp}\left(\hat{\omega}_s(\Com^2) \right)$ given as a bitstring.
    \STATE Obtain $\poly(\ell)$ copies of a state $\ket{\psi_a} \in \mathcal{S}(\Com^2)$ by running $C$ backwards on $a$ and postselecting on all qubits but the first being $0$ $\hfill \triangleright$ \ We'll show $\ket{\psi_a}\bra{\psi_a} = P_a$
    \STATE Compute $v \approx \left|\braket{\psi_a}{\hat{\lambda}_{s}}\right|^2$ by running the algorithm from Lemma~\ref{lem:dotproductNEW} on $\poly(s)$ copies of $\ket{\psi_a}$ and $\ket{\hat{\lambda}_{s}}$
    \STATE {\bfseries Return} $v$
\end{algorithmic}
\end{algorithm}

\begin{algorithm}[tb]
   \caption{$\textsc{AliceSim} \left(q,\rho_{AB}^*,C,\ket{\hat{\lambda}_{s}}\right)$}
   \label{alg:aliceactualsim}
\begin{algorithmic}[1]
    \STATE {\bfseries Input:} parameter $q$, description of $\rho_{AB}^* = \frac14 \left[\rho_0(q) + \rho_A \otimes \sigma_B + \sigma_A \otimes \rho_B + \sigma_A \otimes \sigma_B \right] \in \mathcal{S}(\Com^2)$, circuit $C$ acting on $s$ qubits with a corresponding POVM $\{ \mathcal{A}_a\}_a = \{ \eta_a P_a\}_a$, $\ket{\hat{\lambda}_{s}} \sim \hat{\omega}_{s}(\Com^2)$ given as a bitstring.
    \STATE Sample a uniformly random bit $c_0 \sim U(\{0,1\})$
    \STATE Obtain $a$ by running $C$ on $\ket{c_0} \ket{0}^{s-1}$ and measuring all the qubits in the $Z$ basis. $\hfill \triangleright$ \ We'll show $a \sim \eta_a/2$
    \STATE $v_1 = \textsc{DotProductWithEigenvector}\left(a,C,\ket{\hat{\lambda}_{s}}\right)$
    \STATE $v_2 = \textsc{DotProductWithEigenvector}\left(a,C,\ket{0}\right)$
    \STATE $v_3 = \textsc{DotProductWithEigenvector}\left(a,C,\ket{-}\right)$
    \STATE Set $c_1 = \mathbbm{1}_{\left\{v_1 < \frac12 \right\}}$ 
    \STATE Sample $c_2\sim \text{Ber}(v_2)$ 
    \STATE Sample $c_3\sim\text{Ber}(v_3)$
    \STATE Pick $c$ from $\{c_1,c_2,c_3\}$ with corresponding probabilities $2q,1-3q,q$
    \IF{$c = 1$}
    \STATE {\bfseries Return} $a$
    \ELSE 
    \STATE Obtain $a'$ by running $C$ on $\ket{\sigma_A} \ket{0}^{s-1}$ and measuring all the qubits in the $Z$ basis.
        \STATE {\bfseries Return} $a'$
    \ENDIF
\end{algorithmic}
\end{algorithm}

\begin{algorithm}[tb]
   \caption{$\textsc{BobSim} \left(q,\rho_{AB}^*,C,\ket{\hat{\lambda}_{s}}\right)$}
   \label{alg:bobactualsim}
\begin{algorithmic}[1]
    \STATE {\bfseries Input:} parameter $q$, description of $\rho_{AB}^* = \frac14 \left[\rho_0(q) + \rho_A \otimes \sigma_B + \sigma_A \otimes \rho_B + \sigma_A \otimes \sigma_B \right] \in \mathcal{S}(\Com^2)$, circuit $C$ acting on $s$ qubits with a corresponding POVM $\{ \mathcal{B}_b\}_b = \{ \xi_b Q_b\}_b$, $\ket{\hat{\lambda}_{s}} \sim \hat{\omega}_{s}(\Com^2)$ given as a bitstring.
    \STATE Sample a uniformly random bit $c_0 \sim U(\{0,1\})$
    \STATE Obtain $b$ by running $C$ on $\ket{c_0} \ket{0}^{s-1}$ and measuring all the qubits in the $Z$ basis. $\hfill \triangleright$ \ We'll show $b \sim \xi_b/2$
    \STATE $v = \textsc{DotProductWithEigenvector}\left(b,C,\ket{\hat{\lambda}_{s}}\right)$
    \STATE Set $c_1 \sim \text{Ber}(v)$ 
    \STATE Sample $c_2\sim\text{Ber} \left(\frac12 \right)$
    \STATE Pick $c$ from $\{c_1,c_2\}$ with corresponding probabilities $2q,1-2q$
    \IF{$c = 1$}
    \STATE {\bfseries Return} $b$
    \ELSE 
    \STATE Obtain $b'$ by running $C$ on $\ket{\sigma_B} \ket{0}^{s-1}$ and measuring all the qubits in the $Z$ basis.
        \STATE {\bfseries Return} $b'$
    \ENDIF
\end{algorithmic}
\end{algorithm}

Moreover, for every $a \in \{0,1\}^{s}$ we have
\begin{equation}\label{eq:distanceishigh}
\Prob_{\ket{\lambda} \sim \omega(\Com^2)}\left[|\bra{\lambda} P_a \ket{\lambda} - \bra{\lambda} (\id - P_a) \ket{\lambda}| < 2^{-10s}\right] \leq 2^{-9s}.
\end{equation}
By assumption about $\textsc{DotProductWithEigenvector}$, \eqref{eq:distanceishigh} and the union bound we have that with probability $1-2 \cdot 2^{-9s}$ over $\ket{\lambda} \sim \omega(\Com^2)$ the value $b_1$ computed by $\textsc{AliceSim}$ is equal to $\textsc{AliceSim-Ideal}$. By a similar argument with probability $1 - 2\cdot 2^{-9s}$ the values $b_2,b_3$ computed by $\textsc{AliceSim}$ are $100 \cdot 2^{-10s}$ close to those computed by $\textsc{AliceSim-Ideal}$. An analogous argument holds for $\textsc{BobSim}$.




By the union bound over all the failure events of $\textsc{AliceSim}$ and $\textsc{BobSim}$ we obtain
$$
\big| p_1 - p_2 \big| \leq 200 \cdot 2^{-8 s}.  
$$

\end{proof}

\begin{lemma}\label{lem:howtouseBQP=PostBQP}
Let $\{C_\ell\}_\ell$ be a uniform family of polynomial size $\PostBQP$ circuits with the following property. There exists $c \in \Real_+$ such that for every $\ell$ there exists a function $f_\ell : \{0,1\}^\ell \rightarrow \Real_+$ such that, for every $a \in \{0,1\}^\ell$, $C_\ell$ run on $a$ returns $v \in [0,1)$ that with probability $1 - 2^{-c \cdot \ell}$ satisfies
\begin{equation}\label{eq:outputclosetof}
| v - f_\ell(a)| \leq 2^{-c \cdot \ell},
\end{equation}
where $v$ is represented as an element of $\{0,1\}^\ell$ and is obtained by measuring $\ell$ designated qubits of $C_\ell$ in the $Z$ basis. 

Then, if $\ \BQP = \PostBQP$, then there exists a uniform family of polynomial size $\BQP$ circuits $\{C'_\ell\}_\ell$ such that for every $\ell$, every $a \in \{0,1\}^\ell$, $C'_\ell$ run on $a$ returns $v \in [0,1)$ that with probability $1 - 2^{-c \cdot \ell}$ satisfies
\begin{align*}
| v - f_\ell(a)| \leq 10 \cdot 2^{-c \cdot \ell},
\end{align*}
where $v$ is obtained in the same way.
\end{lemma}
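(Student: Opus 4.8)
The plan is to carry out a \emph{sampling-to-decision} reduction: the $\PostBQP$ circuit $C_\ell$ directly produces a good approximation to the real number $f_\ell(a)$, but the hypothesis $\BQP=\PostBQP$ only lets us move \emph{decision} problems from $\PostBQP$ to $\BQP$. So I would reconstruct $f_\ell(a)$ through $O(\ell)$ threshold comparisons, each phrased as a genuine (total) decision problem, and then drive a robust binary search with the resulting comparison oracle. Write $\epsilon := 2^{-c\ell}$. For a dyadic threshold $m\in[0,1)$ with $\poly(\ell)$ bits, set $p(a,m):=\Prob[\,C_\ell(a)\ge m\,]$, the probability that $C_\ell$ outputs a value at least $m$, and define the total language
$$
L := \bigl\{(a,m)\ :\ p(a,m)\ge \tfrac12\bigr\}.
$$

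First I would record the one structural fact that makes everything work: by the concentration hypothesis, if $f_\ell(a)\ge m+\epsilon$ then $p(a,m)\ge 1-2^{-c\ell}>\tfrac12$, and if $f_\ell(a)\le m-\epsilon$ then $p(a,m)\le 2^{-c\ell}<\tfrac12$. Hence $\mathbbm{1}_L(a,m)=[\,f_\ell(a)\ge m\,]$ \emph{exactly} whenever $|f_\ell(a)-m|\ge\epsilon$, while inside the window $|f_\ell(a)-m|<\epsilon$ the membership bit may be arbitrary. Next I would argue $L\in\PP$: folding the postselection-success event and the classical comparison ``$\ge m$'' into designated output qubits, the condition $p(a,m)\ge\tfrac12$ is equivalent to $q_1\ge q_0$, where $q_1,q_0$ are acceptance probabilities of two $\poly$-size $\{\mathrm{Toffoli},H\}$ circuits; the sign of $q_1-q_0$ is a $\mathsf{GapP}$ quantity and deciding it is exactly a $\PP$ predicate. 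Using $\PP=\PostBQP$~\cite{PostBQPPP} and the hypothesis $\BQP=\PostBQP$, we get $L\in\BQP$, i.e.\ a bounded-error $\BQP$ machine $A$ deciding $L$ on \emph{every} input; standard amplification makes its error at most $2^{-c\ell}/(2d)$ per call at only $\poly(\ell)$ overhead, where $d:=c\ell+O(1)$ is the search depth.

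Then I would build $C'_\ell$ as a robust binary search of depth $d$ that uses $A$ as its comparison oracle on $(a,m)$, maintaining an interval $[u,w]$ (initially $[0,1]$) and the invariant $f_\ell(a)\in[u-\epsilon,\ w+\epsilon]$. The key is that this invariant survives \emph{any} behaviour of $A$ inside the $\epsilon$-window: querying $m=(u+w)/2$, if the comparison is reliable ($|f_\ell(a)-m|\ge\epsilon$) it moves into the half containing $f_\ell(a)$ and the invariant is immediate, whereas if $|f_\ell(a)-m|<\epsilon$ then $f_\ell(a)$ lies within $\epsilon$ of the shared endpoint $m$, so $f_\ell(a)$ stays within $\epsilon$ of whichever half is chosen. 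After $d$ steps the interval has width $2^{-d}$, and outputting its midpoint $v'$ gives $|v'-f_\ell(a)|\le 2^{-d-1}+\epsilon\le 2\epsilon\le 10\cdot 2^{-c\ell}$. A union bound over the $d=O(\ell)$ oracle calls bounds the total failure probability by $2^{-c\ell}$, so $C'_\ell$ meets the required $1-2^{-c\ell}$ guarantee, and since $A$ and the bookkeeping are all $\BQP$, the family $\{C'_\ell\}$ is a uniform family of $\poly$-size $\BQP$ circuits.

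The delicate point -- and the step I expect to be the main obstacle -- is precisely the tension between the \emph{total} language $L$ and the inherently \emph{promise}-like comparison we want: a bounded-error machine cannot decide $p(a,m)\ge\tfrac12$ when $p$ sits near $\tfrac12$. I would resolve this by never needing correctness inside the window: the binary-search invariant is designed so that in-window answers are harmless, and outside the window $\mathbbm{1}_L$ provably coincides with the true comparison $[\,f_\ell(a)\ge m\,]$. Thus I only ever rely on $L\in\PP=\PostBQP$ as a genuine total language (so the class equality $\BQP=\PostBQP$ applies cleanly, with no appeal to a promise-class version), and the robustness analysis absorbs the boundary cases. The remaining routine care is in the amplification counting and in confirming that the folded circuits really are expressible over $\{\mathrm{Toffoli},H\}$ so that the $\mathsf{GapP}/\PP$ argument goes through.
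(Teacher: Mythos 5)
Your proposal is correct, and it follows the same high-level architecture as the paper's proof (a sampling-to-decision reduction via threshold comparisons driving a robust binary search), but it realizes the one delicate step differently. The paper never leaves $\PostBQP$ as a bounded-error class: for each threshold $u$ it builds a new circuit $C_\ell(u)$ that runs $C_\ell$ polynomially many times and applies a majority-with-abstention rule, so that the acceptance probability is provably in $[0,\frac13)\cup(\frac23,1]$ on \emph{every} input; this manufactured gap is what makes $L(u)$ a genuine total language in $\PostBQP$, to which $\BQP=\PostBQP$ applies. You instead keep the gapless exact threshold $p(a,m)\ge\frac12$ and place the resulting total language in $\PP$ directly via a $\mathsf{GapP}$ sign argument (folding the postselection event into a comparison of two acceptance probabilities), then invoke $\PP=\PostBQP$ and the hypothesis. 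Both moves legitimately sidestep the promise-problem trap you correctly identify as the crux; yours buys a single clean language with $(a,m)$ as joint input and no repetition gadget, at the cost of needing amplitudes with controllable arithmetic (your flagged $\{\mathrm{Toffoli},H\}$ point, which the paper's modeling section indeed supplies) and care with ties at $p=\frac12$ (fixable by an affine shift of the $\mathsf{GapP}$ function, and your knife-edge case $f_\ell(a)=m-\epsilon$, $v=m$ is in any event absorbed by your closed-interval invariant, or avoided outright by stating the reliable-comparison claims with strict inequalities or taking $m$ off the $\ell$-bit output grid). The search geometries also differ cosmetically: the paper queries two trisection points per step and inflates the surviving interval by $\pm 2^{-c\cdot\ell}$, while you query the single midpoint and carry the invariant $f_\ell(a)\in[u-\epsilon,\,w+\epsilon]$, with in-window answers harmless because both halves share the queried endpoint; your variant is, if anything, the tighter and more transparent of the two, and your per-call amplification plus union bound matches the paper's error accounting.
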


\begin{proof}
For $\ell \in \N$, $u \in \{0,1\}^\ell$ let $C_\ell(u)$ be a $\PostBQP$ circuit that: first repeats $N = \poly(\ell)$ times: in round $j$ obtain $v$ by running $C_\ell$, let $b_j = v \stackrel{?}{\leq} u$. 
Next, if $|\{j : b_j = 0\}| - |\{j : b_j = 1\}| \stackrel{(*)}{\leq} N/3$ then return $w = 0$, and otherwise return $w = (|\{j : b_j = 0\}| \stackrel{?}{\leq} |\{j : b_j = 1\}|)$. 
Let $p = \Prob[b_1 = 1]$. If $p \in \left(\frac49, \frac 59 \right)$ then, by the Chernoff bound, $(*)$ holds with probability $ 1- 2^{-10c \cdot \ell}$. 
On the other hand if $p \not\in \left(\frac49, \frac 59 \right)$ then with probability $1 - 2^{-10c \cdot \ell}$, $w = (p \stackrel{?}{\leq} \frac12)$. 
This means that $\Prob[w = 0] \leq \left[0,\frac13\right) \cup \left(\frac23,1\right]$. 
Thus we can associate with the family $\{C_\ell(u)\}_\ell$ a language $L(u) \subseteq \{0,1\}^*$ defined by: for $a \in \{0,1\}^\ell$, $a \in L(u)$ iff $\Prob[C_\ell(u) \text{ on } a \text{ returns } 1] \geq \frac23$.

By \eqref{eq:outputclosetof} $L(u)$ has the following property, for every $a \in \{0,1\}^\ell$, 
\begin{equation}\label{eq:propofL}
\text{if } u > f_\ell(a) + 2^{-c \cdot \ell} \text{ then } a \in L(u) \text{ and if } u > f_\ell(a) - 2^{-c \cdot \ell} \text{ then } a \not\in L(u).
\end{equation}
By the assumption that $\BQP = \PostBQP$ we have that there exists a uniform family of polynomial size $\BQP$ circuits that recognizes $L(u)$. 
By running the $\BQP$ circuit $\poly(\ell)$ times and returning the majority vote of outcomes we can decrease the error probability of recognizing $L(u)$ from $\frac13$ to $2^{-10c \cdot \ell}$. 
Let's call this family of circuits $\{C'_\ell(u)\}_\ell$.

Now we define the desired family of circuits. 
We perform a binary-search-like procedure that on an input $a$ finds $v$ such that $|v - f_\ell(a)| \leq 10 \cdot 2^{-c \cdot \ell}$. The procedure runs in $\poly(\ell)$ steps. 
Throughout the execution, it maintains an interval $(l,r)$ of possible values for $f_\ell(a)$. 
In each step it runs $C'_\ell \left(l + \frac13(r-l) \right)$ and $C'_\ell \left(l + \frac23(r-l)\right)$ and depending on the results decreases the interval to $\left(l + \frac13(r-l) - 2^{-c \cdot \ell}, r \right)$ or $\left(l, l + \frac23(r-l) + 2^{-c \cdot \ell} \right)$. By \eqref{eq:propofL} we have that with probability $1 - 2^{-c \cdot \ell}$ the new interval contains $f_\ell(a)$. 
Application of the union bound concludes the proof.
\end{proof}

\begin{lemma}\label{lem:efficientsiminBQP}
For every uniform family $\{C_\ell\}_\ell$ of polynomial size $\BQP$ circuits acting on $s(\ell)$ qubits there exists a uniform family of polynomial size $\BQP$ circuits, that for every $\ell \in \N, a \in \{0,1\}^{s(\ell)}$, every $\ket{\hat{\lambda}_{s(\ell)}} \in \text{supp} \left(\hat{\omega}_{s(\ell)}(\Com^2) \right)$ computes $v \in [0,1]$ that with probability $1 - 2^{-10s(\ell)}$ satisifies
$$
\left| v - \left|\braket{\psi_a}{\hat{\lambda}_{s(\ell)}}\right|^2 \right| \leq 10 \cdot  2^{-10s(\ell)},
$$
where $\ket{\psi_a} \in \mathcal{S}(\Com^2)$ is defined as a state resulting from running $C_\ell$ backwards on $a$ and postselecting on all qubits but the first being $0$. 
\end{lemma}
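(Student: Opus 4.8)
The plan is to obtain this $\BQP$ estimator by feeding the $\PostBQP$ dot-product estimator of Lemma~\ref{lem:efficientsim} into the $\PostBQP$-to-$\BQP$ conversion of Lemma~\ref{lem:howtouseBQP=PostBQP}, invoking the assumption $\BQP = \PostBQP$ that is in force for the contrapositive argument of this section (recall $\PP = \PostBQP$ by~\cite{PostBQPPP}). Concretely, applying Lemma~\ref{lem:efficientsim} to $\{C_\ell\}_\ell$ yields a uniform family of polynomial-size $\PostBQP$ circuits (Algorithm~\ref{alg:dotproduct}) that, on $a \in \{0,1\}^{s(\ell)}$ together with a description $\ket{\hat{\lambda}_{s(\ell)}} \in \text{supp}(\hat{\omega}_{s(\ell)}(\Com^2))$, outputs $v \in [0,1]$ with $\left|v - |\braket{\psi_a}{\hat{\lambda}_{s(\ell)}}|^2\right| \le 2^{-10 s(\ell)}$ with probability $1 - 2^{-10 s(\ell)}$. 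Since Lemma~\ref{lem:howtouseBQP=PostBQP} degrades the precision by exactly a factor of $10$ (from $2^{-c\ell}$ to $10 \cdot 2^{-c\ell}$) while preserving the success probability, its output matches the target of the present lemma provided the parameters line up.

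The only real work is this parameter matching, which I expect to be the main obstacle: Lemma~\ref{lem:howtouseBQP=PostBQP} quantifies everything in the \emph{input length} $\ell$ via a \emph{fixed} constant $c$, whereas Lemma~\ref{lem:efficientsim} quantifies in $s(\ell)$. I would resolve this by regarding the pair $(a,\ket{\hat{\lambda}_{s(\ell)}})$ as a single input string and padding it to a length that is a fixed constant multiple of $s(\ell)$. The key point is that each coordinate $\hat{\alpha}_i$ in Definition~\ref{def:apxHaar} is specified only to additive precision $2^{-100 s(\ell)}$, so its canonical binary representation uses $O(s(\ell))$ bits; hence the combined input has length $O(s(\ell))$. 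I may therefore pad it to length exactly $n(\ell) := K\, s(\ell)$ for a sufficiently large absolute constant $K$ (large enough to contain the input and with $K \ge 10$), and set $c := 10/K$, a genuine constant with $c\, n(\ell) = 10\, s(\ell)$ identically. Defining $f_{n(\ell)}(a,\ket{\hat{\lambda}_{s(\ell)}}) := |\braket{\psi_a}{\hat{\lambda}_{s(\ell)}}|^2$ and ignoring the padding, the $\PostBQP$ family computes $f_{n(\ell)}$ to within $2^{-10 s(\ell)} = 2^{-c\, n(\ell)}$ with probability $1 - 2^{-10 s(\ell)} = 1 - 2^{-c\, n(\ell)}$, which is precisely the hypothesis of Lemma~\ref{lem:howtouseBQP=PostBQP} for this $c$.

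With the hypotheses verified, Lemma~\ref{lem:howtouseBQP=PostBQP} (using $\BQP = \PostBQP$) then furnishes a uniform family of polynomial-size $\BQP$ circuits computing $f_{n(\ell)}$ to within $10 \cdot 2^{-c\, n(\ell)} = 10 \cdot 2^{-10 s(\ell)}$ with probability $1 - 2^{-c\, n(\ell)} = 1 - 2^{-10 s(\ell)}$; restricting these circuits to the valid inputs $(a,\ket{\hat{\lambda}_{s(\ell)}})$ yields exactly the claimed family. Uniformity is preserved at every step, since Lemma~\ref{lem:efficientsim} is uniform, the padding and reinterpretation are computed uniformly from $1^\ell$, and the binary-search construction inside Lemma~\ref{lem:howtouseBQP=PostBQP} is uniform in $\ell$. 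I stress that the linear (not merely polynomial) description-length bound $|\ket{\hat{\lambda}_{s(\ell)}}| = O(s(\ell))$ is what makes $c$ choosable independent of $\ell$: a super-linear description length would force $c \to 0$ and break the reduction, so this is the one place where care is required.
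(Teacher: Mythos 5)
Your proposal is correct and follows exactly the paper's route: the paper proves Lemma~\ref{lem:efficientsiminBQP} as ``a direct consequence of Lemma~\ref{lem:efficientsim} and Lemma~\ref{lem:howtouseBQP=PostBQP}.'' Your padding argument (choosing a linear-length canonical representation of $\ket{\hat{\lambda}_{s(\ell)}}$ so that the constant $c$ in Lemma~\ref{lem:howtouseBQP=PostBQP} can be fixed with $c\,n(\ell)=10\,s(\ell)$) merely makes explicit the parameter bookkeeping the paper leaves implicit, and is a sound way to do so.
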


\begin{proof}
It is a direct consequence of Lemma~\ref{lem:efficientsim} and Lemma~\ref{lem:howtouseBQP=PostBQP}.    
\end{proof}

Now we are finally ready to prove Theorem~\ref{thm:maininformal}. We restate it here for convenience.

\thmmain*

\begin{remark}
As we mentioned before we show that local simulation can be implemented in $\PP$ when Alice and Bob have access to shared randomness only (not necessarily a separable state).    
This is a stronger requirement.
\end{remark}

\begin{proof}
We will prove the contraposition of the statement, i.e. $\BQP = \PostBQP \Rightarrow \ENT \neq \NEL$. We will give an example of $\mathcal{H}_A, \mathcal{H}_B, \rho^*_{AB} \in \mathcal{S}(\mathcal{H}_A \otimes \mathcal{H}_B)$ and show that for sufficiently small $\delta$, for all $k \in \N$ and all $\mathcal{G}(\rho^*_{AB},k,\cdot)$ there exists $\ell \in \N$ such that one of the two requirements from Definition~\ref{def:nonlocality} does not hold. 


Let $\hilb_A,\hilb_B = \Com^2$ and define
\begin{align*}
\rho^*_{AB} := \frac14 \left[\frac13 \ket{\psi_-}\bra{\psi_-} + \frac53\ket{0}\bra{0} \otimes \frac{\mathbbm{1}}{2} + \frac13 \cdot \frac{\mathbbm{1}}{2} \otimes \ket{0}\bra{0} + \frac53\ket{00}\bra{00} \right],
\end{align*}
Note that it is a special case of the state from Lemma~\ref{lem:isentangled} for $q = \frac13$. By Lemma~\ref{lem:hirshmodel} we know that $\rho^*_{AB}$ is local via Algorithm~\ref{alg:alicesim} and \ref{alg:bobsim}.


Assume towards contradiction that $\rho^*_{AB}$ is not-efficiently-local. This means that for all sufficiently small $\delta_0$ there exists $k \in \N$, a game $\mathcal{G}(\rho^*_{AB},\delta_0,k,\cdot)$ and $\alice, \bob \in \SampBQP$ that satisfy conditions of Definition~\ref{def:game}. Fix a polynomial $r$ that certifies that and a sufficiently small $\delta_0$. By our modelling we can assume that $\alice$ has a corresponding uniform family of circuits $\left\{C_\ell^\alice \right\}$ and so does $\bob$ with $\left\{C_\ell^\bob \right\}$.
 
Assume $\alice',\bob'$ have access to 
$\ket{\hat{\lambda}_{q(\ell)}} \sim \hat{\omega}_{q(\ell)}(\Com^2)$. 
We define $\alice'$ as follows. 
For every $x\in \{0,1\}^{p(\ell)}$, $\alice'$ runs $\textsc{AliceSim} \left(\frac13,\rho^*_{AB}, C^{\alice,x}_\ell,\ket{\hat{\lambda}_{q(\ell)}} \right)$, where $C^{\alice,x}_\ell$ is defined as a circuit that first prepares a state $\ket{x}$ in the $2$'nd through $(p(\ell)+1)$'st qubits and then runs $C_\ell^\alice$. 
This is done so that the action of $C^{\alice, x}_\ell$ is consistent with \eqref{eq:howAacts}. 
Instead of running Algorithm~\ref{alg:dotproduct} for computing $v_1,v_2,v_3$, $\alice'$ runs the $\BQP$ procedure guaranteed to exist by Lemma~\ref{lem:efficientsiminBQP}. 
$\bob'$ is defined in an analogous way, i.e. for every $y \in \{0,1\}^{p(\ell)}$, $\bob'$ runs $\textsc{BobSim} \left( \frac13,\rho^*_{AB}, 
C^{\bob,y}_\ell,\ket{\hat{\lambda}_{q(\ell)}} \right)$.

Lemmas~\ref{lem:statisticsareclose} and \ref{lem:efficientsiminBQP} guarantee that for every $a,b,x,y$ 
\begin{align}
&\Big|\Prob_{\ket{\hat{\lambda}_{q(\ell)}} \sim \omega_{q(\ell)}(\Com^2)} &&\Big[\alice' \text{ on } x,\ket{\hat{\lambda}_{q(\ell)}} \text{ returns } = a \text{ and } \nonumber \\
& &&\bob' \text{ on } y,\ket{\hat{\lambda}_{q(\ell)}} \text{ returns } = b \Big] \nonumber \\
&-\Prob_{\ket{\lambda} \sim \omega(\Com^2)} &&\Big[\textsc{AliceSim-Ideal} \left(q,\rho_{AB}^*,\mathcal{A}^x,\ket{\lambda} \right) = a \text{ and } \nonumber \\
& &&\textsc{BobSim-Ideal} \left(q,\rho_{AB}^*,\mathcal{B}^y,\ket{\lambda} \right) = b \Big]\Big| \leq 200 \cdot 2^{-8 q(\ell)}. \label{eq:finalstatisticsclose}
\end{align}

Now we arrive at a contradiction. $\mathcal{G}$ is played $k$ rounds, which means that $\ver$ collects $k$ samples. By \eqref{eq:finalstatisticsclose} and the properties of the TV-distance, we know that these $k$ samples are from a distribution that is $200k \cdot 2^{-8 q(\ell)}$ away in the TV-distance from a distribution corresponding to $\alice, \bob$. There are $2^{2q(\ell)}$ different pairs of answers for $\alice,\bob$ so by the properties of the TV-distance the probability that $\ver$ will distinguish the two distributions is at most
\begin{align*}
200k \cdot 2^{2q(\ell) - 8q(\ell)} \leq 200 \cdot 2^{-5q(\ell)},
\end{align*}
which implies, as per completeness in Definition~\ref{def:nonlocality} (where it is required that $\ver$ accepts $\alice,\bob$ with high probability) that the interaction is accepted with probability at least
\begin{equation}\label{eq:finalprob}
1 - \delta_0 - 2^{-\Omega(\ell)}.    
\end{equation}
This is a contradiction as $\ver$ should, as per soundness in Definition~\ref{def:nonlocality}, accept the interaction with $\alice', \bob'$ with probability at most $1 - \delta_0 - \frac{1}{p(\ell)}$ but it does it with at least \eqref{eq:finalprob}.

\end{proof}

\section{Implications for Delegation of Quantum Computation (DQC)}\label{sec:edqc}

As we discussed, Theorem~\ref{thm:maininformal} sheds light on the problem of delegation of quantum computation (DQC) itself.
More concretely, it implies the impossibility of some DQC protocols. 
We restate the theorem proven in this section

\thmedqcinformal*

\subsection{Extended Delegation of Quantum Computation}

We are ready to define an extended delegation of quantum computation (EDQC) formally. 
Our definition requires correctness with respect to an auxiliary input quantum state (similarly to correctness with respect to auxiliary input in \cite{advantagefromNL}).
Informally speaking a delegation protocol needs to preserve entanglement between the part of the state that the circuit operates on and the rest of the state.

\begin{definition}[\textbf{EDQC}]\label{def:EDQC}
Let $\mathcal{C}$ be a classical description of a quantum circuit on $k$ qubits that operates on two registers: $V$ of $k-1$ qubits and $Q$ of 1-qubit.\footnote{
We could have considered a more general definition, where the auxiliary register holds a state on many qubits. 
Our definition generalizes naturally. 
We chose this version for simplicity and because our result implies that the more general protocol is not possible either.}
We say that $\mathfrak{P}$ is a 1-round protocol for extended delegation of quantum computation (EDQC) if the following holds. For a security parameter $\ell \in \N$, $\mathfrak{P}(\mathcal{C})$ expects questions $q \in \{0,1\}^{n(\ell)}$ and answers $a \in \{0,1\}^{n(\ell)}$ for some polynomial $n$. $\ver$ is expected to accept or reject and upon acceptance return $b \in \{0,1\}$. Moreover, there exists $\ver \in \text{PPT}(\ell)$ such that
\begin{itemize}
    \item{\textbf{(Completeness)}} There exists $\prov \in \text{QPT}(\ell)$ such that for every $s \in \{0,1\}^{k-1}$ and every $\rho_{QE} \in \mathcal{S}(\Com^2 \otimes \mathcal{H}_E)$ if $\rho_Q$ is given to $\prov$ then the following hold
    \begin{itemize}
        \item $\ver$ accepts the interaction with probability $1$,
        \item over the randomness of $\ver$ and $\prov$ the distribution of $b$ is equal to the distribution of measuring the last qubit ($Q$ register) of $U_\mathcal{C} \ket{s}_V \rho_Q$,
        \item $\rho_E$ is equal to the post-measurement state after measuring the $Q$ register conditioned on obtaining $b$. More formally it is the normalization of
        $$\trace_{VQ} \left[(\id_V \otimes \ket{b}_Q\bra{b}_Q \otimes \id_E) (U_\mathcal{C} \otimes \id_E) \ket{s}_V \rho_{QE}\right].$$
    \end{itemize}
    
    \item{\textbf{(Soundness)}}
    There exists $c$ such that for every $\e >0$, sufficiently large $\ell$, for every $\prov \in \text{QPT}(\ell)$ that is accepted with probability $1 - \e$, there exists $\rho_Q \in \mathcal{S}(\Com^2)$ such that for every $s \in \{0,1\}^{k-1}$ the distribution of $b$ is in $O(\e^{-c})$ total variation distance of the distribution of measuring the last qubit of $U_\mathcal{C} \ket{s}_V \rho_Q$.
\end{itemize}
\end{definition}

Now we state the main theorem of this section, which is the formal version of Theorem~\ref{thm:edqcinformal}.

\begin{theorem}\label{thm:edqcdoesntexist}
The existence of EDQC implies $\BQP \neq \PostBQP$.
\end{theorem}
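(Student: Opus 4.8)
The plan is to show that the existence of a $\BQP$-sound EDQC protocol already forces $\ENT=\NEL$, after which Theorem~\ref{thm:maininformal} together with $\PP=\PostBQP$ immediately yields $\BQP\neq\PostBQP$. Concretely, I would reuse the RSP${}+{}$SQG blueprint of Theorem~\ref{thm:mainCert}, but replace each remote-state-preparation instance by an EDQC instance, using EDQC's entanglement-preservation guarantee in place of the RSP soundness that previously forced the provers to operate on trusted quantum inputs.

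Fix an entangled $\rho_{AB}$ together with an entanglement witness $W=\sum_{s,t}\beta_{s,t}\,\tau_s^\top\otimes\omega_t^\top$ as in Lemma~\ref{lem:entanglementwitness}. The verifier would run two EDQC instances in parallel, one with $\alice$ and one with $\bob$. For $\alice$ I would delegate the circuit $\mathcal{C}_A$ that, on classical description $s$ indexing the SQG question $\tau_s\in\mathcal{S}$ and a one-qubit auxiliary register $Q$, prepares $\tau_s$, performs the Buscemi projection of $\tau_s$ and $Q$ onto the maximally entangled state, and outputs the corresponding bit $a$; the circuit $\mathcal{C}_B$ is defined symmetrically from the $\omega_t$. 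Honest $\alice,\bob$ supply their halves of $\rho_{AB}$ as the auxiliary $Q$ inputs. The key observation is that the completeness clause of Definition~\ref{def:EDQC}, being quantified over all $\rho_{QE}$, pins down (by the principle that a channel is determined by its action on half of a maximally entangled state) the channel that EDQC implements on $Q$ to be exactly ``apply $U_{\mathcal{C}}$ and measure the output qubit'', coherently with any external reference. Identifying the reference $E$ of $\alice$'s instance with $\bob$'s physical half of $\rho_{AB}$, this guarantees that the joint distribution of the collected bits equals the ideal SQG distribution $\Prob^\rho_{\alice,\bob}[a,b\mid\tau_s,\omega_t]$. Repeating $N=\poly$ times, estimating $\hat I=\sum_{s,t}\beta_{s,t}\,\widehat{\Prob}[a=1,b=1\mid\tau_s,\omega_t]$ and accepting iff $\hat I<0$, completeness of the $\NEL$ protocol follows from the completeness half of Lemma~\ref{lem:entanglementwitness} (giving $I_\rho\le-\eta/4$) and a Chernoff bound over the repetitions.

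For soundness, let $\alice',\bob'\in\text{QPT}$ share a separable state $\sigma_{AB}=\sum_i p_i\,\sigma_A^{(i)}\otimes\sigma_B^{(i)}$ and be accepted with high probability. Conditioning on the shared classical label $i$, each party becomes an individual QPT prover accepted in its own EDQC instance, so the soundness clause of Definition~\ref{def:EDQC} produces fixed one-qubit states whose induced SQG behavior matches, for every question, the honest behavior on those states up to a total-variation slack that is small in the acceptance error. Consequently the effective joint auxiliary input is a separable state, and the soundness half of Lemma~\ref{lem:entanglementwitness} forces the ideal score to exceed $\eta/4>0$. Absorbing the EDQC slack and the sampling error of $\hat I$, the verifier rejects with high probability, so $\rho_{AB}\in\NEL$. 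As $\rho_{AB}$ was an arbitrary entangled state, this establishes $\ENT=\NEL$, and Theorem~\ref{thm:maininformal} then gives $\BQP\neq\PP=\PostBQP$.

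The hardest step will be the soundness reduction. Converting the \emph{single-prover} EDQC guarantee --- a single $\rho_Q$ valid simultaneously for all classical inputs $s$ --- into a statement about the \emph{joint} two-prover strategy on a correlated separable state requires justifying the conditioning on shared randomness and verifying that each conditional branch really does reduce to an honest SQG measurement on a fixed qubit. I would then need to track carefully how the EDQC soundness slack and the $1/\poly$ sampling error of $\hat I$ accumulate across the two instances and the $N$ repetitions, so that the constant witness gap $\eta$ survives and the $1/\poly(\ell)$ completeness--soundness separation required by Definition~\ref{def:nonlocality} is maintained; keeping the verifier in $\text{PPT}$ while using enough repetitions to resolve $\eta$ is a secondary concern.
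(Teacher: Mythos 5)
Your proposal follows essentially the same route as the paper's proof: you compile the Buscemi SQG into a Bell-type game via two parallel EDQC instances with the halves of $\rho_{AB}$ supplied as auxiliary inputs, use the entanglement-preservation clause of Definition~\ref{def:EDQC} for completeness and the question-independent $\rho_Q$ guaranteed by the soundness clause (after conditioning on the separable state's mixture label, which the paper handles with a Markov-inequality argument to ensure each branch is accepted with high enough probability) for soundness, and conclude $\BQP\neq\PostBQP$ via Theorem~\ref{thm:maininformal} and $\PP=\PostBQP$. The steps you flag as remaining---justifying the per-branch conditioning and tracking the EDQC slack and Chernoff sampling error against the witness gap $\eta$---are precisely the details the paper fills in, so your outline is correct.
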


On a high level, using an EDQC, we convert the semi-quantum game protocol from Section~\ref{sec:buscemiSQG} to a 1-round protocol that certifies entanglement of all entangled states against $\BQP$ adversaries, i.e. the existence of EDQC implies $\ENT = \NEL$. 
Then, invoking Theorem~\ref{thm:maininformal}, we deduce that this implies $\BQP \neq \PostBQP$.


\begin{proof}

Let $\mathfrak{P}$ be a single round EDQC, guaranteeing soundness against all provers in $\BQP$. We show that this implies that every entangled state is $\NEL$. More formally let $\rho_{AB} \in \mathcal{S}(\Com^2 \otimes \Com^2)$ be an entangled state. For every sufficiently small $\delta$ we will show that there exists $k \in \N$ and a game $\mathcal{G}$ that satisfies Definition~\ref{def:nonlocality}. 

Let $\mathcal{C}$ be a quantum circuit acting on $4$ qubits, with registers: $A$ of $3$ qubits and $Q$ of $1$ qubit. 
To every $s \in \{0,1\}^3$ we associate $\tau_s \in \{\ket{0}, \ket{1}, \ket{+}, \ket{-}, \ket{i}, \ket{-i} \}$ such that the association is surjective. 
For $s \in \{0,1\}^3$ and $\rho_Q \in \mathcal{S}(\Com^2)$ the circuit works as follows, $U_{\mathcal{C}} \ket{s}_A \rho_Q$ first creates $\tau_s$ out of $s$ and then returns the result of the Bell measurement on $\tau_s \otimes \rho_Q$. 
Equivalently its action can be expressed as a projection onto $\ket{\phi^+} = \frac{1}{\sqrt{2}}(\ket{00} + \ket{11})$, i.e. probability of returning $1$ is equal to $\trace \left[\ket{\phi^+}\bra{\phi^+}(\tau_s \otimes \rho_Q) \right]$.

\paragraph{Game.} Let $\delta \in (0,1)$ and let $W$ be an entanglement witness of $\rho_{AB}$ with parameter $\eta$ . As discussed in Section~\ref{sec:buscemiSQG} one can express the witness as:
\begin{equation}\label{eq:EW}
W = \sum_{s,t \in \{0,1\}^3} \beta_{s,t} \ \tau_s^\top \otimes \omega_t^\top,
\end{equation}
for some real coefficients $\beta_{s,t}$.\footnote{Observe that in expression \eqref{eq:EW} there are different pairs $(s,t)$ that map to the same pair of states. It is because $|\{0,1\}^3| = 8$ but $|\{\ket{0}, \ket{1}, \ket{+}, \ket{-}, \ket{i}, \ket{-i} \}| = 6$. This is inconsequential.}

We define $k = O \left(\frac{\log(1/\delta)}{\eta \min_{s,t} |\beta_{s,t}|} \right)$. The game proceeds as follows: in each repetition $\ver$ samples $s,t \sim U(\{0,1\}^3)$ and then proceeds with running two independent copies of $\mathfrak{P}(\mathcal{C},s), \mathfrak{P}(\mathcal{C},t)$ with $\alice$ and $\bob$ respectively, collects the answers, i.e. bits $b_A,b_B$. At the end $\ver$ computes statistics 
$
\widehat{\Prob} \left[ b_A = 1, b_B = 1 \ | \ \tau_s , \omega_t \right]
$
and a score function corresponding to $W$
$$
\hat{I} = \sum_{s,t} \beta_{s,t} \ \widehat{\Prob} \left[b_A = 1, b_B= 1 \ | \ \tau_s , \omega_t \right].
$$
Finally, $\ver$ declares that $\alice,\bob$ held an entangled state if and only if $\hat{I} < 0$. This can be seen as compiling the semi-quantum game from \cite{Buscemi_SemiQuantumGame} into a Bell-like game with the help of the EDQC $\mathfrak{P}$.

\paragraph{Completeness.} Assume the parties have access to $\rho_{AB}$, i.e. mode (i). By definition of $\mathfrak{P}$ there exist $\alice,\bob \in \text{QPT}(\ell)$ satisfying the completeness property of Definition~\ref{def:EDQC}. We claim that if they run their protocol on their respective shares of $\rho_{AB}$ then they certify completeness of Definition~\ref{def:nonlocality}. 

This strategy yields the following idealized (assuming perfect statistics) value:
\begin{align}
I
&= \sum_{s,t \in \{0,1\}^3} \beta_{s,t} \ \Prob \Big[b_A = 1, b_B = 1 \ | \ s,t \Big] \nonumber \\
&= \sum_{s,t} \beta_{s,t} \ \trace \left[ \left(\ket{\phi^+}_A\bra{\phi^+}_A \otimes \ket{\phi^+}_B\bra{\phi^+}_B \right)(\tau_s \otimes \rho_{AB} \otimes \omega_t) \right] \nonumber \\
&= \sum_{s,t} \beta_{s,t} \ \trace \left[\left(\tau_s^\top \otimes \omega_t^\top \right) \rho_{AB} \right]/4 \nonumber \\
&= \trace \left[W \rho_{AB} \right]/4 && \text{By \eqref{eq:EW}} \nonumber \\
&\leq - \eta/4, && \text{By \eqref{eq:EQprop}}  \label{eq:negativescore}
\end{align}
where in the crucial second equality we used the following. To compute $ \Prob[b_A = 1, b_B = 1 \ | \ s,t]$ quantum mechanics allows as to think that first $\alice$ performs the measurement and then $\bob$ performs his measurement on the post measurement state after actions of $\alice$. By the second property of completeness we have that for every $s$, $b_A$, i.e. the bit collected by $\ver$ from $\alice$ is distributed according to $\trace[\ket{\phi^+}_A \bra{\phi^+}_A (\tau_s \otimes \trace_B[\rho_{AB}])]$. Next, the third property of completeness guarantees that the postmeasurement state of $\bob$'s share of $\rho_{AB}$ is equal to the post measurement state of performing the Bell measurement on $\tau_s \otimes \rho_A$ conditioned on obtaining outcome $b_A = 1$. This means that $\bob$'s state is 
$$
\frac{\trace_A \left[(\ket{\phi^+}_A\bra{\phi^+}_A \otimes \id) (\tau_s \otimes \rho_{AB}) \right] }{\trace \left[\ket{\phi^+}_A\bra{\phi^+}_A (\tau_s \otimes \rho_A) \right]}.
$$
Thus the overall probability is exactly
\begin{align*}
&\Prob[b_A = 1, b_B = 1 \ | \ s,t] \\
&= \trace \left[\ket{\phi^+}_A\bra{\phi^+}_A (\tau_s \otimes \rho_A) \right] \cdot \frac{\trace \left[\ket{\phi^+}_B \bra{\phi^+}_B) \left(\trace_A \left[(\ket{\phi^+}_A\bra{\phi^+}_A \otimes \id) (\tau_s \otimes \rho_{AB}) \right] \otimes \omega_t \right) \right]}{\trace \left[\ket{\phi^+}_A\bra{\phi^+}_A (\tau_s \otimes \rho_A) \right]} \\
&= \trace \left[\ket{\phi^+}_B \bra{\phi^+}_B \left(\trace_A[(\ket{\phi^+}_A\bra{\phi^+}_A \otimes \id) (\tau_s \otimes \rho_{AB})] \otimes \omega_t \right) \right] \\
&= \trace\left[ \left(\ket{\phi^+}_A\bra{\phi^+}_A \otimes \ket{\phi^+}_B\bra{\phi^+}_B \right)(\tau_s \otimes \rho_{AB} \otimes \omega_t) \right],
\end{align*}
where we used the properties of partial trace in the last equality. From \eqref{eq:negativescore}, setting of $k = O \left(\frac{\log(1/\delta)}{\eta \min_{s,t} |\beta_{s,t}|} \right)$ and a standard application of the Chernoff bound we get that with probability $1 - \delta$ we have $|\hat{I}-I| < \eta/4$, hence $\hat{I}\leq 0$ and thus the interaction is accepted.

\paragraph{Soundness.}
 Let $\textsc{Enc}$ be the $\ver$'s deterministic algorithm for generating $q$ that takes as input $s \in \{0,1\}^{3}$ and randomness $r \in \{0,1\}^{\poly(\ell)}$, i.e. $q = \textsc{Enc}(s,r)$ and similarly $\textsc{Dec}$ be the $\ver$'s deterministic algorithm for generating $b$, i.e. $b = \textsc{Dec}(s,r,a)$.
Let $\alice, \bob \in \qpt(\ell)$ be run in mode (ii), i.e. access to a separable state $\sigma_{AB} = \sum_{k=1}^\infty p_k \ \sigma_A^{(k)} \otimes \sigma_B^{(k)} \in \mathcal{S}(\mathcal{H}_A \otimes \mathcal{H}_B)$.\footnote{We allow $\sigma_{AB}$ to be a convex combination of infinitely many product states.}

Assume towards contradiction that there exists a negligible function $\text{negl}$ such that $\ver$ accepts with probability at least $1 - \delta - \text{negl}(\ell)$. Let $\ell$ be large enough so that $\text{negl}(\ell) \leq \delta$. Then $\ver$ accepts with probability $ 1- 2\delta$.


For simplicity of notation denote the size of the questions and answers in the protocol as $n = n(\ell)$.
For every $q \in \{0,1\}^{n}$ let $\{\mathcal{A}_a(q)\}_{a \in \{0,1\}^n}$ be the effective POVM acting on $\mathcal{H}_A$ that defines $\alice$'s actions. 
Similarly for every $q' \in \{0,1\}^{n}$ we define $\{\mathcal{B}_{a'}(q')\}_{a' \in \{0,1\}^n}$ as the effective POVM acting on $\mathcal{H}_B$ for $\bob$. 
Also let $\{\mathcal{C}_b\}_{b \in \{0,1\}}$ be the effective POVM of circuit $\mathcal{C}$ acting on $k$ qubits. 
Denote by $R$ the length of the randomness $r$ used in $\text{ENC}$. 
We express the probability of $b_A = 1, b_B = 1$. For every $s,t \in \{0,1\}^3$
\begin{align}
&\Prob \Big[b_A = 1, b_B = 1 \ | \ s,t \Big] \nonumber \\
&= 2^{-2R} \sum_{r,r'} \sum\limits_{\substack{a: \text{DEC}(s,r,a) = 1 \\ a': \text{DEC}(t,r',a') = 1}} \sum_k p_k 
\trace \left[ \left(\mathcal{A}_{a}(\text{ENC}(s,r)) \otimes \mathcal{B}_{a'}(\text{ENC}(t,r') \right) \left(\sigma_A^{(k)} \otimes \sigma_B^{(k)} \right) \right] \nonumber \\
&= \sum_k p_k 2^{-2R} \sum_{r} \sum\limits_{\substack{a: \text{DEC}(s,r,a) = 1}} \sum_{r'} \sum\limits_{\substack{a': \text{DEC}(t,r',a') = 1}} \nonumber \\
&\trace \left[ \mathcal{A}_{a}(\text{ENC}(s,r)) \sigma_A^{(k)} \right] \trace \left[\mathcal{B}_{a'}(\text{ENC}(t,r')) \sigma_B^{(k)} \right] \nonumber \\
&= \sum_k p_k \left(2^{-R} \sum_{r} \sum\limits_{\substack{a: \text{DEC}(s,r,a) = 1}} 
\trace \left[ \mathcal{A}_{a}(\text{ENC}(s,r)) \sigma_A^{(k)} \right]  \right) \cdot \nonumber \\
&\cdot \left( 2^{-R} \sum_{r'} \sum\limits_{\substack{a': \text{DEC}(t,r',a') = 1}} \trace \left[\mathcal{B}_{a'}(\text{ENC}(t,r')) \sigma_B^{(k)} \right] \right). \label{eq:longeq}
\end{align}
Now, if $\ver$ accepts the whole interaction with probability $1 - 2\delta$ then in particular in a single round $\ver$ accepts the delegation part (with $\alice$ and $\bob$) of the protocol with probability at least $1 - 2\delta$. Thus by the Markov inequality there exists a subset $S \subseteq \N$ such that $\sum_{k \in S} p_k \geq 1-2\delta$ and for every $k \in S$,  $\alice$'s circuit, when given $\sigma_A^{(k)}$ succeeds in $\mathfrak{P}$ with probability $1 - 4\delta$ for every $s$. Thus if $\delta$ is sufficiently small then for every $k \in S$ soundness of $\mathfrak{P}$ holds, which implies that there exists a 1-qubit density matrix 
$\rho_Q^k \in \mathcal{S}(\Com^2)$ such that 
\begin{equation}\label{eq:closetocircuit}
2^{-R} \sum_{r} \sum\limits_{\substack{a: \text{DEC}(s,r,a) = 1}} 
\trace \left[ \mathcal{A}_{a}(\text{ENC}(s,r)) \sigma_A^{(k)} \right]  = \trace \left[\mathcal{C}_1 \left(\ket{s} \otimes \rho_Q^k \right) \right] \pm O(\delta^{-c}).
\end{equation}
and crucially \textbf{the same} $\rho_Q^k$ can be taken for all $s$. Similar argument holds for $\bob$. Now we can use \eqref{eq:closetocircuit} in \eqref{eq:longeq}. We split $\N$ into two groups, $S$ and $\N \setminus S$. For $k \in S$ we use \eqref{eq:closetocircuit} and for $k \in \N \setminus S$ we bound $\left| 2^{-R} \sum_{r} \sum\limits_{\substack{a: \text{DEC}(s,r,a) = 1}} 
\trace \left[ \mathcal{A}_{a}(\text{ENC}(s,r)) \sigma_A^{(k)} \right] - \trace \left[\mathcal{C}_1 \left(\ket{s} \otimes \rho_Q^k \right) \right] \right|$ by $1$. The same operation is performed for $\bob$. The result is

\begin{align}
&\Prob \Big[ b_A = 1, b_B = 1 \ | \ s,t \Big] \nonumber \\
&=\sum_k p_k \ \trace \left[\mathcal{C}_1 \left(\ket{s} \otimes \rho_Q^k \right) \right] \trace \left[\mathcal{C}_1 \left(\ket{t} \otimes \rho^{'k}_Q \right) \right] \pm (8\delta + O(\delta^{-2c})) \nonumber \\
&\stackrel{(1)}{=} \sum_k p_k \ \trace \left[\ket{\phi^+}\bra{\phi^+} \left(\tau_s \otimes \rho_Q^k \right) \right] \trace \left[\ket{\phi^+}\bra{\phi^+} \left(\omega_t \otimes \rho_Q^{'k} \right) \right] \pm O(\delta^{-2c}) \nonumber \\
&\stackrel{(2)}{=} \sum_k p_k \ \trace \left[\mathcal{A}_1^k \tau_s \right] \trace \left[\mathcal{B}_1^k \omega_t \right] \pm O(\delta^{-2c}) \nonumber \\
&= \sum_k p_k \ \trace \left[ \left(\mathcal{A}_1^k \otimes \mathcal{B}_1^k \right) \left(\tau_s \otimes \omega_t \right) \right] \pm O(\delta^{-2c}), \label{eq:finalprobbound}
\end{align}
where in (1) we used the definition of $\mathcal{C}$ and in (2) we denoted by $\mathcal{A}^k$ and $\mathcal{B}^k$ the effective POVMs acting on $\tau_s$ and $\omega_t$ respectively. Again, crucially, the same $\mathcal{A}^k$ ($\mathcal{B}^k$) can be taken for all $s$ ($t$) because the same was true for $\rho_Q^k$  ($\rho_Q^{'k}$). With \eqref{eq:finalprobbound} we can express the score function 
\begin{align}
I 
&= \sum_{s,t \in \{0,1\}^3} \beta_{s,t} \ \Prob \Big[b_A = 1, b_B = 1 \ | \ s,t \Big] \nonumber \\
&\geq \sum_{s,t} \beta_{s,t} \ \sum_k p_k \ \trace \left[ \left(\mathcal{A}_1^k \otimes \mathcal{B}_1^k \right)(\tau_s \otimes \omega_t) \right] - \max_{s,t}|\beta_{s,t}| \cdot O(\delta^{-2c}) \nonumber \\
&\stackrel{(1)}{\geq} \sum_k p_k \ \trace \left[ \left(\mathcal{A}_1^k \otimes \mathcal{B}_1^k \right) W^\top \right] - \eta/2 \nonumber \\
&\geq \sum_k p_k \ \trace \left[W \left[ \left(\mathcal{A}_1^k \right)^\top \otimes \left(\mathcal{B}_1^k \right)^\top\right]\right] - \eta/2 \label{eq:scorelwrbnd},
\end{align}
where in (1) we used that $\delta$ is sufficiently small, as per Definition~\ref{def:nonlocality} the upper bound on $\delta$ can be chosen after $W$ was picked.
Because of \eqref{eq:scorelwrbnd}, \eqref{eq:EQprop} and the fact that $(\mathcal{A}_1^k)^\top, (\mathcal{B}_1^k)^\top$ are positive, Hermitian, we have that $I \geq \eta - \eta/2 \geq \eta/2$. We conclude the proof by a standard application of the Chernoff bound to argue that with probability $1 - \delta$ we have $\hat{I} > 0$. This gives a contradiction with the assumption that $\ver$ accepted the interaction with probability $ 1- 2\delta$.

This proves that in fact if $\mathfrak{P}$ is complete and sound against all $\qpt$ cheating provers, every entangled state is $\NEL$. Now a direct application of Theorem~\ref{thm:maininformal} implies that if such $\mathfrak{P}$ exists $\BQP \neq \PostBQP$.

\end{proof}

\subsection{Requirements of EDQC}\label{sec:whyEDQC}

In this section, we briefly discuss how the EDQC introduced in Definition~\ref{def:EDQC} relates to other notions of DQC from the literature. 
We argue that all non-standard requirements of EDQC were considered in the literature before. 
There are 3 crucial differences between EDQC and more standard versions of DQC. 
EDQC (i) is 1-round, (ii) requires completeness with respect to the auxiliary input, (iii) the soundness implies that the prover is not allowed to select the auxiliary input adaptively.

Let's elaborate on each point. 
The first difference is that (i) we require the protocol to be 1-round. 
Although the original delegation protocol from~\cite{mahadev} requires several rounds of interaction, Alagic et al.~\cite{alagic} described how this protocol can be transformed into a 1-round protocol in the QROM. 
This is possible via a technique similar to the Fiat-Shamir transform and parallel repetitions of the measurement protocol from \cite{mahadev}. 

As we discussed in Section~\ref{sec:openproblems} our proof technique could in principle be generalized beyond 1-round setting.
We focused on a 1-round local simulation but in \cite{hirshsigmalocal} the authors describe a local simulation for 3-message protocols. 
If one proves that this local model can be implemented in $\PP$ then Theorem~\ref{thm:edqcdoesntexist} automatically extends to $3$-message EDQCs. 
Similarly if one introduces a local model for more rounds the theorem would naturally extend to EDQCs with that number of rounds. 

EDQC requires (ii) completeness with respect to an auxiliary input.
The prover can input a state of his choice as part of the input to the circuit.
As we mentioned the same requirement was considered in \cite{advantagefromNL}, where it was shown that existing qFHE schemes (\cite{mahadevFHE} and \cite{brakerskiQFHE}) satisfy it.
However, the soundness of qFHE is different from the soundness of EDQC.
Recently (\cite{nonlocalitytoDQC}) it was shown that qFHE can be used as a black box for DQC.
This is some evidence that qFHE implies EDQC.
If this was indeed the case our result would imply that there's no $\PP$-sound qFHE.
However, as we discuss below, the relationship between the blindness of DQC and that of EDQC is not clear.

We give an idea of why it should be possible to realize EDQC soundness based on DQC protocols that rely on Kitaev's local Hamiltonian reduction. 
To realize this functionality one can remove terms from the Hamiltonian that correspond to the portion of input that the verifier does not control (removing a part of $H_{\text{in}}$). 
An in-depth description of the guarantees, when the penalty terms corresponding to the input are partially removed, can also be found in \cite[Lemma 4]{abs-2112-09625}. 

The final difference from the standard setup is that (iii) $\rho$ from the soundness property of Definition~\ref{def:EDQC} is assumed to have no dependence on the input $s$. 
Intuitively this means that the adversary is not allowed to choose the circuit input states adaptively, i.e., dependent on $s$. 
This property is reminiscent of blindness - the second after verifiability property of interest for DQC. 
Our property does not directly imply blindness, at least in the case of multi-round protocols, as, for instance, if the prover were to commit to a state and receive the input in the clear our property would still hold but the protocol would not be blind.
In particular, this means that it is not clear that EDQC is a stronger assumption than blindness.

It is also not clear if the blindness implies the soundness of EDQC.
The main issue is that the state $\rho$ (Definition~\ref{def:EDQC}) is an abstract state, that the prover does not necessarily need to have access to. 
Hence, we can not argue that $\rho$ is only dependent on the randomness of the verifier and the question $q$. 
If the definition is stronger and implies that the prover \emph{holds} the state, similarly to guarantees in \cite{VidickZ21}, it might be possible to show that blindness implies our property. 
However, as we mentioned, \cite{advantagefromNL} gives some evidence that qFHE might imply EDQC.

\bibliography{references}
\bibliographystyle{abbrvnat}

\newpage

\appendix

\section{Generalization of the not-efficiently-local definition}\label{apx:generalizationofNEL}

In our proof of Theorem~\ref{thm:maininformal} we effectively take any two circuits $C^\alice$ and $C^\bob$ and some state $\rho_{AB}$ and simulate the statistics generated by them in mode (i) with two $\PostBQP$ circuits of size $\poly(|C^\alice|)$ and $\poly(|C^\bob|)$, where $|\cdot|$ denotes the size of the circuit, i.e. number of gates plus the number of qubits on which the circuit operates. This means that we can show a more general result. 

What does more general mean? 
Even when one agrees with the quantum Church-Turing hypothesis, the choice of the $\BQP$ class in the completeness part of Definition~\ref{def:nonlocality} is somewhat arbitrary. 
In the following, we define nonlocality slightly differently. 
Honest $\alice,\bob$ still apply some quantum circuit on their input and their share of $\rho_{AB}$ but their circuits are no longer limited to be of polynomial size with respect to the question size, i.e. polynomial in $\ell$.
Next, we say that a game is sound if no cheating provers with \textbf{polynomially larger circuits} can fool the verifier. Where the parameter is now the size of the honest circuit. To summarize, we still assume the quantum Church-Turing hypothesis but we don't impose any restrictions on the sizes of honest circuits. More formally 

\begin{definition}(\textbf{not-efficiently-local})\label{def:nonlocality2.0}
For a state $\rho_{AB} \in \mathcal{S}(\mathcal{H}_A \otimes \mathcal{H}_B)$ we say that $\rho_{AB}$ is not-efficiently-local if for every sufficiently small $\delta$ there exists $k \in \N$, game $\mathcal{G}(\rho_{AB},k,\cdot)$ and provers $\alice_{\text{honest}}(\cdot),\bob_{\text{honest}}(\cdot)$ such that for every polynomial $p(\cdot)$ there exists a polynomial $q(\cdot)$ such that for every $\ell \in \N$
\begin{enumerate}
    \item{(\textbf{Completeness})} If $\mathcal{G}(\rho_{AB},k,\ell)$ was run in mode (i) with $\alice_{\text{honest}}(\ell),\bob_{\text{honest}}(\ell)$ then $$\Prob[\ver \text{ accepts}] \geq 1 - \delta.$$
    \item{(\textbf{Soundness})} For \textbf{every} $\alice,\bob$ such that $|\alice| \leq p(|\alice_{\text{honest}}(\ell)|), |\bob| \leq p(|\bob_{\text{honest}}(\ell)|)$ if $\mathcal{G}(\rho_{AB},k,\ell)$ was run in mode (ii) with $\alice,\bob$ then $$\Prob[\ver \text{ accepts}] \leq 1 - \delta - \frac{1}{q(\ell)}.$$
\end{enumerate}
\end{definition}
\noindent
With this new definition, our proof directly implies that 
$$
\ENT =  \NEL \Rightarrow \BQP \neq \PP,
$$
where the $\NEL$ is understood as in Definition~\ref{def:nonlocality2.0}.

\section{RSP soundness}\label{apx:nonlocality}

In this section, we move to the most technically involved part of the corresponding chapter, where we show that every prover winning in the protocol with high probability had to prepare an eigenstate of $X,Y,$ or $Z$, and he doesn't know which one was prepared.

First, we list some technical definitions and define what the computational distinguishability of states really means.

\begin{definition}
For two families of (not necessarily normalized) density operators $\{\rho_\ell\}_{\ell \in \N}$ and $\{\sigma_\ell\}_{\ell \in \N}$ we say that they are \textit{computationally distinguishable with advantage at most} $\delta(\ell)$, if for any polynomial-time uniformly generated family of circuits $\{D_\ell\}_{\ell \in \N}$ (\textit{distinguisher}), there exists $\ell_0$ such that for all $\ell > \ell_0$ we have
$$
\frac12 \left|\trace(D_\ell^\dagger (\ket{0}\bra{0} \otimes \id) D_\ell \rho_\ell) - \trace(D_\ell^\dagger (\ket{0}\bra{0} \otimes \id) D_\ell \sigma_\ell) \right| \leq \delta(\ell).
$$
\end{definition}
Next, we define particular notions of rigidity
\begin{definition}
Let $\hilb_A, \hilb_{A'}$ be finite-dimensional Hilbert spaces. Let $R \in L(\hilb_A), S \in L(\hilb_{A'})$ be functions of $\delta$. We say that $R,S$ are $\delta$ isometric wrt $\ket{\psi} \in \hilb_A \otimes \hilb_B$, and write $R \simeq_\delta S$, if there exists an isometry $V : \hilb_A \rightarrow \hilb_{A'}$ such that $\|(R - V^\dagger S V) \otimes \id_B \ket{\psi}) \|^2 = O(\delta) $. Moreover, if $V$ is the identity then we say that $R,S$ are $\delta$-equivalent, and write $R \approx_\delta S$, i.e. if $\|(R-S)\otimes \id_B \ket{\psi}\|^2 = O(\delta)$.
\end{definition}

\subsection{Modeling}

In this section, we explain how an arbitrary prover strategy can be modeled.
Similar to what is done in \cite{vidick}, we formalize the behavior of the prover as a device. 

\begin{definition}
A device $D = (\phi, \Pi , M , Z ,X,Y,\XYm,\XYp)$ is specified as follows, \\ 
\begin{enumerate}
    \item $\phi$ is positive semidefinite in $\mathcal{P}(\mathcal{H}_D \otimes \mathcal{H}_Y)$, where $\mathcal{H}_D$ is arbitrary and $\mathcal{H}_Y$ is of the same dimension as the length of the image $y$ sent by the prover. 
    \item $\phi_y$ is the not normalized post measurement state of $D$ after sending the image $y$ and can be written as,
    \begin{align*}
        \phi_y = (\id_D \otimes \bra{y}_Y)\phi(\id_D \otimes \ket{y}_Y).
    \end{align*}
    \item $\{\Pi^{(b,x)}_y\}_{(b,x)}$ are a set of POVMs such that probability of the device returning $(b,x)$ on the preimage test is $\trace(\Pi^{(b,x)}_y \phi_y)$. 
    \item $\{M^{d}_y\}$ are a family of POVMs that classify the distribution of the parity $d$ sent by prover.
    \item Similarly, families of binary observables $Z_y = Z^0_y - Z^1_y$, $X_y = X^0_y - X^1_y$, $Y_y = Y^0_y - Y^1_y, (\XYm)_y = (\XYm)_y^0 - (\XYm)_y^1, (\XYp)_y = (\XYp)_y^0 - (\XYp)_y^1$ where $X_y$ classifies the distribution of the output $b$ when $c = X$ and $Z$'s and $Y,\XYm,\XYp$'s are defined analogously.  
\end{enumerate}

\end{definition}

\begin{definition}
Let $D = (\phi , \Pi , M,Z,X,Y,\XYm,\XYp)$ be a device and $W \in \{X,Y\}$. For $v \in \{0,1\}$ define a subnormalized density matrix as
$$
\phi_{y,W,v} = \sum_{d: \widehat{W}(d) = W, \hat{v}(d) = v} (\id_Y \otimes M_y^d) \phi_y (\id_Y \otimes M_y^d).
$$
we will sometimes omit $y$ to mean the same state. Moreover we define $\phi_X = \phi_{X,0} + \phi_{X,1}, \phi_Y = \phi_{Y,0} + \phi_{Y,1}$. 
\end{definition}
A device $D$ is called efficient if all the observables can be realized efficiently, moreover $\phi$ can be prepared efficiently. For the majority of the proofs we consider a prover~(device) that wins the preimage test with probability 1. Later we will explain how these requirements can be relaxed for a non-perfect prover. 

The first lemma we prove is an analog of Claim~7.2 from \cite{mahadev}. This lemma is a form of simplification of the modeling of the device. It is used later on when combining all the results in the proof of Theorem~\ref{thm:mainRSP}.

\begin{lemma}\label{lem:simplification}
Let $D = (\phi,\Pi , M,Z,X,Y,\XYm,\XYp)$ be an efficient device that passes the preimage test with probability $1-\epsilon$, for some $0\leq \epsilon \leq 1$. There exists a device $D' = (\phi' , \Pi , M,Z,X,Y,\XYm,\XYp)$ such that $\|\phi' - \phi\|_1 = O(\sqrt{\epsilon})$, $D'$ wins the preimage test with probability $1-\text{negl}(\ell)$ and after returning $y$, the state is of the form, 
\begin{align}\label{eq:stateform}
    \ket{\phi'_y} = \sum_{b\in\{0,1\}} \ket{b,x_b}\ket{\phi_{y,b}}, \text{ when }k\in \mathcal{K}_\mathcal{F}\\
    \ket{\phi'_y} = \ket{\hat{b},x_{\hat{b}}} \ket{\phi_{y,\hat{b}}}, \text{ when } k\in \mathcal{K}_\mathcal{G}
\end{align}
Where $(b,x_b)$ ($(\hat{b},x_{\hat{b}})$ respectively) are the preimages of $y$ under the function defined by $k$, and $\ket{\phi_{y,b}}$ are arbitrary.   
\end{lemma}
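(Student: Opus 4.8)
The plan is to leave the measurement operators $\Pi, M, Z, X, Y, \XYm, \XYp$ untouched and only replace the committed state $\phi$ by its projection onto the subspace of valid preimages. Following the standard modeling of~\cite{vidick, mahadev}, I would first argue that without loss of generality the preimage test is implemented by measuring the designated registers $B$ and $\text{dom}$ in the computational basis, so that $\Pi^{(b,x)}_y = \ket{b}\bra{b}_B \otimes \ket{x}\bra{x}_{\text{dom}} \otimes \id$. Any prover's measurement circuit for this test can be absorbed into $\phi$ (and correspondingly into the description of the remaining observables), which is legitimate since $\mathcal{H}_D$ is arbitrary. With this in place, for each image $y$ define the projector onto the valid preimages,
$$
P_y := \sum_{(b,x)\,:\,f_k(b,x) = y} \ket{b}\bra{b}_B \otimes \ket{x}\bra{x}_{\text{dom}} \otimes \id,
$$
and let $P := \sum_y P_y \otimes \ket{y}\bra{y}_Y$ act on $\mathcal{H}_D \otimes \mathcal{H}_Y$.

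By hypothesis the device returns a valid preimage with probability $1-\epsilon$, which (up to a $\negl(\ell)$ loss arising from the negligible fraction of images with an irregular number of preimages, guaranteed by the TCF properties in Appendix~\ref{apx:clawfree}) equals $\trace[P\phi] = 1 - \epsilon - \negl(\ell)$. I would then set $\phi' := P\phi P / \trace[P\phi]$ and take $D' = (\phi', \Pi, M, Z, X, Y, \XYm, \XYp)$. Two of the three conclusions are then immediate. Since $\phi'$ lies in the range of $P$, measuring $\Pi$ on $\phi'$ yields a valid preimage with probability $1$, hence $\geq 1 - \negl(\ell)$, which gives the preimage-test guarantee. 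For the trace-distance bound, the gentle measurement lemma applied to the projector $P$ with $\trace[P\phi] \geq 1 - \epsilon - \negl(\ell)$ gives $\|\phi - P\phi P\|_1 \leq 2\sqrt{\epsilon} + \negl(\ell)$, and the renormalization contributes a further $O(\epsilon)$ term, so that $\|\phi' - \phi\|_1 = O(\sqrt{\epsilon})$.

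Finally, the structural form follows directly from the range of $P$. Purifying $\phi'$ into $\mathcal{H}_D$ (again legitimate as $\mathcal{H}_D$ is arbitrary) and writing $\ket{\phi'_y}$ for the post-$y$ pure state, the $(B,\text{dom})$ registers are supported only on valid preimages of $y$. When $k \in \mathcal{K}_\mathcal{G}$ the underlying function is injective, so $y$ has the unique preimage $(\hat b, x_{\hat b})$ and the state factorizes as $\ket{\phi'_y} = \ket{\hat b, x_{\hat b}}\ket{\phi_{y,\hat b}}$; when $k \in \mathcal{K}_\mathcal{F}$ the function is $2$-to-$1$ with preimages $(0,x_0),(1,x_1)$, so $\ket{\phi'_y} = \sum_{b\in\{0,1\}} \ket{b,x_b}\ket{\phi_{y,b}}$, where the $\ket{\phi_{y,b}}$ are the (subnormalized) residual states on the remaining registers. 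This is exactly~\eqref{eq:stateform}.

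The main obstacle I anticipate is rigorously justifying the first reduction, namely that the preimage measurement may be taken projective in the computational basis while simultaneously keeping the measurement-test observables $M, Z, X, Y, \XYm, \XYp$ consistent with the modified state. These measurements are all applied to the same post-$y$ branch $\phi_y$, so the dilation realizing the projective structure must not disturb their joint statistics on $\phi'$; this is precisely the point where the argument must track Claim~7.2 of~\cite{mahadev} most carefully, and where the $O(\sqrt{\epsilon})$ perturbation must be shown to propagate harmlessly into the later lemmas that invoke $D'$.
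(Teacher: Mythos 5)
Your proposal is correct, and its mathematical core---restricting the committed state to the subspace of valid preimages and controlling the damage via the gentle measurement lemma, then reading off the form \eqref{eq:stateform} from the support of any purification---is the same as the paper's. The substantive difference is operational. The paper does not define $\phi'$ as the abstract renormalized projection $P\phi P/\trace[P\phi]$; instead, $D'$ coherently evaluates a well-formedness check $\CHK$ on the preimage registers, measures the check bit, and upon failure retries with a fresh copy of $\phi$, using polynomially many copies so that the probability of never passing is $O(\epsilon^{\poly(\ell)}) = \negl(\ell)$. This repeat-until-success construction buys something your version leaves open: $D'$ is itself an \emph{efficient} device. Efficiency does not appear in the literal statement of Lemma~\ref{lem:simplification}, but it is essential downstream---the lemmas applied to $D'$ in the proof of Theorem~\ref{thm:mainRSP} (e.g.\ Lemmas~\ref{lem:weirdsquare} and~\ref{lem:XYanticommute}) are reductions to the collapsing and adaptive hardcore bit properties and are only stated for efficient devices, so a postselected state that the prover cannot efficiently prepare could not be fed into them. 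The fix is exactly the paper's trick, and it composes cleanly with your argument: conditioned on at least one of $\poly(\ell)$ check measurements succeeding, the device holds precisely $P\phi P/\trace[P\phi]$, and the all-fail branch contributes only an additional $\negl(\ell)$ to $\|\phi'-\phi\|_1$, which is also what makes the preimage-test success probability $1-\negl(\ell)$ rather than exactly $1$. Two smaller points in your favor: your explicit dilation making $\Pi$ a computational-basis measurement (with $M,Z,X,Y,\XYm,\XYp$ conjugated by the measurement unitary) makes rigorous what the paper leaves implicit in ``evaluating a check on some registers,'' and your renormalization accounting (an extra $O(\epsilon)$ on top of the gentle-measurement $O(\sqrt{\epsilon})$) is slightly more careful than the paper's, which only bounds $\|\phi - \sqrt{X}\phi\sqrt{X}\|_1 \leq \sqrt{8\epsilon}$ without normalizing.
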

\begin{proof}
We sketch how to design $D'$ that satisfies the statement. 
Given the state $\phi$, $D'$ measures the range register and returns $y$ as $D$ would. 
After that, $D'$ evaluates a procedure that checks that the input is well-formed on the superposition to make sure he would succeed in the preimage test. 
The idea is that by preparing polynomially many states $\phi$ and doing this, the probability of $D'$ failing the preimage test is $O(\epsilon^{\poly(\ell)})$, which is negligible in $\ell$. 
Also as evaluating a check on some registers of the state would return $1$, these registers contain preimages of $y$, which means the state has the form~\ref{eq:stateform}. 

What remains is to show that the state $D'$ has after sending $y$ is $O(\sqrt{\epsilon})$ apart form $\phi$ in $L_1$ norm. Let $X$ be the positive operator corresponding to evaluating the check procedure on a part of the state and measuring the output register. As $D$ wins the preimage test with probability at least $1-\epsilon$, $\trace(\phi X) \geq 1-\epsilon$, as if $D$ wins the preimage test, the output of the $\CHK$ would also be $1$. This allows us to use the gentle measurement lemma from \cite{win99} which tells us, 
\begin{align*}
    \|\phi - \sqrt{X} \phi \sqrt{X}\|_1 \leq \sqrt{8 \epsilon}
\end{align*} 
which concludes the proof. 
\end{proof}

We continue by proving that no efficient prover can distinguish what the values of $\widehat{W}$ and $\hat{v}$ are using the information in their view.    

\begin{lemma}\label{lemma:guessingwonthelp}
Let $D$ be an efficient device succeeding with probability $1$ in the preimage test. For any $W \in \{X,Y\}$ and $v \in \{0,1\}$, no polynomial-time quantum procedure can predict $\widehat{W}(d)$ given $(y,d,\phi_{W,v})$, with advantage significantly more than $\frac12$. 
\end{lemma}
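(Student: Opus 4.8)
The plan is to prove the contrapositive by reducing to the cryptographic hardness of the claw-free family $\cH=\cF\cup\cG$ from Appendix~\ref{apx:clawfree}, in the same spirit as the corresponding argument in \cite{vidick}. The quantity to be predicted, $\widehat{W}(d)$, is by definition the low-order bit of the $\Z_4$ inner product $d\cdot(x_1-x_0)$, i.e. a hardcore-bit-type function of the pair of preimages $(x_0,x_1)$ that a prover confined to $\BQP$ is not supposed to be able to evaluate. So at a high level I want to show that an efficient predictor $P$ achieving advantage $\tfrac12+\tfrac1{\poly(\ell)}$ at outputting $\widehat{W}(d)$ from $(y,d,\phi_{W,v})$ could be turned, together with the device $D$, into an efficient adversary against the adaptive hardcore-bit property, and hence against QLWE. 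Since $D$ passes the preimage test with probability $1$, I would first invoke Lemma~\ref{lem:simplification} to assume without loss of generality that after returning $y$ the device is in the canonical form $\sum_b\ket{b,x_b}\ket{\phi_{y,b}}$, so that the state genuinely encodes valid preimages of $y$ and the subsequent measurement test behaves as in the completeness analysis.

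First I would set up the reduction $\mathcal{A}$ against the adaptive hardcore-bit game (whose challenger holds the trapdoor and therefore performs the correctness check for free, sidestepping the fact that the true $\widehat{W}(d)$ cannot be computed without the trapdoor). On input a key for $f\in\cF$, $\mathcal{A}$ runs $D$ to obtain $y$ and the post-image state $\phi_y$, simulates the measurement test by measuring the domain register in the $\Z_4$-Fourier basis to obtain an equation $d$ (collapsing $\phi_y$ onto a state of type $\phi_{W,v}$), and then runs $P$ on $(y,d,\phi_{W,v})$ to obtain a bit $w'$. By assumption $w'=\widehat{W}(d)$ with probability $\tfrac12+\tfrac1{\poly(\ell)}$, which is exactly a non-trivial prediction of a hardcore bit of $d\cdot(x_1-x_0)$, contradicting the property guaranteed by Appendix~\ref{apx:clawfree} and therefore QLWE. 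The role of the perfect preimage success is to certify that $\mathcal{A}$'s run really does correspond to a well-formed claw instance, which is what makes the hardcore-bit prediction meaningful rather than vacuous.

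The hard part will be the genuine tension between the two sources of information about $\widehat{W}(d)$ available to $P$: the classical pair $(y,d)$, from which $\widehat{W}(d)$ is locked by claw-freeness, and the single residual qubit inside $\phi_{W,v}$, which honestly equals $\ket{+_\theta}$ with $\theta=\tfrac{\pi}{2}(\widehat{W}+2\hat{v})$. Neither source should help on its own --- for the quantum register, once one accounts for the companion bit $\hat v$ (itself a hidden hardcore bit), the $W=X$ and $W=Y$ ensembles coincide, each being $\tfrac12\id$, while for the classical data the difference $x_1-x_0$ is hidden --- but one must rule out that \emph{combining} a coherent copy of $\ket{+_\theta}$ with $(y,d)$ leaks the parity bit. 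I expect to handle this exactly as in \cite{vidick}: the adaptive hardcore-bit property is formulated as a bound on a \emph{single} efficient device that is simultaneously asked to answer the preimage challenge and to produce an equation together with the hardcore bit, so that the mutual incompatibility of the computational-basis (preimage) and Fourier-basis (equation) measurements on $\phi_y$ is already internalized in the statement of the property. Layering $P$ on top of the measurement test is precisely what would push the equation-plus-bit success above $\tfrac12+\tfrac1{\poly(\ell)}$ while the preimage success stays at $1$, which is the contradiction.

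Finally, I would double-check the argument through the injective hybrid: replacing $f\in\cF$ by an injective $g\in\cG$ changes the prover's view and residual state only negligibly (by the $\cF$--$\cG$ indistinguishability of Appendix~\ref{apx:clawfree}), yet in the injective branch there is no difference $x_1-x_0$ and hence $\widehat{W}$ is information-theoretically absent; any predictor whose advantage did not collapse to $\tfrac12$ in that branch would itself distinguish $\cF$ from $\cG$. This cross-check confirms that the advantage in the real ($\cF$) branch must be negligible, which is the statement of the lemma.
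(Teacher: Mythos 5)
Your overall strategy---contradiction via a reduction to the adaptive hardcore-bit property, with the challenger's trapdoor doing the correctness-checking---matches the paper's, but your reduction as described has a genuine gap: it never outputs a preimage. In the game of \eqref{eq:hardcorebit2} (Definition~\ref{def:adaptivebit}) the adversary must produce a \emph{single} tuple $(b,x_b,d,W)$ in which $x_b$ is a valid preimage of $y$; a pair $(d,w')$ with $w'$ correlated with $\widehat{W}(d)$ violates nothing by itself. Your $\mathcal{A}$ runs the device, measures $d$, then runs the predictor $P$---but after the measurement yielding $d$, the preimage information in $\sum_b\ket{b,x_b}\ket{\phi_{y,b}}$ is destroyed, so $\mathcal{A}$ has nothing to place in the $(b,x_b)$ slot. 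Your claim that the ``mutual incompatibility of the computational-basis and Fourier-basis measurements is already internalized in the statement of the property'' is where this goes wrong: the property as formalized here has no separate preimage challenge; the preimage must be exhibited in the same output as $(d,W)$. The paper's proof supplies exactly the step you are missing: the reduction \emph{first} measures $\{\Pi_y^{(b,x_b)}\}$ to extract $(b,x_b)$, \emph{then} measures $\{M_y^{d}\}$ and runs the predictor, and invokes the \emph{collapsing property} (Definition~\ref{def:collapsing}) to argue that inserting the preimage measurement is computationally undetectable, so the predicted bit $\wt{W}$ retains a noticeable advantage and $(b,x_b,d,\wt{W})$ lands in $H_k^{0}$ with probability noticeably above $\tfrac12$, contradicting \eqref{eq:hardcorebit2}. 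This collapsing step is the one substantive idea in the proof, and Lemma~\ref{lem:simplification} does not substitute for it---it canonicalizes the state, not the order of incompatible measurements.

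Two smaller points. Your closing $\cF$-versus-$\cG$ hybrid is not a valid argument either: to turn a predictor whose advantage fails to vanish in the $\cG$ branch into an $\cF$/$\cG$ distinguisher, the distinguisher would have to \emph{test} whether the prediction is correct, i.e.\ compute $\widehat{W}(d)$, which requires the trapdoor it does not have---precisely the obstacle your main reduction correctly sidesteps by letting the hardcore-bit challenger do the checking---so it is at best a heuristic sanity check (and the paper does not use it for this lemma). Also, for a general efficient device the measurement producing $d$ is the device's own POVM $\{M_y^{d}\}$, not necessarily the $\Z_4$-Fourier basis; the state $\phi_{W,v}$ in the lemma is defined relative to $M$, so the reduction should run the device's measurement rather than ``the Fourier basis.''
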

\begin{proof}
Assuming by contradiction that such a distinguisher exists we construct an adversary that violates the adaptive hardcore bit property. 

First, assume that a distinguisher $\adv$ can predict $\widehat{W}(d)$ with probability significantly larger than $\frac12$. Given the state, the adversary measures $\{\Pi_y^{(b,x_b)}\}$ to obtain $(b,x_b)$ and then measures $\{M_y^{d}\}$ to obtain $d$. Next the adversary runs $\adv$ to obtain $\widetilde{W}$ and returns $(b,x_b,d,\widetilde{W})$. Because of the collapsing property, $\widetilde{W}$ still has a noticeable advantage of being equal to $\widehat{W}(d)$. This means that $(b,x_b,d,\widetilde{W}) \in H_k^{0}$ (see Definition~\eqref{def:adaptivebit}) with probability significantly larger than $\frac12$, which contradicts the hardcore bit property~\eqref{eq:hardcorebit2}. 
\end{proof}

For the sake of convenience, we will sometimes drop the $y$ subscript from the measurement operators and the states.

\subsection{$Z$ measurement}

\begin{lemma}\label{lem:weirdsquare}
Let $D = (\phi , \Pi , M,Z,X,Y,\XYm,\XYp)$ be an efficient device that wins in the preimage test  with probability $1$ and in the $Z$-measurement test with probability $1 - \e$. Then on average over $y\in \mathcal{Y}$ we have, 
\begin{align*}
    \sum_{b,d}\trace((M^d \Pi^b - Z^b M^d)^\dagger(M^d \Pi^b - Z^b M^d) \phi ) \leq O(\e).
\end{align*}
\end{lemma}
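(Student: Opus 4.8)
The plan is to collapse the entire expression into a single scalar—the probability that the device's $Z$-measurement disagrees with the preimage bit—and then recognize that scalar as the failure probability of the $Z$-measurement test. I would work in the injective case $G=1$, which is where the $Z$-measurement test is the active check, so that each image $y$ has a unique preimage and hence a well-defined correct bit $\hat b=\hat b(y)$; write $\bar b$ for $1-b$. First I would record two structural facts. Since the device passes the preimage test with probability $1$, for (almost) every $y$ we have $\trace(\Pi^{(\hat b,x_{\hat b})}\phi_y)=1$; because $\{\Pi^{(b,x)}\}$ sums to the identity, every POVM element off the support of $\phi_y$ must annihilate $\sqrt{\phi_y}$, giving $\Pi^{\hat b}\sqrt{\phi_y}=\sqrt{\phi_y}$ and $\Pi^{\bar{\hat b}}\sqrt{\phi_y}=0$ (no projectivity of the $\Pi$'s is needed). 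Second, the $Z$-measurement test passing with probability $1-\e$ says exactly that, after reporting an equation $d$ (governed by $M^d$) and then measuring in the $Z$ basis, the chance of returning the wrong outcome is small on average:
\[
\mathbb{E}_y \sum_d \trace\!\big((M^d)^\dagger Z^{\bar{\hat b}} M^d\, \phi_y\big)\ \le\ \e .
\]

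Next I would expand the target expression using $\trace(A^\dagger A\,\phi_y)=\|A\sqrt{\phi_y}\|^2$ and split the sum over $b\in\{0,1\}=\{\hat b,\bar{\hat b}\}$. For the branch $b=\bar{\hat b}$, the support condition $\Pi^{\bar{\hat b}}\sqrt{\phi_y}=0$ kills the first operator, so the summand reduces to $\|Z^{\bar{\hat b}}M^d\sqrt{\phi_y}\|^2=\trace((M^d)^\dagger Z^{\bar{\hat b}}M^d\phi_y)$. For the branch $b=\hat b$, I would use $\Pi^{\hat b}\sqrt{\phi_y}=\sqrt{\phi_y}$ to replace $M^d\Pi^{\hat b}\sqrt{\phi_y}$ by $M^d\sqrt{\phi_y}$, so that the summand becomes $\|(\id-Z^{\hat b})M^d\sqrt{\phi_y}\|^2=\|Z^{\bar{\hat b}}M^d\sqrt{\phi_y}\|^2$, where I used $Z^0+Z^1=\id$. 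Thus both halves collapse to the same thing, and for each fixed $y$ the quantity equals $2\sum_d\trace((M^d)^\dagger Z^{\bar{\hat b}}M^d\phi_y)$.

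Finally I would average over $y$ and invoke the displayed inequality to conclude that the total is at most $2\e=O(\e)$, establishing the lemma. The support facts come directly from the perfect preimage test (so I do not even need the full strength of Lemma~\ref{lem:simplification} here, only that a POVM element of full weight fixes $\sqrt{\phi_y}$), and the argument goes through for general Kraus operators $M^d$.

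The main obstacle I anticipate is the operator bookkeeping: correctly interpreting $\sum_d\trace((M^d)^\dagger Z^{\bar{\hat b}}M^d\phi_y)$ as the sequential-measurement (first $d$ via $M^d$, then $Z$) failure probability of the test, and verifying that the support argument is valid for non-projective POVM elements rather than silently assuming everything is projective. A secondary point worth addressing is the $\cF$/$\cG$ framing: since the $Z$-measurement test only imposes a constraint when $G=1$, the clean derivation above lives in that branch; extending the bound to the $G=0$ states, if required downstream, should be done through the computational indistinguishability of the two function families, exactly in the spirit of Lemma~\ref{lemma:guessingwonthelp}, rather than by a direct test.
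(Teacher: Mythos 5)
Your $G=1$ computation is correct, and in fact cleaner than the paper's treatment of that branch: since the device passes the preimage test with probability $1$, every POVM element other than $\Pi^{(\hat b,x_{\hat b})}$ annihilates $\sqrt{\phi_y}$ (positivity plus zero trace), so $\Pi^{\hat b}\phi_y\Pi^{\hat b}=\phi_y$ holds exactly, both $b$-branches collapse to $\sum_d \trace\big(M^d Z^{1-\hat b} M^d \phi_y\big)$, and the whole expression is twice the $Z$-test failure probability; the paper instead inserts the $\Pi$-measurement via the collapsing property rather than your exact support identity. The genuine gap is that the lemma's substantive content is the $G=0$ (claw-free) case: it is applied downstream in Lemma~\ref{lemma:tracesareclose} to the states $\phi_{W,v}$, which are only defined for keys in $\cF$, where $\widehat{W}(d)$ and $\hat v(d)$ make sense. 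There the post-image state is supported on a superposition of \emph{two} preimages, $\Pi^{1-b}\sqrt{\phi_y}\neq 0$, and your collapse of the two $b$-branches fails outright, so the "clean derivation" covers only the branch in which the lemma is not actually used.

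Your proposed fix --- transfer the bound from $\cG$ to $\cF$ "through the computational indistinguishability of the two function families" --- does not go through as stated, because key-indistinguishability only transfers quantities that arise as acceptance probabilities of efficient experiments, and the left-hand side of the lemma is not one: expanding the square produces the one-sided cross terms $\trace(Z^b M^d \Pi^b \phi M^d)$ and $\trace(Z^b M^d \phi \Pi^b M^d)$, which are coherences between measured and unmeasured branches rather than outcome statistics, so no efficient distinguisher estimates them directly. The paper's $G=0$ argument is engineered precisely around this obstruction: it transfers across key types only the efficiently samplable statistic
\begin{align*}
\sum_{b,d}\trace\big(Z^b M^d \Pi^b \phi \Pi^b M^d\big) \geq 1-O(\e),
\end{align*}
then uses the collapsing property (Definition~\ref{def:collapsing}) together with the exact relation $\Pi^b\Pi^{1-b}=0$ to show $\Pi\phi\Pi$ is computationally indistinguishable from $\phi$, invokes the identity $\phi-\Pi\phi\Pi=2\big(\Pi^0\phi\Pi^1+\Pi^1\phi\Pi^0\big)$ to bound the offending cross terms up to $\negl(\ell)$, and finally completes the square. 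Without these steps (or an equivalent mechanism for controlling the coherence terms), your argument proves the lemma only for injective keys and leaves the claw-free case --- the one the soundness proof needs --- open.
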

\begin{proof}
We first argue the property for $G = 1$, i.e. injective functions case and then move to the $G = 0$, i.e. the claw free functions case. 

\paragraph{Case $G = 1$.} We define $\Pi^b = \Pi_{y}^{(b,x_b)}$, and $\Pi = \Pi^0 - \Pi^1$. As the prover wins with probability 1 in the preimage test, let $(\tilde{b},x_{\tilde{b}})$ be the preimage of $y$, we have that $\trace(\Pi^{1-\tilde{b}} \phi) = 0$ hence,
\begin{equation}\label{eq:zeroonorthogonal}
\Pi^{1-\tilde{b}} \phi \Pi^{1-\tilde{b}} = 0.
\end{equation}
Now winning in the $Z$-measurement test with probability $1-\epsilon$ means that,
$$ \sum_d\trace(Z^{\tilde{b}}M^d \phi M^d) \geq 1-\epsilon, 
$$
which by the collapsing property implies
\begin{equation}\label{eq:atleast1-eps}
\sum_{d,b}\trace(Z^{\tilde{b}}M^d \Pi^b \phi \Pi^b M^d) \geq 1-\epsilon-\text{negl}(\ell) \geq 1 - O(\e). 
\end{equation}
Because of \eqref{eq:zeroonorthogonal} we can write 
\begin{equation}\label{eq:equivalentexpression}
\sum_{d,b}\trace(Z^{\tilde{b}}M^d \Pi^b \phi \Pi^b M^d) = \sum_d\trace(Z^{\tilde{b}}M^d\Pi^{\tilde{b}}\phi \Pi^{\tilde{b}}M^d + Z^{1-\tilde{b}}M^d\Pi^{1-\tilde{b}}\phi \Pi^{1-\tilde{b}}M^d).
\end{equation}
Combining \eqref{eq:atleast1-eps} and \eqref{eq:equivalentexpression} we get
$$
\sum_{b,d}\trace(Z^b M^d \Pi^b \phi \Pi^b M^d) \geq 1 - O(\e).
$$

\paragraph{Case $G=0$.} Now because of the collapsing property again, when $G = 0$ we also have 
\begin{align}
    \sum_{b,d} \trace(Z^b M^d \Pi^b \phi \Pi^b M^d) = 1 - O(\e) - \negl(\ell) \geq 1 - O(\e). \label{eq:trace_collapsing}
\end{align}
as otherwise this would contradict indistinguishability of $G= 0$ and $G= 1$ (as all the observables are efficiently computable). We claim that due to the collapsing property, $\Pi \phi \Pi$ and $\phi$ are indistinguishable. This is proven as follows, 
\begin{align*}
    &\Pi \phi \Pi \\
    &= (\Pi^0 - \Pi^1) \phi (\Pi^0 - \Pi^1) \\
    &= \underbrace{\Pi^0 \phi \Pi^0 + \Pi^1 \phi \Pi^1}_{\text{computationally indisting. from}\phi} - \Pi^0 \phi \Pi^1 - \Pi^1 \phi \Pi^0\\ 
    &=_{\text{CID}} \phi - \Pi^0\Pi^0 \phi \Pi^0 \Pi^1 - \Pi^0\Pi^1 \phi \Pi^1 \Pi^1 + \\
    &- \Pi^1\Pi^0 \phi \Pi^0 \Pi^0 - \Pi^1\Pi^1 \phi \Pi^1 \Pi^0 && \phi =_{\text{CID}} \Pi^0 \phi \Pi^0 + \Pi^1 \phi \Pi^1\\ 
    &= \phi + 0 &&\text{Because }\Pi^b\Pi^{1-b} = 0,
\end{align*}
where we repeatedly used the collapsing property and used CID to mean computationally indistinguishable. Using the fact that $\Pi,M,Z$ are all efficiently computable observables we have, 
\begin{align}\label{eq:abs_trace}
\sum_{b}\left|\sum_{d}\trace(Z^b M^d (\phi - \Pi \phi \Pi) M^d) \right| \leq \text{negl}(\ell).
\end{align}
The following useful identity follows from a direct computation
\begin{equation}\label{eq:identity}
 \phi - \Pi \phi \Pi = 2(\Pi^0 \phi \Pi^1 + \Pi^1 \phi \Pi^0).   
\end{equation}
We can bound
\begin{align}
&2 - O(\e) \nonumber \\
&\leq 2 \cdot \sum_{b,d} \trace(Z^b M^d \Pi^b \phi \Pi^b M^d) && \text{By 
\eqref{eq:trace_collapsing}} \nonumber \\
&=2\cdot \sum_{d}\trace(Z^0M^d \Pi^0 \phi \Pi^0 M^d) + \trace(Z^1M^d \Pi^1 \phi \Pi^1 M^d) && \text{By def.} \nonumber\\ 
&= 2\cdot \sum_d \trace(Z^0 M^d (\id-\Pi^1) \phi \Pi^0 M^d) + \trace(Z^1 M^d \Pi^1 \phi (\id-\Pi^0) M^d) && \text{By def.}\nonumber\\
&= \sum_{b,d} [\trace(Z^b M^d \Pi^b\phi M^d) + \trace(Z^b M^d \phi \Pi^b M^d) + \trace(Z^b M^d (\phi - \Pi\phi\Pi) M^d)] &&\text{By } \eqref{eq:identity} \label{eq:almost2}
\end{align}
Now \eqref{eq:abs_trace} and \eqref{eq:almost2} gives
$$
\sum_{b,d} \left[\trace(Z^b M^d \Pi^b \phi M^d) + \trace(Z^b M^d \phi \Pi^b M^d) \right] \geq 2 - O(\e).  
$$
If we expand the square in the claim of the lemma and regroup the terms we get
\begin{align*}
&\sum_{b,d}\trace((M^d \Pi^b - Z^b M^d)^\dagger(M^d \Pi^b - Z^b M^d) \phi ) \\
&= \sum_{b,d} \trace( [\Pi^b M^d M^d \Pi^b - \Pi^b M^d Z^b M^d - M^d Z^b M^d \Pi^b + M^d Z^b Z^b M^d] \phi) \\
&= \sum_{b,d} \left[ \trace(\Pi^b M^d \Pi^b \phi) + \trace(M^d Z^b M^d \phi) \right] - \sum_{b,d} \left[ \trace(Z^b M^d \phi \Pi^b M^d) + \trace( Z^b M^d \Pi^b \phi M^d) \right].
\end{align*}
By \eqref{eq:almost2} the second sum is at least $2 - O(\e)$ and by definition the first sum is upper bounded by $2$. The claim follows.
\end{proof}

\subsection{$X,Y$ Measurements} 
The goal in this section is to first prove that the measurement operator $Z$, almost anti-commutes with $X$ and $Y$ operators, and moreover $X,Y$ anti-commute. This would guarantee that there exists an efficiently computable isometry $V$ which maps these observables onto the Pauli operators $\sigma_X , \sigma_Y , \sigma_Z$ (on support of $\phi$).

\begin{lemma}\label{lemma:tracesareclose}
Let $D = (\phi, \Pi , M,Z,X,Y,\XYm,\XYp)$ be an efficient device that wins in the preimage test with probability $1$ and in the $Z$ test with probability $1 - \e$. On average over $pk$, $y$,
$$
     \sum_{W \in \{X,Y\}} \sum_{b \in \{0,1\}} \left| \trace(W^0 Z^b \phi_W Z^b) - \trace(W^1 Z^b \phi_W Z^b) \right| = O(\sqrt{\e}).
$$
\end{lemma}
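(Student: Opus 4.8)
The plan is to first rewrite the signed difference as a single expectation and observe that it vanishes exactly in the honest case, so the task is to show it is stable under the $O(\e)$ deviation allowed by the hypotheses. Since $W = W^0 - W^1$ is a binary observable, for each $W \in \{X,Y\}$ and $b \in \{0,1\}$ we have
\[
\trace(W^0 Z^b \phi_W Z^b) - \trace(W^1 Z^b \phi_W Z^b) = \trace\!\big(W\, Z^b \phi_W Z^b\big) = \trace\!\big(Z^b W Z^b\, \phi_W\big).
\]
Honestly, $Z$ anticommutes with $W$, so $Z^b W Z^b = 0$; the content of the lemma is that this survives approximately. First I would pass, via Lemma~\ref{lem:simplification} and the gentle-measurement lemma, to a nearby device (at trace distance $O(\sqrt\e)$) that wins the preimage test with probability $1-\text{negl}(\ell)$ and whose post-$y$ state has the clean form \eqref{eq:stateform}; all the traces above then change by at most $O(\sqrt\e)$, so it suffices to bound the expression for this idealized device.

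Next I would use Lemma~\ref{lem:weirdsquare} to trade the $Z$-measurement for the preimage projector. Expanding $\phi_W = \sum_{d:\widehat W(d)=W} M^d \phi M^d$ and writing $\trace(Z^b W Z^b \phi_W) = \sum_{d\in W}\trace\big((Z^b M^d)^\dagger\, W\, (Z^b M^d)\,\phi\big)$, I would substitute $Z^b M^d \approx M^d \Pi^b$. Setting $Z^bM^d = M^d\Pi^b + E_{b,d}$, the cross terms are controlled by Cauchy--Schwarz, $|\trace((M^d\Pi^b)^\dagger W E_{b,d}\phi)| \le \|M^d\Pi^b\sqrt\phi\|_F\,\|E_{b,d}\sqrt\phi\|_F$, and summing over $b,d$ with a second Cauchy--Schwarz turns the guarantee $\sum_{b,d}\trace(E_{b,d}^\dagger E_{b,d}\phi) = O(\e)$ of Lemma~\ref{lem:weirdsquare} into a total error $O(\sqrt\e)$ (the normalization $\sum_{b,d}\|M^d\Pi^b\sqrt\phi\|_F^2 = O(1)$ following from $\sum_dM^d=\id$ and $\sum_b\Pi^b=\id$). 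This reduces the claim to bounding $\sum_{W,b}\big|\sum_{d\in W}\trace(M^d W M^d\,\Pi^b\phi\Pi^b)\big|$, i.e.\ the bias of the transverse measurement $W$ conditioned on the prover's preimage branch $b$ and on the parity class $\widehat W(d)=W$.

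Finally I would show this residual bias is negligible by reducing to the hardcore-bit guarantee of Lemma~\ref{lemma:guessingwonthelp}. The crucial point is that the restriction $d\in W$, i.e.\ $\widehat W(d)=W$, is information the prover cannot compute, since it depends on both preimages and hence on the trapdoor; therefore any nonnegligible correlation between the outcome of the $W$-basis measurement and the verifier's classification $\widehat W(d)$ yields an efficient procedure that, given $(y,d,\phi_{W,v})$, guesses $\widehat W(d)$ with advantage bounded below by this bias. I would build the predictor by first measuring $\Pi$ to fix the branch $b$ (consistent with the $Z$ outcome by the previous step), then measuring in the fixed $W$ basis and outputting the sign dictated by the bias; its advantage over $\tfrac12$ is exactly the quantity being bounded, up to cancellations between branches, so Lemma~\ref{lemma:guessingwonthelp} forces it to be $\text{negl}(\ell)$ for the idealized device and the whole expression is $O(\sqrt\e)$. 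The main obstacle I anticipate is controlling the interaction between the two components $\widehat W(d)=X$ and $\widehat W(d)=Y$: a bias on the matched branch must not be silently canceled by the opposite branch, and the predictor must extract the matched correlation without access to the trapdoor-determined classification. Handling this cleanly---either by arguing the mismatched branch contributes no bias (as it does honestly) or by routing the argument through a basis-symmetric predictor built from the $\XYm,\XYp$ observables---is the delicate step, and it is where I expect the $\sqrt\e$ loss and the appeal to the adaptive hardcore bit to be genuinely necessary.
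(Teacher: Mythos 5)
Your algebraic skeleton matches the paper's proof. Rewriting the signed difference as $\trace\bigl(Z^b W Z^b\,\phi_W\bigr)$, noting it vanishes when $Z$ and $W$ anticommute, and then trading $Z^b M^d$ for $M^d \Pi^b$ via Lemma~\ref{lem:weirdsquare} together with a Cauchy--Schwarz bookkeeping that costs $O(\sqrt{\e})$ is exactly the paper's final step (``The statement follows from Lemma~\ref{lem:weirdsquare} and the Cauchy--Schwarz inequality''); your normalization $\sum_{b,d}\trace\bigl(\Pi^b M^d M^d \Pi^b \phi\bigr)=O(1)$ is also fine. (The detour through Lemma~\ref{lem:simplification} is unnecessary, since the hypothesis already grants preimage success probability $1$.) The gap is in your last step: you bound the residual bias $\sum_b \bigl|\sum_{d:\widehat{W}(d)=W} \trace\bigl(W\, M^d \Pi^b \phi \Pi^b M^d\bigr)\bigr|$ by reducing to Lemma~\ref{lemma:guessingwonthelp}, i.e.\ to the unpredictability of the \emph{basis} bit $\widehat{W}(d)$ (property \eqref{eq:hardcorebit2}). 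That is the wrong hardcore bit, and the reduction provably cannot work: the residual quantity is a \emph{bias} of the outcome $v$ on the matched class of $d$, not a correlation between $v$ and $\widehat{W}(d)$. Concretely, consider a device whose $W$-basis outcome is skewed toward $v=0$ \emph{identically on both classes} $\widehat{W}(d)=X$ and $\widehat{W}(d)=Y$. Such a device exhibits exactly the bias the lemma must exclude, yet its outcome distribution carries zero information about $\widehat{W}(d)$, so your predictor---which guesses the class from the measurement outcome---has advantage exactly $\tfrac12$ and Lemma~\ref{lemma:guessingwonthelp} imposes no constraint whatsoever. The difficulty you flag at the end (mismatched-branch cancellation, basis-symmetric predictors from $\XYm,\XYp$) is a symptom of this: the information-theoretic content of the bias is the \emph{value} bit $\hat{v}(d)$, not the basis bit, and no predictor of $\widehat{W}(d)$ can capture it.

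The paper instead routes the argument through the adaptive $\Z_4$ hardcore bit \eqref{eq:hardcorebit1}. It defines, for $v\in\{0,1\}$, the subnormalized post-measurement ensembles $\sigma_{W,v}$ (carrying the $y$, $(b,x_b)$ and $d$ registers) obtained by sequentially measuring $y$, $\Pi$, $M$, $W$ and conditioning on outcome $v$, and shows $\sigma_{W,0}$ and $\sigma_{W,1}$ are computationally indistinguishable: given a distinguishing observable $O$, the adversary measures a preimage $(b,x_b)$, then $d$, then $v$, then $O$ obtaining $u$, and outputs the tuple $(b,x,d,W,u\oplus v)$, so that a distinguishing advantage separates $H_k$ from $\overline{H}_k$---i.e.\ amounts to predicting $\hat{v}(d)$, which \eqref{eq:hardcorebit1} forbids. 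The trace instance of this indistinguishability gives the negligible bias in the $M^d\Pi^b$ ordering, and your Lemma~\ref{lem:weirdsquare} step then correctly converts it to the claimed bound. Note also that the $\Pi$ measurement is not optional preprocessing in this reduction: the adversary must hold a valid preimage for its output to lie in $H_k\cup\overline{H}_k$ at all, which is precisely where the perfect preimage-test hypothesis enters. To repair your write-up, replace the appeal to Lemma~\ref{lemma:guessingwonthelp} by this XOR-with-a-distinguisher reduction to \eqref{eq:hardcorebit1}; everything before that point can stand.
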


\begin{proof}
Note that we can express the post-measurement state of the device after the preimage test as
\begin{equation}\label{eq:stateafterpreimage}
\tilde{\phi}_{\text{YBXD}} = \sum_y \trace(\phi_y) \ket{y}\bra{y}_Y \otimes \sum_{b \in \{0,1\}} \ket{b,x_b}\bra{b,x_b}_{BX} \otimes \Pi^{(b,x_b)}_y \frac{ \phi_y}{\trace(\phi_y)} \Pi^{(b,x_b)}_y.
\end{equation}
We define this state to make the analysis easier. By definition $\phi \in \mathcal{H}_D \otimes \mathcal{H}_Y$ so it might be at first sight surprising that $\tilde{\phi}_{\text{YBXD}}$ is a state on more registers. The reason we write it like this is that according to the protocol we first measure $y$ which is in the $Y$ register, then measure $\{\Pi_y^{(b,x_b)}\}$ to obtain outcome $(b,x_b)$ that we artificially append to the state in a register we call $BX$ and the post measurement state is in the $D$ register.

For every $W \in \{X,Y\}, v \in \{0,1\}$ let
\begin{equation}
\sigma_{W,v} := \sum_y \sum_{b \in \{0,1\}} \ket{b,x_b}\bra{b,x_b}_{BX} \otimes \sum_{d : \widehat{W}(d) = W} \ket{y}\bra{y} \otimes \ket{d}\bra{d} \otimes  W^vM^d_y \Pi^{(b,x_b)}_y \phi_y \Pi^{(b,x_b)}_yM_y^d W^v,    
\end{equation}
these states can be understood informally as a result of sequentially measuring $y,\Pi,M,W$ conditioned on obtaining $v$ when measuring $W$. We will show that for every $W \in \{X,Y\}$, $\sigma_{W,0}$ and $\sigma_{W,1}$ are computationally indistinguishable.

Suppose towards contradiction that there exists $W \in \{X,Y\}$ and en efficient observable $O$ such that $\trace(O(\sigma_{W,0} - \sigma_{W,1})) \geq \mu(\ell)$ for some non-negligible function $\mu$. We will define an efficient procedure $\mathcal{A}$ that will use observable $O$ and the existence of which will show a contradiction with \eqref{eq:hardcorebit1}.

$\mathcal{A}$ first prepares $\phi$, then measures $y$, then applies $\{\Pi^{(b,x_b)}_y\}$. $\mathcal{A}$ aborts if $f_{pk}(b,x) \neq y$, i.e. $(b,x_b)$ is not the preimage of $y$. Next, $\mathcal{A}$ measures $\{M_y^d\}$, obtaining $d \in \Z^w_4$, and then it measures $\{W^v\}$ obtaining $v \in \{0,1\}$. Finally it measures $O$, obtaining $u \in \{0,1\}$, and returns $(b,x,d,W,u \oplus v)$.

Observe that after the $\Pi$ measurement $\mathcal{A}$ prepared $\tilde{\phi}_{\text{YBXD}}$ and after the $W^v$ measurement either $\sigma_{W,0}$ or $\sigma_{W,1}$ was prepared. Assumption $\trace(O(\sigma_{W,0} - \sigma_{W,1})) \geq \mu(\ell)$ is exactly a contradiction with \eqref{eq:hardcorebit1}. 

This implies that states $\sigma_{W,0}$ and $\sigma_{W,1}$ are computationally indistinguishable, so in particular
$$
\left| \sum_{d : \widehat{W}(d) = W} \sum_{b \in \{0,1\}} \trace(W^0 M^d \Pi^b \phi \Pi^b M^d) - \trace(W^1 M^d \Pi^b \phi \Pi^b M^d) \right| \leq \negl(\ell).
$$
The statement follows from Lemma~\ref{lem:weirdsquare} and the Cauchy-Schwarz inequality.
\end{proof}

\begin{lemma}\label{lemma:zanticommuteswithxy}
Let $D = (\phi, \Pi , M,Z,X,Y,\XYm,\XYp)$ be an efficient device winning in the preimage test with probability $1$ and in the $X,Y,Z$ tests with probability $1 - \e$. On average over $pk$, $y$ we have,
\begin{align}
     \sum_{W \in \{X,Y\}} \trace(\{Z,W\}^2 \phi_{W}) \leq O(\e^{1/4}).
\end{align}
\end{lemma}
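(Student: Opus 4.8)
The plan is to reduce the desired anticommutation to a statement about the \emph{$Z$-block-diagonal part} of $W$, and then to control that part using complementarity. The starting point is the purely algebraic identity, valid on the support of $\phi_W$ where $Z,W$ act as genuine binary observables ($(Z^b)^2 = Z^b$, $Z^0 + Z^1 = \id$, $W^2 = \id$),
\begin{equation*}
\{Z,W\} = 2\left(Z^0 W Z^0 - Z^1 W Z^1\right),
\end{equation*}
in which the two off-diagonal pieces $Z^0 W Z^1$ and $Z^1 W Z^0$ cancel. Since the ranges of $Z^0$ and $Z^1$ are orthogonal, squaring gives $\{Z,W\}^2 = 4\sum_b (Z^b W Z^b)^2$, and hence
\begin{equation*}
\trace\left(\{Z,W\}^2\phi_W\right) = 4\sum_{b\in\{0,1\}} \trace\left((Z^b W Z^b)^2\phi_W\right).
\end{equation*}
Thus it suffices to show that each diagonal block $Z^b W Z^b$ is small on $\phi_W$ \emph{in the squared sense}, on average over $pk$ and $y$.

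The first moment of these blocks is exactly what Lemma~\ref{lemma:tracesareclose} controls: using cyclicity and $(Z^b)^2 = Z^b$ one has $\trace(Z^b W Z^b\phi_W) = \trace((W^0 - W^1) Z^b\phi_W Z^b)$, so that
\begin{equation*}
\sum_{W\in\{X,Y\}}\sum_{b}\left|\trace(Z^b W Z^b\phi_W)\right| = O(\sqrt{\epsilon}).
\end{equation*}
Intuitively this says that, conditioned on the $Z$-outcome $b$, the $W$-measurement is unbiased --- the complementarity one expects between the bit basis and the phase bases. To upgrade this first-moment bound to the second moment appearing above, I would re-invoke the adaptive hardcore-bit indistinguishability in the same way as in the proof of Lemma~\ref{lemma:tracesareclose}, but now with the efficient observable $Z$ used as the distinguisher applied to the \emph{post-$W$-measurement} states $W^v\phi_W W^v$. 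Because $W$ is efficiently implementable, indistinguishability of the $v=0$ and $v=1$ branches (Lemma~\ref{lemma:guessingwonthelp}) yields the complementary relation $\trace(W^0 Z^b W^0\phi_W) \approx \trace(W^1 Z^b W^1\phi_W)$ up to negligible error. Combining the two complementarity directions with the $Z$-consistency of Lemma~\ref{lem:weirdsquare} (which identifies $Z$ with the preimage-branch label, $Z^b M^d\approx M^d\Pi^b$) lets me rewrite $\trace((Z^b W Z^b)^2\phi_W)$ as the probability of recovering the outcome $Z=b$ after collapsing to $Z=b$ and applying $W$, and to bound this by the complementarity defect. A single Cauchy--Schwarz step, converting the $O(\sqrt\epsilon)$ advantage into a bound on the operator square, is what degrades the exponent to $\epsilon^{1/4}$ and yields the stated inequality after summing over $b$ and $W$ and averaging over $pk,y$.

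The main obstacle is precisely this passage from an expectation-level statement to an operator-level one. Complementarity as delivered by Lemma~\ref{lemma:tracesareclose} only tells us that $\trace(Z^b W Z^b\phi_W)$ is small, and a near-zero expectation of a bounded Hermitian observable gives no control over its square: a device could in principle have a $Z$-diagonal part of $W$ that is large in norm yet averages to zero against $\phi_W$. Ruling this out forces me to use the hardcore-bit property \emph{robustly} --- not merely that $W$ is unbiased on the $Z=b$-collapsed state, but that this persists after a further $W$-measurement has been applied, which is where the projectivity of $W$ and the whitebox access to the device's observables (both $Z$ and $W$ efficiently realizable) become essential. The remaining bookkeeping --- tracking the subnormalization of $\phi_W$, the averaging over $(pk,y)$, and the accumulation of the various $O(\sqrt\epsilon)$ and negligible error terms through the Cauchy--Schwarz step --- is routine but must be carried out carefully to land exactly at the exponent $1/4$.
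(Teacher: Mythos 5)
Your setup is faithful to the paper's opening moves: the identity $\{Z,W\} = 2\left(Z^0WZ^0 - Z^1WZ^1\right)$, its squared form, and Lemma~\ref{lemma:tracesareclose} as the first-moment input are exactly where the paper's proof also starts, and you correctly identify that the entire difficulty is upgrading the first moment to the second. But your proposed upgrade provably cannot close that gap. The additional relation you extract from indistinguishability, $\trace(W^0 Z^b W^0\phi_W) \approx \trace(W^1 Z^b W^1 \phi_W)$, together with Lemma~\ref{lemma:tracesareclose}, the $Z$-consistency of Lemma~\ref{lem:weirdsquare}, uniform branch weights, and even \emph{perfect} success in the $W$-consistency test, is jointly satisfiable by a device with maximal anticommutator: take $\mathcal{H} = \Com^2\otimes\Com^2$, $Z = \sigma_Z\otimes\id$, $W = \sigma_Z\otimes\sigma_Z$, $\phi_{W,0} = \frac14\left(\ket{00}\bra{00}+\ket{11}\bra{11}\right)$, $\phi_{W,1} = \frac14\left(\ket{01}\bra{01}+\ket{10}\bra{10}\right)$. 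Then $\trace(W^v\phi_{W,v}) = \trace(\phi_{W,v})$ (the consistency test passes with probability $1$), $\phi_W = \id/4$ satisfies both of your complementarity relations exactly for every $b$ and $v$, yet $\{Z,W\}^2 = 4\,\id$, so $\trace(\{Z,W\}^2\phi_W)$ is as large as possible. Consequently no bookkeeping or Cauchy--Schwarz step applied to your list of trace identities can yield the stated $O(\e^{1/4})$ bound. (A side misattribution: the symmetry between the $v=0$ and $v=1$ branches comes from the adaptive hardcore bit \eqref{eq:hardcorebit1}, as in the proof of Lemma~\ref{lemma:tracesareclose}; Lemma~\ref{lemma:guessingwonthelp} concerns predicting the basis bit $\widehat{W}(d)$ and is used in the paper only to argue that the renormalizations $\trace(\phi_{W,v})$ are nearly uniform.)

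What the toy device above violates is a \emph{per-branch} constraint your argument never isolates (on the branch $\phi_{W,v}$, collapsing with $Z^b$ and then measuring $W$ returns $v$ deterministically, so an efficient adversary measuring the preimage, then $d$, then $Z$, then $W$ outputs $\hat{v}(d)$ with certainty, breaking \eqref{eq:hardcorebit1}), and, more importantly, the paper's proof converts the available facts into a second-moment bound by a genuinely operator-level, spectral step that your proposal lacks. Concretely, the paper feeds two hypotheses into \cite[Lemma~7.2]{BCM} (a Jordan-lemma-type statement, invoked with $M = W^v$ and $\Pi = Z^0$): the state is a near-eigenstate of $W^v$ by the test's success probability \eqref{eq:winningprobXY} --- an ingredient you never use, and without which the maximally mixed counterexample with $W=Z$ already defeats any unbiasedness-only argument --- while the pinched expectation $\sum_b\trace(W^vZ^b\tilde{\phi}_WZ^b)$ is $O(\sqrt{\e})$-close to $\frac12$ by \eqref{eq:renormalizedidentity}. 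The conclusion is spectral concentration: up to weight $O(\sqrt{\e})$, the state lies in eigenspaces of $Z^0W^vZ^0+Z^1W^vZ^1$ with eigenvalue within $O(\e^{1/4})$ of $\frac12$, and only then does your identity $\frac14\{Z,W\}^2 = \left(Z^0WZ^0+Z^1WZ^1\right)^2$, combined with a split along the projection $K^v$ and one Cauchy--Schwarz on the complement, deliver $\trace(\{Z,W\}^2\phi_W) = O(\e^{1/4})$. In short, the missing idea is not another invocation of hardcore-bit indistinguishability but the use of the consistency test's success probability inside a Jordan-block argument; that is the step that legitimately passes from expectations to operator squares.
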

\begin{proof}
Let $W \in \{X,Y\}$. $D$ succeeding with probability $1-\epsilon$ in the $X,Y$ tests means that
\begin{align}\label{eq:winningprobXY}
    \sum_{v\in\{0,1\}}\trace(W^v \phi_{W,v}) \geq 1-\epsilon. 
\end{align}

Using Lemma~\ref{lemma:tracesareclose}, we have that, 
\begin{align}\label{eq:notnormalized}
    \left|\sum_b \trace(W^0 Z^b \phi_W Z^b) - \sum_b \trace(W^1 Z^b \phi_W Z^b)\right| = O(\sqrt{\epsilon})
\end{align}
Let $\tilde{\phi}_W = \phi_W / \trace(\phi_W)$, be the normalization of the state. Now we argue, for all but a negligible number of $y$ values, the renormalization is almost uniform. This is due to Lemma~\ref{lemma:guessingwonthelp} as if the renormalization would be far from uniform, one could guess $\widehat{W}(d)$ with non-negligible advantage. This allows us to rewrite \eqref{eq:notnormalized} as, 
\begin{align}\label{eq:normalizedidentity}
    \left|\sum_b \trace(W^0 Z^b \tilde{\phi}_W Z^b) - \sum_b \trace(W^1 Z^b \tilde{\phi}_W Z^b)\right| = O(\sqrt{\epsilon}).
\end{align}

As $\sum_{b,v} \trace(W^v Z^b \tilde{\phi}_W Z^b) = 1 $ \eqref{eq:normalizedidentity} gives us that for all $v \in \{0,1\}$ we have, 
\begin{align}\label{eq:renormalizedidentity}
    \mu_{W,v} = \left|\frac12 - \sum_{b} \trace(W^v Z^b \tilde{\phi}_{W} Z^b)\right| = O(\sqrt{\epsilon})
\end{align}
Now invoking \cite[Lemma~7.2]{BCM}, with $\phi = \tilde{\phi}_W$, $M = W^{v}$ and $\Pi = Z^0$ and $\omega = 1/2 + \Omega(\epsilon^{1/4})$, and using equations~\ref{eq:renormalizedidentity} and \ref{eq:winningprobXY} we get that projection $K^v$ onto the direct sum of eigenspaces of 
$$\frac12 (ZW^vZ + W^v) = Z^0 W^v Z^0 + Z^1 W^v Z^1$$
has eigenvalues at most $\Omega(\epsilon^{1/4})$ away from $\frac12$ and satisfies 
$$\trace((\id-K^v)\tilde{\phi}_W) = O(\sqrt{\epsilon}).$$ 
Direct derivation gives us the identity
\begin{equation}\label{eq:differentexp}
\frac14 \{Z,W\}^2 = (Z^0WZ^0 + Z^1WZ^1)^2.
\end{equation}
Now to bound $\trace((Z^0WZ^0 + Z^1WZ^1)^2 \phi_{W})$, we use the following inequality. 
\begin{align}
&\trace \left( \left(Z^0W^vZ^0 + Z^1W^vZ^1 - \frac{1}{2} \id \right)^2 \phi_W \right) \nonumber \\
&= \trace \left( \left(Z^0W^vZ^0 + Z^1W^vZ^1 - \frac{1}{2} \id\right)^2 K^v\phi_W \right) \nonumber \\ \label{eq:splitink}
&+ 
\trace \left( \left(Z^0W^vZ^0 + Z^1W^vZ^1 - \frac{1}{2} \id \right)^2\left(\id-K^{v}\right) \phi_W \right) \\ 
&\leq O(\sqrt{\epsilon}) + O \left(|\trace(\id-K^v)\phi_W|^{1/2} \right) \leq O(\epsilon^{1/4}). \label{eq:boundwithk}
\end{align}
the transition from \eqref{eq:splitink} to \eqref{eq:boundwithk} is derived by bounding the first sum term using the fact that $Z^0WZ^0 + Z^1WZ^1 - \frac12 \id$ is a bounded operator and using the definition of $K^v$, and bounding the second sum term via Cauchy-Schwarz inequality and the bound on $\trace((\id-K^v)\phi)$. 

We arrive at  
$$\trace((Z^0WZ^0 + Z^1WZ^1)^2\phi_W) =O(\epsilon^{1/4})$$
which by \eqref{eq:differentexp} concludes the proof.
\end{proof}

The following is derived by a direct computation.

\begin{fact}\label{fact:anticommutator}
Let $X,Y$ be any two binary observables on $\mathcal{H}$. Then for every state $\rho \in L(\mathcal{H})$ we have
$$
\frac14 \trace( \{X,Y\}^2 \rho) = \trace((Y X^0 Y X^0 + X^1 Y X^1 Y) \rho),
$$
where $X = X^0 - X^1$ is the decomposition of $X$ along its eigenvectors. 
\end{fact}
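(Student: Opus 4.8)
The plan is to prove the statement as the trace of a purely algebraic operator identity; since $\rho$ and the trace play no role in the cancellations, it suffices to establish
\[
Y X^0 Y X^0 + X^1 Y X^1 Y \;=\; \tfrac14 \{X,Y\}^2
\]
as an identity of operators, after which the claim follows immediately by applying $\trace(\,\cdot\,\rho)$ to both sides and using linearity.

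First I would record the two facts that make everything collapse: because $X$ and $Y$ are \emph{binary} observables they are involutions, so $X^2 = Y^2 = \id$, and the spectral projectors can be written as $X^0 = \tfrac12(\id + X)$ and $X^1 = \tfrac12(\id - X)$. Expanding the left-hand side, $\{X,Y\}^2 = (XY+YX)^2 = XYXY + YXYX + XY^2X + YX^2Y$, and the last two terms each collapse to $\id$ via $X^2 = Y^2 = \id$, giving $\tfrac14\{X,Y\}^2 = \tfrac14\bigl(XYXY + YXYX + 2\,\id\bigr)$.

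Next I would expand the two right-hand operators in terms of $X$ and $Y$. Writing $Y X^0 = \tfrac12(Y + YX)$ and squaring (again using $Y^2 = \id$) yields $Y X^0 Y X^0 = \tfrac14(\id + X + YXY + YXYX)$, while $X^1 Y = \tfrac12(Y - XY)$ squares to $X^1 Y X^1 Y = \tfrac14(\id - X - YXY + XYXY)$. Adding the two, the single-$X$ terms $\pm X$ and the $\pm YXY$ terms cancel, leaving exactly $\tfrac14(2\,\id + XYXY + YXYX)$, which is precisely the expansion of $\tfrac14\{X,Y\}^2$ obtained above. Taking the trace against $\rho$ then finishes the proof.

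There is no genuine obstacle here: the whole argument is routine bookkeeping. The only step that must not be skipped is the repeated use of $X^2 = Y^2 = \id$ — this is exactly where the hypothesis that $X,Y$ are binary (involutive) observables is used, since without it the terms $XY^2X$, $YX^2Y$, and the cross terms would not simplify and the identity would fail.
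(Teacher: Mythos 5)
Your proof is correct and matches the paper's approach: the paper simply states the fact ``is derived by a direct computation,'' and your argument carries out exactly that computation, using $X^0=\tfrac12(\id+X)$, $X^1=\tfrac12(\id-X)$, and the involution property $X^2=Y^2=\id$ of binary observables to reduce both sides to $\tfrac14(2\,\id + XYXY + YXYX)$. Nothing further is needed.
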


\begin{lemma}\label{lem:Z*anticommute}
Let $D$ be as in Lemma~\ref{lemma:zanticommuteswithxy}, then on average over $pk$ and $y$, for all $W\in\{X,Y\}$ we have, 
$$
    \sum_{U \in \{X,Y\}}\trace(\{Z,W\}^2 \phi_U) \leq O(\e^{1/4}).
$$
\end{lemma}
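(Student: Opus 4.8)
The plan is to reduce the claim to Lemma~\ref{lemma:zanticommuteswithxy} together with a computational-indistinguishability argument. Fixing $W \in \{X,Y\}$, the sum $\sum_{U \in \{X,Y\}} \trace(\{Z,W\}^2 \phi_{U})$ splits into a \emph{diagonal} term (the summand with $U=W$), namely $\trace(\{Z,W\}^2 \phi_W)$, and an \emph{off-diagonal} term (the summand with the other element $U\neq W$), namely $\trace(\{Z,W\}^2 \phi_U)$. The diagonal term is already $O(\e^{1/4})$ by Lemma~\ref{lemma:zanticommuteswithxy} (the two summands there are nonnegative, so each is individually bounded), so the entire content of this lemma is to transfer the same bound to the off-diagonal term, where the observable $\{Z,W\}^2$, built from the $W$-measurement, is evaluated on the branch $\phi_U$ prepared under the \emph{other} label. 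The key observation is that $\phi_X$ and $\phi_Y$ are computationally indistinguishable: they are exactly the post-$M$-measurement states conditioned on $\widehat{W}(d)=X$ and $\widehat{W}(d)=Y$, and by Lemma~\ref{lemma:guessingwonthelp} no efficient procedure can tell these two labels apart.

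To make this precise I would normalize, writing $\tl\phi_X = \phi_X/\trace(\phi_X)$ and $\tl\phi_Y = \phi_Y/\trace(\phi_Y)$, and first record that $\trace(\phi_X)$ and $\trace(\phi_Y)$ are both $\tfrac12 \pm \negl(\ell)$ on average over $pk,y$, again by Lemma~\ref{lemma:guessingwonthelp} (otherwise the weight of the conditional branch would itself reveal $\widehat{W}(d)$). Since $Z$ and $W$ are efficient binary observables, $\{Z,W\}^2$ is an efficiently implementable positive-semidefinite Hermitian operator with $\|\{Z,W\}^2\| \le 4$, so after rescaling by $1/4$ it defines a legitimate two-outcome measurement. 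I would then argue, exactly as in the proofs of Lemma~\ref{lemma:guessingwonthelp} and Lemma~\ref{lemma:tracesareclose}, that if $|\trace(\{Z,W\}^2 \tl\phi_X) - \trace(\{Z,W\}^2 \tl\phi_Y)|$ were non-negligible then the adversary which prepares $\phi$, measures $y$, measures $\{\Pi^{(b,x)}_y\}$ (aborting if the preimage test fails), measures $\{M^d_y\}$ to obtain $d$, applies the rescaled $\{Z,W\}^2$ measurement, and guesses $\widehat{W}(d)$ from the binary outcome would predict $\widehat{W}(d)$ with advantage noticeably above $\tfrac12$, contradicting the adaptive hardcore-bit property~\eqref{eq:hardcorebit1}. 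Hence $|\trace(\{Z,W\}^2 \tl\phi_X) - \trace(\{Z,W\}^2 \tl\phi_Y)| \le \negl(\ell)$.

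Combining the pieces, for $U \neq W$ one has $\trace(\{Z,W\}^2 \phi_U) = \trace(\phi_U)\,\trace(\{Z,W\}^2 \tl\phi_U) \le \trace(\phi_U)\big(\trace(\{Z,W\}^2 \tl\phi_W) + \negl(\ell)\big)$, and since the diagonal bound of Lemma~\ref{lemma:zanticommuteswithxy} gives $\trace(\{Z,W\}^2 \tl\phi_W) = \trace(\{Z,W\}^2 \phi_W)/\trace(\phi_W) = O(\e^{1/4})$ (dividing by $\trace(\phi_W)=\tfrac12\pm\negl(\ell)$ only affects constants), the off-diagonal term is $O(\e^{1/4})$ as well. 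Adding the diagonal and off-diagonal contributions, and repeating the argument for the other choice of $W$, yields the stated bound. The main obstacle I anticipate is the bookkeeping around sub-normalization and the averaging over $pk,y$: one must ensure the indistinguishability bound and the ``almost uniform weight'' statement hold simultaneously for all but a negligible fraction of $(pk,y)$, and that the rescaled $\{Z,W\}^2$ measurement is genuinely efficient given only the device's observables, so that the reduction to~\eqref{eq:hardcorebit1} goes through without appeal to the verifier's trapdoor.
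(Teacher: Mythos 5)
Your proposal is correct and follows essentially the same route as the paper: the paper likewise bounds the off-diagonal terms by showing that a noticeable gap between $\trace(\{Z,W\}^2\phi_X)$ and $\trace(\{Z,W\}^2\phi_Y)$ would let an efficient adversary (measure $\Pi$, then $M$, then estimate the anticommutator and guess accordingly) predict $\widehat{W}(d)$ with non-negligible advantage, and it closes exactly the obstacle you anticipated via the explicit circuit $\ket{\psi}\ket{0}\mapsto \frac{1}{\sqrt{2}}\left(ZW\ket{\psi}\ket{0}+WZ\ket{\psi}\ket{1}\right)$ followed by a Hadamard-basis measurement of the control qubit, together with the collapsing property to justify inserting the $\Pi$-measurement before the estimate. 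The one correction: predicting $\widehat{W}(d)$ contradicts the hardcore-bit property~\eqref{eq:hardcorebit2} (as in Lemma~\ref{lemma:guessingwonthelp}), not~\eqref{eq:hardcorebit1}.
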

\begin{proof}
Let $W \in \{X,Y\}$. Assume towards contradiction that $\trace(\{Z,W\}^2 \phi_{U'})$ is noticably larger than $\trace(\{Z,W\}^2 \phi_{U})$, where $\{U',U\} = \{X,Y\}$ is a renaming of $\{X,Y\}$. Now let $\tilde{\phi}_U$ ($\tilde{\phi}_{U'}$ respectively) be the result of first measuring $\{\Pi^{(b,x_b)}\}$ on $\phi$ and then measuring $\{M^d\}$. Due to the collapsing property $\sum_U \trace(\{Z,W\}^2 \tilde{\phi}_U)$ is in negligible distance of $\sum_U \trace(\{Z,W\}^2 {\phi}_U)$. We argue that $\trace(\{Z,W\}^2 \phi_U)$ can be efficiently estimated. This is because given any state $\ket{\psi}$, one can implement the following unitary, \begin{align}
        \ket{\psi}\ket{0} \rightarrow \frac{1}{\sqrt{2}} (\ket{\psi}\ket{0} + \ket{\psi}\ket{1}) \rightarrow \frac{1}{\sqrt{2}} (ZW \ket{\psi}\ket{0} + WZ\ket{\psi}\ket{1}) \label{eq:distproccess}
    \end{align}
    
    Now if one measures the second register in the Hadamard basis, the probability of obtaining a $\ket{+}$ is $\frac{1}{2} \bra{\psi}\{Z,W\}^2\ket{\psi}$.
    
We describe an adversary that wins in the adaptive hardcore bit game with high probability. 
Consider an adversary $\adv$ starting with a state $\phi$. $\adv$ measures $\{\Pi^{(b,x_b)}\}$ on $\phi$ and obtains $(b,x_b)$, and proceeds by measuring $M$ and obtains $d$. 
Then he implements the unitary described in eq.~\ref{eq:distproccess} on the state $\tilde{\phi}$. 
He measures the state in the Hadamard basis and if the outcome is $\ket{+}$ returns $U^* = U'$, and otherwise returns $U^*\xleftarrow{\$} \{X,Y\}$. 
He then outputs $(b,x_b,d,U^*)$. If $\trace(\{Z,W\}^2 \phi_{U'})$ is noticeably greater than $\trace(\{Z,W\}^2 \phi_{U})$, $\adv$ wins with non-negligible probability. 
Hence $|\trace(\{Z,W\}^2 \phi_{U}) - \trace(\{Z,W\}^2 \phi_{U'})|\leq \text{negl}(\ell)$. 
Combining this with Lemma~\ref{lemma:zanticommuteswithxy} we get the desired inequality. 
\end{proof}

In the previous lemma, we showed that $Z$ almost anticommutes with $X,Y$. The next step is to show that $X$ and $Y$ also anticommute among themselves. The first fact we use is that as $Z$ anticommutes with both $X$ and $Y$, up to an isometry we can consider these operators to act as \say{single qubit operators} built from Pauli matrices. 
The following is a well-known fact.

\begin{lemma}\label{lemma:isometry}
Let $\ket{\psi}\in \mathcal{H}_A \otimes \mathcal{H}_B$ and $Z,X,Y$ observables on $\mathcal{H}_A$ such that $\{Z,X\}^2\ket{\psi} \approx_\delta 0$ and $\{Z,Y\}^2\ket{\psi}\approx_\delta 0$. Then there exists $\delta' = O(\sqrt{\delta})$, an efficiently computable isometry $V:\mathcal{H}_A \to \Com^2 \otimes \mathcal{H}_{A'}$, and commuting Hermitians $A_X , A_Y$ on $\mathcal{H}_{A'}$ such that $A_X ^2 + A_Y ^2  = \id$ and, 
\begin{align*}
   Z \simeq_{\delta'} \sigma_Z \otimes \id, \\ 
   X \simeq_{\delta'} \sigma_X \otimes \id, \\
   Y \simeq_{\delta'} \sigma_X \otimes A_X + \sigma_Y \otimes A_Y.
\end{align*}
  Moreover, if $\{X,Y\}^2 \ket{\psi} \simeq_{\delta} 0$, we have that $Y \simeq_{\delta'} \sigma_Y \otimes \sigma_Z \otimes \id$. The fact that $V$ is efficient means that there exists a classical polynomial-time algorithm that given explicit descriptions of circuits implementing $Z,X,Y$ as input returns an explicit description of a circuit that implements $V$.
\end{lemma}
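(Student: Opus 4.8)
The plan is to treat this as a standard rigidity argument built on Jordan's lemma for the pair of approximately anticommuting binary observables $Z$ and $X$. First I would read the hypothesis $\{Z,X\}^2\ket{\psi}\approx_\delta 0$ as the statement that $\|\{Z,X\}\otimes\id_B\ket{\psi}\|^2 = \bra{\psi}\{Z,X\}^2\ket{\psi} = O(\delta)$, i.e. that $Z$ and $X$ approximately anticommute on the support of $\ket{\psi}$ (this is exactly the output of Lemma~\ref{lemma:zanticommuteswithxy}). Jordan's lemma simultaneously block-diagonalizes the projector pairs $(Z^0,Z^1)$ and $(X^0,X^1)$ into blocks of dimension at most two; in each two-dimensional block the exact-anticommutation case forces $Z=\sigma_Z$ and $X=\sigma_X$ after an intra-block change of basis, while the one-dimensional blocks contribute a weight controlled by the anticommutator defect. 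Assembling the blocks defines an efficiently computable isometry $V:\mathcal{H}_A\to\Com^2\otimes\mathcal{H}_{A'}$, with the $\Com^2$ factor being the within-block qubit and $\mathcal{H}_{A'}$ indexing the blocks, and the routine conversion of a state-dependent anticommutator bound into operator closeness yields $Z\simeq_{\delta'}\sigma_Z\otimes\id$ and $X\simeq_{\delta'}\sigma_X\otimes\id$ with $\delta'=O(\sqrt\delta)$. The efficiency claim holds because $V$ is built from controlled applications of $Z$ and $X$, so a classical poly-time procedure can output its circuit given circuits for $Z,X$.

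Next I would analyze $Y$ under the same isometry. Writing $VYV^\dagger = \id\otimes B_I + \sigma_X\otimes A_X + \sigma_Y\otimes A_Y + \sigma_Z\otimes B_Z$ in the Pauli basis of the qubit factor with Hermitian coefficient operators, the hypothesis $\{Z,Y\}^2\ket{\psi}\approx_\delta 0$ together with $Z\simeq_{\delta'}\sigma_Z\otimes\id$ gives $\{\sigma_Z\otimes\id,\,VYV^\dagger\}\approx_{\delta'}0$. Since this anticommutator equals $2(\id\otimes B_I+\sigma_Z\otimes B_Z)$, the components $B_I$ and $B_Z$ are negligible on the pushed-forward state, so $Y\simeq_{\delta'}\sigma_X\otimes A_X+\sigma_Y\otimes A_Y$. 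The algebraic relations then follow from $Y^2=\id$: expanding $(\sigma_X\otimes A_X+\sigma_Y\otimes A_Y)^2 = \id\otimes(A_X^2+A_Y^2)+i\sigma_Z\otimes[A_X,A_Y]$ and comparing to $\id$ forces $A_X^2+A_Y^2=\id$ and $[A_X,A_Y]=0$. To obtain these \emph{exactly}, as the statement requires, rather than merely up to $\delta'$, I would simply \emph{define} $A_X,A_Y$ from the exact part of $VYV^\dagger$ that anticommutes with $\sigma_Z\otimes\id$, retaining the $\simeq_{\delta'}$ error only in the approximation $Y\simeq_{\delta'}\sigma_X\otimes A_X+\sigma_Y\otimes A_Y$; this is the standard bookkeeping step in such self-testing lemmas.

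For the \say{moreover} clause I would add the hypothesis $\{X,Y\}^2\ket{\psi}\approx_\delta 0$. Using $X\simeq_{\delta'}\sigma_X\otimes\id$ and the form of $Y$, a direct computation gives $\{X,Y\}\simeq_{\delta'}2\,\id\otimes A_X$, so $A_X$ is negligible on the state and hence $A_Y^2\approx\id$; thus $A_Y$ is an approximate $\pm1$ observable. Applying the single-observable case of the same block-diagonalization to $A_Y$ produces a further efficiently computable isometry $W:\mathcal{H}_{A'}\to\Com^2\otimes\mathcal{H}_{A''}$ with $A_Y\simeq\sigma_Z\otimes\id$, which leaves $Z=\sigma_Z\otimes\id$ and $X=\sigma_X\otimes\id$ untouched and upgrades $Y$ to $Y\simeq_{\delta'}\sigma_Y\otimes\sigma_Z\otimes\id$, as claimed.

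The main obstacle I anticipate is not the algebra but the careful tracking of error propagation: converting each state-dependent anticommutator bound into operator-level $\simeq_{\delta'}$ statements with the correct $O(\sqrt\delta)$ dependence, and justifying that the single isometry built from $Z,X$ simultaneously realizes all three targets without compounding errors beyond $\sqrt\delta$. Since this is exactly the machinery developed in the self-testing literature underlying \cite{mahadev,BCM,vidick}, I would cite that analysis for the isometry construction and focus the written proof on the two parts specific to this statement, namely the Pauli-decomposition argument pinning down the form of $Y$ and the second isometry used in the moreover clause.
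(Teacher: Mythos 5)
The paper never actually proves this lemma: it is introduced with ``The following is a well-known fact'' and used as a black box, with the underlying machinery inherited from the self-testing literature behind \cite{vidick}. Your proposal reconstructs essentially that standard argument, and its overall architecture --- Jordan's lemma for the approximately anticommuting pair $(Z,X)$, the Pauli decomposition $VYV^\dagger=\id\otimes B_I+\sigma_X\otimes A_X+\sigma_Y\otimes A_Y+\sigma_Z\otimes B_Z$ with $B_I,B_Z$ killed on the state by $\{Z,Y\}^2\ket{\psi}\approx_\delta 0$, and a second rounding step for the ``moreover'' clause via $\{X,Y\}\simeq 2\,\id\otimes A_X$ --- is the right one.

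There are, however, two places where your write-up as stated would not go through. First, your mechanism for obtaining the \emph{exact} relations $A_X^2+A_Y^2=\id$ and $[A_X,A_Y]=0$ fails: if you define $A_X,A_Y$ as the exact $\sigma_X,\sigma_Y$ components of $VYV^\dagger$, then $Y^2=\id$ does \emph{not} force these identities, because the residual $\id$ and $\sigma_Z$ components, while small on the state, are nonzero operators and contribute cross terms to the square. The correct bookkeeping is to round: take the anticommuting part $\tilde Y=\frac12\bigl(VYV^\dagger-(\sigma_Z\otimes\id)\,VYV^\dagger\,(\sigma_Z\otimes\id)\bigr)$ and replace it by the Hermitian unitary part of its polar decomposition, i.e.\ a genuine binary observable exactly anticommuting with $\sigma_Z\otimes\id$; for such an observable the form $\sigma_X\otimes A_X+\sigma_Y\otimes A_Y$ with exactly commuting $A_X,A_Y$ and $A_X^2+A_Y^2=\id$ holds \emph{automatically} (this is precisely your $(\sigma_X\otimes A_X+\sigma_Y\otimes A_Y)^2$ computation, now applied to an exact binary observable), and all the error is absorbed into the $\simeq_{\delta'}$ closeness to $Y$. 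Second, ``assembling the Jordan blocks'' does not by itself yield an \emph{efficiently computable} isometry --- block-diagonalization is not an efficient operation on an exponentially large Hilbert space. The efficiency claim is realized by the explicit swap-type isometry built from controlled-$Z$, controlled-$X$ and Hadamard gates; Jordan's lemma is the tool for \emph{analyzing} that isometry, not for constructing it. You state the right conclusion (a circuit built from controlled applications of $Z,X$) but attribute it to the wrong construction. With these two repairs, your argument --- including the final step rounding $A_Y$, which satisfies $A_Y^2\approx\id$ on the state, to a binary observable isometrically equivalent to $\sigma_Z\otimes\id$ --- matches the standard proof that the paper invokes by citation.
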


The following is a lemma about the rigidity of the QRAC test proven in \cite{vidick}.

\begin{lemma}\label{lem:qrac}
Let $\{\phi_u\}_{u \in \{1,3,5,7\}}$ and $X, Y$ be a QRAC whose success
probability is at least $(1 - \delta)OPT_Q$, 
for some $0 \leq \delta < 1$. 
Then
$$
\frac14 \sum_{u \in \{1,3,5,7\}} \trace(\{X,Y\}^2 \phi_u) = O(\delta).
$$
\end{lemma}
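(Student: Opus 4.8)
The plan is to exploit the fact that in Lemma~\ref{lem:qrac} the observables $X$ and $Y$ are genuine \emph{single-qubit} observables, which collapses the anticommutator to a scalar and reduces the whole statement to an elementary estimate on the angle between $X$ and $Y$. Writing $X = x_1\sigma_X + x_2\sigma_Y + x_3\sigma_Z$ and $Y = y_1\sigma_X + y_2\sigma_Y + y_3\sigma_Z$ for unit Bloch vectors and setting $\cos\theta := x_1y_1 + x_2y_2 + x_3y_3$, the standard single-qubit identity gives $\{X,Y\} = 2\cos\theta\,\id$, hence $\{X,Y\}^2 = 4\cos^2\theta\,\id$. Consequently $\trace(\{X,Y\}^2\phi_u) = 4\cos^2\theta$ for \emph{every} state $\phi_u$, so the quantity to be bounded is simply $\frac14\sum_u\trace(\{X,Y\}^2\phi_u) = 4\cos^2\theta$, and it suffices to show $\cos^2\theta = O(\delta)$. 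This state-independence is the key simplification: I never have to characterize the $\phi_u$, only the angle between $X$ and $Y$.

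Next I would rewrite the QRAC success probability as an affine function of a Bell-type functional. Using $\trace(X^{u_0}\phi_u) = \tfrac12\big(1 + (-1)^{u_0}\trace(X\phi_u)\big)$ for the eigenprojector onto outcome $u_0$, and likewise for $Y$, the success probability equals $\alpha + \beta F$ for positive constants $\alpha,\beta$ (concretely $\alpha = \tfrac12$, $\beta = \tfrac1{16}$), where
\begin{equation*}
F = \sum_{u\in\{1,3,5,7\}}\trace\big[\phi_u\,O_u\big],\qquad O_u := (-1)^{u_0}X + (-1)^{u_2}Y,
\end{equation*}
so that $O_1 = X+Y$, $O_5 = -(X+Y)$, $O_3 = -(X-Y)$, and $O_7 = X-Y$. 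Since $\trace[\phi_u O_u]\le\lambda_{\max}(O_u) = \|O_u\|$ and the Bloch vector of $X\pm Y$ has length $\sqrt{2\pm2\cos\theta}$, I obtain the state-free upper bound
\begin{equation*}
F \le 2\sqrt{2+2\cos\theta} + 2\sqrt{2-2\cos\theta} = 4\sqrt{1 + |\sin\theta|}.
\end{equation*}
The maximum $4\sqrt2$ is attained exactly at $\cos\theta = 0$, which recovers $OPT_Q = \tfrac12 + \tfrac1{2\sqrt2}$ and identifies optimality with anticommutation.

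The final ingredient is an elementary one-variable inequality showing that the deficit from the optimum controls $\cos^2\theta$ quadratically. Writing $t = \cos^2\theta\in[0,1]$, the function $\psi(t) := 4\sqrt2 - 4\sqrt{1+\sqrt{1-t}}$ satisfies $\psi(0)=0$, $\psi'(0) = 1/\sqrt2 > 0$, and $\psi(t) > 0$ on $(0,1]$; hence $\psi(t)/t$ is continuous on $(0,1]$ with positive limit at $0^+$ and is bounded below by a universal constant $\kappa > 0$, giving $4\sqrt2 - F \ge \psi(\cos^2\theta) \ge \kappa\cos^2\theta$. On the other hand, the hypothesis that the success probability is at least $(1-\delta)OPT_Q$ translates, through the relation $\alpha+\beta F$, into $4\sqrt2 - F \le \beta^{-1}OPT_Q\,\delta = O(\delta)$. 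Combining the two yields $\kappa\cos^2\theta \le O(\delta)$, so $\cos^2\theta = O(\delta)$ and therefore $\frac14\sum_u\trace(\{X,Y\}^2\phi_u) = 4\cos^2\theta = O(\delta)$, as required.

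I expect the only genuinely delicate point to be bookkeeping the exact affine relation between the success probability and $F$, in particular pinning down the normalization so that the optimum really is $OPT_Q$ and the deficit scales linearly in $\delta$; everything else reduces to the scalar identity for $\{X,Y\}$ and the estimate $\psi(t)\ge\kappa t$. A fully equivalent route, closer to the self-testing literature, would replace the operator-norm step by an explicit sum-of-squares certificate expressing $4\sqrt2\,\id$ minus the per-state Bell operators as $\succeq \kappa\{X,Y\}^2$; this is the cleaner option for higher-dimensional $X,Y$, but since the QRAC here is single-qubit the direct argument above is the most economical.
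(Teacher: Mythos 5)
Your proof is correct, and it is worth noting that the paper itself contains no proof of this lemma at all --- it defers to~\cite{vidick} --- so what you have written is a self-contained replacement rather than a variant of an in-text argument. The route you take is the standard single-qubit one, and the details check out: with $X=\vec{x}\cdot\vec{\sigma}$, $Y=\vec{y}\cdot\vec{\sigma}$ for unit Bloch vectors, $\{X,Y\}=2\cos\theta\,\id$ turns the quantity to be bounded into the state-independent scalar $4\cos^2\theta$; your affine bookkeeping is the right one (with $p=\tfrac12+\tfrac{1}{16}F$ and $F\le 4\sqrt{1+|\sin\theta|}$ one gets $p_{\max}=\tfrac12+\tfrac{\sqrt{2}}{4}=OPT_Q$, attained exactly at $\cos\theta=0$); note that the paper's displayed definition of the QRAC success probability is missing the factor $\tfrac12$ that averages over which bit is queried, and your normalization is the one consistent with the stated value of $OPT_Q$, so your worry about ``pinning down the normalization'' is resolved in your favor. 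The deficit estimate $4\sqrt{2}-F\le 16\,OPT_Q\,\delta$ and the elementary bound $\psi(t)\ge\kappa t$ (continuity on $[0,1]$, $\psi'(0)=1/\sqrt{2}$, $\psi>0$ on $(0,1]$) then close the argument.

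One omission you should patch: writing $X,Y$ with \emph{unit} Bloch vectors silently assumes both observables are non-degenerate, whereas a binary observable on $\Com^2$ can equal $\pm\id$ --- and this degenerate case genuinely arises where the paper invokes the lemma, since Lemma~\ref{lem:XYanticommute} applies it to restrictions of the device's observables to Jordan blocks (a projector can have rank $0$ or $2$ on a block, and one-dimensional blocks also occur, where $X,Y$ are scalars $\pm 1$ and $\{X,Y\}^2=4\id$). The fix is one line: if, say, $X=\pm\id$, then decoding $u_0$ succeeds with probability exactly $\tfrac12$ on average over $u$, so the success probability is at most $\tfrac12\left(\tfrac12+1\right)=\tfrac34$; the hypothesis $(1-\delta)OPT_Q\le\tfrac34$ then forces $\delta\ge 1-\tfrac{3}{4\,OPT_Q}>\tfrac1{10}$, and the trivial bound $\frac14\sum_{u}\trace(\{X,Y\}^2\phi_u)\le 4$ is $O(\delta)$ with an absolute constant. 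With that remark added, your proof is complete as stated and suitable as a drop-in proof of the lemma.
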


\begin{lemma}\label{lem:XYanticommute}
Let $D$ be as in Lemma~\ref{lemma:zanticommuteswithxy}, then on average over $pk$ and $y$, we have, 
$$
    \sum_{U \in \{X,Y\}}\trace(\{X,Y\}^2 \phi_U) \leq O(\e^{1/2}).
$$
\end{lemma}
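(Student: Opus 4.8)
The plan is to invoke the QRAC test, which is precisely the component of the protocol tailored to certify that the device's $X$ and $Y$ observables anticommute. First I would argue that a device passing the $X,Y,Z$ consistency checks with probability $1-\e$ also passes the embedded QRAC sub-test (the case R3aB, where the prover measures its claimed eigenstate of $X$ or $Y$ in one of the bases $\frac{X-Y}{\sqrt2},\frac{X+Y}{\sqrt2}$) with empirical score at least $(1-O(\e))\,OPT_Q$; since the QRAC score is an average of per-round success probabilities, the deviation from $OPT_Q$ is linear in the per-round failure probability. In this QRAC the four states are the renormalized post-measurement states $\phi_{X,0},\phi_{X,1},\phi_{Y,0},\phi_{Y,1}$, which should be eigenstates of $X$ and $Y$, and the two decoding observables are the device observables $\XYm$ and $\XYp$. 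By Lemma~\ref{lemma:guessingwonthelp} the renormalizations are negligibly close to uniform, so the empirical score faithfully reflects the idealized QRAC success probability.

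Next I would apply the rigidity of the QRAC (Lemma~\ref{lem:qrac}) to the four states $\phi_{W,v}$ and the decoding observables $\XYm,\XYp$. This yields, on average over $pk,y$, the bound $\tfrac14\sum_{W\in\{X,Y\}}\sum_{v\in\{0,1\}}\trace(\{\XYm,\XYp\}^2\,\phi_{W,v})=O(\e)$, i.e. the two decoding observables anticommute correctly on the support of the prepared states. Because the QRAC success probability is symmetric under exchanging the roles of states and observables (each summand $\trace(\XYp^{u_0}\phi_u)$ is symmetric in $\XYp\leftrightarrow X$ after writing the eigenstate projectors as $\tfrac12(\id\pm X)$ etc.), the same near-optimality also constrains the device's $X,Y$ observables relative to the eigenstates of $\XYm,\XYp$.

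To convert this into the desired statement about $\{X,Y\}$ acting on the states $\phi_U$ (eigenstates of $X,Y$), I would use the isometry of Lemma~\ref{lemma:isometry}, available because $Z$ already anticommutes with both $X$ and $Y$ (Lemma~\ref{lem:Z*anticommute}). Under it $X\simeq\sigma_X\otimes\id$ and $Y\simeq\sigma_X\otimes A_X+\sigma_Y\otimes A_Y$ with $[A_X,A_Y]=0$ and $A_X^2+A_Y^2=\id$, and a direct computation gives $\{X,Y\}\simeq 2(\id\otimes A_X)$, so that $\trace(\{X,Y\}^2\phi_U)=4\,\trace((\id\otimes A_X^2)\phi_U)$; the target thus reduces to bounding $A_X$ on the support. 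The self-testing content of a near-optimal QRAC forces the decoding observables to act as $\XYp\simeq\frac{X+Y}{\sqrt2}$ on this support, and since $\XYp$ is a binary observable, $\XYp^2=\id$, whereas $(\frac{X+Y}{\sqrt2})^2=\id+\tfrac12\{X,Y\}$; matching the two forces $\{X,Y\}$, hence $A_X$, to be small on the support. A final Cauchy--Schwarz step (the source of the square-root loss, exactly as in the passage from Lemma~\ref{lemma:tracesareclose} to Lemma~\ref{lemma:zanticommuteswithxy}) transfers this operator bound to the trace bound $\sum_{U\in\{X,Y\}}\trace(\{X,Y\}^2\phi_U)=O(\e^{1/2})$.

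The main obstacle is this last step: cleanly relating the measured observables $\XYm,\XYp$ to $X,Y$ through the isometry, and ensuring that the bound transfers to the correct family of states $\phi_U$ rather than to the eigenstates of $\XYm,\XYp$ that the dual QRAC reading would naturally produce. As in the proofs of Lemmas~\ref{lemma:tracesareclose} and~\ref{lem:Z*anticommute}, this will require the collapsing property together with Lemma~\ref{lemma:guessingwonthelp} to argue that the ancilla structure of the prepared states is consistent across $W\in\{X,Y\}$, so that $\trace((\id\otimes A_X^2)\phi_U)$ is controlled uniformly, together with careful bookkeeping of how the errors compound to give the final $\e^{1/2}$ rate.
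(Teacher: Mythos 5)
You start from the same place as the paper: the relevant QRAC instance has the prover's states $\phi_{X,0},\phi_{X,1},\phi_{Y,0},\phi_{Y,1}$ as encoding states and the device observables $\XYm,\XYp$ as decoders, and the proof must combine the rigidity of this QRAC (Lemma~\ref{lem:qrac}) with the R3aA consistency checks. The paper, however, first invokes Jordan's lemma to simultaneously block-diagonalize $\XYm$ and $\XYp$, decomposes the QRAC score as a $p^j$-weighted average of one-qubit QRAC scores on the projected states $\rho^j_{W,v}$, and only then applies the (single-qubit) rigidity lemma blockwise; you apply Lemma~\ref{lem:qrac} globally, which is not licensed as stated, though this is repairable. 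Your claimed ``symmetry under exchanging states and observables'' is more problematic: it presupposes that the states are (blockwise) exact eigenprojectors of $X,Y$, which is exactly what is not yet known and what R3aA only gives approximately; in high dimension the swap does not even typecheck without the block structure.

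The genuine gap is in your final step. Lemma~\ref{lem:qrac} yields only $\sum_{W,v}\trace(\{\XYm,\XYp\}^2\phi_{W,v})=O(\e)$, i.e.\ anticommutation of the \emph{decoders} on the states; the premise that ``$\XYp$ acts as $\frac{X+Y}{\sqrt{2}}$ on the support'' is a self-testing statement at least as strong as the target lemma and is not supplied by anything you cite. Moreover, even granting the state-dependent relation $\|(\XYp-\frac{X+Y}{\sqrt{2}})\ket{\psi}\|^2=O(\delta)$ for each of the four states, your ``matching squares'' argument controls only $\bra{\psi}(\frac{X+Y}{\sqrt{2}})^2\ket{\psi}\approx\bra{\psi}\XYp^2\ket{\psi}=1$, hence only the \emph{first} moment $\trace(\{X,Y\}\phi_U)=2\trace((\id\otimes A_X)\phi_U)$, whereas the lemma demands the \emph{second} moment $\trace(\{X,Y\}^2\phi_U)=4\trace((\id\otimes A_X^2)\phi_U)$. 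A first-moment bound does not logically imply a second-moment bound: take $A_X$ a balanced $\pm1$ observable on the ancilla (so $A_Y=0$, consistent with $A_X^2+A_Y^2=\id$ and $[A_X,A_Y]=0$) with the ancilla state unbiased between its eigenspaces; then $\trace((\id\otimes A_X)\phi_U)=0$ while $\trace((\id\otimes A_X^2)\phi_U)=1$. Closing this requires applying the closeness relations to rotated vectors such as $\XYp\ket{\psi}$ (which for an $X$-type state is approximately a $Y$-type state), i.e.\ chaining the four state-dependent relations against one another --- exactly the bookkeeping the paper's Jordan-block reduction performs implicitly, by pinning within each near-optimal block both the restricted decoders and the angular positions of the four states, after which R3aA forces the device's $X$ and $Y$ to act on each block as reflections about mutually unbiased axes. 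You flag this obstacle yourself, but as written the argument stops one moment short of the claim.
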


\begin{proof}
We perform a reduction to Lemma~\ref{lem:qrac}.     
By the Jordan lemma, there exists a basis in which $\XYm$ and $\XYp$ are block diagonal.
For a $j$-th block and $U \in \{X,Y\},v \in \{0,1\}$ let $\rho_j$ be the projection of $\phi_X +\phi_Y$ onto that block and $\rho^j_{U,v}$ be the projection of $\phi_{U,v}$ onto that block, and finally let $p^j :=  \trace(\rho^j)$.
The success probability of the QRAC test from R3aB is equal to the average over $p^j$ of success probabilities of 1-qubit QRAC tests on $\rho^j_{X,0},\rho^j_{X,1},\rho^j_{Y,0},\rho^j_{Y,1}$ and $\XYm$ and $\XYp$ projected on the $j$-th block.
By the assumption on the success probability in the R3aB test, Lemma~\ref{lem:qrac} and the success probability in the R3aA test we get the statement of the lemma.
\end{proof}

\subsection{Proof of Theorem~\ref{thm:mainRSP}}

Now we are ready to prove the soundness formally, i.e. we show Theorem~\ref{thm:mainRSP}.

\begin{proof}
Let $D$ be a device that succeeds in the protocol with probability $1 - \e$. By Lemma~\ref{lem:simplification} we know that there exists a device $D'$ whose state is within $O(\e^{1/2})$ of $D$ such that $D'$ wins the preimage test with probability negligibly close to $1$. It is enough to prove the result for $D'$ by paying $O(\e^{1/2})$ in the distance between states. We know that $D'$ succeeds in and each of $X,Y,Z$ tests with probability $1-3\e$. By Lemma~\ref{lem:Z*anticommute} and \ref{lem:XYanticommute} we know that $X,Y,Z$ pairwise almost anti-commute. This, combined with Lemma~\ref{lemma:isometry} gives us that there exist $\delta = O(\e^{1/2})$ and an efficient isometry $V : \mathcal{H}_B \rightarrow \Com^2 \otimes \Com^2 \otimes \mathcal{H}_{B'}$ such that
$$
Z \approx_\delta \sigma_Z \otimes \id, \ X \approx_\delta \sigma_X \otimes \id, \ Y \approx_\delta \sigma_Y \otimes \sigma_Z \otimes \id.   
$$
The fact that $D$ succeeds in the $X$ tests implies that for every $v \in \{0,1\}$ 
$$\text{TD} \left(V \phi_{X,v} V^\dagger,\ket{+_{0\cdot \frac{\pi}{2} + v\pi}} \bra{+_{0\cdot \frac{\pi}{2} + v\pi}} \otimes \ket{\text{AUX}_{X,v}}\bra{\text{AUX}_{X,v}} \right) \leq O(\delta),$$ 
for some states $\ket{\text{AUX}_{X,v}}$.
The fact that $D$ succeeds in the $Y$ tests implies that for every $v \in \{0,1\}$ 
$$\text{TD} \left(V \phi_{Y,v} V^\dagger,\ket{+_{1 \cdot \frac{\pi}{2} + v\pi}} \bra{+_{1 \cdot \frac{\pi}{2} + v\pi}} \otimes \sigma_Z \ket{\text{AUX}_{Y,v}}\bra{\text{AUX}_{Y,v}} \sigma_Z \right) \leq O(\delta),$$ 
for some states $\ket{\text{AUX}_{Y,v}}$. 

Simple algebra gives us that
\begin{align*}
V \phi_X V^\dagger &\simeq_{O(\delta)} \frac12 \left(\ket{0}\bra{0} + \ket{1}\bra{1}\right) \otimes (\ket{\text{AUX}_{X,0}}\bra{\text{AUX}_{X,0}} + \ket{\text{AUX}_{X,1}}\bra{\text{AUX}_{X,1}}) \\
& + \frac12 \left(\ket{0}\bra{1} +\ket{1}\bra{0} \right) \otimes (\ket{\text{AUX}_{X,0}}\bra{\text{AUX}_{X,0}} - \ket{\text{AUX}_{X,1}}\bra{\text{AUX}_{X,1}})
\end{align*}
and
\begin{align*}
V \phi_Y V^\dagger &\simeq_{O(\delta)} \frac12 \left(\ket{0}\bra{0} + \ket{1}\bra{1}\right) \otimes \sigma_Z(\ket{\text{AUX}_{Y,0}}\bra{\text{AUX}_{Y,0}} + \ket{\text{AUX}_{Y,1}}\bra{\text{AUX}_{Y,1}})\sigma_Z \\
& + \frac12 \left(e^{- \frac{\pi}{2}}\ket{0}\bra{1} + e^{-i\frac{\pi}{2}}\ket{1}\bra{0} \right) \otimes \sigma_Z(\ket{\text{AUX}_{Y,0}}\bra{\text{AUX}_{Y,0}} - \ket{\text{AUX}_{Y,1}}\bra{\text{AUX}_{Y,1}})\sigma_Z
\end{align*}
Recall (Lemma~\ref{lemma:guessingwonthelp}) that $\phi_X$ and $\phi_Y$ are computationally indistinguishable. Note that the trace distance between $\left(\ket{0}\bra{1} +\ket{1}\bra{0} \right)$, $\left(e^{- \frac{\pi}{2}}\ket{0}\bra{1} + e^{-i\frac{\pi}{2}}\ket{1}\bra{0} \right)$ and $\left(\ket{0}\bra{0} + \ket{1}\bra{1}\right)$ is constant. This means that states $\ket{\text{AUX}_{X,0}}\bra{\text{AUX}_{X,0}} - \ket{\text{AUX}_{X,1}}\bra{\text{AUX}_{X,1}}$ and $\sigma_Z(\ket{\text{AUX}_{Y,0}}\bra{\text{AUX}_{Y,0}} - \ket{\text{AUX}_{Y,1}}\bra{\text{AUX}_{Y,1}})\sigma_Z$ are at most $O(\delta)$ computationally distinguishable as otherwise we could distinguish $\phi_X$ and $\phi_Y$. This implies that $\ket{\text{AUX}_{X,0}}\bra{\text{AUX}_{X,0}} \approx_{O(\delta)} \ket{\text{AUX}_{X,1}}\bra{\text{AUX}_{X,1}}$ and $\ket{\text{AUX}_{Y,0}}\bra{\text{AUX}_{Y,0}} \approx_{O(\delta)} \ket{\text{AUX}_{Y,1}}\bra{\text{AUX}_{Y,1}}$. Knowing that, in turn, implies that $\ket{\text{AUX}_{X,v}}\bra{\text{AUX}_{X,v}}$ and $\sigma_Z \ket{\text{AUX}_{Y,v}}\bra{\text{AUX}_{Y,v}} \sigma_Z$ are at most $O(\delta)$ computationally distinguishable. Thus all the auxiliary states can be replaced by a single one by incurring a $O(\delta)$ distinguishing difference. This finishes the proof for $G = 0$.

For the case $G=1$ note that the state of the prover is $O(\delta)$ close to $\sum_{b} \ket{b}\bra{b} \otimes \ket{\text{AUX}_b}\bra{\text{AUX}_b}$. The collapsing property (Definition~\ref{def:collapsing}) gives us that $\ket{\text{AUX}_b}\bra{\text{AUX}_b}$ are indistinguishable from each other and any $\ket{\text{AUX}_{X,v}}\bra{\text{AUX}_{X,v}}$ and $\sigma_Z \ket{\text{AUX}_{Y,v}}\bra{\text{AUX}_{Y,v}} \sigma_Z$.
\end{proof}

\subsection{Proof of Theorem~\ref{thm:main_formal}}

We state a standard Chernoff bound for completeness.

\begin{fact}[Chernoff Bound]
Let $X_i$ be i.i.d. Bernoulli variables with parameter $p$. Then for every $n \in \N$, every $\delta \in (0,1)$
$$
\Prob \left[ \left|\frac1n \sum_{i=1}^n X_i - p \right| \geq \delta p \right] \leq 2 e^{-\frac{\delta^2 np}{3}}.
$$
\end{fact}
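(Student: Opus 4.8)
The plan is to prove the two-sided multiplicative Chernoff bound by the standard exponential-moment (Bernstein) method. Writing $S = \sum_{i=1}^n X_i$ and $\mu = \mathbb{E}[S] = np$, the event $\left|\frac1n\sum_i X_i - p\right| \ge \delta p$ is exactly $|S - \mu| \ge \delta\mu$, so it suffices to bound the upper tail $\Prob[S \ge (1+\delta)\mu]$ and the lower tail $\Prob[S \le (1-\delta)\mu]$ separately and combine them with a union bound contributing the factor of $2$.

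First I would handle the upper tail. For any $t > 0$, Markov's inequality applied to the nonnegative random variable $e^{tS}$ gives $\Prob[S \ge (1+\delta)\mu] \le e^{-t(1+\delta)\mu}\,\mathbb{E}[e^{tS}]$. By independence of the $X_i$ and the identity $\mathbb{E}[e^{tX_i}] = 1 - p + pe^t = 1 + p(e^t - 1)$, together with the elementary inequality $1 + x \le e^x$, I get $\mathbb{E}[e^{tS}] = (1 + p(e^t-1))^n \le e^{np(e^t-1)} = e^{\mu(e^t-1)}$. Substituting and then optimizing the bound $e^{\mu(e^t-1) - t(1+\delta)\mu}$ over $t$ via the choice $t = \ln(1+\delta)$ yields the clean form $\Prob[S \ge (1+\delta)\mu] \le \left(\frac{e^\delta}{(1+\delta)^{1+\delta}}\right)^{\mu}$. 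The lower tail is treated symmetrically with $t < 0$ (equivalently, by running the same argument on $-S$), giving $\Prob[S \le (1-\delta)\mu] \le \left(\frac{e^{-\delta}}{(1-\delta)^{1-\delta}}\right)^{\mu}$.

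The final step is to pass from these exact exponential forms to the stated Gaussian-type bound. Here I would invoke the numerical inequalities $\frac{e^\delta}{(1+\delta)^{1+\delta}} \le e^{-\delta^2/3}$ and $\frac{e^{-\delta}}{(1-\delta)^{1-\delta}} \le e^{-\delta^2/2} \le e^{-\delta^2/3}$, valid for $\delta \in (0,1)$; raising each to the power $\mu = np$ bounds both tails by $e^{-\delta^2 np/3}$, and the union bound supplies the leading $2$. I expect the main (and essentially only nontrivial) obstacle to be verifying the first inequality. After taking logarithms this reduces to showing $g(\delta) := \delta - (1+\delta)\ln(1+\delta) + \tfrac{\delta^2}{3} \le 0$ on $[0,1]$, which I would establish by a short calculus argument: $g(0) = 0$, and $g'(\delta) = -\ln(1+\delta) + \tfrac{2\delta}{3}$, so it suffices to check $\ln(1+\delta) \ge \tfrac{2\delta}{3}$ on $[0,1]$. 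The latter follows since $h(\delta) := \ln(1+\delta) - \tfrac{2\delta}{3}$ satisfies $h(0)=0$, $h'(\delta)=\tfrac{1}{1+\delta}-\tfrac23$ is positive then negative on $[0,1]$, and $h(1) = \ln 2 - \tfrac23 > 0$, so $h \ge 0$ throughout; hence $g' \le 0$ and $g$ is nonincreasing from $g(0)=0$. Everything else in the argument is routine.
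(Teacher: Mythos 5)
Your proposal is correct: the paper states this Chernoff bound as a standard fact without proof, and your exponential-moment argument (Markov applied to $e^{tS}$, optimization at $t=\ln(1+\delta)$, the numerical inequalities $\frac{e^\delta}{(1+\delta)^{1+\delta}}\le e^{-\delta^2/3}$ and $\frac{e^{-\delta}}{(1-\delta)^{1-\delta}}\le e^{-\delta^2/2}$, and a union bound for the factor $2$) is exactly the canonical derivation the paper implicitly relies on. The calculus verifications you sketch, including the endpoint check $\ln 2 > \tfrac23$ needed for $\ln(1+\delta)\ge\tfrac{2\delta}{3}$ on $[0,1]$, all go through, so there is nothing to fix.
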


\begin{proof}
First, we address completeness and then move on to soundness.

\paragraph{Completeness.}  
If $\alice$ and $\bob$ follow the strategy of $\prov$ guaranteed to exist by Theorem~\ref{thm:completeness_for_qubit_prep} then they pass all the checks apart from the R3a B check with probability $1$.
Because of \eqref{eq:state_after_fourier} when the R3a B check is performed $\prov$ replies with a bit that leads to the right distribution. 
Observe that the probability that the R3a B check is performed is negligibly close to $\frac4{15}$. This means that with probability $1- \delta/3$, $\Omega(N)$ samples for estimating the probabilities are collected. By the Chernoff bound it is enough to have a good estimate with high probability.

Moreover the following holds. 
For every state $\tau \in \{\zero,\one,\plus,\minus,\ii,\minusi \}$ the probability that $\tau$ was prepared as the post-measurement state after returning equation $d$ (see Figure~\ref{fig:final_protocol}) is at least $\frac12 \cdot \frac14 - \text{negl}(n) \geq \frac1{10}$, as the probability of $G = 0$ is $\frac12$ and the distribution over states is negligibly close to uniform by Theorem~\ref{thm:completeness_for_qubit_prep}.

Next, observe that the probability that $\alice$ ($\bob$) reaches round R3b is $\Omega(1)$. This means that, with probability $ 1- \frac{\delta}3$, over the run of the protocol for every $s,t \in \{0,\dots,5\}$, $\Omega(N)$ pairs $(\tau_s,a),(\omega_t,b)$ were collected. Note that for convenience we identified $\{0,\dots,5\}$ with $\{\zero,\one,\plus,\minus,\ii,\minusi \}$. Now we use the Chernoff bound to compute the probability of estimating $\sum_{s,t} \beta_{s,t} \ \Prob_{\alice_\bob}^\rho[a = 1, b= 1 \ | \ \tau_s , \omega_t]$ correctly. 

Fix $s,t \in \{0,\dots, 5\}$, by the Chernoff bound we have that
$$
\Prob\left[\left|\Prob_{\alice,\bob}^\rho[a = 1, b= 1 \ | \ \tau_s , \omega_t]  - \widehat{\Prob}_{\alice,\bob}^\rho[a = 1, b= 1 \ | \ \tau_s , \omega_t] \right| \leq \frac{\eta}{8\max_{s,t}|\beta_{s,t}|}\right] \geq 1 - \frac{\delta}{144}. 
$$
By the union bound this means that 
\begin{equation}\label{eq:Iapproxforcompleteness}
\Prob \left[ \left|\hat{I}_\rho - I_\rho \right| \leq \frac{\eta}8 \right] \geq 1 - \frac{\delta}{2} \geq 1 - \delta.
\end{equation}
Applying Lemma~\ref{lem:entanglementwitness} establishes completeness.

\paragraph{Soundness.} First, we argue that $\ver$ can estimate the probability of success of the qubit preparation for $\alice$ and $\bob$. Note that with probability $1 - \frac\delta4$ there are at least $\Omega(N)$ rounds in the protocol, that are equivalent to running the qubit preparation protocol for $\alice$ and $\bob$. The interaction is accepted only if all the checks are passed. By the Chernoff bound we have then that with probability $1 -\frac{\delta}2$ the probability of $\alice$ and $\bob$ succeeding in the protocol from Figure~\ref{fig:qubit_preparation_protocol} is at least $1 - \delta$. This puts us in a position to apply Theorem~\ref{thm:mainRSP}.

We want to claim that if $\alice$ and $\bob$ shared a separable state only, then they are accepted in the protocol with probability at most $\delta + \text{negl}(n)$. To this end we bound $\Prob[\hat{I}] < 0$.

For every $s \in \{0,\dots,5\}$, that corresponds to $\tau_s \in \{\ket{0},\ket{1},\ket{+},\ket{-},\ii,\minusi\}$, let $\widehat{W}_s,\hat{v}_s \in \{0,1\}$ be such that $\tau_s = \ket{+_{\widehat{W}_s \cdot \frac{\pi}2 + \hat{v}_s \cdot \pi}}$, i.e. $\widehat{W}_s,\hat{v}_s$ represent the state $\tau_s$. For $s \in \{0,\dots, 5\}$ let $\mathcal{E}_{s}$ be the event that on the $\alice$ side $\widehat{W} = \widehat{W}_s$ and $\hat{v} = \hat{v}_s$ and the interaction reaches R3b. An analogous event $\mathcal{F}_s$ is defined for $\bob$. Note that $\hat{I}_\lambda$ is an estimation of 
\begin{equation}
\sum_{s,t} \beta_{s,t} \ \Prob[\ver \leftrightarrow \alice \text{ in Fig.~\ref{fig:final_protocol} returns } (\tau_s, 1), \ver \leftrightarrow \bob \text{ in Fig.~\ref{fig:final_protocol} returns } (\omega_t, 1) \ | \ \mathcal{E}_s, \mathcal{F}_t],
\end{equation}
over $N$ samples. \textbf{By Theorem~\ref{thm:mainRSP} we can state a crucial bound}. We have that for every $s,t \in \{0,\dots,5\}$
\begin{align}
&\big|\Prob[\ver \leftrightarrow \alice \text{ in Fig.~\ref{fig:final_protocol} returns } (\tau_s, 1), \ver \leftrightarrow \bob \text{ in Fig.~\ref{fig:final_protocol} returns } (\omega_t, 1) \ | \ \mathcal{E}_s, \mathcal{F}_t]
+ \nonumber\\
&- \Prob_{\alice,\bob}^\lambda[a = 1, b = 1 \ | \ \tau_t, \omega_t] \big| \leq O(\delta^c). \label{eq:crucialtransition}
\end{align}
This is what allows us to relate the actions of computationally bounded $\alice$ and $\bob$ to the action of $\alice$ and $\bob$ that act in the semi-quantum game setup. The quantity $\Prob_{\alice,\bob}^\lambda[a = 1, b = 1 \ | \ \tau_t, \omega_t]$ is understood as any efficient ($\in \text{QPT}(n)$) POVM applied on $\tau_s$ for $\alice$ and $\omega_t$ for $\bob$ (and on shared randomness $\lambda$). But in particular this probability is of the form of the one in Lemma~\ref{lem:entanglementwitness}. Thus 
\begin{align}
& \sum_{s,t} \beta_{s,t} \ \Prob \Big[ \ver \leftrightarrow \alice \text{ in Fig.~\ref{fig:final_protocol} returns } (\tau_s, 1), \nonumber \\
&\ver \leftrightarrow \bob \text{ in Fig.~\ref{fig:final_protocol} returns } (\omega_t, 1) \ \Big| \ \mathcal{E}_s, \mathcal{F}_t \Big]
\nonumber \\
&\geq \sum_{s,t} \beta_{s,t} \ \Prob_{\alice,\bob}^\lambda[a =1 , b = 1 \ | \tau_s, \omega_t] - \max_{s,t}|\beta_s,t| \cdot O (\delta^c) \nonumber \\
&\geq \eta/4 - \max_{s,t}|\beta_s,t| \cdot O (\delta^c) && \text{By Lemma~\ref{lem:entanglementwitness} } \nonumber \\
&\geq \eta/5. \label{eq:almostend}
\end{align}
The last step is to argue that the estimated values $$\widehat{\Prob}[\ver \leftrightarrow \alice \text{ in Fig.~\ref{fig:final_protocol} returns } (\tau_s, 1), \ver \leftrightarrow \bob \text{ in Fig.~\ref{fig:final_protocol} returns } (\omega_t, 1) \ | \ \mathcal{E}_s, \mathcal{F}_t]$$ are close to the real probabilities. This is the same computation that we already performed while deriving \eqref{eq:Iapproxforcompleteness}. By this Chernoff-bound argument and \eqref{eq:almostend}, using $\eta/5 - \eta/8 \geq \eta/20$, we get that 
\begin{equation}\label{eq:finalbound}
\Prob \left[\hat{I}_\lambda \geq \eta/20 \right] \geq 1 - \delta/2.
\end{equation}
We conclude by taking the union bound over the probability of estimating the success probability correctly and \eqref{eq:finalbound}.

\end{proof}

\section{Trapdoor Claw-free Functions}\label{apx:clawfree}


\paragraph{Overview.} The RSP protocol is formally defined in Figure~\ref{fig:qubit_preparation_protocol}. 
In order to force $\prov$ to prepare one of the states $\{\zero,\one,\plus,\minus,\ii,\minusi \}$ we use claw-free functions defined in Section~\ref{sec:clawfreeoverview}. 
Intuitively these are 2-to-1 functions that are easy to compute but hard to invert. The key is to engineer the preparation of the following state on $\prov$'s side
$$
\frac{1}{\sqrt{2}}(\ket{0}\ket{x_0} + \ket{1}\ket{x_1}),
$$
where $x_0, x_1 \in \nbits$ are the two preimages under some 2-to-1 claw free function $f : \nbits \rightarrow \nbits$. 
Then we ask $\prov$ to perform the Fourier transform of the second register over $\Z_4$. 
After the transformation the first register is, depending on the $x_0+x_1 \text{ mod } 4$, one of the states $\{\plus, \minus, \ii, \minusi\}$ (up to a global phase). What is crucial is that $\prov$ doesn't know which state was prepared. 
This will follow from the properties of claw-free functions. 
To prepare the states $\{\zero, \one\}$ there is a second mode of the protocol that engineers the preparation of the following state on $\prov$'s side
$$
\ket{b}\ket{x_b},
$$
for some $b \in \{0,1\}$ and $x_b \in \nbits$ that form a preimage under some 1-to-1 function $g : \nbits \rightarrow \nbits$. 
The first register of this state can be identified with one of $\{\zero, \one\}$. This time it is crucial that $\prov$ doesn't know what $b$ is. 
To achieve this we need $g$ to be hard to invert. 
For overall security, we also need the property that it is hard to distinguish $g$ and $f$. 
Families of functions satisfying all these properties exist as we explain in more detail in Section~\ref{sec:clawfreeoverview}.

\subsection{Trapdoor claw-free functions.}\label{sec:clawfreeoverview}

As we described in the main body of the paper we need a family of functions $(\mathcal{F},\mathcal{G})$ that satisfy a list of properties. We explain these in more detail now.

An extended trapdoor claw-free family of functions $(\mathcal{F},\mathcal{G})$ is a tuple of function families, such that for every index $k \in \{0,1\}^m$, $f_k \in \mathcal{F}$ and $g_k \in \mathcal{G}$ are functions having the same domain and support, i.e. $f_k,g_k : \{0,1\}^n \rightarrow \{0,1\}^n$. 
Moreover, for every $k$, $g_k$ (and also $f_k$) is a 1-to-1 trapdoor function. 
Informally speaking, a 1-to-1 function with a trapdoor $t$ is a function such that, given $k \in \{0,1\}^m, x \in \nbits$, it is easy to compute $g_k(x)$.
Moreover, knowing $t$, for every $y \in \nbits$ it is easy to compute $g_k^{-1}(y)$ but it is hard to compute $g_k^{-1}(y)$ without the knowledge of $t$.

Properties of $\mathcal{F}$ are more delicate. For all indexes, $k$, $f_k$ is a 2-to-1 function (a preimage of every element in the image of $f_k$ consists of two elements), such that for any quantum polynomial time (QPT) adversary, given $k$, the adversary can do one of the following but not both simultaneously. 
\begin{enumerate}
    \item Return $y$ in the image of $f_k$ and $x$ in domain of $f_k$, such that $f_k(x) = y$. \label{getpreimage}
    \item Return $y$ in the image of $f_k$ and $d$ a parity such that $d \cdot (x_0+x_1) = 0$ where $x_0,x_1$ are preimages of $y$. \label{getequation}
\end{enumerate}
The final requirement for this family is that given a function $h\in \cF\cup\cG$, a QPT adversary can not distinguish between $h \in \cF$ and $h \in \cG$. 

The existence of such families was shown assuming the hardness of LWE, i.e. LWE $\not\in \BQP$ in \cite{vidick}. Finally, we want to emphasize that the construction is not complicated. For example, function evaluation consists of a few matrix-vector multiplications.

In this section, we formally define the family of claw-free functions and introduce the two main computational assumptions.

\subsection{Adaptive hardcore bit property}
We start by introducing the adaptive hardcore bit property. 
Intuitively this property guarantees that for any function $f$, no computationally bounded adversary can compute a tuple $(y,x,d)$ such that $x$ is a preimage of $y$ under $f$, and $d$ is a valid parity of the two preimages of $y$ under $f$. 

\begin{definition}\label{def:adaptivebit}
For a security parameter $\ell \in \N$, let $\mathcal{X},\mathcal{Y}$ be finite sets and $\mathcal{K}_{\mathcal{F}}$ be a finite key set. An NTCF family $\mathcal{F}$ is said to have adaptive $\Z_4$ hardcore bit property, if the following property holds (for some $\omega$ polynomially bounded in $\ell$). 

\begin{enumerate}
    \item For any $b\in \{0,1\}$, and $x\in \mathcal{X}$, there exists a set $G_{k,b,x}\in \Z_4^{\omega}$ such that $\Prob_{d\gets \Z_4^\omega}[d \not\in G_{k,b,x}]$ is negligible in $\ell$, and checking membership of $d$ in $G_{k,b,x}$, given $(k,b,x)$ is efficient given a trapdoor $t_k$.
    \item There exists an efficiently computable injective function $J:\mathcal{X} \to \Z_4^{\omega}$, efficiently invertible on its support, such that the following holds. Let $\widehat{W}:\Z_4^\omega \to \{0,1\}$ and $\hat{v}:\Z_4^\omega \to \{0,1\}$ be the unique values such that $d \cdot (J(x_0) + J(x_1)) \mod 4= \widehat{W}(d) + 2 \hat{v}(d)$, where $x_b = \mathcal{F}^{-1}(t_k,b,y)$, when $d \in G_{k,0,x_0}\cap G_{k,1,x_1}$, and $\bot$ otherwise.  
    \begin{align*}
        H_k = \{(b,x_b,d,W , v) \ | \ b\in\{0,1\}, (x_0,x_1)\in \mathcal{R}_k , (W,v) = (\widehat{W}(d),\hat{v}(d))\} \\
        \overline{H}_k = \{(b,x,d,W, v)| (b,x,d,W,1-v) \in H_k\}.
    \end{align*}
    Then for any QPT bounded adversary $\adv$,
    \begin{equation}\label{eq:hardcorebit1}
        |\Prob[\adv(k) \in H_k] - \Prob[\adv(k)\in \overline{H}_k]| \leq \negl(\ell).
    \end{equation}
    Moreover, if for $w \in \{0,1\}$ we define $H_k^{w} = \{(b,x_b,d,W) \ | \ x_b \in \mathcal{R}_k, W = \widehat{W}(d)+w \text{ mod } 2\}$, then for any QPT bounded adversary $\adv'$, 
    \begin{align}\label{eq:hardcorebit2}
        |\Prob[\adv'(k) \in H_k^{0}] - \Prob[\adv'(k) \in H_k^{1}]| \leq \negl(\ell).
    \end{align}
\end{enumerate}

\end{definition}

The first condition simply implies that there exists a small subset of possible values for $d$ such that having the trapdoor one could check the membership of a given $d$ in this set. The second condition implies that no adversary can generate a preimage and a parity. The two conditions simply imply he does not know which eigenspace he would be in and even given the eigenspace he can not figure out the eigenvalue of the state he has prepared.

The existence of such a family comes directly from \cite{vidick}, which in turn was inspired by \cite{mahadevFHE, brakerskicertifiedrandomness}. The only difference is that we require the conditions to work in $\Z_4$ instead of $\Z_8$. 

\subsection{Collapsing property}
The other requirement for the family of functions we use in this work is called the \textit{collapsing property}. This property ensures that no computationally bounded adversary can distinguish whether the function sent to them by the verifier is 2-to-1 or injective. 

\begin{definition}\label{def:collapsing}
Let $(\mathcal{F},\mathcal{G})$ be an extended NTCF family, and let $\phi = \sum_{y\in \mathcal{Y}} \ket{y}\bra{y} \otimes \phi_y$ be an efficiently preparable state. Let $\{\Pi^{b,x_b}\}$ be an efficient POVM such that $\trace(\Pi^{b,x_b}\phi_y)$ is 0 if $f_{k,b} (x_b)\neq y$ (2-to-1 function) and $g_{k,b}(x_b) \neq y$ (injective function). $(\mathcal{F},\mathcal{G})$ is called collapsing if for any QPT bounded adversary $\adv$, $\adv$ has a negligible advantage in $\ell$ to distinguish $\phi_y$ and $\phi'_y = \Pi^{(0,x_0)}\phi_y \Pi^{(0,x_0)} + \Pi^{(1,x_1)}\phi_y \Pi^{(1,x_1)}$ (the state after measuring the input register), when $k\gets Gen_\mathcal{F}(1^\ell)$, and $y$ is distributed according to $\trace(\phi_y)$.
\end{definition} 

\end{document}